
\documentclass[preprint,10pt,3p]{elsarticle}




\usepackage{amssymb}


\journal{}


\addtocounter{section}{0}

\usepackage{amsmath}

\usepackage{mathdots}
\usepackage{interval}
\usepackage{mathrsfs}

\usepackage{amsthm}
\usepackage{latexsym}
\usepackage{amsmath}
\usepackage{amssymb}
\usepackage{proof}
\usepackage{bbm}
\usepackage{bm}

\usepackage[all]{xy}

\usepackage{array,cmll}

\newtheorem{thm}{\bf Theorem}[section]
\newtheorem{exam}[thm]{\bf Example}
\newtheorem{prop}[thm]{\bf Proposition}
\newtheorem{lem}[thm]{\bf Lemma}
\newtheorem{cor}[thm]{Corollary}
\newtheorem{defn}[thm]{\bf Definition}

\newtheorem{rem}[thm]{\bf Remark}

\newcommand{\Comp}{\circ}

\newcommand{\Mes}{{\sf M}_{{\mc E}}}
\newcommand{\Ban}{{\sf Vec}_{{\mc E}^b}}
\newcommand{\Rel}{\sf{Rel}}
\newcommand{\GRel}{{\bf G}(\Rel)}

\newcommand{\SRel}{\sf{SRel}}

\newcommand{\TKer}{{\sf TKer}}
\newcommand{\TKerfin}{{\sf Tker}}
\newcommand{\TKeromg}{\TKer_{\omega}}  

\newcommand{\Pcoh}{\sf{Pcoh}}
\newcommand{\Po}[1]{\sf{P} #1}

\newcommand{\ms}[1]{(#1, \mathcal{#1})}

\newcommand{\mf}[1]{\mc{E} (\mc{#1})}

\newcommand{\msnce}[1]{(#1_e^\bullet, \mathcal{#1}_e^\bullet)}

\newcommand{\mse}[1]{(#1_e, \mathcal{#1}_e)}

\newcommand{\msee}[1]{(#1_{ee}, \mathcal{#1}_{ee})}

\newcommand{\msed}[1]{(#1^\bullet_e, \mathcal{#1}^\bullet_e)}

\newcommand{\msi}[2]{(#1_{#2}, \mathcal{#1}_{#2})}

\newcommand{\zeroinf}{\overline{\mathbb{R}}_{+}}

\newcommand{\inj}{{\sf in}}
\newcommand{\pr}{{\sf pr}}

\newcommand{\mon}[2]{{\sf m}_{#1, #2}}
\newcommand{\di}[1]{{\sf d}_{#1}}

\newcommand{\dii}{{\sf d}_{\mathcal{I}}}
\newcommand{\moni}{{\sf m}_{\mathcal{I}}}
\newcommand{\stor}[1]{{\sf s}_{#1}}
\newcommand{\wk}[1]{{\sf w}_{#1}}
\newcommand{\con}[1]{{\sf c}_{#1}}

\newcommand{\msbf}[1]{\mathbbm{#1}}

\newcommand{\abs}[1]{\mid \!{#1} \! \mid}

\newcommand{\mc}[1]{\mathcal{#1}}

\newcommand{\ort}[3]{#1 \, \, \bot_{#2} \, \, \, #3}

\newcommand{\inpro}[3]{\langle #1  \! \mid  \! #3 \rangle_{#2}}

\newcommand{\inprocoh}[2]{\langle #1  , #2 \rangle}

\newcommand{\cunt}[1]{{\sf n}_{#1}}

\newcommand{\borel}{\mc{B}_{+}}

\newcommand{\ncdot}{\raisebox{.3mm}{\mbox{\tiny $\bullet$}}}

\newcommand{\FI}{F^{\mbox{-}1}}

\newcommand{\TKersfin}{{\sf TsKer}}

\newcommand{\TKersfinomg}{\TKersfin_{\omega}}


\newcommand{\opTKer}{{\sf TKer}^{\mbox{\tiny \sf op}}}
\newcommand{\opTKersfin}{{\sf TsKer}^{\mbox{\tiny \sf op}}
}
\newcommand{\GopTKersfin}{{\bf G} (\opTKersfin)}
\newcommand{\opTKersfinomg}{\opTKersfin_{\omega}}
\newcommand{\GopTKer}{{\bf G} (\opTKer)}

\newcommand{\GopTKersfinomg}{{\bf G} (\opTKersfinomg)}
\newcommand{\opTKeromg}{\opTKer_{\omega}}  
\newcommand{\opsymbol}{\mbox{\tiny \sf op}}
\newcommand{\natkappa}[2]{\mathsf{k}_{#1}{(#2)}}

\newcommand{\Dd}[2]{\delta(#2, \,#1)}
\newcommand{\absI}[1]{\abs{#1}^{\mbox{-}1}}

\newcommand{\Int}[3]{\int_{#1} #2 \,  #3}

\newcommand{\Sla}[1]{{\bf S}(#1)}
\newcommand{\Ti}[1]{{\bf T}(#1)}

\newcommand{\crc}[1]{#1^{\circ}}

\newcommand{\ccrc}[1]{#1^{\circ \circ}}

\newcommand{\bang}[1]{! #1}
\newcommand{\nat}[1]{\natural #1}
\newcommand{\mn}{\mathfrak{m}}

\allowdisplaybreaks

\pdfoutput=1

\begin{document}

\begin{frontmatter}




\title{A Linear Exponential Comonad in s-finite Transition Kernels
and Probabilistic Coherent Spaces}


\author{Masahiro HAMANO}
\ead{hamano@gs.ncku.edu.tw}
\address{Miin Wu School of Computing, National Cheng Kung University, \\
No1, University Road, Tainan City, 70101, TAIWAN}

\begin{abstract}
This paper concerns a stochastic construction of probabilistic coherent spaces by
employing novel ingredients (i) linear exponential comonad arising
 properly in
the measure-theory (ii) continuous orthogonality between measures and measurable
functions. 

A linear exponential comonad is constructed over 
a symmetric monoidal category of transition kernels,
relaxing Markov kernels of Panangaden's stochastic relations
into s-finite kernels.
The model supports an orthogonality in terms of an integral
between measures and measurable functions, which can be seen as
a continuous extension of
Girard-Danos-Ehrhard's linear duality for probabilistic coherent  spaces. 
The orthogonality is formulated by a Hyland-Schalk double
glueing construction, into which our measure theoretic
monoidal comonad structure is accommodated.
As an application to countable measurable spaces, 
a dagger compact closed category is obtained, whose double
glueing gives rise to the familiar category of
probabilistic coherent spaces.
\end{abstract}



\begin{keyword}
Stochastic Relations \sep 
Transition Kernels \sep 
Linear Exponential Comonad \sep 
Linear Logic \sep 
Orthogonality \sep 
Measure Theory \sep 
s-finite \sep
Exponential Measurable Space \sep
Double Glueing \sep 
Tight Orthogonality Category \sep 
Categorical Model \sep 
Probabilistic Denotational Semantics



\end{keyword}

\end{frontmatter}



\section*{Introduction}
Coherent spaces \cite{GirLL}, the original model in which Girard discovered
linear logic, provide a denotational semantics of
functional programming languages as well as logical systems.
Each space is a set endowed with a graph structure, called a {\em web}, in
which a proof (hence a program)
is interpreted by a certain subset, called a {\em clique}.
The  distinctive feature of this model is the linear duality,
stating that a clique $x \subseteq X$
and an anti-clique $x' \subseteq X'$
intersect in at most a singleton $\# (x
\cap x') \leq 1$.
The linear duality arising intrinsically to the coherent spaces
goes along with constructive modelling of logical
connectives. There arises a dual pair of multiplicative connectives
and of additive ones,  together with linear implication for 
multiplicative closed structure
(i.e., *-autonomy of denotational semantics). 

Category theoretically (freely from the web-based method),
the coherent spaces are
realised by Hyland-Schalk's double glueing construction \cite{HSha}
$\GRel$ over the category of relations $\Rel$, which
is the most primary self dual denotational  semantics
with the tensor (the cartesian product of sets)
and the biproduct (the disjoint union of sets).
The double glueing lifts the degenerate duality
of $\Rel$ into a nondegenarate one, called {\em orthogonality},
which in turn gives rise to the linear duality so that 
the coherent spaces
reside as an orthogonal subcategory.

Developing the web method,
Ehrhard investigates the linear duality in the mathematically richer structures
of K\"{o}the spaces \cite{EhrKothe} and finiteness spaces
\cite{EhrFinite}. His investigation of duality leads 
Danos-Ehrhard \cite{DE} to formulate a probabilistic (fuzzy) version of duality
in their probabilistic coherent spaces $\Pcoh$.
Their construction starts with
giving non-negative real valued functions on a
web $I$, reminiscent of probabilistic distributions (but
not necessarily to the interval $\interval{0}{1}$) on the web,
so that a clique becomes a subset of $\mathbb{R}_+^I$.
Then, in their probabilistic setting, the linear duality becomes formulated
$\inprocoh{x}{x'}=\sum_{i \in I} x_i x'_i \leq 1 $ for $x, x' \in
\mathbb{R}_+^I$.
The precursor of the formulation is addressed earlier
in Girard \cite{Girquan}.
The probabilistic linear duality
is accommodated into the linear exponential $!$ by
generalising the original finite multiset functor construction
of Girard \cite{GirLL} with careful analysis of
permutations and combinations on enumerating members of multisets.
The canonicity of their exponential construction
is ensured in \cite{Crubille}.

The recent trend of probabilistic semantics is more widely applied
to transition systems with continuous state spaces for
concurrent systems such as stochastic process calculi.
The stochastic relation $\SRel$, explored by Panangaden
\cite{Prabook,Prapaper},
provides a fundamental categorical ingredient to the study,
analogous to how the 
category $\Rel$ of the relations has been to deterministic
discrete systems.
Recalling that $\Rel$ is the Kleisli category of the powerset monad,
$\SRel$ is a probabilistic analog of the Giry monad \cite{Giry},
whereby powerset is replaced by a
probability measure on a set, giving
random choice of points, hence collections of fuzzy subsets are obtained.
$\SRel$ also provides coalgebraic reasoning for continuous time
branching logics \cite{Doberkat}.
Despite the lack of cartesian closed structure,
Markov kernels provide a measure theoretic foundation of recent
development of various denotational semantics for higher-order
probabilistic computations \cite{EPTpopl, HKSY},
theoretically with adequacy and practically with continuous
distributions for Monte Carlo simulation.
We also remark
 an intermediate approach on
 the weighted relational model \cite{LMMP}
 confining the discrete probability
 but acquiring *-autonomy and exponential structure for Linear Logic.

This paper intends to present a general machinery inspired by $\SRel$, amalgamating category and measure theories. 
This integration leads to the development of two fundamental constructions:
(i) linear exponential comonad tailored for stochastic processes
and (ii) linear duality and probabilistic orthogonality
in continuous spaces. The two parts involve comonad,
widely used in computer science,
but exploration in its continuous stochastic aspect
is initiated just recently by \cite{EhrCone, Geoff, Paquet}
in the higher order probabilistic programming.

Our results for each can be summarized as follows:
(i) The counting process \cite{CP} in the realm of stochastic processes introduces a novel categorical representation of linear exponential comonad, 
capturing the exponential modality of linear logic. Specifically,
for countable measurable spaces, this approach simplifies the understanding of the exponential structure within $\Pcoh$ by representing it as a discrete collapse of measure-based probability. 
Furthermore, our linear exponential comonad based on transition kernels can be viewed as a continuous version of the weighted relational model outlined in \cite{LMMP} utilising $\zeroinf$-weighted $\Rel$ for the analytic exponential.
(ii) Within a broader continuous framework, our study of transition kernels offers a new perspective on Hyland-Schalk orthogonality. This perspective provides insight into the continuous extension
of linear duality in terms of measures and measurable functions.

It is important to note that both (i) and (ii) do not incorporate any closed structure for monoidal products in the continuous framework, making them inconclusive as a complete model of linear logic.

\smallskip

The paper commences by introducing a stochastic framework involving transition kernels \cite{Bau} that establishes a category $\TKer$ with biproduct.
This framework elucidates the necessity for kernels to encompass
the infinite real values $\infty$,
especially in the context of transition kernels between measurable spaces and measurable functions, which serve as a relaxation of sub-Markov kernels within Panangaden's $\SRel$ of sub-probability measures. 

The monoidal product is straightforwardly derivable through 
the measure theoretic direct product, similar to $\SRel$.
However within the relaxation in our framework,
careful examination of the functoriality of the product 
becomes imperative.
In measure theory, the functorial monoidal product is guaranteed by the fundamental Fubini-Tonelli Theorem, where 
$\sigma$-finiteness \cite{Prabook}, including finiteness and subMarkov
properties, plays a crucial role.
However, category theory presents challenges
as $\sigma$-finiteness is not preserved 
under categorical composition. Although a smaller
class of finiteness \cite{Borg}
maintains both categorical composition and functorial monoidalness,
it proves insufficient for accommodating the exponential modality as
required in this paper. 
The concept of s-finiteness, recently explored in Staton's work \cite{Stat},
extends the traditional $\sigma$-finiteness to preserve composition while
upholding Fubini-Tonelli for functorial monoidal product. 
We demonstrate that s-finiteness also facilitates an exponential construction, both in terms of measure theory and category theory.
The exponential construction is characterised
through ``counting measures'' \cite{CP} 
in the realm of stochastic processes, where the counting 
function for multisets becomes measurable.
In our study, we establish an exponential endofunctor within the monoidal 
category $\TKersfin$, consisting of s-finite transition kernels. 
Furthermore, we devise a linear exponential comonad in
$\opTKersfin$, serving as a model for the exponential modality in linear logic
from a category-theoretical perspective \cite{BS,HSha, Mellies}.

Secondly, we utilise Hyland-Schalk general categorical construction of double
glueing \cite{HSha} to $\opTKersfin$. In the double glueing $\GopTKersfin$,
the coproduct and product operations, as well as the monoidal tensor and cotensor, exhibit distinctions. Our primary focus is how to lift the linear exponential 
comonad in $\opTKersfin$ to the double glueing. 
By applying the general methodology \cite{HSha}
to our specific exponential kernels, distinct linear comonad structures are established within $\GopTKersfin$.  Subsequently,
by observing a contravariant equivalence between $\TKer$ and $\Mes$ 
(representing measurable functions and linear positive maps
preserving monotone convergence), we formulate an orthogonality between
a measurable map $f$ and a measure $\mu$ so that $f \in \TKer(\mc{X},\mc{I})$
and $\mu \in \TKer(\mc{I},\mc{X})$ are orthogonal if  
$\int f d \mu \leq 1$. 
This orthogonality serves as a continuous version of Danos-Ehrhard's linear
duality for $\Pcoh$.
Our orthogonality includes an adjunction between operators
$\kappa^*$ and $\kappa_*$
associated with a kernel $\kappa$ respectively on measurable functions and on
measures.
This adjunction enforces a coherence condition for the orthogonality concerning the exponential comonad. It notably simplifies
the general construction by Hyland-Schalk.
The introduced orthogonality concept enables the construction of
certain double glueing subcategories, including tight orthogonality one,
on which our primary focus lies.

Finally, we delve into the full subcategory $\TKersfinomg$ of countable measurable spaces, where morphisms of transition kernels collapse into transition matrices. Within $\TKersfinomg$, there exists a dagger functor internalising the contravariant equivalence in the subcategory restriction. This results in a monoidal closed structure within $\TKersfinomg$, rendering the category dagger compact closed. Consequently the double gluing $\GopTKersfinomg$
becomes *-autonomous. Our goal is to establish an equivalence 
of the tight orthogonal subcategory $\Ti{\opTKersfinomg}$ to the category
$\Pcoh$ of probabilistic coherent spaces. Notably, this equivalence marks the first precise formulation
of the folklore among the linear logic community (cf. \cite{Pagslide}).

\smallskip
 
The paper is organised as follows:
Section \ref{firstsec} presents various categories of
transition kernels and  measurable spaces.
Section \ref{secems} starts with a measure theoretic study on exponential
measurable spaces and establishes exponential transition kernels.
Section \ref{secems} constructs a linear exponential comonad over
a monoidal category
$\opTKersfin$.
Section \ref{dgort} is an application of
Hyland-Schalk double glueing 
to our measure theoretic construction.
Section \ref{closedTK} restricts to the countable measurable spaces
in particular for obtaining $\Pcoh$ as a double glueing.

\tableofcontents

\bigskip

\section{Preliminaries from Measure Theory}
This section recalls some basic definitions and a theorem
from measure theory, necessary in this paper.

\noindent ({\bf Terminology})
$\mathbb{N}$ denotes the set of non negative integers.
$\mathbb{R}_+$ denotes the set of non negative reals.
$\zeroinf$ denotes $\mathbb{R}_+ \cup \{ \infty \}$.
$\mathfrak{S}_n$ denotes the symmetric group
over $\{1, \ldots ,n\}$.
$\delta_{x,y}$ is the Kronecker delta.
For a subset $A$, $\chi_A$ denotes the characteristic function of $A$.
The Dirac delta $\delta_x (A)$ is  
$\chi_A(x)$. $\uplus$ denotes the disjoint union of sets.

\smallskip

\begin{defn}[$\sigma$-field $\mc{X}$
and measurable space $\ms{X}$]~\\{\em
A {\em $\sigma$-field} over a set $X$
is a family $\mc{X}$ of subsets of $X$
containing $\emptyset$,
closed under the complement and countable union.
A pair $\ms{X}$ is called a {\em measurable space}.
The members of $\mc{X}$ are called {\em measurable sets}.
The measurable space is often written simply 
by $\mc{X}$, as $X$ is the largest element in $\mc{X}$.
For a measurable set $Y \in  \mc{X}$,
the measurable subspace $\mc{X} \cap Y$,
called the {\em  restriction on} $Y$,
is defined by $\mc{X} \cap Y := \{  A \cap Y \mid A \in \mc{X} \}$.
}
\end{defn}


\begin{defn}[$\sigma(\mc{F})$ and Borel $\sigma$-field $\borel$]
{\em For a family $\mc{F}$ of subsets of $X$,
$\sigma(\mc{F})$ denotes the $\sigma$-field generated by $\mc{F}$,
i.e., the smallest $\sigma$-field containing $\mc{F}$.
When $X$ is $\zeroinf$ and $\mc{F}$ is the family $\mc{O}_{\zeroinf}$
of the open sets in $\zeroinf$ (with the topology whose basis consists of 
the open intervals in $\mathbb{R}_+$
together with $(a, \infty):=\{ x \mid a  < x  \}$
for all $a \in \mathbb{R}_+$), the $\sigma$-field
is denoted by $\borel$, whose members
are called Borel sets over $\zeroinf$.
}\end{defn}
\begin{defn}[measurable function]{\em
For measurable spaces$\ms{X}$ and $\ms{Y}$,
a function $f: X \longrightarrow Y $ is {\em
 $(\mc{X},\mc{Y})$-measurable}
(often just {\em measurable})
if $f^{-1}(B) \in \mc{X}$ whenever $B \in \mc{Y}$.
In this paper,
a measurable function unless otherwise mentioned is  to the Borel set
$\borel$ over
$\zeroinf$ from some measurable
space $\ms{X}$. 
}\end{defn}
\begin{defn}[measure]{\em
A {\em measure} $\mu$ on a measurable space $\ms{X}$
is a function from $\mc{X}$ to $\zeroinf$
satisfying ({\em $\sigma$-additivity}):
If $\{ A_i \in \mc{X} \mid i \in I \}$
is a countable family of pairwise disjoint sets,
then $\mu(\bigcup_{i \in I} A_i)= \sum_{i \in I} \mu(A_i)$.
 }\end{defn}

\begin{defn}[integration]{\em
For a measure $\mu$ on $\ms{X}$, and a
$(\mc{X}, \borel)$ -measurable function $f$, 
the integral of $f$  over $X$ wrt the measure $\mu$ is defined
by $\int_X f(x) \mu(dx)$, which is simply written $\int_X f d \mu$.
It is also written $\int_X d \mu f$.
}\end{defn}


\begin{thm}[monotone convergence] \label{MC}
Let $\mu$ be a measure on a measurable space $\ms{X}$.
For an monotonic sequence $\{ f_n \}$
of $(\mc{X}, \borel)$-measurable functions,
if $f = \sup_n {f_n}$, then $f$ is measurable and $\sup \int_X f_n d \mu =
\int_X f d \mu$.
\end{thm}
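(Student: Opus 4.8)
The statement is the classical monotone convergence theorem, so the plan is to follow the standard route: first settle measurability of the pointwise supremum, then obtain the integral identity by proving the two inequalities separately, with essentially all the work concentrated in the reverse direction. Throughout I take ``monotonic'' to mean nondecreasing, which is forced by the hypothesis $f = \sup_n f_n$ together with the claimed equality $\sup_n \int_X f_n\, d\mu = \int_X f\, d\mu$.

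For measurability I would use that the sets $\{ t \mid t > a \}$ with $a \in \mathbb{R}_+$ generate $\borel$ (their complements, countable intersections and unions recover every open interval of $\mathbb{R}_+$, as well as $\{ \infty \} = \bigcap_n \{ t \mid t > n \}$). Hence it suffices to check $f^{-1}(\{ t \mid t > a \}) \in \mc{X}$ for each $a$. Since $f = \sup_n f_n$, the pointwise equivalence ``$f(x) > a$ iff $f_n(x) > a$ for some $n$'' gives $f^{-1}(\{ t \mid t > a \}) = \bigcup_n f_n^{-1}(\{ t \mid t > a \})$, a countable union of members of $\mc{X}$ by measurability of each $f_n$; closure of $\mc{X}$ under countable union then yields measurability of $f$.

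Next set $L := \sup_n \int_X f_n\, d\mu$. As $\{ f_n \}$ is increasing, monotonicity of the integral makes $\{ \int_X f_n\, d\mu \}$ increasing too, so $L$ is in fact its limit in $\zeroinf$. Because $f_n \le f$ pointwise, monotonicity of the integral also gives $\int_X f_n\, d\mu \le \int_X f\, d\mu$ for every $n$, and taking the supremum over $n$ yields the easy inequality $L \le \int_X f\, d\mu$.

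The reverse inequality $\int_X f\, d\mu \le L$ is the crux. Recalling that $\int_X f\, d\mu$ is by construction the supremum of $\int_X s\, d\mu$ over simple functions $s$ with $0 \le s \le f$, it is enough to show $\int_X s\, d\mu \le L$ for each such $s$. Fixing $s$ and a scalar $c \in (0,1)$, I put $E_n := \{ x \mid f_n(x) \ge c\, s(x) \} \in \mc{X}$; monotonicity of $\{ f_n \}$ makes $\{ E_n \}$ increasing, while the strict slack $c < 1$ together with $s \le f = \sup_n f_n$ forces $\bigcup_n E_n = X$. Writing $s = \sum_j a_j \chi_{A_j}$, one then has
\[
\int_X f_n\, d\mu \ge \int_{E_n} f_n\, d\mu \ge c \int_{E_n} s\, d\mu = c \sum_j a_j\, \mu(A_j \cap E_n).
\]
The main obstacle, and the single place where $\sigma$-additivity of $\mu$ is genuinely invoked, is the passage to the limit: continuity of measure from below (a direct consequence of $\sigma$-additivity) gives $\mu(A_j \cap E_n) \to \mu(A_j)$ as $E_n \uparrow X$, whence $L \ge c \int_X s\, d\mu$. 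Letting $c \to 1$ gives $L \ge \int_X s\, d\mu$, and taking the supremum over all such $s$ gives $\int_X f\, d\mu \le L$. Combined with the easy inequality, this establishes $\sup_n \int_X f_n\, d\mu = \int_X f\, d\mu$.
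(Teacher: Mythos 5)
Your proof is correct: it is the standard textbook argument for the monotone convergence theorem (measurability of the pointwise supremum via preimages of the generating sets $\{t \mid t > a\}$, the easy inequality by monotonicity of the integral, and the reverse inequality via the sets $E_n = \{x \mid f_n(x) \ge c\,s(x)\}$ and continuity of the measure from below). The paper itself gives no proof of this statement --- it is recalled as a classical preliminary from measure theory --- so there is nothing to compare against; your argument would serve as a faithful proof of the cited result.
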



\begin{defn}[push forward measure $\mu \Comp F^{-1}$
along a measurable function $F$]\label{pfm}{\em
For a measure $\mu$ on $\ms{Y}$ and a measurable function $F$ from
$\ms{Y}$ to $\ms{Y'}$, $\mu'(B'):= \mu (F^{-1}(B))$ becomes a measure on
$\ms{Y'}$, called {\em push forward measure} of $\mu$
along $F$. The push forward measure $\mu'$ has the following property, called
{\em ``variable change of integral along push forward $F$''}:
\begin{align*} 
& \textstyle \int_{Y'} g \, d \mu'= \int_{Y} (g \Comp F) \, d \mu. 
& \text{That is,} \quad \quad
\textstyle \int_{Y'} g (y') \,\mu' (d y')= \int_{Y} g (F(y)) \, \mu (dy)
\end{align*}
The push forward measure $\mu'$ is often denoted by $\mu \Comp F^{-1}$
by abuse of notation.
}
\end{defn}

\section{Category $\TKer$, its Dual $\Mes$ and Monoidal Subcategory
 $\TKersfin$
of s-finite Transition Kernels}
\label{firstsec}
This section starts with introducing a category $\TKer$
of transition kernels with convolution
(i.e., an integral transform on the product)
as categorical composition. 
Measures and measurable function on a measurable space both arise
as certain morphisms in the category.
A contravariant equivalence is shown to
a category $\Mes$ of measurable functions.
When imposing s-finiteness to kernels,
a monoidal subcategory $\TKersfin$ with (countable) biproducts 
is obtained.
\subsection{Transition kernels and Contravariant Equivalence}
\begin{defn}[transition kernel]\label{TKer}{\em 
For measurable spaces $(X, \mathcal{X})$ and $(Y, \mathcal{Y})$,
a {\em transition kernel} from
$(X, \mathcal{X})$ to $(Y, \mathcal{Y})$ is a function
$$\begin{aligned}
\kappa  :  X \times \mathcal{Y}
\longrightarrow \zeroinf
\end{aligned}$$ 
such that 
\begin{itemize}
 \item[(i)]
For each $x \in X$, 
the function $\kappa(x, -):
\mathcal{Y}
\longrightarrow \zeroinf$ is a measure on $\ms{Y}$.
\item[(ii)]
For each $B \in \mathcal{Y}$,
 the function $\kappa(-,B): X
\longrightarrow \zeroinf$ is measurable on $\ms{X}$.
\end{itemize}
}\end{defn}

\begin{defn}[Operations $\kappa_*$ and $\kappa^*$ of a kernel $\kappa$
on measures and measurable functions] \label{opekernel} ~\\
{\em Let $\kappa: \ms{X} \longrightarrow \ms{Y}$
be a transition kernel.
\begin{itemize}
 \item 
For a measure $\mu$ on $\mathcal{X}$, 
$$\begin{aligned}
(\kappa_* \mu)(B) := \int_{X} \kappa (x,B) \mu( dx)
\end{aligned}$$
is a measure on $\mathcal{Y}$, where $B \in \mathcal{Y}$.

In particular, for a Dirac measure $\delta_a$ with any $a \in X$,
\begin{align*}
(\kappa_* \delta_a)(B)=\int_X \kappa(x, B) \delta_a (dx)
= \kappa(a, B)
\end{align*}

\item
For a measurable function $f$ on $\mathcal{Y}$, 
$$\begin{aligned}
(\kappa^* f)(x) := \int_{Y} f(y) \kappa (x, dy)
\end{aligned}$$
is measurable on $\mathcal{X}$, where $x \in X$. \\
In particular, for a characteristic function $\chi_B$
for any $B \in \mathcal{Y}$,
\begin{align}
(\kappa^* \chi_B)(x) := \kappa (x, B) \label{kapchi}
\end{align}
\end{itemize}
}\end{defn}
\noindent It is direct to check, by the monotone
convergence theorem \ref{MC}, that $\kappa^* f$ is measurable.

A characterization is known in (\ref{kapchi})
for which general mappings $\alpha$ in place of $\kappa^*$
in turn define transition kernels as follows:
\begin{prop}[Lemma 36.2 \cite{Bau}] \label{lemchar}
Let $\mf{X}$ denote the set of all $\zeroinf$-valued
measurable functions on a
 measurable space $\mc{X}$.
If a function $\alpha:\mf{Y} \longrightarrow \mf{X}$
is linear (that is, $\alpha(0)=0$ and 
$\alpha(r f + sg) = r \alpha( f) + s \alpha(g)$ for $r,  s \in \mathbb{R}_+$),
positive (that is, $\alpha f \geq 0$ whenever $f \geq 0$)
and preserves monotone convergence (that is, $\sup \alpha(f_n) =
\alpha \sup (f_n)$
for any monotone sequence $\{ f_n \}$ in $\mf{Y}$),
then 
\begin{align*}
 \alpha(\chi_B)(x):= \kappa (x, B)
\end{align*}
becomes a transition kernel from $\ms{X}$ to $\ms{Y}$.
Moreover $\kappa$ is the unique transition kernel satisfying $\kappa^* f
 =\alpha(f)$ for all $f \in \mf{X}$.
\end{prop}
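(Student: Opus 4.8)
The plan is to set $\kappa(x,B) := \alpha(\chi_B)(x)$ and verify the two defining clauses of a transition kernel (Definition \ref{TKer}) directly from the three hypotheses on $\alpha$, and then to bootstrap the identity $\kappa^* f = \alpha(f)$ from characteristic functions up to an arbitrary measurable $f$ by the standard simple-function approximation. First I would dispose of clause (ii), the measurability of $\kappa(-,B) = \alpha(\chi_B)$ in the first argument: this is immediate, since by hypothesis $\alpha$ takes values in $\mf{X}$, the $\zeroinf$-valued measurable functions on $\mc{X}$. For clause (i), fixing $x \in X$, I must show that $B \mapsto \kappa(x,B) = \alpha(\chi_B)(x)$ is a measure on $\ms{Y}$. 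Positivity of $\alpha$ guarantees the values lie in $\zeroinf$, and $\alpha(0)=0$ (part of linearity) together with $\chi_\emptyset = 0$ gives $\kappa(x,\emptyset)=0$.

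For $\sigma$-additivity, I would take a countable pairwise disjoint family $\{ B_i \}_{i \in \mathbb{N}}$ and set $C_n := \bigcup_{i \le n} B_i$. Since the $B_i$ are disjoint, $\chi_{C_n} = \sum_{i \le n} \chi_{B_i}$, and the sequence $\{ \chi_{C_n} \}$ increases to $\chi_{\bigcup_i B_i}$. Linearity of $\alpha$ turns the finite sum into $\alpha(\chi_{C_n}) = \sum_{i \le n} \alpha(\chi_{B_i})$, and preservation of monotone convergence then yields $\alpha(\chi_{\bigcup_i B_i}) = \sup_n \alpha(\chi_{C_n}) = \sum_i \alpha(\chi_{B_i})$; evaluating at $x$ is exactly $\sigma$-additivity, so $\kappa$ is a transition kernel. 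To establish $\kappa^* f = \alpha(f)$ I would argue by the usual three-layer extension. On a characteristic function the identity is definitional: by (\ref{kapchi}) and the construction, $(\kappa^* \chi_B)(x) = \kappa(x,B) = \alpha(\chi_B)(x)$. Both $\kappa^*$, by linearity of the integral, and $\alpha$, by hypothesis, are $\mathbb{R}_+$-linear, hence they agree on all nonnegative simple functions. For a general $f \in \mf{Y}$, choosing simple functions $s_n \uparrow f$, the monotone convergence theorem \ref{MC} applied to the measure $\kappa(x,-)$ gives $(\kappa^* f)(x) = \sup_n (\kappa^* s_n)(x)$, while preservation of monotone convergence gives $\alpha(f) = \sup_n \alpha(s_n)$, and the two coincide since $\kappa^* s_n = \alpha(s_n)$. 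For uniqueness, if $\kappa'$ is any transition kernel with $(\kappa')^* f = \alpha(f)$ for all $f$, then instantiating at $f = \chi_B$ and using (\ref{kapchi}) gives $\kappa'(x,B) = (\kappa')^* \chi_B(x) = \alpha(\chi_B)(x) = \kappa(x,B)$ for every $x$ and $B$, whence $\kappa' = \kappa$.

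The genuinely routine parts are clause (ii) together with the simple-function and uniqueness bookkeeping; the step where all three hypotheses actually interact is the $\sigma$-additivity argument and, dually, the passage from simple to arbitrary $f$ — both hinge on matching the monotone-convergence preservation of $\alpha$ against the monotone convergence theorem \ref{MC} for the integral, so I expect that pairing to be the crux rather than any single estimate. The only point demanding slight care is that the approximating simple functions increase to $f$ pointwise and that the suprema may genuinely reach $\infty$, which is harmless here since the entire development is valued in $\zeroinf$.
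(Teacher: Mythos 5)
Your proof is correct and is the standard argument; the paper itself gives no proof of this proposition, deferring entirely to Lemma 36.2 of \cite{CP}, and your three steps (measurability in $x$ from $\alpha$ landing in $\mf{X}$, $\sigma$-additivity from linearity plus preservation of monotone convergence applied to $\chi_{C_n}\uparrow\chi_{\bigcup_i B_i}$, and the characteristic--simple--general bootstrap matched against Theorem \ref{MC}) are exactly what that reference supplies. The only point worth flagging is that the statement's ``for all $f \in \mf{X}$'' is a typo for $f \in \mf{Y}$, which you have silently and correctly repaired.
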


\bigskip

\begin{defn}[categories $\TKer$ and $\Mes$] ~
{\em 
\begin{enumerate}
 \item[-]
$\TKer$ denotes the category where each object is a measurable space $\ms{X}$
and a morphism is a transition kernel $\kappa$
from $\ms{X}$ to $\ms{Y}$.
The composition is the 
{\em convolution} of two kernels
$\kappa : \ms{X} \longrightarrow
\ms{Y}$ and $\iota : \ms{Y} \longrightarrow
\ms{Z}$:
\begin{align}
\iota \Comp \kappa (x,C) = &
\int_Y \kappa(x,dy) \iota(y,C) \label{compTKer}
\end{align}
$\operatorname{id}_{\ms{X}}$ is the {\em unit kernel} $\delta:
\ms{X} \longrightarrow \ms{X}$, defined 
for $x \in X$ and $A \in \mc{X}$ by;
$$\begin{aligned}
\text{if $x \in A$ then $\delta(x,A) = 1$, else $\delta(x,A) = 0$.}
\end{aligned}$$  
That is, for each $x \in X$, $\delta(x, -)$ is the Dirac measure on $\ms{X}$.

\item[-]
$\Mes$ denotes the category whose objects are measurable spaces, same as $\TKer$, but whose morphisms
$\Mes(\mc{X}, \mc{Y})$ consists of any linear positive map $\alpha:
\mf{X} \longrightarrow \mf{Y}$
preserving monotone convergence.
The composition is simply that of the functions.
\end{enumerate}
}\end{defn}
\noindent 
It is now well known that the composition (\ref{compTKer})
for sub-Markov kernels (cf. below Remark \ref{Srel})
comes from Giry's probabilistic monad,
resembling the power set monad of the relational composition
(cf. \cite{Giry, Prabook}).
Instead of using the monad applied to our general setting
for the transition kernels, 
we give a simpler intuition 
how the composition (\ref{compTKer}) arises 
via the simpler composition of
$\Mes$  by assuming the expected functoriality
$\begin{aligned}
 (\iota \Comp \kappa)^* f  = 
 (\iota^* \Comp \kappa^*)(f)
\end{aligned}$.
That is, for any $x \in X$, we see 
\begin{align*}
(\iota \Comp \kappa)^* f (x)  =
  ((\kappa^* \Comp \iota^*)(f))(x)
=  
(\kappa^* (\iota^* f))(x) = 
\int_Y \iota^* f (y) \kappa(x,dy) 
=
\int_Y (\int_Z f(z) \iota(y,dz)) \kappa(x,dy)
\end{align*}
In particular, taking $f=\chi_{C}$ yields 
\begin{align*}
(\iota \Comp \kappa)^* \chi_C (x)  =
\int_Y \iota(y,C) \kappa(x,dy),
\end{align*}
which by (\ref{kapchi}) imposes the definition of 
the composition of the two kernels.

\smallskip

\begin{rem}[measures and measurable functions as $\TKer$ morphisms.]\label{remmesfun}{\em
Measures and measurable functions
both reside as morphisms in $\TKer$: Let $\ms{I}$ be the singleton
 measurable space with $I=\{ * \}$, hence $\mc{I}=\{ \emptyset, \{ *\}
 \}$, then 
$$\begin{aligned}
\TKer(\mc{I}, \mc{X}) &  = \{ \kappa(*,-) : 
\mc{X} \longrightarrow \zeroinf \mid 
 \mbox{kernel $\kappa$ with domain $\mc{I}$}  \} 
=  \{ \mbox{ the measures $\mu$ on $\ms{X}$ }\} \\
\TKer(\mc{X}, \mc{I}) & =  \{ 
\kappa(-,\{ * \}) : X \longrightarrow \zeroinf
\mid \mbox{kernel $\kappa$ with codomain $\mc{I}$} \}  
   \cup  \{ \kappa(-, \emptyset) =0 \}  \\ &  =  
\{ \mbox{\rm the measurable functions $f$ on $\ms{X}$ to $\borel \!$ }\} 
 \end{aligned}$$
The operations $\kappa_*$ and $\kappa^*$ of Definition 
\ref{opekernel} are respectively categorical precomposition and composition
with $\kappa$ in $\TKer$ so that $\begin{aligned}
\kappa_* \mu = \kappa \Comp \mu
\quad \text{and} \quad \kappa^* f = f \Comp \kappa.
\end{aligned}$
}\end{rem}


\begin{rem}[$\SRel$ \cite{Prabook,Prapaper}] \label{Srel}
{\em The category $\SRel$ of {\em stochastic relations}
is a wide subcategory of $\TKer$ strengthening
the conditions of Definition \ref{TKer}
into (i) $\kappa(x, -)$ is a {\em sub-probability} measure
(i.e., a measure from $\mc{Y}$ to $\left[0, 1\right]$)
 and (ii) $\kappa(-, B)$ is a {\em bounded} measurable function. \\
The morphisms of $\SRel$ are called {\em sub-Markov kernels}.
They are called Markov kernels 
when $\kappa(x, B)=1$ for any $x \in X$.
}\end{rem}
\noindent Note: The bounded condition of (ii) is derivable from (i),
thus the condition (ii) is redundant when defining $\SRel$
as a subcategory of $\TKer$.

\smallskip
We also remark here
a crucial reason seen immediately in the next Subsection \ref{cbTK}
why $\SRel$ needs to be extended to $\TKer$ in this paper: The coprpduct of $\SRel$
given in \cite{Prabook,Prapaper}
is not a biproduct in $\SRel$, but it is so in $\TKer$ 
(cf. Proposition \ref{TKerbipro}).
The biproduct existing in $\TKer$ will be crucial
to the main purpose of the paper in order to construct
an exponential structure in the s-finite subcategory introdued in
Section \ref{mpacb} below.

\smallskip
Proposition \ref{lemchar} says category theoretically;
\begin{prop} \label{contraeq}
$\TKer$ and $\Mes$ are contravariantly equivalent.
The equivalence is given by the contravariant functor 
$$\begin{aligned}
(~)^*: \TKer
 \cong (\Mes)^{\opsymbol}
\end{aligned}$$
 On the objects, $(~)^*$ acts as the identity. 
On the morphisms, the functoriality  $(\iota \Comp \kappa)^* = \kappa^* \Comp \iota^*$ is checked above. 
\end{prop}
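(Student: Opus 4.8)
The plan is to exhibit $(~)^*$ as a contravariant functor that is the identity on objects and fully faithful; since $\TKer$ and $\Mes$ share the same objects, essential surjectivity is then immediate and the contravariant equivalence follows. Concretely, a kernel $\kappa : \ms{X} \longrightarrow \ms{Y}$ is sent to the operator $\kappa^* : \mf{Y} \longrightarrow \mf{X}$ of Definition \ref{opekernel}, which reverses the direction and hence belongs to $\Mes(\mc{Y}, \mc{X})$. The heart of the argument is Lemma \ref{lemchar}, which I will use to turn the assignment $\kappa \mapsto \kappa^*$ into a bijection on hom-sets.

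First I would check that $(~)^*$ is well defined on morphisms, i.e. that $\kappa^*$ is a legitimate morphism of $\Mes$. Linearity and positivity of $\kappa^*$ are immediate from the corresponding properties of the Lebesgue integral against the measure $\kappa(x,-)$, preservation of monotone convergence is exactly the monotone convergence Theorem \ref{MC} applied pointwise in $x$, and measurability of $\kappa^* f$ has already been noted. Functoriality then splits into the identity and composition laws. For the identity, $\delta^* f(x) = \int_Y f(y)\,\delta(x,dy) = f(x)$, so $\delta^* = \operatorname{id}$. For composition I must verify the contravariant law $(\iota \Comp \kappa)^* = \kappa^* \Comp \iota^*$; this is precisely the computation displayed in the motivation of the convolution, now read as a theorem rather than a heuristic: it holds for $f = \chi_C$ by the very definition (\ref{compTKer}) together with (\ref{kapchi}), extends to simple functions by linearity, and then to all of $\mf{Z}$ by the standard approximation of a non-negative measurable function from below combined with Theorem \ref{MC}.

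Next I would establish full faithfulness by showing that $\kappa \mapsto \kappa^*$ is a bijection $\TKer(\mc{X},\mc{Y}) \longrightarrow \Mes(\mc{Y},\mc{X})$. Faithfulness is immediate from (\ref{kapchi}): since $\kappa(x,B) = (\kappa^*\chi_B)(x)$, the operator $\kappa^*$ already determines $\kappa$, so distinct kernels have distinct images. Fullness is exactly the content of Lemma \ref{lemchar}: any linear, positive, monotone-convergence-preserving $\alpha : \mf{Y} \longrightarrow \mf{X}$ arises as $\kappa^*$ for a (unique) transition kernel $\kappa$. Combined with the fact that $(~)^*$ is the identity on objects, hence trivially essentially surjective, this yields the claimed contravariant equivalence; in fact, being bijective on objects and on hom-sets, $(~)^*$ is a contravariant isomorphism of categories.

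The step I expect to be the main obstacle is the composition law, where one must pass from indicators to arbitrary $\zeroinf$-valued measurable functions. The only delicate point is the legitimacy of interchanging the integral against $\kappa(x,-)$ with the defining integral of the convolution; because every integrand is non-negative (valued in $\zeroinf$), this is a Tonelli-type interchange that holds with no integrability hypothesis, so, unlike the monoidal product, no finiteness restriction on the kernels is needed here. I would therefore carry out the indicator--simple--supremum argument explicitly, so that the use of Theorem \ref{MC}, rather than any dominated-convergence argument, is transparent.
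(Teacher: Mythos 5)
Your proposal is correct and follows essentially the same route as the paper, whose entire argument is the remark that Proposition \ref{lemchar} (the characterization of linear, positive, monotone-convergence-preserving maps as $\kappa^*$ for a unique kernel $\kappa$) ``says category theoretically'' that $(~)^*$ is a contravariant equivalence. You simply make explicit the routine verifications the paper leaves implicit — well-definedness of $\kappa^*$ in $\Mes$, the identity and composition laws via the indicator--simple--supremum argument, and full faithfulness from Proposition \ref{lemchar} — all of which are sound.
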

The contravariant equivalence $(~)^*$ in particular gives a direct
account 
on the measurable functions as the homset in Remark \ref{remmesfun}
by $\TKer(\mc{X},\mc{I}) \cong \Mes(\mc{I}, \mc{X})$.

\smallskip

\begin{rem}{\em
The contravariant functor $(~)^*$ when restricted to the Markov kernels $\SRel$
gives a contravariant equivalence to 
$\Ban$, where each object
is the subspace $\mc{E}^b(\mc{X})$ of {\em bounded} measurable
 functions, which forms a vector space.
The boundedness makes the space not only a vector space but moreover
a Banach space with the uniform norm 
$\left\lVert f \right\rVert  =\sup_{x \in X} \abs{\! f(x) \!}$.
The opposite category is studied in \cite{Prabook}
as the category of the {\em predicate transformers},
stemming from Kozen's precursory work 
on probabilistic programming.
Taking measurable functions as predicates and measures as states,
the ordinary satisfaction relation, say $\mu \models f$,
is generalised into integrals, say
$\int f d \mu$ giving a value in 
the interval $\interval{0}{1}$.
In the present paper in Section \ref{ort}, this satisfaction relation
will be explored in terms of the orthogonality relation.
}\end{rem}

\subsection{Countable Biproducts in $\TKer$} \label{cbTK}
The transition kernels have an intrinsic category theoretical property.
\begin{prop}[biproduct $\coprod$] \label{TKerbipro}
$\TKer$ has countable biproducts.
\end{prop}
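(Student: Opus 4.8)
The plan is to construct the biproduct as the measure-theoretic disjoint union of measurable spaces and then verify that the canonical injections and projections make it simultaneously a product and a coproduct, with the zero morphism being the zero kernel $\kappa(x,B)=0$. Concretely, for a countable family $\{(X_n,\mc{X}_n)\}_{n}$ I would set $X=\biguplus_n X_n$ and equip it with the $\sigma$-field $\mc{X}=\{A \subseteq X \mid A \cap X_n \in \mc{X}_n \text{ for all } n\}$, which is the coproduct $\sigma$-field. The injection $\inj_n: \ms{X}_n \longrightarrow \ms{X}$ should be the Dirac-type kernel $\inj_n(x,A)=\delta(x, A \cap X_n)=\chi_{A}(x)$ viewing $x \in X_n \subseteq X$, and the projection $\pr_n: \ms{X} \longrightarrow \ms{X}_n$ should be $\pr_n(x,B)=\delta(x,B)$ when $x \in X_n$ and $0$ otherwise, i.e. $\pr_n(x,B)=\chi_{X_n}(x)\,\delta(x,B)$ for $B \in \mc{X}_n$.

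First I would check that these injections and projections are genuine transition kernels (each fixes a measure in the first slot and a measurable function in the second), which is immediate since characteristic functions of measurable sets are measurable and Dirac measures are measures. Second, using the convolution formula (\ref{compTKer}) I would verify the biproduct equations: $\pr_n \Comp \inj_m = \delta_{n,m}\,\operatorname{id}_{\ms{X}_n}$ and the completeness relation $\sum_n \inj_n \Comp \pr_n = \operatorname{id}_{\ms{X}}$, where the infinite sum is taken pointwise and is well defined precisely because $\zeroinf$ admits countable sums and the $X_n$ partition $X$. These computations reduce to evaluating integrals against Dirac measures, which collapse to substitution, so they are routine once the spaces are set up correctly.

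Third, I would exhibit the universal properties. Given a cocone $\{\kappa_n : \ms{X}_n \longrightarrow \ms{Y}\}_n$, the mediating kernel $\kappa: \ms{X} \longrightarrow \ms{Y}$ is defined by $\kappa(x,B)=\kappa_n(x,B)$ for the unique $n$ with $x \in X_n$; dually, given a cone $\{\iota_n : \ms{Y} \longrightarrow \ms{X}_n\}_n$, the mediating kernel is $\iota(y,A)=\sum_n \iota_n(y, A \cap X_n)$, which is a measure in $A$ by $\sigma$-additivity and countable additivity of the sum in $\zeroinf$, and measurable in $y$ by the monotone convergence theorem \ref{MC} applied to partial sums. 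I would then check these factor the cocone/cone through the injections/projections and are unique, again by unwinding the convolution against Dirac measures.

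The main obstacle I expect is not any single verification but the measurability and additivity bookkeeping in the coproduct $\sigma$-field: one must confirm that $\kappa(-,B)$ assembled from the pieces $\kappa_n(-,B)$ is measurable on the glued space $\ms{X}$, and that the countably infinite sums defining both the completeness relation and the product mediating map are legitimate $\zeroinf$-valued measures rather than merely formal expressions. This is exactly where the passage from $\SRel$ to $\TKer$ is essential: since the mediating measure $\iota(y,-)$ can take the value $\infty$ (for instance the total mass over infinitely many summands need not be bounded), the construction fails to land in (sub)probability kernels, which is the precise reason, flagged in the remark preceding this proposition, that $\SRel$'s coproduct is not a biproduct whereas $\TKer$'s is. Handling the $\infty$ values carefully via monotone convergence is the crux; everything else follows from the Dirac-measure computations.
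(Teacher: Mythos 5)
Your construction coincides with the paper's: the same disjoint-union measurable space with the coproduct $\sigma$-field, the same Dirac-type injections and projections, the same mediating kernels ($\kappa(x,B)=\kappa_n(x,B)$ for the coproduct and the countable sum $\sum_n \iota_n(y,A\cap X_n)$ for the product), and the same observation that the product mediating map can take the value $\infty$, which is exactly why the construction lives in $\TKer$ rather than $\SRel$. The proposal is correct and follows essentially the same route; the additional completeness relation $\sum_n \inj_n \Comp \pr_n = \operatorname{id}$ you mention is a harmless supplement to the universal-property verification the paper carries out.
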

\begin{proof}
Given a countable family $ \{  \msi{X}{i} \}_i$ of measurable spaces,
we define 
\begin{align}
&  \textstyle \coprod \limits_{i} \msi{X}{i} :=(\bigcup \limits_i \{ i
 \} \! \times \! X_i, \,  \biguplus \limits_i
\mathcal{X}_i),  \label{biprod} 
\end{align}
where $ \biguplus_i \mathcal{X}_i := \{ \bigcup_i \{ i \} \! \times \! A_i \mid A_i \in
\mathcal{X}_i  \}$
is the $\sigma$-field generated by the measurable sets of each
 summands.

\noindent (Coproduct):
(\ref{biprod}) defines a coproduct for $\TKer$.
The injection
$\inj_j : \msi{X}{j} \longrightarrow (\bigcup \, \{ i \} \! \times \! X_i, \biguplus
\mathcal{X}_i)$ is defined by
$\inj_j (x_j, \sum_i A_i):= \delta(x_j,A_i) $.
The mediating morphism $\oplus_{i \in I} f_i
: (\bigcup \, \{i \} \times X_i, \biguplus
\mathcal{X}_i) \longrightarrow \ms{Y}$
for given morphisms
$f_i: \msi{X}{i} \longrightarrow \ms{Y}$
is defined by 
$(\oplus_{j \in I} \, f_j) ((i,x),B) = f_i (x,B)$.
Note (\ref{biprod}) is the same instance as the known 
coproduct in $\SRel$.
However, in the relaxed structure of $\TKer$, we have moreover;

\noindent (Product): (\ref{biprod}) becomes a product for $\TKer$.
The projection $\pr_i:  \coprod_{i} \msi{X}{i} \longrightarrow
\msi{X}{i}$ is given by
$\pr_i ((j,x),A_i):= \delta_{i,j} \cdot \delta(x,A_i)$.
The mediating morphism $\&_i g_i$ for given morphisms
$g_i : \ms{Y} \longrightarrow \msi{X}{i}$ is defined to be
$(\&_i g_i)(b, \,  \bigcup \, \{ i \} \! \times \! A_i
):=\sum_i  g_i (b,A_i)$. Note the construction for the meditating
morphism is not closed in Markov kernels, but is so 
in transition kernels.
This construction shows how values of measurable functions include the
infinite real $\infty$ when $I$ becomes infinite.
We check the uniqueness of the mediating morphism,
 say $m$: 
\begin{align}
& \textstyle (\pr_j \Comp m) (b,  \{ j \} \! \times \! A_j)
= \int_{\uplus_i X_i}  m(b,d(i,x)) \pr_j ((i,x), \{ j \} \! \times \!
 A_j) = \textstyle \int_{\uplus_i X_i}  m(b,d(i,x)) \, \delta_{j,i} \cdot
 \delta((i, x), \{ j \} \! \times \! A_j) \nonumber \\ 
&\textstyle  =
\int_{X_j}  m(b,d(j, x)) \delta((j, x),
\{ j \} \! \times \! A_j)
=
m(b, \{ j \} \! \times \! A_j)  \label{unimed}
\end{align}
The required commutativity for $m$
is $\pr_j \Comp m = g_j$, which holds  by (\ref{unimed}) if and only if 
$m(b,A_j)=g_j (b,A_j)$ for all $j$.
Since $m(b,-)$ is a measure and $\{ A_j \}_{j \in J}$
are disjoint, this yields the definition $\&_i g_i$
of the mediating morphism.
\end{proof}
\noindent The unit of the biproduct is
the null measurable space 
$\mc{T}= (\emptyset, \{ \emptyset \})$.

\smallskip

This subsection ends with the following remark,
which though is not required to comprehend the paper.
\begin{rem}[$\TKer$ is traced wrt the biproduct]\label{UDC}{\em 
$\TKer$ is a {\em unique decomposition category} \cite{Hag, HagScot},
which is a generalisation of Arbib-Manes
{\em partially additive category} studied in \cite{Prabook}
for $\SRel$.
A countable family of $\TKer$ morphisms $\{ \kappa_i : \ms{X} \longrightarrow
 \ms{Y} \mid i \in I  \}$ is summable so that 
$\sum_{i \in I} \kappa_i (x, B)$ is a transition kernel.
Then $(\TKer, \coprod )$ is traced so that
any $\kappa: \ms{X} \coprod \ms{Z} \longrightarrow \ms{Y} \coprod
 \ms{Z}$ yields
${\sf Tr}_{\mc{X},\mc{Y}}^{\mc{Z}}(\kappa):
\ms{X} \longrightarrow \ms{Y}$ which is the standard trace formula
corresponding to Girard's execution formula for Geometry of Interaction
\cite{GirGoII} and is defined further in \cite{HamGoI} using the execution concept. The trace operator enables modelling of both
{\em feed back} and {\em iteration}
 on a given morphism. 
Notably the Int construction 
by Joyal-Street-Verity \cite{JSV}
results in a compact closed
completion of $\TKer$ with $\coprod$ serving as tensor.
It is important to note that the monoidal product discussed in this paper is distinct from this one, but the measure theoretic direct product
as introduced in Definitions \ref{prodobj} and \ref{moncat} below.}
\end{rem}

\subsection{Monoidal Product and Countable Biproducts in $\TKersfin$ }
\label{mpacb}
This subsection introduces a subcategory $\TKersfin$
of s-finite transition kernels. 
The s-finiteness is a relaxation of
a standard measure theoretic class of the $\sigma$-finiteness so that 
the $\sigma$-finiteness resides intermediately
between finiteness and s-finiteness.
The relaxed class of the s-finite kernels is closed under
composing kernels, which is not the case
in the class of $\sigma$-finite kernels.
Inside the subcategory $\TKersfin$, the monoidal product
of morphisms is functorially defined to accommodate Fubini-Tonelli Theorem for
the unique integration over product measures. 
$\TKersfin$ is also shown to retain the countable biproducts in $\TKer$
of the previous subsection.


\begin{defn}[product of measurable spaces] \label{prodobj}{\em
The product of measurable spaces $(X_1, \mc{X}_1)$ and $(X_2, \mc{X}_2)$
is the measurable space $(X_1 \times X_2, \mc{X}_1 \otimes \mc{X}_2)$,
where $\mc{X}_1 \otimes \mc{X}_2$ denotes the $\sigma$-field
over the cartesian product $X_1 \times X_2$
generated by {\em measurable rectangles} $A_1 \times A_2$'s 
such that $A_i \in \mc{X}_i$.}
 \end{defn}
In order to accommodate 
measures into the product of measurable spaces,
each measures $\mu_i$ on $(X_i, \mc{X}_i)$ need to be extended uniquely to
that on the product.
The condition of {\em $\sigma$-finiteness} ensures this,
yielding the unique product measure over the product measurable space:
\begin{defn}[$\sigma$-finiteness]{\em
A measure $\mu$ on $\ms{X}$
is {\em $\sigma$-finite} when the set $X$ is written as a countable union of sets of
finite measures. That is, $\exists A_1, A_2, \ldots \in \mc{X}$ such that
$\mu(A_i) < \infty$ and $X = \cup_{i=1}^{\infty} A_i$.
}\end{defn} 
\begin{defn}[product measure]\label{prodmeasure}{\em
For a $\sigma$-finite measures $\mu_i$ on $(X_i, \mc{X}_i)$ with $i=1,2$,
there exists a unique
measure $\mu$ on $(X_1 \times X_2, \mc{X}_1 \otimes \mc{X}_2)$
such that $\mu (A_1 \times A_2)=\mu_1 (A_1) \mu_2 (A_2)$.
$\mu$ is written $\mu_1 \otimes \mu_2$ and called the {\em product measure}
of $\mu_1$ and $\mu_2$.}\end{defn}
\smallskip
The product measure derived from $\sigma$-finite measures
guarantees a basic theorem in measure theory, stating
double integration is treated as iterated integration.
\begin{thm}[Fubini-Tonelli] \label{FT}
For $\sigma$-finite measures $\mu_i$ on $(X_i, \mc{X}_i)$ with $i=1,2$
and a 
$(\mc{X}_1 \otimes \mc{X}_2, \borel)$-measurable function $f$,
$$\begin{aligned}
\int_{X_1 \times X_2} \! \! \! \! \! \! f \, d (\mu_1 \otimes \mu_2) =
\int_{X_2} \! \!   d \mu_2  \int_{X_1} \!  f \, d \mu_1 =
\int_{X_1} \! \! d \mu_1  \int_{X_2} \! f \, d \mu_2
\end{aligned}$$
\end{thm}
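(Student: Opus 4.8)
The plan is to prove the identity by the standard measure-theoretic induction (the ``standard machine''), building $f$ up from indicators through nonnegative simple functions to arbitrary nonnegative measurable functions. Since $f$ takes values in $\zeroinf$, all three expressions are automatically well defined (possibly $+\infty$), so I only need the Tonelli half: establish the common value of the three integrals for every nonnegative $f$, with no integrability hypothesis. The final passage from simple to general $f$ will be an immediate application of the monotone convergence theorem (Theorem \ref{MC}), since every nonnegative $(\mc{X}_1 \otimes \mc{X}_2, \borel)$-measurable function is the pointwise supremum of an increasing sequence of nonnegative simple functions, and monotone convergence lets me commute the supremum past each of the three integral signs.

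The crux is the case $f = \chi_E$ for $E \in \mc{X}_1 \otimes \mc{X}_2$. First I would check that every section $E_{x_1} := \{ x_2 \mid (x_1,x_2) \in E \}$ lies in $\mc{X}_2$: the collection of $E$ with this property is a $\sigma$-field containing the measurable rectangles, hence is all of $\mc{X}_1 \otimes \mc{X}_2$. I then want the map $x_1 \mapsto \mu_2(E_{x_1})$ to be $\mc{X}_1$-measurable and to satisfy $\int_{X_1} \mu_2(E_{x_1}) \, d\mu_1 = (\mu_1 \otimes \mu_2)(E)$, together with the symmetric statement in the other order. This is the step where $\sigma$-finiteness and Dynkin's $\pi$-$\lambda$ theorem (the monotone class theorem) enter, and it is the part I expect to be the main obstacle.

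To run the argument I would first assume $\mu_1, \mu_2$ finite. Let $\mathcal{D}$ be the family of $E \in \mc{X}_1 \otimes \mc{X}_2$ for which $x_1 \mapsto \mu_2(E_{x_1})$ is measurable, the symmetric map is measurable, and the two resulting iterated integrals agree. The measurable rectangles form a $\pi$-system on which these properties hold by Definition \ref{prodmeasure}, and $\mathcal{D}$ is a $\lambda$-system: closure under proper differences uses that $\mu_1,\mu_2$ are finite, so the section measures are subtractive, while closure under increasing unions uses monotone convergence. By the $\pi$-$\lambda$ theorem $\mathcal{D} = \mc{X}_1 \otimes \mc{X}_2$, and the common value $E \mapsto \int_{X_1}\mu_2(E_{x_1})\,d\mu_1$ defines a measure on the product agreeing with $\mu_1 \otimes \mu_2$ on rectangles, hence equal to it by the uniqueness asserted in Definition \ref{prodmeasure}. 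To lift this to the $\sigma$-finite case I would exhaust $X_i = \bigcup_n A_i^{(n)}$ by increasing sets of finite measure, apply the finite case to the restricted measures on $A_1^{(n)} \times A_2^{(n)}$, and let $n \to \infty$ using monotone convergence once more.

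Finally, linearity in $f$ extends the identity from indicators to nonnegative simple functions, and monotone convergence (Theorem \ref{MC}) extends it to all nonnegative measurable $f$, simultaneously yielding measurability of the inner integrals $x_1 \mapsto \int_{X_2} f \, d\mu_2$ and $x_2 \mapsto \int_{X_1} f \, d\mu_1$ as functions of the remaining free variable, so that the outer integrals make sense. The symmetry of the whole construction in the two indices then gives equality of both iterated integrals with $\int_{X_1 \times X_2} f \, d(\mu_1 \otimes \mu_2)$. The genuinely delicate point throughout is the $\lambda$-system closure under differences, which is exactly why the reduction to finite measures, and thence the $\sigma$-finite exhaustion, cannot be bypassed.
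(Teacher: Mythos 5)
The paper does not give a proof of Theorem \ref{FT}: it is recalled as a standard preliminary from measure theory (cf.\ \cite{Bau}), with the $\sigma$-finiteness hypothesis carried over from the uniqueness of the product measure in Definition \ref{prodmeasure}. Your argument is the standard and correct textbook proof of the Tonelli form — measurable sections, the $\pi$-$\lambda$ theorem for finite measures, $\sigma$-finite exhaustion, then linearity and monotone convergence for general nonnegative $f$ — so there is nothing in the paper to compare it against, and it establishes the statement as given.
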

The measure theoretical basic Fubini-Tonelli Theorem will become crucial
also to the categorical study of the present paper, not only dealing
with functoriality of morphisms
on the product measurable spaces (cf. Proposition
\ref{funcex} below), but also
giving a new instance of the orthogonality
using the measure theory in Section \ref{dgort}.

\smallskip

\smallskip

Although one can impose $\sigma$-finiteness
for the transition kernels $\kappa(x,-)$ (uniformly or non-uniformly
in $x$), this class of kernel is not closed in general
under the composition in the category $\TKer$.
For the sake of category theory, 
one remedy for ensuring the compositionality is
to tighten the class into the {\em finite} kernels.
This class confined to the measures is used in
finite measure transformer semantics \cite{Borg} for
probabilistic programs.
However the class of the finite kernels is not closed under
our exponential construction (Definition \ref{kerex}) later seen in
Section \ref{subsecexker}.
Thus, we need  another 
remedy to loosen  
the condition contrarily, which is how {\em s-finiteness} arises below.
While its notion was earlier established in \cite{Sharpe, Getoo},
the s-finiteness is recently studied by Staton \cite{Stat}
in modelling programming semantics.
In addition to the compositionality in our categorical setting,
the relaxed class of the s-finite kernels is shown to
retain the Fubini-Tonelli Theorem (Proposition \ref{FTsfin})
working with the uniquely defined product measure.

\smallskip

\begin{defn}[s-finite kernels \cite{Sharpe, Stat}] \label{sfinker}
{\em  
Let $\kappa$ be a transition kernel from $\ms{X}$ to $\ms{Y}$.
\begin{itemize}
 \item[-]
$\kappa$ 
is called {\em finite} when 
$
\sup_{x \in X} \kappa(x, Y) < \infty
$; i.e., the condition says that up to the scalar $0 < a < \infty$ factor
determined by the sup, $\kappa$ is Markovian.
\item[-]
$\kappa$ 
is called {\em s-finite} when $\kappa = \sum_{i \in \mathbb{N}} \kappa_i$
where each $\kappa_i$ is a finite kernel from $\ms{X}$ to $\ms{Y}$
and the sum is defined by 
$(\sum_{i \in \mathbb{N}} \kappa_i )(x, B)
:=\sum_{i \in \mathbb{N}} \kappa_i (x, B)$.
This is well-defined because 
any countable sum of kernels from $\ms{X}$ to $\ms{Y}$
becomes a kernel of the same type.
\end{itemize}
}\end{defn}
\noindent In the definition of s-finiteness, 
note that
$(\sum_{i \in \mathbb{N}} \kappa_i)^*
= 
\sum_{i \in \mathbb{N}} \kappa_i^*$ and 
$(\sum_{i \in \mathbb{N}} \kappa_i)_*
= 
\sum_{i \in \mathbb{N}} (\kappa_i)_*$
for the operations of Definition \ref{opekernel}:
That is, the preservation of the operation $(~)^*$ (resp. of $(~)_*$) 
means the commutativity of integral
over countable sum of measures (resp. of measurable functions).

\begin{rem} \label{clopf}
{\em Both classes of the finite kernels and of the s-finite kernels are
closed under the categorical composition of $\TKer$.
This is directly calculated for the finite kernels, to which the
s-finite ones are reduced by the note in the above paragraph. We refer 
to the proof of Lemma 3 of \cite{Stat} for the calculation.
In particular,
the class of s-finite kernels is closed under push forwards along
 measurable functions.
The both classes form wide subcategories of $\TKer$ introduced below 
Definition \ref{moncat}.
}\end{rem}

The definition subsumes that of {\em s-finite measures}
when $\ms{X}$ is in particular taken the singleton
measurable space $\ms{I}$.
Note that every $\sigma$-finite measure is s-finite, but not vice versa:
E.g., the infinite measure
$\infty \cdot \delta_a$ for the Dirac $\delta_a$ with $a \in X$
is not $\sigma$-finite, but s-finite.

\smallskip

A characterization of s-finite kernels is directly derived:
\begin{prop}[Proposition 7 of \cite{Stat}]
A kernel is s-finite if and only if
it is a push forward of a $\sigma$-finite kernel.
\end{prop}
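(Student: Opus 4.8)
The plan is to establish both implications by explicit constructions, relying on the fact that a push forward of a transition kernel is again a kernel (Definition~\ref{pfm} together with measurability in the first argument) and that push forward commutes with the countable sums of Definition~\ref{sfinker}. For \textbf{necessity}, given an s-finite $\kappa = \sum_{i \in \mathbb{N}} \kappa_i : \ms{X} \longrightarrow \ms{Y}$ with each $\kappa_i$ finite, the idea is to trade the overlapping sum for disjointness by spreading the summands over countably many copies of $\ms{Y}$. Concretely I would take as intermediate space the biproduct $\coprod_{i \in \mathbb{N}} \ms{Y} = (\mathbb{N} \times Y, \biguplus_i \mathcal{Y})$ of Proposition~\ref{TKerbipro}, and define $\lambda : \ms{X} \longrightarrow \coprod_i \ms{Y}$ on generators by $\lambda(x, \{ i \} \times B) := \kappa_i(x, B)$. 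First I would check that $\lambda$ is a kernel: for fixed $x$ it extends to the measure $\bigcup_i \{ i \} \times B_i \mapsto \sum_i \kappa_i(x, B_i)$, and for fixed measurable $C$ the map $x \mapsto \lambda(x, C)$ is a countable sum of measurable functions, hence measurable. This $\lambda$ is $\sigma$-finite, witnessed by the cover $\{ \{ i \} \times Y \}_i$, because $\sup_x \lambda(x, \{ i \} \times Y) = \sup_x \kappa_i(x, Y) < \infty$ by finiteness of $\kappa_i$; note the witness is uniform in $x$, so necessity holds whichever reading of $\sigma$-finiteness one adopts. Finally, along the measurable fold map $F : \mathbb{N} \times Y \longrightarrow Y$, $F(i, y) = y$ (with $F^{-1}(B) = \bigcup_i \{ i \} \times B$), the push forward gives $\lambda(x, F^{-1}(B)) = \sum_i \kappa_i(x, B) = \kappa(x, B)$, exhibiting $\kappa$ as the push forward of the $\sigma$-finite $\lambda$.

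For \textbf{sufficiency}, suppose conversely $\kappa(x, B) = \lambda(x, F^{-1}(B))$ for a $\sigma$-finite $\lambda : \ms{X} \longrightarrow \ms{Y'}$ and measurable $F : \ms{Y'} \longrightarrow \ms{Y}$. Taking a countable measurable cover $\{ A_i \}$ of $Y'$ witnessing $\sigma$-finiteness and disjointifying it by $A'_i := A_i \setminus \bigcup_{j < i} A_j$, I would restrict $\lambda$ to each piece, $\lambda_i(x, C) := \lambda(x, C \cap A'_i)$, so that $\lambda = \sum_i \lambda_i$. Pushing each summand forward, $\kappa_i(x, B) := \lambda_i(x, F^{-1}(B))$ satisfies $\kappa_i(x, Y) = \lambda_i(x, Y') = \lambda(x, A'_i)$, and since push forward commutes with the sum, $\sum_i \kappa_i = \kappa$; it then remains only to see that each $\kappa_i$ is finite, whence $\kappa$ is s-finite.

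\textbf{The main obstacle} lies precisely in that last finiteness claim, which is sensitive to the intended meaning of ``$\sigma$-finite kernel''. Under the uniform reading $\sup_x \lambda(x, A_i) < \infty$ each $\lambda_i$, and hence each $\kappa_i$, is finite outright. Under the weaker pointwise reading ($\lambda(x, A_i) < \infty$ for each $x$, with an $x$-independent cover) the $\lambda_i$ are only finite fibrewise, and I would insert the auxiliary observation that any such fibrewise-finite kernel is already s-finite: slicing the domain by the measurable sets $X_n := \{ x \mid n - 1 \le \lambda_i(x, Y') < n \}$ yields finite kernels $\chi_{X_n} \cdot \lambda_i$ with $\sum_n \chi_{X_n} \cdot \lambda_i = \lambda_i$. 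Either way the remaining verifications — that the displayed set functions are measures in the second argument and measurable in the first — are routine applications of monotone convergence (Theorem~\ref{MC}) and of the push forward identity of Definition~\ref{pfm}.
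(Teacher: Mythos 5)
Your ``only if'' direction is exactly the paper's argument: the paper defines $\nu(x,V):=\sum_{i}\kappa_i(x,\{y\mid (i,y)\in V\})$ on $(\mathbb{N},2^{\mathbb{N}})\times\ms{Y}$ and pushes forward along the projection $\mathbb{N}\times Y\to Y$, which is the same kernel and the same fold map as your $\lambda$ on $\coprod_i\ms{Y}$, with the same witness $\{\{i\}\times Y\}_i$ for $\sigma$-finiteness. For the ``if'' direction the paper is terser: it simply invokes the inclusion of $\sigma$-finite kernels into s-finite ones together with the closure of s-finiteness under push forwards (Remark~\ref{clopf}), whereas you re-derive this by disjointifying the cover, restricting, and pushing each piece forward. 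Your extra step --- slicing the domain by $X_n=\{x\mid n-1\le\lambda_i(x,Y')<n\}$ to upgrade a fibrewise-finite kernel to an s-finite one --- is a legitimate and worthwhile addition: the paper never pins down what ``$\sigma$-finite kernel'' means uniformly in $x$, and under the pointwise reading the inclusion into s-finiteness is exactly the point your slicing argument supplies. Both treatments are correct; the paper's buys brevity by leaning on previously stated closure properties, while yours is self-contained and makes the definitional sensitivity explicit.
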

\begin{proof}
We prove ``only if'' part as ``if part'' is direct because
of the inclusion of $\sigma$-finiteness into s-finiteness
and of the closedness of s-finiteness under push forward.  \\
Given a s-finite kernel $\kappa=\sum_{i \in \mathbb{N}} \kappa_i$
with finite kernels $\kappa_i$s
from $\ms{X}$ to $\ms{Y}$, 
a $\sigma$-finite kernel $\nu$ from $\ms{X}$ to $(\mathbb{N}, 2^{\mathbb{N}})
\times \ms{Y}$ is defined by
$\nu (x,V):= \sum_{i \in \mathbb{N}} \kappa_i (x, \{ y \mid (i,y) \in V
 \})$. Then $\kappa$ is the push forward of $\nu$
along the projection $\mathbb{N} \times Y \longrightarrow Y$.
\end{proof}

\smallskip

The original  Fubini-Tonelli (Theorem \ref{FT})
for the $\sigma$-finite measures 
extends to
the s-finite measures:
\begin{prop}[Fubini-Tonelli extending for s-finite measures (cf. 
Proposition 5 of Staton \cite{Stat})]  \label{FTsfin}
For the same $f$ as Theorem \ref{FT} but
$\mu_1$ and $\mu_2$ are s-finite measures, it holds;
$$\begin{aligned}
\int_{X_2} \! \!   d \mu_2  \int_{X_1} \!  f \, d \mu_1 =
\int_{X_1} \! \! d \mu_1  \int_{X_2} \! f \, d \mu_2
\end{aligned}$$
\end{prop}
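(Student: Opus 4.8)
The plan is to reduce the statement to the classical Fubini--Tonelli Theorem \ref{FT} for $\sigma$-finite (in fact finite) measures, by exploiting the defining decomposition of s-finite measures together with the commutation of Lebesgue integration with countable sums of nonnegative terms. First I would apply Definition \ref{sfinker}, specialised to the singleton source $\mc{I}$, to write $\mu_1 = \sum_{i \in \mathbb{N}} \mu_1^{(i)}$ and $\mu_2 = \sum_{j \in \mathbb{N}} \mu_2^{(j)}$ as countable sums of finite measures on $(X_1, \mc{X}_1)$ and $(X_2, \mc{X}_2)$ respectively. Since every finite measure is $\sigma$-finite, Theorem \ref{FT} applies to each pair $(\mu_1^{(i)}, \mu_2^{(j)})$, yielding
\[
\int_{X_2} d\mu_2^{(j)} \int_{X_1} f \, d\mu_1^{(i)} = \int_{X_1} d\mu_1^{(i)} \int_{X_2} f \, d\mu_2^{(j)}
\]
for all $i,j \in \mathbb{N}$.

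The key ingredient I would isolate is the commutation lemma already noted after Definition \ref{sfinker}: for any measure $\nu = \sum_k \nu_k$ and any $\zeroinf$-valued measurable function $g$ one has $\int g \, d\nu = \sum_k \int g \, d\nu_k$. I would establish this from the monotone convergence Theorem \ref{MC}: it holds for simple $g$ by rearranging a nonnegative double series, and extends to arbitrary nonnegative measurable $g$ by approximating from below by an increasing sequence of simple functions and passing the supremum through the integral (against $\nu$ and against each $\nu_k$ simultaneously) and through the countable sum, all terms being nonnegative and monotone.

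With this lemma in hand the proof is bookkeeping. Expanding the left-hand side, the inner integral is $\int_{X_1} f \, d\mu_1 = \sum_i \int_{X_1} f \, d\mu_1^{(i)}$, which is a measurable function of the remaining variable (each inner integral being measurable by Theorem \ref{FT} and their countable sum by Theorem \ref{MC}); integrating it against $\mu_2 = \sum_j \mu_2^{(j)}$ and applying the lemma twice produces the double sum $\sum_j \sum_i \int_{X_2} d\mu_2^{(j)} \int_{X_1} f \, d\mu_1^{(i)}$. Replacing each summand by its transpose via the per-pair equality above, then reversing the two interchanges (legitimate because every term is nonnegative, so the double series may be summed in either order), reassembles the right-hand side $\int_{X_1} d\mu_1 \int_{X_2} f \, d\mu_2$.

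I expect the only genuine content to lie in the interchange lemma, i.e.\ in justifying that integration commutes with a countable sum of measures; once that is secured the remainder is a purely formal rearrangement of a nonnegative double series. One point worth flagging is that, unlike Theorem \ref{FT}, the present statement deliberately omits the single integral $\int_{X_1 \times X_2} f \, d(\mu_1 \otimes \mu_2)$ against a product measure: the product measure of Definition \ref{prodmeasure} is pinned down by uniqueness only under $\sigma$-finiteness, so in the s-finite regime one asserts merely the equality of the two iterated integrals, which is exactly what the reduction above delivers.
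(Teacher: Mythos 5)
Your proposal is correct and follows essentially the same route as the paper: decompose each s-finite measure into a countable sum of finite measures, pull the sums outside the integrals using the commutation of integration with countable sums (the note after Definition \ref{sfinker}), apply the classical Fubini--Tonelli to each finite pair, and reassemble. The extra care you take in justifying the interchange lemma via monotone convergence and in flagging the absence of the product-measure integral is sound but does not change the argument.
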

\begin{proof}
Write $\mu_1 = \sum_{i \in \mathbb{N}} \mu_1^i$ and 
$\mu_2 = \sum_{j \in \mathbb{N}} \mu_2^j$
with finite kernels $\mu_1^i$s and $\mu_2^j$s, then the following
is from (LHS) to (RHS): 

$\begin{aligned}
& \textstyle
\int_{X_2} 
\sum_{j \in \mathbb{N}} \mu_2^j (d x_2)
\int_{X_1} f \sum_{i \in \mathbb{N}} \mu_1^i (d x_1)
=
\sum_{j \in \mathbb{N}} 
\int_{X_2} 
\mu_2^j (d x_2)
\sum_{i \in \mathbb{N}}
\int_{X_1} f  \mu_1^i (d x_1)   \\
&  \textstyle =
\sum_{j \in \mathbb{N}} 
\sum_{i \in \mathbb{N}}
\int_{X_2} 
\mu_2^j (d x_2)
\int_{X_1} f  \mu_1^i (d x_1)
=
\sum_{j \in \mathbb{N}} 
\sum_{i \in \mathbb{N}}
\int_{X_1} 
\mu_1^i (d x_1)
\int_{X_2} f  \mu_2^j (d x_2) \\
& \textstyle
=
\sum_{i \in \mathbb{N}}
\int_{X_1} 
\mu_1^i (d x_1)
\sum_{j \in \mathbb{N}} 
\int_{X_2} f  \mu_2^j (d x_2)
=
\int_{X_1} \sum_{i \in \mathbb{N}}
\mu_1^i (d x_1)
\int_{X_2} \sum_{j \in \mathbb{N}}  f  \mu_2^j (d x_2)
\end{aligned}$ \\
The first and the last
(resp. the second and the second last)
 equations are
by the commutativity of integral over countable sum of
measurable functions (resp. of measures)
(cf. the note in Definition \ref{sfinker}).
The middle equation is the original Fubini-Tonelli for the
 $\sigma$-finite measures, hence here in particular for
the finite ones.
\end{proof}

\smallskip

Finally it is derived that the s-finite transition kernels form a
monoidal category.
\begin{defn}[monoidal subcategories $\TKersfin$ of s-finite kernels
and $\TKerfin$ of finite ones] \label{moncat}{\em
$\TKersfin$ is a wide subcategory of $\TKer$,
whose morphisms are the {\em s-finite} transition kernels.
The s-finiteness of kernels is preserved under the composition of $\TKer$.
$\TKersfin$ has a symmetric monoidal product $\otimes$: On objects 
is by Definition \ref{prodmeasure}.
Given morphisms $\kappa_1: \ms{X_{\mbox{\scriptsize $1$}}} \longrightarrow 
\ms{Y_{\mbox{\scriptsize $1$}}}$
and
$\kappa_2: \ms{X_{\mbox{\scriptsize $2$}}} \longrightarrow 
\ms{Y_{\mbox{\scriptsize $2$}}}$,
their product is defined explicitly: 
\begin{align*}
(\kappa_1 \otimes \kappa_2) ((x_1,x_2), C) :=
\int_{Y_1}  \kappa_1 (x, dy_1) 
\int_{Y_2}  \kappa_2 (x, dy_2) 
\, \, \chi_{C} ((y_1,y_2))
\end{align*}
Alternatively, thanks to Fubini-Tonelli (Proposition \ref{FTsfin}),
the monoidal product is implicitly defined as 
the unique transition kernel
$\kappa_2 \otimes \kappa_2: (X_1 \times X_1, \mc{X}_1 \otimes \mc{X}_2)
\longrightarrow (Y_1 \times Y_2, \mc{Y}_1 \otimes \mc{Y}_2)$
satisfying the following for any rectangle $B_1 \times B_2$
with $B_i \in \mc{X}_i$ for $i=1,2$:
\begin{align*}
(\kappa_1 \otimes \kappa_2) ((x_1,x_2), B_1 \otimes B_2)=
\kappa_1 (x_1,B_1) \kappa_2 (x_2,B_2) \quad
\end{align*}
}\end{defn}

\noindent The unit of the monoidal product is the singleton measurable
space $\ms{I}$.

$\TKerfin$ is a monoidal wide subcategory of 
$\TKersfin$ whose morphisms are finite transition kernels.

\smallskip

Employing s-finiteness in order to achieve the monoidal product
is due to Staton \cite{Stat}. Our continued focus on s-finiteness
is its capacity to retain the countable biproducts of $\TKer$
defined Proposition \ref{TKerbipro}.

\begin{prop}[The subcategory $\TKersfin$ retains the countable
biproducts of $\TKer$] \label{retbip}~\\
$\TKersfin$ has countable biproducts
which are those in $\TKer$ residing inside the subcategory.
\end{prop}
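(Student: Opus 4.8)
The plan is to exploit Proposition \ref{TKerbipro} directly: the measurable space $\coprod_i \msi{X}{i}$ of (\ref{biprod}) is already a biproduct \emph{in} $\TKer$, so it suffices to check that this very object, together with the \emph{same} injections, projections and mediating morphisms, lives entirely inside the wide subcategory $\TKersfin$. Concretely I would verify (a) that the structural morphisms $\inj_j$, $\pr_i$ and the mediating morphisms $\oplus_i f_i$, $\&_i g_i$ are all s-finite, and (b) that the universal properties persist after restriction. Point (b) is essentially automatic: any candidate s-finite mediating morphism is in particular a $\TKer$-morphism, so the uniqueness already established in $\TKer$ forces it to coincide with the constructed one; thus once the \emph{existence} of an s-finite mediator is secured, its uniqueness in $\TKersfin$ comes for free. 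The objects are unchanged, and the defining biproduct identities such as $\pr_j \Comp \inj_i = \delta_{i,j}\,\operatorname{id}$ hold verbatim because they already hold in $\TKer$.

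The injections and projections are immediate. The injection $\inj_j$ sends $x_j$ to the Dirac measure at $(j,x_j)$, so $\inj_j(x_j, \coprod_i X_i) = 1$ and $\inj_j$ is Markov, a fortiori finite. The projection $\pr_i((j,x), A_i) = \delta_{i,j}\cdot \delta(x, A_i)$ has total mass at most $1$, hence is sub-Markov, again finite. Being finite, both are s-finite.

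The crux is the two mediating morphisms, where finiteness genuinely fails: as already observed in the proof of Proposition \ref{TKerbipro}, the mass of a mediator can reach $\infty$ once the index set is infinite, since $\sup_{j,x}(\oplus_i f_i)((j,x), Y) = \sup_j \sup_x f_j(x,Y)$ need not be bounded even when each $f_i$ is finite. The remedy is precisely the passage to s-finiteness. Writing each s-finite $f_i = \sum_n f_i^n$ with $f_i^n$ finite (Definition \ref{sfinker}), I would set $h_i^n((j,x), B) := \delta_{i,j}\, f_i^n(x, B)$, the kernel that copies $f_i^n$ on the $i$-th summand and vanishes elsewhere. Each $h_i^n$ is finite, since $\sup_{j,x} h_i^n((j,x), Y) = \sup_x f_i^n(x, Y) < \infty$, and because the summands are disjoint one checks $(\oplus_i f_i)((j,x), B) = f_j(x, B) = \sum_{i,n} h_i^n((j,x), B)$. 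As $\mathbb{N}\times\mathbb{N}$ is countable, this exhibits $\oplus_i f_i$ as a countable sum of finite kernels, hence s-finite. Dually, writing $g_i = \sum_n g_i^n$ I would embed each finite piece into the $i$-th summand by $k_i^n(b, \bigcup_j \{ j\}\times A_j) := g_i^n(b, A_i)$ (the pushforward of $g_i^n(b,-)$ along $\inj_i$, visibly a measure); each $k_i^n$ is finite with $\sup_b k_i^n(b, \coprod_j X_j) = \sup_b g_i^n(b, X_i) < \infty$, and $\&_i g_i = \sum_{i,n} k_i^n$ is again a countable sum of finite kernels, hence s-finite.

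The main obstacle is therefore not the universal property but the measure-theoretic fact that mediating morphisms escape the finite class; the summand-wise decompositions above show that this escape is confined to the s-finite class, which is exactly why $\TKersfin$, and not the smaller $\TKerfin$, retains the \emph{countable} biproducts of $\TKer$. I would close by noting that the zero morphism witnessing the biproduct equations is the constant kernel $0$, which is finite and hence s-finite, completing the verification that the biproduct structure of $\TKer$ restricts to $\TKersfin$.
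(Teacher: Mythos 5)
Your proposal is correct and follows essentially the same route as the paper: decompose each s-finite component into a countable sum of finite kernels, observe that the embedded pieces $\inj_i\Comp g_{(i,j)}$ (your $k_i^n$) are finite, and re-sum over the countable double index to exhibit the mediating morphisms as s-finite. The paper groups the double sum diagonally as $h_n=\sum_{i+j=n}\inj_i\Comp g_{(i,j)}$ rather than invoking a bijection $\mathbb{N}\times\mathbb{N}\cong\mathbb{N}$, and it treats only $\&_i g_i$ explicitly, whereas you also spell out the (equally necessary) check for $\oplus_i f_i$ and the structural maps — but these are presentational differences, not a different argument.
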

\begin{proof}
The coproduct construction of Proposition \ref{TKerbipro}
all works under the additional constraint of the s-finiteness of kernels. 
For the product construction, the only construction
necessary to be checked is that 
of the mediating morphism $\&_i g_i$, employing
the sum over $i \in I$ for a countable infinite $I$:
If given $g_i$'s of the product construction in
Proposition \ref{TKerbipro} are s-finite, then
each is written 
$g_i = \sum_{j \in \mathbb{N}} g_{(i,j)}$,
where each $g_{(i,j)}$ is a finite kernel from $\ms{Y}$ to $\msi{X}{i}$.
In what follows,
the index set $I$ is identified with $\mathbb{N}$.
A transition kernel $h_n$ is defined for each $n \in \mathbb{N}$: 
\begin{align*}
\textstyle h_n := \sum \limits_{i+j = n} \inj_{i} \Comp g_{(i,j)} :
\ms{Y} \longrightarrow \coprod \limits_{j \in I} \msi{X}{j} 
\end{align*}
where $\inj_i: \msi{X}{i} \longrightarrow \coprod_{j \in I} \msi{X}{j}$
is the coproduct injection.
Note $h_n$ is a finite kernel, as the sum specified by
the subscript $i+j=n$ is finite.
Then in terms of the finite kernels,
the mediating morphism $\&_i g_i$ constructed in Proposition 
\ref{TKerbipro} is represented  as follows to be s-finite:
\begin{align*}
\textstyle
\raisebox{-.8ex}{
$\underset{i \in \mathbb{N}}{\text{\Large $\&$}}$} \, 
g_i
=
\sum \limits_{n \in \mathbb{N}} h_n
\end{align*}
\end{proof}
\begin{rem}[the infinite biproduct as colimit in $\TKersfin$]
\label{infbpcol}
{\em  
The countable infinite biproduct in $\TKersfin$
is characterised by the colimit inside
the subcategory: Given the {\em direct system}
$<  \, \coprod_{n=0}^{k} (X_{
 n},\mc{X}_{n} ) , \,  \inj_{k}^\ell >_{k \in \mathbb{N},
\ell \geq k}$ in $\TKersfin$,
the colimit $\varinjlim_k
\textstyle \coprod_{n=0}^{k} (X_{
 n},\mc{X}_{n} )$ coincides with the infinite biproduct
$\coprod_{n=0}^{\infty} (X_{
 n},\mc{X}_{n} )$ in
$\TKersfin$. Hence, the colimit is closed in the
subcategory $\TKersfin$, but not necessarily in $\TKerfin$.
}\end{rem}

\section{A Linear Exponential  Comonad over $\opTKersfin$} \label{secems}
\subsection{Exponential Measurable Space $\mse{X}$} \label{subsecems}
This subsection concerns a measure theoretic study on exponential
measurable spaces.  \cite{CP} is a good reference for the subsection.

\begin{defn}[exponential monoid $X_e$] \label{expmonoid}
{\em $X_e$ denotes the {\em free abelian monoid} (the free semi group with identity)
generated by a set $X$: The members of $X_e$ are the formal products
$x_1 x_2 \cdots x_n$ where $x_i \in X$ and $n \in \msbf{N}$
so that order of factor is irrelevant.
The monoid operation for members of $X_e$
is obviously the free product.
When $n=0$, under the convention $x_1 x_2 \cdots x_n=0$, 
this is the monoid identity (in spite of the multiplicative notation),
which is equated with the empty sequence.
The monoid operator is written by a product $(x,y) \mapsto x y$.
Each member $\msbf{x} \in X_e$ is identified with a finite multiset of
 elements in $X$ and vice versa. Hence $\msbf{x}$ is 
seen as an integer valued function
on $X$, which vanishes to zero outside the finite sets;
$$\begin{aligned}
\msbf{x}(t)= \text{multiplicity of $t \in X$ as a factor of $\msbf{x}$}.
\end{aligned}$$
That is, $\msbf{x}$ represents the unique multiset of elements $X$,
and vice versa.

For $A \subseteq X$, we define 
$$\begin{aligned}
\textstyle \msbf{x}(A) :=  \sum \limits_{t \in A} \msbf{x}(t)
\end{aligned}$$
Then $\msbf{x}(A)$ represents the number of elements in $A$. \\
The {\em counting function} $\cunt{A}$ on $X_e$ is defined for each $A \subseteq
 X$,
$$\begin{aligned}
  \cunt{A} (\msbf{x}) = \msbf{x}(A)
 \end{aligned}$$
Note if $\msbf{x}$ is $x$ (i.e., the singleton sequence of $x \in X$),
then $ \msbf{x}(A)= \cunt{A}(x)= \delta(x,A)$ for any subset $A$ of $X$.
}\end{defn}

\smallskip


\smallskip
The members of $X_e$ can be seen as equivalence classes of ordered sequences
in $X_e^\bullet$ defined below
under rearrangement (permutations of factors):
\begin{defn}[non-abelian monoid $X^\bullet_e$] \label{nam}{\em
Using $\ncdot$ for ordered sequences,
$X_e^\bullet$ denotes the nonabelian monoid generated by $X$,
consisting of ordered sequences
$x_1 \ncdot x_2 \ncdot \cdots \ncdot x_n$ where $x_i \in X$ and $n
\geq 0$.
The monoid operation for members of $X_e^\bullet$
is obviously the operation $\ncdot$ joining sequences in order.
Then the abelian monoid $X_e$ is the image of
the monoid homomorphism $F$ forgetting the order of the factors:
\begin{align*}
F :  X_e^\bullet \longrightarrow X_e &  \qquad x_1 \ncdot \cdots
\ncdot  x_n \mapsto x_1 \cdots x_n. 
\end{align*}
Obviously $F^{-1} (F(A))$ is 
the smallest symmetric set containing $A
\subset X_e^\bullet$. \\
The set $X_e^\bullet$ is a disjoint union
$$\begin{aligned}
X_e^\bullet = \biguplus_{n \geq 0} X^{\bullet n}
\end{aligned}$$
where the set $X^{\bullet n}$ denotes $
\{ x_1 \ncdot \cdots
\ncdot  x_n  \mid x_i \in X \}$, which is
isomorphic to the $n$-ary cartesian product $X^n$ of $X$.
}\end{defn}
\smallskip

\noindent ({\bf Notation})
For any family $\mathcal{F}$ of subsets of $X$,
$\mathscr{P}^{\bullet}(\mathcal{F})$ denotes 
the class of all finite {\em ordered}
sequences $A_1 \ncdot \cdots \ncdot A_n :=\{
a_1 \ncdot \cdots \ncdot  a_n  \mid a_i \in A_i \}$
with $A_i \in \mathcal{F}$ and $n \in \mathbbm{N}$.
Similar notation for $\mathscr{P}(\mathcal{F})$
denoting the class of all {\em symmetric} formal product
$A_1 \cdots A_n
:=\{a_1 \cdots a_n  \mid a_i \in A_i \}$
so that the order of factors is irrelevant.

\smallskip

\smallskip

\begin{defn}[measurable space $\msed{X}$ induced by $\ms{X}$]
\label{fieldXnce}{\em 
Every measurable space $\mathcal{X}$ on $X$ induces a corresponding
measurable
space on the set $X_e^\bullet$ defined by:
$$\begin{aligned}
(X_e^\bullet, \mathcal{X}_e^\bullet)
& 
 = (\biguplus_{n \geq 0} X^{\bullet n}, \biguplus_{n \geq 0} \mc{X}^{\bullet
   n} ),
\end{aligned}$$
\noindent whose $\sigma$-field $\mc{X}_e^\bullet$
is the disjoint union of the measure theoretic $n$-ary
direct product of $\mc{X}$, on the set $X^{\bullet n}$. 
That is,
$\biguplus_{n \geq 0} \mc{X}^{\bullet
   n} = \{ \biguplus_{n \geq 0} A_n \mid A_n \in \mc{X}^{\bullet n} \}$.
Note by this definition,
$\mc{X}_{e}^\bullet$
is the $\sigma$-field generated by $\mathscr{P}^\bullet(\mc{X})$
and the subspace of $\mc{X}_{e}^\bullet$ restricted to $X^{\bullet n}$
coincides with the $n$-ary direct product of the measurable space $\mc{X}$:
 I.e.,
$$\begin{aligned}
\mathcal{X}_{e}^\bullet= \sigma (\mathscr{P}^\bullet(\mathcal{X}))
\quad \text{and} \quad \mathcal{X}_{e}^\bullet
\cap X^{\bullet n}  =  \mathcal{X}^{\bullet n} \quad \text{for any $n
   \geq 0$} 
\end{aligned}$$
In particular when $n=0$, $\mathcal{X}^{\bullet 0}$ is the only
   $\sigma$-field $\{ \emptyset, \{ \emptyset \} \}$
over $X^{\bullet 0}=\emptyset$.
}\end{defn}
\smallskip

\bigskip

In terms of category theory,
Definition \ref{fieldXnce} says
\begin{prop} \label{catfieldXnce}
In $\TKer$,
the measurable space $\msnce{X}$ of Definition \ref{fieldXnce}
is the countable infinite coproduct 
$\textstyle \coprod \limits_{n=0}^{\infty}
(X^{\bullet
 n},\mc{X}^{\bullet n} )$, whose inclusion $\inj_{m}^\infty$ 
from the $m$-th component is given;  
\begin{align*}
& \textstyle \msnce{X} \,   \cong \, \,
\textstyle \coprod \limits_{n=0}^{\infty}
(X^{\bullet
 n},\mc{X}^{\bullet n} )
\stackrel{\inj_{m}^\infty}{\longleftarrow}
(X^{\bullet
 m},\mc{X}^{\bullet m} ), \quad \mbox{where for $x_1 \ncdot \cdots
\ncdot  x_m \in X^{\bullet m}$ and $\msbf{A} \in \mc{X}_e^{\bullet}$} \\
&  \inj_{m}^\infty (
x_1 \ncdot \cdots
\ncdot  x_m,   \msbf{A}) = \delta (x_1 \ncdot \cdots
\ncdot  x_m,   \msbf{A} \cap X^{\bullet m}) 
= \delta (x_1 \ncdot \cdots
\ncdot  x_m,   \msbf{A} ) 
\end{align*}
Note that the injection factors through the colimit inclusion 
$\inj_l$ of the direct system of Remark \ref{infbpcol} such that 
$\inj_m^\infty = \inj_l \Comp \inj_{m}^l$ for any $l \geq m$. 

The infinite coproduct simultaneously 
becomes infinite biprducts, whose $m$-th projection $\pr_m^\infty$ is 
\begin{align*}
\pr_{m}^\infty (
x_1 \ncdot \cdots
\ncdot  x_k,   \msbf{B})=
\delta_{k,m} \, \delta (
x_1 \ncdot \cdots
\ncdot  x_k,   \msbf{B}) \quad \mbox{for $x_1 \ncdot \cdots
\ncdot  x_k \in X_e^\bullet$ \mbox{with $k \in \mathbf{N}$} and $\msbf{B} \in \mc{X}^{\bullet m}$.} 
\end{align*}
\end{prop}


Because of Remark \ref{infbpcol},
the construction of Proposition \ref{catfieldXnce}
is closed inside the subcategory $\TKersfin$
of s-finite kernels (but not in $\TKerfin$ of finite kernels).

\smallskip

Finally, the exponential measurable space $\mse{X}$
is obtained by the following equivalent
characterisations of a $\sigma$-field $\mc{X}_e$.
\begin{prop}[$\sigma$-field $\mc{X}_e$ over $X_e$
(cf. Theorem 4.1 \cite{CP})] \label{fieldXe}
For a measurable space $\ms{X}$,
the following families of subsets of $X_e$ all coincide
with the $\sigma$-field $\sigma(\mathscr{P}(\mc{X}))$,
which is denoted by $\mathcal{X}_e$.


\vspace{1ex}
{\em

\noindent (i) The quotient $\mathcal{X}_e^\bullet$
wrt rearrangement $\{ A \subset X_e \mid F^{\mbox{-}1} (A) \in
	   \mc{X}_e^\bullet  \} $.

\vspace{.5ex}

\noindent (ii)  The projection of $\mathcal{X}_e^\bullet$ by $F$: I.e.,
the $\sigma$-field 
$\{ F(\msbf{A}) \mid  \msbf{A} \in \mathcal{X}_e^\bullet  \}$.

\vspace{.5ex}

\noindent (iii) The smallest $\sigma$-field wrt which the counting
 	     functions $\cunt{A}$
 are measurable for all $A \in \mathcal{X}$.

\vspace{.5ex}

 \noindent (iv) The largest $\sigma$-field $\mc{Y}$ for $X_e$
having $\mc{X}$ as a subspace and such that the monoid product is
 	    measurable from $\mc{Y} \times \mc{Y}$ to $\mc{Y}$.

\vspace{.5ex}

\noindent (v)   The smallest $\sigma$-field for $X_e$ containing $\mc{X}$
 and for which the monoid product preserves measurability. 
\vspace{.5ex}

 \noindent (vi)  The smallest $\sigma$-field for $X_e$
containing $\mc{X}$ and
 	     closed under the symmetric product.
}
\end{prop}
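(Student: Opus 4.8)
The plan is to fix the reference $\sigma$-field $\mc{X}_e=\sigma(\mathscr{P}(\mc{X}))$ and prove that each of (i)--(vi) coincides with it, grouping the six characterisations by the tool they call for: the order-forgetting homomorphism $F$ of Definition \ref{nam} for (i) and (ii); the counting functions $\cunt{A}$ of Definition \ref{expmonoid} for (iii); and measurability/closure of the monoid product for (iv), (v), (vi). Two preliminary facts I would record first are that $F$ is a measurable surjection with $F(\mathscr{P}^\bullet(\mc{X}))=\mathscr{P}(\mc{X})$, and that each $\mc{X}^{\bullet n}$ is invariant under the coordinate action of $\mathfrak{S}_n$ (permutations are measurable isomorphisms of $X^{\bullet n}\cong X^n$). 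Consequently the symmetrization $S(\msbf{A}):=\FI(F(\msbf{A}))=\bigcup_{\sigma}\sigma\msbf{A}$ of any $\msbf{A}\in\mc{X}_e^\bullet$ is again in $\mc{X}_e^\bullet$, being a finite union of permutes within each degree; recall $\FI(F(\msbf{A}))$ is the smallest symmetric set containing $\msbf{A}$.

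For (i) and (ii): characterisation (i) is by construction the final (quotient) $\sigma$-field of $F$, hence automatically a $\sigma$-field, and (i) $=$ (ii) follows from surjectivity, $A=F(\FI(A))$, together with the symmetrization fact showing $\FI(F(\msbf{A}))\in\mc{X}_e^\bullet$. Both contain $\mathscr{P}(\mc{X})$, since $\FI(A_1\cdots A_n)=\bigcup_{\sigma\in\mathfrak{S}_n}A_{\sigma(1)}\ncdot\cdots\ncdot A_{\sigma(n)}$ is a finite union of measurable rectangles in $\mc{X}^{\bullet n}$, so $\mc{X}_e\subseteq$ (i). Granting the reverse inclusion (the crux, addressed last), I would then restrict $F$ to the sub-$\sigma$-field $\mathcal{S}$ of saturated measurable sets of $\msed{X}$; here $F$ is a $\sigma$-isomorphism $\mathcal{S}\cong$ (i) carrying symmetrized rectangles to members of $\mathscr{P}(\mc{X})$, whence (i) $=F(\mathcal{S})=\sigma(\mathscr{P}(\mc{X}))=\mc{X}_e$.

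For the remaining four: for (iii), note $\{\cunt{A}\ge k\}=\bigcup_{j\ge 0}(\underbrace{A\cdots A}_{k}\underbrace{X\cdots X}_{j})$ lies in $\sigma(\mathscr{P}(\mc{X}))$, so every $\cunt{A}$ is $\mc{X}_e$-measurable and (iii) $\subseteq\mc{X}_e$; conversely, membership of a multiset in $A_1\cdots A_n$ is determined by the finitely many counts $\cunt{P}$ over the atoms $P$ of the finite algebra $\sigma(A_1,\dots,A_n)$ (together with $\cunt{X}=n$), via a Hall-type matching condition on those counts, so $A_1\cdots A_n$ is a finite union of sets $\bigcap_P\{\cunt{P}=c_P\}\in$ (iii), giving $\mc{X}_e\subseteq$ (iii). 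For (iv), $\mc{X}_e$ makes the product map $m$ measurable, since $m^{-1}(A_1\cdots A_n)$ is the finite union over splittings of products of symmetric products, hence in $\mc{X}_e\otimes\mc{X}_e$, and $\mc{X}$ is a subspace; and $\mc{X}_e$ is the largest such, because any competitor $\mc{Y}$ forces $F|_{X^{\bullet n}}$ (an iterate of $m$ starting from the subspace $\mc{X}$) to be measurable into $\mc{Y}$, so $B\in\mc{Y}$ gives $\FI(B)\in\mc{X}_e^\bullet$, i.e. $B\in$ (i) $=\mc{X}_e$. For (v) and (vi), $\mc{X}_e$ is closed under the symmetric product, since for $B,C\in\mc{X}_e$ the set $\FI(BC)$ is the symmetrization of the degree-wise measurable concatenation $\FI(B)\ncdot\FI(C)$, so $BC\in$ (i) $=\mc{X}_e$; as $\mc{X}_e$ contains $\mc{X}$ and any product-closed $\sigma$-field over $\mc{X}$ must contain $\mathscr{P}(\mc{X})$ and hence $\mc{X}_e$, it is the smallest such, giving (v) $=$ (vi) $=\mc{X}_e$.

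The one genuinely nontrivial step, and the main obstacle, is the inclusion (i) $\subseteq\mc{X}_e$ used above: equivalently, that on each $X^{\bullet n}$ the $\mathfrak{S}_n$-invariant $\sigma$-field $\mathcal{I}_n$ is generated by the symmetrized rectangles, $\mathcal{J}_n:=\sigma(\{S(A_1\ncdot\cdots\ncdot A_n)\})$. I would prove $\mathcal{I}_n=\mathcal{J}_n$ by showing that $\operatorname{Sym}(g):=\tfrac{1}{n!}\sum_{\sigma}g\Comp\sigma$ is $\mathcal{J}_n$-measurable for every bounded $\mc{X}^{\bullet n}$-measurable $g$, via the functional monotone-class theorem applied to the multiplicative class of rectangle indicators $\chi_R$ (closed under products since $R\cap R'$ is a rectangle, and generating $\mc{X}^{\bullet n}$). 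The decisive computation is that $\operatorname{Sym}(\chi_{B_1\times\cdots\times B_n})=\tfrac{1}{n!}\operatorname{perm}(M)$ with $M_{ij}=\chi_{B_j}(x_i)$, and this permanent is a fixed combinatorial function of the counts $\cunt{B_S}$ over $B_S=\bigcap_{j\in S}B_j\in\mc{X}$ for $S\subseteq\{1,\dots,n\}$; each such count is $\mathcal{J}_n$-measurable because $\{\cunt{B_S}\ge k\}$ is itself a symmetrized rectangle. Hence $\operatorname{Sym}(\chi_R)$ is $\mathcal{J}_n$-measurable, the monotone-class hypotheses are met, and for any invariant $E$ we get $\chi_E=\operatorname{Sym}(\chi_E)$ $\mathcal{J}_n$-measurable, i.e. $E\in\mathcal{J}_n$. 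The real work is exactly this permanent-to-counts reduction and the monotone-class bookkeeping; once it is in place, the isomorphism $F:\mathcal{S}\cong$ (i) closes (i) $=$ (ii) $=\mc{X}_e$, and the clusters above finish (iii)--(vi).
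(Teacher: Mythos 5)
Your proposal is essentially correct, but note that the paper does not actually prove this proposition: it is stated with a bare citation to Theorem~4.1 of Carter--Prenter \cite{CP}, so there is no in-paper argument to compare against. What you supply is therefore a self-contained proof where the paper has none, and its two load-bearing pieces both check out. The crux --- that on each $X^{\bullet n}$ the $\mathfrak{S}_n$-invariant measurable sets coincide with the $\sigma$-field generated by symmetrized rectangles --- is handled correctly by your symmetrization operator $\operatorname{Sym}$ together with the functional monotone class theorem; the decisive observation that $\operatorname{Sym}(\chi_{B_1\times\cdots\times B_n})=\tfrac{1}{n!}\operatorname{perm}(M)$ depends only on the counts $\cunt{B_S}$, each of whose level sets $\{\cunt{B_S}\ge k\}\cap X^{\bullet n}$ is itself a symmetrized rectangle, is sound, and transferring along the Boolean isomorphism between saturated sets and subsets of $X^{(n)}$ closes (i) $=$ (ii) $=\mc{X}_e$. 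Likewise the Hall-condition characterisation $\msbf{x}\in A_1\cdots A_n \iff \cunt{X}(\msbf{x})=n \text{ and } \cunt{\bigcup_{j\in T}A_j}(\msbf{x})\ge |T|$ for all $T$ correctly reduces (iii) to counting functions, and the factorisation of $F|_{X^{\bullet n}}$ through iterated multiplication gives (iv) cleanly. The one soft spot is your treatment of (v): you silently identify ``the monoidal product preserves measurability'' with ``closed under the symmetric product,'' i.e.\ with (vi). This is the only reading under which (v) is true (if it meant that $m$ is a measurable map, the strictly smaller $\sigma$-field $\sigma(\mc{X}\cup\{X^{(0)}\})$ would already qualify, contradicting minimality), so your reading is forced, but it would be worth one sentence making that disambiguation explicit rather than leaving (v) and (vi) conflated without comment.
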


\smallskip

\begin{defn}[exponential measurable space $\mse{X}$]
{\em
The measurable space $\mse{X}$, whose 
$\mc{X}_e$ is defined by Proposition \ref{fieldXe}
such that $\mc{X}_e=\sigma(\mathscr{P}(\mc{X}))$,
is called the {\em exponential measurable
 space} of $\ms{X}$.
}
\end{defn}

\smallskip

This section ends with a measure theoretic proposition
on isomorphisms relating the biproduct and the tensor via the exponential:

\begin{prop} \label{seely}
The following holds for any measurable spaces $\ms{X}$ and $\ms{Y}$:
\begin{enumerate}
 \item[(i)] 
$$\begin{aligned}
\left( \ms{X} \coprod \ms{Y} \right)_e & \, \cong \, \, 
\ms{X}_e \otimes \ms{Y}_e 
\end{aligned}$$

\item[(ii)]
$\mc{T}_e= ( \{ \emptyset \}, \{ \emptyset,  \{ \emptyset \} \})$, which is
	   isomorphic
to the monoid unit $(I,\mathcal{I})$.
\end{enumerate}
\end{prop}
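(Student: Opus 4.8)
The plan is to prove (i) by producing a measurable isomorphism of the underlying spaces and promoting it to an isomorphism in $\TKer$ (hence in $\TKersfin$ and its opposite), and to settle (ii) by direct inspection.

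For (i) I would first work at the level of sets. Writing the biproduct carrier as the disjoint union $X \uplus Y$, every finite multiset $\msbf{z}$ over $X \uplus Y$ splits uniquely into its $X$-part $\msbf{z}_X$ and its $Y$-part $\msbf{z}_Y$ because the two summands are disjoint; this is just the fact that the free abelian monoid on a coproduct of sets is the product of the free abelian monoids. Thus
\[
\phi \colon (X \uplus Y)_e \longrightarrow X_e \times Y_e, \qquad \msbf{z} \longmapsto (\msbf{z}_X, \msbf{z}_Y),
\]
is a monoid bijection whose inverse sends $(\msbf{x}, \msbf{y})$ to the merged multiset $\msbf{x}\msbf{y}$. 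This algebraic part is routine; the real content is that $\phi$ and $\phi^{-1}$ are measurable.

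For measurability I would invoke characterization (iii) of Proposition \ref{fieldXe}, namely that an exponential $\sigma$-field is the smallest one making all counting functions $\cunt{A}$ measurable; consequently a map \emph{into} an exponential space is measurable precisely when its composites with the counting functions are. Since $X_e \otimes Y_e$ is generated by rectangles, $\phi$ is measurable once $\pr_{X_e} \Comp \phi$ and $\pr_{Y_e} \Comp \phi$ are, and for the first this reduces to measurability of $\msbf{z} \mapsto \cunt{A}(\msbf{z}_X) = \msbf{z}(A)$ for each $A \in \mc{X}$; but $\msbf{z}(A)$ is literally the counting function of $A$ (viewed inside the coproduct) evaluated at $\msbf{z}$, hence measurable on $(X \uplus Y)_e$ by (iii). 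The $Y$-side is symmetric. Dually, $\phi^{-1}$ is measurable once $(\msbf{x},\msbf{y}) \mapsto \cunt{C}(\msbf{x}\msbf{y})$ is measurable for every measurable $C$ of the coproduct; writing $C$ as a disjoint union $A' \uplus B'$ of measurable pieces of the two summands and using additivity of counting over disjoint sets gives $\cunt{C}(\msbf{x}\msbf{y}) = \cunt{A'}(\msbf{x}) + \cunt{B'}(\msbf{y})$, a sum of counting functions pulled back along the product projections, hence measurable. Therefore $\phi$ is a measurable isomorphism, and the Dirac-delta kernel along $\phi$ (a Markov, a fortiori s-finite, kernel, with inverse the Dirac kernel along $\phi^{-1}$) realises the claimed isomorphism in $\TKer$ and $\TKersfin$.

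For (ii), the free abelian monoid on $\emptyset$ consists solely of the empty product, so the carrier of $\mc{T}_e$ is the one-point set $\{ \emptyset \}$, on which the only $\sigma$-field is $\{ \emptyset, \{ \emptyset \} \}$; this is exactly $\ms{I}$ up to the unique bijection of one-point spaces, which is a measurable isomorphism. I expect the measurability of $\phi^{-1}$ to be the sole delicate step, since it requires decomposing an arbitrary measurable set of the coproduct and checking that the induced counting function factors additively through the two projections; the level-by-level alternative via the graded colimit of Proposition \ref{catfieldXnce} looks more cumbersome here because of the symmetrisation passing from $X_e^\bullet$ to $X_e$, so I would avoid it.
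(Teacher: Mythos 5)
Your proof is correct and follows essentially the same route as the paper: the same unique splitting of a multiset over $X \uplus Y$ into its $X$-part and $Y$-part, and the same appeal to the counting-function characterisation (iii) of Proposition \ref{fieldXe} to transfer the $\sigma$-fields, with (ii) dismissed as immediate. The only cosmetic difference is that you check measurability of $\phi$ and $\phi^{-1}$ directly against the generating counting functions, whereas the paper matches the generating families $2^{\cunt{A \uplus B}}$ and $2^{\cunt{A}}2^{\cunt{B}}$ using the same additivity identity $\cunt{A \uplus B}(\msbf{x}\msbf{y}) = \cunt{A}(\msbf{x}) + \cunt{B}(\msbf{y})$ that you exploit.
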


\begin{proof}
We prove (i) since
(ii) is direct, as the monoid identity
of the exponential monoid of Definition \ref{expmonoid}
is given by the empty
 sequence. \\
First, the monoid isomorphism $(X \uplus Y)_e \cong X_e \times Y_e$  between the largest measurable sets
of each side is given as follows; 
For any $\msbf{z}= z_1 \cdots z_n 
\in (X \uplus Y)_e$, there exist a rearrangement $\sigma$
and $ 0 \leq \ell \leq n$ such that 
$\msbf{z'}:=z_{\sigma(1)} \cdots z_{\sigma(\ell)} \in X_e$ and 
$\msbf{z''}:=z_{\sigma(\ell +1)} \cdots z_{\sigma(n)} \in Y_e$.
Note $\msbf{z'}$ and $\msbf{z''}$ are unique independently of
the choice of the rearrangement, 
thus  mapping $\msbf{x}$ to $(\msbf{z'}, \msbf{z''})$
gives a monoid isomorphism. \\
Second, the monoid iso is shown to induce the set theoretical
 isomorphism of the $\sigma$-fields of both sides
$(\mathcal{X} \uplus \mathcal{Y})_e \cong \mathcal{X}_e \otimes \mathcal{Y}_e
 $.

By the definition of the product of two measurable spaces
and Proposition \ref{fieldXe} (iii),
$\mathcal{X}_e \otimes \mathcal{Y}_e$ is the smallest $\sigma$-field
in which the product of counting functions
$\cunt{A} \cunt{B}: (\msbf{x},\msbf{y}) \mapsto \cunt{A} (\msbf{x}) 
\cunt{B}  (\msbf{y})$
becomes measurable for all $A \in \mathcal{X}$ and $B \in \mathcal{Y}$.
As the exponential function is one to one,
Proposition \ref{fieldXe} (iii) holds with $n_A$ replaced by $2^{n(A)}$.
Since $\cunt{A \uplus B}(\msbf{x} \msbf{y})= \cunt{A}(\msbf{x}) +
 \cunt{B}(\msbf{y})$, the following commutes so that the isomorphism
becomes that between the two $\sigma$-fields.
$$\xymatrix@R=1pt{
 & \mathbb{N} & \\
(\mathcal{X}
\coprod \mathcal{Y} )_e \ar[ru]^{2^{\cunt{\mbox{\tiny $A \uplus B$}}}}  & \cong  &
\mathcal{X}_e \otimes \mathcal{Y}_e \ar[lu]_{2^{\cunt{A}}
 2^{\cunt{B}}}}
$$
That is, 
$(\mathcal{X}
\coprod \mathcal{Y} )_e$ and 
$\mathcal{X}_e \otimes \mathcal{Y}_e$
are the smallest $\sigma$-fields making the respective functions 
$2^{\cunt{A \uplus B}}$ and $2^{\cunt{A}} 2^{\cunt{B}}$
measurable for all $A \in \mathcal{X}$ and $B \in \mathcal{Y}$.
\end{proof}

\begin{rem}[Seely isomorphism]{\em 
The isomorphism (i)
of Proposition \ref{seely}
is a Seely isomorphism \cite{RS} in
an appropriate category theoretical model
of linear logic,
as our binary biproduct models the logical connective
$\&$. The Seely isomorphism
is known derivable \cite{BS, Mellies} using category theoretic abstraction
from any linear exponential comonad
structure with product, which structure
will be obtained for a certain class of transition kernels
in the next Section \ref{comonadTKer} (cf. Theorem \ref{linexcomo}).
}\end{rem}


\subsection{Exponential Kernel $\kappa_e$ in s-finiteness}
\label{subsecexker}
This subsection concerns a categorical investigation in $\TKersfin$
on the exponential measurable spaces of Section \ref{subsecems}. 
This section starts with seeing the exponential acts not only on objects as
defined in Section \ref{subsecems} but on the morphisms on $\TKersfin$,
hence becomes an endofunctor.

\noindent ({\bf Notation}) For a measurable space $\ms{X}$ and 
$m \in \mathbb{N}$,
$$\begin{aligned}
X^{(m)}:= \{ \msbf{x} \in X_e \mid \cunt{X}(\msbf{x})=m  \}
\subset X_e
\end{aligned}$$
This divides the set $X_e$ into the following disjoint union:
\begin{align}
X_e = \biguplus_{n \geq 0} X^{(n)} \label{edisuni}
\end{align}
For any $A \in \mc{X}$, $A^{(m)}$ is defined same for the subspace $\mc{X} \cap A$.

\smallskip

\begin{defn}[exponential kernel $\kappa_e$] \label{kerex}{\em
In $\TKersfin$,
every transition kernel $\kappa: \ms{X} \longrightarrow \ms{Y}$
induces a corresponding exponential kernel
$\kappa_e: \mse{X} \longrightarrow \mse{Y}$,
which we shall define in (\ref{exker}).

In what follows, $\kappa^n$ denotes the $n$-ary cartesian product
$\kappa^n: \ms{X}^n \longrightarrow \ms{Y}^n$,
for which the $n$-ary cartesian product of an object is given by
$\ms{X}^n = (X^{\bullet n}, \mathcal{X}^{\bullet n})$. \\
The characterisation of Proposition \ref{catfieldXnce}
ensures the unique morphism $\kappa_e^\bullet$
from $(X_e^\bullet, \mathcal{X}_e^\bullet)$
to $(Y_e^\bullet, \mathcal{Y}_e^\bullet)$
in $\TKersfin$;
\begin{eqnarray}
 \kappa_e^\bullet & := \varinjlim_l  
 \displaystyle \coprod_{n=0}^{l}  (\inj_n^l \Comp \kappa^n ) 
 \simeq \coprod_{n=0}^\infty (\inj_{n}^\infty \Comp
 \kappa^n) 
 \label{ebul} 
\end{eqnarray}
See the following commutative diagram for the definition
(\ref{ebul} ):

$$\xymatrix@R=14pt{
(X_e^\bullet, \mc{X}_e^\bullet)   \ar@{=}[r]  & 
 \coprod \limits_{n=0}^{\infty}
(X^{\bullet
 n},\mc{X}^{\bullet n} )
\ar[rr]^{ \kappa_e^\bullet := 
\coprod \limits_{n=0}^\infty 
(\inj_{n}^\infty \Comp \kappa^n)} 
\ar@{-}[d]_{\cong}
& & 
 \coprod \limits_{n=0}^{\infty}
(Y^{\bullet
 n},\mc{Y}^{\bullet n} ) 
\ar@{-}[d]^{\cong} &
(Y_e^\bullet, \mc{Y}_e^\bullet) \ar@{=}[l]
 \\ 
& \varinjlim
\coprod \limits_n \ms{X}^n  \ar[rr]^{\varinjlim_l
\coprod \limits_{n=0}^l 
(\inj_{n}^l \Comp \kappa^n)} 
&  & 
\varinjlim  \coprod \limits_n \ms{Y}^n  & 
 \\ 
& \ar[u]^{\inj_l} 
\coprod \limits_{n=0}^l \ms{X}^n  
\ar[rr]^{\coprod \limits_{n=0}^l 
(\inj_{n}^l \Comp \kappa^n)} 
       & &
\coprod \limits_{n=o}^l \ms{Y}^n \ar[u]_{\inj_l}  & 
\\ 
& \ms{X}^m    \ar[rr]^{\kappa^m} 
\ar[u]^{\inj_{m}^l} 
\ar@/^50pt/[uuu]^{\inj_{m}^\infty}
&  &
  \ms{Y}^m \ar[u]_{\inj_m^l} 
\ar@/_50pt/[uuu]_{\inj_{m}^\infty}
&
}$$

\smallskip
Explicitly,
$\kappa_e^\bullet$ is defined for any $(x_1, \ldots, x_n) \in X_e \cap X^{(n)}$
with any $n$ and $B \in \mc{Y}^\bullet_e$,
\begin{eqnarray}
\kappa_e^\bullet ((x_1, \ldots ,x_n), B) = 
\kappa_e^\bullet ( (x_1, \ldots , x_n), B \cap Y^{\bullet n}) 
 = \kappa^n ( (x_1, \ldots , x_n), B \cap Y^{\bullet n}) \label{expebot}
\end{eqnarray}
This is by virtue that $B= \sum_{i=0}^\infty (B \cap Y^{\bullet i})$
and $\inj_{n}^\infty ((y_1, \ldots , y_n), B \cap 
Y^{\bullet m})=0$ unless $m=n$. 

\bigskip

Since $\kappa_e^\bullet :X_e^\bullet \times \mathcal{Y}_e^\bullet
 \longrightarrow \zeroinf$ is a transition kernel and the forgetful $F:
\msed{Y} \longrightarrow  \mse{Y}$
is
$(\mathcal{Y}_e^\bullet, \mathcal{Y}_e)$-measurable,
the pushforward measure on $\mc{Y}_e$ along $F$ is defined
for each fixed $(x_1, \ldots , x_m) \in \mathcal{X}_e^\bullet$,
which we denote (under the convention of Definition \ref{pfm}) by
$$\begin{aligned}
\kappa_e^\bullet(
(x_1, \ldots , x_m) , -) \Comp \FI: \quad \mathcal{Y}_e  \longrightarrow
 \zeroinf
\end{aligned}$$
This determines the following transition kernel,
denoted by $\kappa_e^\bullet \Comp \FI$,
from $\msed{X}$ to $\mse{Y}$:
\begin{align} \label{exdotker}
& \kappa_e^\bullet \Comp \FI:
\quad  X_e^\bullet \times \mathcal{Y}_e  \longrightarrow \zeroinf 
& ((x_1, \ldots , x_m), \msbf{B}) \longmapsto 
\kappa_e^\bullet ((x_1, \ldots , x_m), \FI(\msbf{B}))
\end{align}
for any $(x_1, \ldots , x_m) \in X_e^\bullet$
and any $\msbf{B} \in \mc{Y}_e \cap Y^{(m)}$ 
with any $m \in \mathbb{N}$.

\bigskip

Directly from the definition,
for any permutation $\sigma \in \mathfrak{S}_m$,
\begin{align}
\kappa_e^\bullet ((x_1, \ldots , x_m) , \FI (\msbf{B})) & =
\kappa_e^\bullet ((x_{\sigma(1)}, \ldots , x_{\sigma(m)}) ,
   \FI(\msbf{B})). \label{kabperm}
\end{align}
(\ref{kabperm}) is implied using (\ref{expebot})
by the following (\ref{invperm}) for any $D \in
\mc{Y}_e^\bullet \cap Y^{\bullet m}$
and any permutation $\sigma \in \mathfrak{S}_m$,
\begin{align}
\kappa^m ((x_1, \ldots , x_m) , \sigma (D)) & =
\kappa^m ((x_{\sigma^{-1}(1)}, \ldots , x_{\sigma^{-1}(m)}) , D) \label{invperm}
\end{align}
It is sufficient to check (\ref{invperm}) for any rectangle $D := C_1 \times
\cdots \times C_m$ such that $1 \leq \forall i \leq m \,$ $C_i \in \mc{Y}$,
but for which (\ref{invperm}) is
$\kappa(x_{\sigma(1)}, C_1) \cdots \kappa(x_{\sigma(m)}, C_m)
= \kappa(x_1, C_{\sigma^{-1}(1)}) \cdots \kappa(x_m,
 C_{\sigma^{-1}(m)})$
by the definition of the product measurable space.

\bigskip
Observing 
$$\FI{(\msbf{B} \cap Y^{(n)})}=
\FI{(\msbf{B})} \cap Y^{\bullet n}$$
We thus finally define $\kappa_e:
X_e \times \mathcal{Y}_e  \longrightarrow \zeroinf$
for any $x_1 \cdots x_m \in X_e$ and 
any $\msbf{B} \in \mc{Y}_e \cap Y^{(m)}$
\begin{align}
\kappa_e (x_1 \cdots x_m , \msbf{B}) & :=
(\kappa_e^\bullet \Comp \FI) \, ((x_1, \ldots , x_m) , \msbf{B})
 \nonumber \\
& =
\kappa_e^\bullet ((x_1, \ldots , x_m) , \FI(\msbf{B})), \label{exker}
\end{align}
which definition does not depend on the ordering of $x_1 \cdots x_m$.

\smallskip

We need to check 
$\kappa_e$ defined above is 
a transition kernel:
The second argument of $\kappa_e$ giving a measure over
 $\mathcal{Y}_e$ is direct by the definition (\ref{exker})
because so does the second argument of $\kappa_e^\bullet$. \\
For measurability in $\mathcal{X}_e$ for the first argument
of $\kappa_e$, by virtue of Proposition \ref{fieldXe} (ii), it suffices to show that
 $(\kappa_e)^{\mbox{-}1}
(\FI(-), \msbf{B})$ is measurable in $\mathcal{X}_e^\bullet$.
But this is derived from the measurability of
the first argument $\kappa_e^\bullet$ in $\mathcal{X}_e^\bullet$
because by the commutative diagram below, yielding \\ 
$
(\kappa_e)^{\mbox{-}1} (\FI(-), \msbf{B})
=
(F \times \operatorname{Id})^{\mbox{-}1} \Comp (\kappa_e)^{\mbox{-}1}
 (-,  \msbf{B})
=
(\kappa_e \Comp (F \times \operatorname{Id}))^{\mbox{-}1} (-, \msbf{B})
=
(\kappa_e^\bullet)^{\mbox{-}1} (-, \FI (\msbf{B}))
$.
$$\xymatrix@C=50pt
{
 X_e \times \mc{Y}_e \ar[dr]^{\kappa_e} & \\
X_e^\bullet  \times \mc{Y}_e \ar[u]^{F \times \operatorname{Id}}
\ar[r]_{\kappa_e^\bullet \Comp \FI}
& \zeroinf
}
$$
}\end{defn}

\bigskip

The so constructed $\kappa_e$ is s-finite, as
$\kappa_e^\bullet$ resides in $\TKersfin$ (cf. Proposition \ref{catfieldXnce})
and s-finiteness is closed under the push forward along $F$
(cf. Remark \ref{clopf}).

\bigskip
In order to show the functoriality of the exponential over kernels
of Definition \ref{kerex}, we prepare the following lemma
on $\pi$-system and Dynkin system.

\smallskip
\begin{lem}[for Proposition \ref{funcex}:
a $\pi$-system for $\mc{X}^{\bullet }_e$] \label{piDynkin} ~~
Let $\ms{X}$ be a measurable space.
\begin{enumerate}
 \item[(a)] 
For any $n \geq 0$,
the following family consisting of subsets of $X^{\bullet n}$
\begin{align}
\mathcal{D}_n := 
\left\{
\biguplus_{k=0}^\infty C_{k,1} \times
 \cdots \times C_{k,n} \mid C_{k,j} \in
\mc{X}
\right\} \label{famPi}
\end{align}
is both (i) a $\pi$-system and (ii) a Dynkin system. \\
{\em Recall that a nonempty family of subsets of a universal set
is a $\pi$-system if the family is closed under finite intersections.
It is a Dynkin system if the family contains $\emptyset$
and is closed both under complements and
under countable disjoint unions.
}
\item[(b)]
$\mathcal{D}_n$ becomes a $\sigma$-field,
hence coincides with the measurable space $\mc{X}^{\bullet n}$
for any $n \geq 0$. Thus we characterise
$$\mc{X}_e^\bullet = \biguplus_{n \geq 0} \mathcal{D}_n$$
\end{enumerate}

\end{lem}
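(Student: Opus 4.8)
The plan is to treat parts (a) and (b) in sequence, deriving (b) from (a) by the standard fact that a family which is simultaneously a $\pi$-system and a Dynkin system is a $\sigma$-field. For part (a)(i) I would verify closure under binary (hence finite) intersections by a direct reindexing. Writing two members of $\mathcal{D}_n$ as $D=\biguplus_{k} R_k$ and $D'=\biguplus_{\ell} R'_\ell$ with rectangles $R_k=C_{k,1}\times\cdots\times C_{k,n}$ and $R'_\ell=C'_{\ell,1}\times\cdots\times C'_{\ell,n}$, the intersection is
\[
D\cap D' \;=\; \bigcup_{k,\ell}\;(C_{k,1}\cap C'_{\ell,1})\times\cdots\times(C_{k,n}\cap C'_{\ell,n}).
\]
The key observation is that this countable double union is again \emph{disjoint}: if $k\neq k'$ the two terms already lie in the disjoint $R_k,R_{k'}$, and if $k=k'$ but $\ell\neq\ell'$ they lie in the disjoint $R'_\ell,R'_{\ell'}$. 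Since each side $C_{k,j}\cap C'_{\ell,j}$ lies in $\mc{X}$, flattening the countable index set $\{(k,\ell)\}$ exhibits $D\cap D'$ as a member of $\mathcal{D}_n$.

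For part (a)(ii), containment of $\emptyset$ is immediate (the empty union), and closure under countable disjoint unions is again a reindexing: a disjoint family $\{D^{(m)}\}_m$ in $\mathcal{D}_n$ is a doubly-indexed disjoint family of rectangles whose union, after flattening the index $(m,k)$, is a single countable disjoint union of rectangles. The substantive step, and the one I expect to be the main obstacle, is closure under complements. My approach would be first to dispatch a single rectangle: splitting $X=C_j\uplus(X\setminus C_j)$ in each coordinate expands $X^{\bullet n}=\biguplus_{\epsilon\in\{0,1\}^n}\prod_j E_j^{\epsilon_j}$ into $2^n$ disjoint rectangles (with $E_j^0=C_j$ and $E_j^1=X\setminus C_j$), so that the complement of $R=C_1\times\cdots\times C_n$ is the disjoint union of the $2^n-1$ rectangles with $\epsilon\neq\mathbf 0$; all their sides lie in $\mc{X}$, whence $R^c\in\mathcal{D}_n$. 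For a general member $D=\biguplus_k R_k$ one then has $D^c=\bigcap_k R_k^c$, a \emph{countable} intersection of members of $\mathcal{D}_n$. This is precisely the delicate point: the $\pi$-system property of (a)(i) supplies only \emph{finite} intersections, so passing to the countable intersection is where the argument must do real work. I would try to organise it as a decreasing limit $D^c=\bigcap_N\bigl(\bigcap_{k\le N}R_k^c\bigr)$, each partial intersection being a finite disjoint union of rectangles by (a)(i), and then argue that the class absorbs such decreasing countable intersections; this monotone-limit step is the crux and is where I would concentrate the care.

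Granting (a), part (b) is formal. A family that is both a $\pi$-system and a Dynkin system is a $\sigma$-field: complements together with $\emptyset$ give the full space $X^{\bullet n}$; De Morgan with finite intersections gives finite unions; and an arbitrary countable union $\bigcup_k A_k$ is rewritten as the disjoint union $\biguplus_k\bigl(A_k\cap\bigcap_{j<k}A_j^c\bigr)$, whose terms lie in the family by closure under complement and finite intersection, so closure under countable disjoint unions finishes it. Finally, since every measurable rectangle lies in $\mathcal{D}_n$ we obtain $\mathcal{X}^{\bullet n}=\sigma(\text{rectangles})\subseteq\mathcal{D}_n$, while conversely each member of $\mathcal{D}_n$ is a countable disjoint union of measurable rectangles and hence lies in $\mathcal{X}^{\bullet n}$; thus $\mathcal{D}_n=\mathcal{X}^{\bullet n}$. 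The displayed characterisation $\mathcal{X}_e^\bullet=\biguplus_{n\ge 0}\mathcal{D}_n$ is then immediate from Definition~\ref{fieldXnce}, which fixes $\mathcal{X}_e^\bullet=\biguplus_{n\ge0}\mathcal{X}^{\bullet n}$.
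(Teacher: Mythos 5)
Your treatment of (a)(i), of countable disjoint unions, of the complement of a single rectangle, and all of part (b) matches the paper's proof step for step. The one divergence is that you leave the complement of a general member $D=\biguplus_k R_k$ as an admitted ``crux'', and that is a genuine gap: the proposal contains no argument at the only point where one is needed. For comparison, the paper closes this step by writing $D^\complement=\bigcap_k R_k^\complement=\bigcap_k\biguplus_{s}T_{k,s}$, with $T_{k,s}=C_{k,1}^{s(1)}\times\cdots\times C_{k,n}^{s(n)}$ and $s$ ranging over sign patterns in $\{1,\complement\}^{[n]}$, and then distributing the countable intersection over the finite disjoint union term by term, retaining only the diagonal terms $\biguplus_s\bigcap_k T_{k,s}$. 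That identity is not valid: the correct distribution is $\bigcap_k\biguplus_s T_{k,s}=\biguplus_{f}\bigcap_k T_{k,f(k)}$ over all choice functions $f:\mathbb{N}\to\{1,\complement\}^{[n]}$, an uncountable (though pairwise disjoint) family of rectangles, because a point of $\bigcap_k R_k^\complement$ may witness each membership $R_k^\complement$ through a coordinate that varies with $k$. So your instinct to concentrate the care exactly here was correct.

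Moreover, the step cannot be repaired for a general $\ms{X}$, so neither your monotone-limit idea nor any other argument will close it without extra hypotheses. Take $X=[0,1]$ with its Borel sets and $n=2$. The family $\mathcal{D}_2$ contains every measurable rectangle and $\emptyset$, and is closed under finite intersections and countable disjoint unions by the parts of the argument that do work; if it were also closed under complements it would, by your own part (b), be a $\sigma$-field containing the rectangles and hence contain the diagonal $\Delta=\{(x,x)\mid x\in[0,1]\}\in\mc{X}^{\bullet 2}$. But any rectangle $A\times B\subseteq\Delta$ forces $A$ and $B$ to be singletons or empty, so a countable disjoint union of rectangles contained in $\Delta$ is countable, while $\Delta$ is not. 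Hence $\Delta\in\mc{X}^{\bullet 2}\setminus\mathcal{D}_2$ and closure under complements genuinely fails: your partial intersections $\bigcap_{k\le N}R_k^\complement$ are indeed finite disjoint unions of rectangles, but their decreasing limit leaves the class, and the ``absorption'' you hoped to establish is precisely what is false. The identification $\mathcal{D}_n=\mc{X}^{\bullet n}$ does hold in restricted settings, for instance when $X$ is countable and $\mc{X}=2^X$ so that every subset of $X^{\bullet n}$ is a countable disjoint union of singleton rectangles (the situation of Section~\ref{closedTK}), but not for an arbitrary measurable space as stated.
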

\begin{proof}
As (b) is a consequence of (a) by Dynkin Theorem (cf. \cite{Bau}
for the theorem)
stating that any Dynkin system which is also a $\pi$-system
is a $\sigma$-field, we prove (a): \\
(a) (i) Direct by $(A_1 \times
 \cdots \times A_n) \cap
(B_{1} \times
 \cdots \times  B_{n})
= 
(A_{1} \cap  B_{1})
 \times
\cdots 
\times 
(A_{n} \cap B_{n})
$. \\
(ii) As the empty set is contained in (\ref{famPi}), we check the
other two conditions:

\noindent (Closedness under countable disjoint unions)
Immediate from the definition (\ref{famPi}), by observing \\
$\biguplus_{i=1}^\infty \biguplus_{k=0}^\infty C_{k,1}^i \times
 \cdots \times C_{k,n}^i
= 
\biguplus_{m=0}^\infty
\biguplus_{k+i=m}
C_{k,1}^i \times
 \cdots \times C_{k,n}^i$.

\noindent
(Closedness under the complement) First we observe 
$
(A_1 \times
 \cdots \times A_n)^\complement
= 
\biguplus_{ s \in \{1, \complement \}^{[n]}}
A_1^{s(1)} \times
 \cdots \times A_n^{s(n)},$
where $[n] := \{1,\ldots, n\}$ and $X^1=X$.
Then using De Morgan and distribution
of intersection over union:
$(\biguplus_{k=0}^\infty C_{k,1} \times
 \cdots \times C_{k,n} )^\complement
=
\bigcap_{k=0}^\infty (C_{k,1} \times
 \cdots \times C_{k,n})^\complement
=
\bigcap_{k=0}^\infty 
\biguplus_{ s \in \{1, \complement \}^{[n]}}
C_{k,1}^{s(1)} \times
 \cdots \times C_{k,n}^{s(n)}
=
\biguplus_{ s \in \{1, \complement \}^{[n]}}
 (\bigcap_{k=0}^\infty C_{k,1}^{s(1)})
\times
 \cdots \times 
(\bigcap_{k=0}^\infty C_{k,n}^{s(n)}$),
which belongs to (\ref{famPi})
as $\forall j \in \{ 1, \ldots , n \}$
$\bigcap_{k=0}^\infty C_{k,j}^{s(j)} \in \mc{X}$.
\end{proof}

\smallskip

\begin{prop}[functoriality of $(~)_e$ ]\label{funcex}~\\
$(~)_e$ of Definition \ref{kerex} becomes an endofunctor on the category $\TKersfin$.
\end{prop}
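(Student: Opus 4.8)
The plan is to verify the two functor laws---preservation of identities and of composition---since $\kappa_e$ has already been shown to be a well-defined s-finite kernel (so the functor indeed lands in $\TKersfin$). I would carry this out in three stages, proving the laws first at the level of the ordered exponential $(\,\cdot\,)_e^\bullet$ and only afterwards transporting them to $(\,\cdot\,)_e$ along the forgetful map $F$.

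\emph{Stage 1 (monoidal powers).} First I would show $(\iota \Comp \kappa)^n = \iota^n \Comp \kappa^n$ and $(\operatorname{id}_{\ms{X}})^n = \operatorname{id}_{\ms{X}^n}$ for composable $\kappa : \ms{X} \to \ms{Y}$ and $\iota : \ms{Y} \to \ms{Z}$. Both sides are kernels from $\ms{X}^n$ to $\ms{Z}^n$, so for a fixed source point it suffices to compare two measures on $\mc{Z}^{\bullet n}$. By the $\pi$-system/Dynkin argument packaged in Lemma \ref{piDynkin}, agreement on the generating $\pi$-system of rectangles $C_1 \times \cdots \times C_n$ is enough; reducing the s-finite kernels to their finite summands (which carry finite measures, so Dynkin's uniqueness applies cleanly) legitimises this reduction. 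On a rectangle both sides evaluate to $\prod_{i=1}^n (\iota \Comp \kappa)(x_i, C_i)$, the factorisation being exactly the content of Fubini--Tonelli for s-finite measures (Proposition \ref{FTsfin}). The identity case is the same computation with $\kappa = \iota = \delta$.

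\emph{Stage 2 (ordered exponential).} Using the coproduct/colimit description (\ref{ebul}), $\kappa_e^\bullet = \coprod_{n}(\inj_n^\infty \Comp \kappa^n)$, together with the degree-preserving property (\ref{expebot}) which says $\kappa_e^\bullet$ carries the $n$-th summand $\ms{X}^n$ into the $n$-th summand $\ms{Y}^n$, I would argue that the convolution $\iota_e^\bullet \Comp \kappa_e^\bullet$ splits summand by summand. Combined with Stage 1 and the functoriality of the biproduct mediating morphisms (the colimit being the countable biproduct of Proposition \ref{catfieldXnce}), this yields $\iota_e^\bullet \Comp \kappa_e^\bullet = \coprod_n (\inj_n^\infty \Comp (\iota^n \Comp \kappa^n)) = (\iota \Comp \kappa)_e^\bullet$, and likewise $(\operatorname{id})_e^\bullet = \operatorname{id}_{\msnce{X}}$.

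\emph{Stage 3 (transport along $F$).} This is the step I expect to be the genuine obstacle. By the defining equation (\ref{exker}), $\kappa_e(x_1 \cdots x_m, -)$ is precisely the pushforward of $\kappa_e^\bullet((x_1,\dots,x_m),-)$ along $F$. Hence, writing the composite as a convolution and applying the variable-change-of-integral formula of Definition \ref{pfm}, the expression $(\iota_e \Comp \kappa_e)(x_1\cdots x_m,\msbf{C}) = \int_{Y_e} \iota_e(\msbf{y},\msbf{C})\,\kappa_e(x_1\cdots x_m, d\msbf{y})$ becomes $\int_{Y_e^\bullet} \iota_e(F(\msbf{y}),\msbf{C})\,\kappa_e^\bullet((x_1,\dots,x_m),d\msbf{y})$. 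Since $\iota_e \Comp F = \iota_e^\bullet(-,\FI(-))$ again by (\ref{exker}), the right-hand side equals $(\iota_e^\bullet \Comp \kappa_e^\bullet)((x_1,\dots,x_m),\FI(\msbf{C}))$, which by Stage 2 is $(\iota\Comp\kappa)_e^\bullet((x_1,\dots,x_m),\FI(\msbf{C})) = (\iota\Comp\kappa)_e(x_1\cdots x_m,\msbf{C})$. The two delicate points to discharge are the measurability of $\msbf{y} \mapsto \iota_e(F(\msbf{y}),\msbf{C})$ on $\msnce{Y}$, which the variable change requires and which is available because this is the first-argument section $\iota_e^\bullet(-,\FI(\msbf{C}))$ of a transition kernel, and the independence of the whole expression from the chosen order of $x_1\cdots x_m$, which is guaranteed by the symmetry (\ref{kabperm}). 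Preservation of identities then follows at once, since $F$ pushes the Dirac kernel on $\msnce{X}$ forward to the Dirac kernel on $\mse{X}$ while $(\operatorname{id})_e^\bullet = \operatorname{id}$.
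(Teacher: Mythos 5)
Your proposal is correct and follows essentially the same route as the paper's proof: the paper's single equational chain contains exactly your three stages --- reduction to rectangles, Fubini--Tonelli to obtain $(\iota \Comp \kappa)^n = \iota^n \Comp \kappa^n$ on those rectangles, and the variable change (\ref{IWPFM}) along $F$ to transport the identity from $(\,\cdot\,)_e^\bullet$ down to $(\,\cdot\,)_e$. The only minor divergence is in your Stage 1: where you invoke the Dynkin uniqueness theorem for measures (hence your reduction to finite summands to keep the measures finite), the paper instead uses Lemma \ref{piDynkin}(b) to write every set of $\mc{Z}_e^{\bullet} \cap Z^{\bullet n}$ directly as a countable disjoint union of measurable rectangles and concludes by $\sigma$-additivity and commutation of the integral with countable sums alone, which avoids any finiteness bookkeeping.
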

\begin{proof}
The condition $(\iota \Comp \kappa)_e = \iota_e \Comp \kappa_e$
for $\kappa: \ms{X} \longrightarrow \ms{Y}$
and $\iota: \ms{Y} \longrightarrow \ms{Z}$
is proved. For this, the following variable change (cf. Definition \ref{pfm})
plays a crucial role: 

\smallskip

\noindent(Variable change of integral along
$F: \msed{Y} \longrightarrow \mse{Y}$)
\begin{align} \label{IWPFM}
 \int_{Y_e}
\kappa_e^\bullet (-, \FI(d \msbf{y})) \,
\iota_e (\msbf{y}, \sim)
= 
\int_{Y_e^\bullet}
\kappa_e^\bullet (-, d \vec{\msbf{y}} ) \,
\iota_e (F(\vec{\msbf{y}}), \sim)
\end{align}
where $\vec{\msbf{y}}=(y_1, \ldots , y_m) \in Y_e^\bullet$ so that
$F(\vec{\msbf{y}}) =y_1 \cdots y_m = \msbf{y} \in \mc{Y}_e $ for any $m$.

\noindent The equation (\ref{IWPFM}) is that
of Definition \ref{pfm}
when the push forward measure $\mu'=\mu \Comp \FI$
is defined for $\mu(\msbf{B}):=\kappa_e^\bullet(-, \msbf{B})$
with any fixed $-$ (cf. (\ref{exdotker})), 
and the measurable function $g$ on $\mc{Y}_e$
is given by $\iota_e (\msbf{y}, \sim)$ 
with any fixed $\sim$.

\smallskip

\noindent
For any $x_1 \cdots x_n \in X_e$ and any $\msbf{C} \in
 \mc{Z}_e \cap Z^{(n)}$ such that, by Lemma \ref{piDynkin} (b), 
$$\FI(\msbf{C})=\biguplus_{k=0}^\infty C_{k,1} \times
 \cdots \times C_{k,n} \in \mc{Z}_e^\bullet \cap Z^{\bullet n},$$
\begin{align*}
& (\iota_e \Comp \kappa_e) (x_1 \cdots x_n, \msbf{C}) \\
&   \textstyle = 
\int_{Y_e} \kappa_e (x_1 \cdots x_n, d \msbf{y})
\, \iota_e(\msbf{y}, \msbf{C}) \\
& \textstyle =
\int_{Y_e} \kappa_e^\bullet ( (x_1, \ldots ,x_n), \FI(d \msbf{y}))
\, \iota_e(\msbf{y}, \msbf{C})  \tag*{by the def of
 $\kappa_e$ in (\ref{exker})}
 \\
& \textstyle = 
\int_{Y_e^\bullet} \kappa_e^\bullet ( (x_1, \ldots, x_n), d \vec{\msbf{y}})
\, \iota_e(F(\vec{\msbf{y}}), \msbf{C})  \tag*{by (\ref{IWPFM}) of variable change} \\
& 
\textstyle = \int_{Y^{\bullet n}} \kappa_e^\bullet ((x_1, \ldots x_n), d (y_1, \ldots , y_n))
\, \iota_e(F((y_1, \ldots , y_n)), \msbf{C})
 \tag*{by $\vec{\msbf{y}}=(y_1, \ldots, y_n) \in Y_e^\bullet \cap Y^{\bullet n}$} \\
& \textstyle = \int_{Y^{\bullet n}} \kappa_e^\bullet ( (x_1, \ldots, x_n), d (y_1, \ldots , y_n))
\, \iota_e^\bullet ((y_1 \cdots y_n), \FI(\msbf{C}))
\tag*{by the def of $\iota_e$}
\\
& \textstyle = \int_{Y^{\bullet n}} \kappa^n ( (x_1, \ldots ,x_n), d (y_1, \ldots , y_n))
\, \iota^n ((y_1, \ldots , y_n), \FI(\msbf{C}))
\tag*{by (\ref{expebot}) with the choice $\msbf{C}$} \\
& 
\textstyle = \int_{Y^{\bullet n}} \kappa^n ( (x_1, \ldots ,x_n), d (y_1, \ldots , y_n))
\, \iota^n ((y_1, \ldots , y_n), \biguplus \limits_{k=0}^\infty C_{k,1} \times
 \cdots \times C_{k,n}) \\
& 
\textstyle =  \int_{Y^{\bullet n}} \sum \limits_{k=0}^\infty \kappa^n ( (x_1, \ldots ,x_n), d (y_1, \ldots , y_n))
\, \iota^n ((y_1, \ldots , y_n),  C_{k,1} \times
 \cdots \times C_{k,n})  \tag*{by $\sigma$-additivity} \\
& 
\textstyle = \sum \limits_{k=0}^\infty \int_{Y^{\bullet n}} \kappa^n ( (x_1, \ldots ,x_n), d (y_1, \ldots , y_n))
\, \iota^n ((y_1, \ldots , y_n),  C_{k,1} \times
 \cdots \times C_{k,n}) \\  
& 
\tag*{commuting integral over countable sum of {\em non-negative} measurable functions \footnotemark} \\
& \textstyle = \sum \limits_{k=0}^\infty \int_{Y^{\bullet n}} 
\prod \limits_{i=1}^n \kappa( x_i, d y_i)
\prod \limits_{j=1}^n \iota (y_j,  C_{k, j})  \\
& \textstyle = \sum \limits_{k=0}^\infty \prod \limits_{i=1}^n  \int_{Y} 
\kappa( x_i, d y)  \, \iota (y,  C_{k, i}) \tag*{by Fubini-Tonelli} \\
& \textstyle = \sum \limits_{k=0}^\infty \prod \limits_{i=1}^n  
(\iota \Comp \kappa) (x_i, C_{k, i}) \tag*{by the def of $\iota \Comp
 \kappa$}  \\
& \textstyle = \sum \limits_{k=0}^\infty 
(\iota \Comp \kappa)^n ((x_1, \ldots ,x_n),
C_{k,1} \times
 \cdots \times C_{k,n}) \tag*{by the product measure} \\
& =
(\iota \Comp 
\, \kappa )^n ((x_1, \ldots , x_n), \FI(\msbf{C})) 
\tag*{by $\sigma$-additivity}
\\
& 
 = 
(\iota \Comp 
\, \kappa )_e^\bullet ((x_1, \ldots , x_n), \FI(\msbf{C})) 
\tag*{by the def of $(\iota \Comp 
\, \kappa )_e^\bullet$}  \\
& = 
(\iota \Comp 
\, \kappa )_e (x_1 \cdots x_n, \msbf{C})
\tag*{by the def of $(\iota \Comp 
\, \kappa )_e$}
\end{align*}
\footnotetext{By the monotone convergence theorem together with the commuting
integral with finite sum}
\end{proof}

\begin{rem}[The exponential construction $(-)_e$ preserves
s-finiteness, but not finiteness.]{\em
In addition that the class retains Fubini-Tonelli
for the functorial monoidal product in Section \ref{mpacb},  
the class of s-finiteness is employed in this paper
because it makes $(-)_e$ an endofunctor as shown above.
E.g., its restriction 
on $\TKerfin$ of the finite kernels
is no more an endofunctor but from $\TKerfin$ to $\TKersfin$.
}\end{rem}

\subsection{A Linear Exponential  Comonad over $\opTKersfin$}
\label{comonadTKer}
The exponential presented in Section \ref{subsecems} and Section
 \ref{subsecexker}
is shown to provide a linear exponential comonad over 
the monoidal category $\opTKersfin$ with countable biproducts,
hence a categorical model of the exponential modality of linear logic
\cite{BS,HSha, Mellies}.

Due to the asymmetry between the first (measures) and the second (measurable
functions) arguments of transition kernels in continuous measurable spaces,
the exponential comonad considered 
in Subsection \ref{comonadTKer} is for the opposite category
$\opTKersfin$ \footnote{Our choice of opposite later in Section \ref{closedTK}
turns out to coincide with Danos-Ehrhard's (left) choice of permutation
of their formulation of exponential in $\Pcoh$. See Remark \ref{whyop}.}.

\smallskip

\noindent
{\bf Notation for morphisms in the opposite $\opTKersfin$}:
The category considered in this section is the opposite category
$\opTKersfin$ so that the composition is
converse to $\TKersfin$: In $\opTKersfin$, a morphism
$\kappa : \ms{X} \longrightarrow
\ms{Y}$ is a transition kernel from $\ms{Y}$ to $\ms{X}$.
Accordingly a morphism $\kappa$ is denoted by $\kappa(A, y)$ meaning that
its left (resp. right) argument determines a measure (resp. a
measurable function).
In particular, the Dirac delta
measure which is the identity morphism on $\ms{X}$
is written by $\Dd{x}{A}$.  Accordingly, 
the composition of two morphisms 
$\kappa(A, y) : \ms{X} \longrightarrow
\ms{Y}$ and $\iota(B, z) : \ms{Y} \longrightarrow
\ms{Z}$ in $\opTKersfin$ is
\begin{align*}
\iota \Comp \kappa (A,z) = &
\Int{Y}{\kappa(A,y)}{\iota(dy,z)} 
\end{align*}

\bigskip

\noindent {\bf Typographic Convention}:
In what follows, the following typography is used 
to discriminate levels of the exponential
measurable spaces:
$x,y , z, \ldots \in X$ and  
$A, B, C, \ldots \in \mc{X}$ for $\ms{X}$.
$\msbf{x}, \msbf{y} , \msbf{z}, \ldots \in X_e$ and 
$\msbf{A}, \msbf{B}, \msbf{C} \ldots \in \mc{X}_e$ for $\mse{X}$.
$\mathfrak{x}, \mathfrak{y}, \mathfrak{z}, \ldots \in X_{ee}$ and 
$\mathfrak{A}, \mathfrak{B}, \mathfrak{C}, \ldots \in \mc{X}_{ee}$
for $\msee{X}$.

\bigskip

We recall the definition of linear exponential comonad.

\begin{defn}[linear exponential comonad \cite{HSha, Mellies}]{\em
Let $({\cal C}, \otimes, I)$ be a symmetric monoidal category.
A linear exponential comonad on ${\cal C}$ is a monoidal comonad
$$\left(!: {\cal C} \longrightarrow {\cal C}, \, 
\stor{}: \, ! \longrightarrow \, !!, \,  \di{}: \, ! \longrightarrow
 \operatorname{Id}_{\cal C}, \,  \mon{X}{Y}: \, ! X \otimes ! Y \longrightarrow
 !(X \otimes Y), \,  {\sf m}_I : I \longrightarrow \, !I \, \right)$$
equipped with two monoidal natural transformations
$\con{}: \, ! \longrightarrow \Delta \, \Comp \, !$
(with $\Delta$ denoting the diagonal functor for the tensor) and
$\wk{}: \, ! \longrightarrow I$ such that the following holds for each $X$: 
\begin{itemize}
\item
 $(!X, \con{X}, \wk{X})$ forms a commutative comonoid.
\item $\con{X}$ is a coalgebra morphism from $(!X,\stor{X})$ to
           $(!X \otimes \, !X, \, m_{!X,!X} \Comp (\stor{X} \otimes \stor{X}))$.
  \item 
$\wk{X}$ is a coalgebra morphism from $(!X,\stor{X})$ to
           $(I, \, {\sf m}_I)$.
\item 
$\stor{X}$ is a comonoid morphism from 
$(!X, \con{X}, \wk{X})$ to 
$(!!X, \con{!X}, \wk{!X})$.
 \end{itemize}
}\end{defn}

\bigskip
We start to construct the structure maps in $\opTKersfin$
for the linear exponential comonad.
\begin{prop}[Dereliction]\label{natder}~\\{\em
$\begin{aligned}
 \di{\mathcal{X}} : \mse{X} \longrightarrow \ms{X}
\end{aligned}$
is defined for $\msbf{A} \in \mathcal{X}_e$ and $x \in X$
\begin{align*}
  \di{\mathcal{X}}(\msbf{A},x) := \Dd{x}{\msbf{A} \cap X^{(1)}}
\end{align*}
Recall that $\msbf{A} \cap X^{(1)} \subset X$.} \\
Then, this gives a natural transformation ${\sf d}: (~)_e
\longrightarrow \operatorname{Id}_{\opTKersfin}$.
\end{prop}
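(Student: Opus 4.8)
The plan is to verify first that $\di{\mathcal{X}}$ is a legitimate s-finite transition kernel and then to check the naturality square. On the $\TKersfin$ side, $\di{\mathcal{X}}$ is the kernel $\ms{X} \longrightarrow \mse{X}$ sending a point $x$ to the Dirac measure concentrated at the singleton multiset $(x) \in X^{(1)} \subset X_e$: indeed, for fixed $x$ the assignment $\msbf{A} \mapsto \Dd{x}{\msbf{A} \cap X^{(1)}}$ is exactly that Dirac measure, which is Markov and hence s-finite, so the measure condition holds. For fixed $\msbf{A} \in \mc{X}_e$ the function $x \mapsto \Dd{x}{\msbf{A} \cap X^{(1)}}$ is the characteristic function of $\msbf{A} \cap X^{(1)}$, which is measurable because the degree-one stratum satisfies $\mc{X}_e \cap X^{(1)} = \mc{X}$ (the $n=1$ instance of the identification in Definition \ref{fieldXnce}). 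Thus both kernel conditions hold.

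For naturality I must show, for every $\kappa : \ms{X} \longrightarrow \ms{Y}$ in $\opTKersfin$, that $\kappa \Comp \di{\mathcal{X}} = \di{\mathcal{Y}} \Comp \kappa_e$ as morphisms $\mse{X} \longrightarrow \ms{Y}$. I will compute both composites pointwise, for $\msbf{A} \in \mc{X}_e$ and $y \in Y$, using the convolution formula of $\opTKersfin$. The left-hand side integrates the indicator $\di{\mathcal{X}}(\msbf{A}, -)$ against the measure $\kappa(-, y)$ on $X$:
\[
(\kappa \Comp \di{\mathcal{X}})(\msbf{A}, y) = \Int{X}{\di{\mathcal{X}}(\msbf{A}, x)}{\kappa(dx, y)} = \kappa(\msbf{A} \cap X^{(1)}, y).
\]

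For the right-hand side, the inner measure $\di{\mathcal{Y}}(-, y)$ is, as noted above, the Dirac measure at the singleton $(y) \in Y^{(1)}$, so the integral over $Y_e$ simply evaluates $\kappa_e(\msbf{A}, -)$ at $(y)$:
\[
(\di{\mathcal{Y}} \Comp \kappa_e)(\msbf{A}, y) = \Int{Y_e}{\kappa_e(\msbf{A}, \msbf{y})}{\di{\mathcal{Y}}(d\msbf{y}, y)} = \kappa_e(\msbf{A}, (y)).
\]
The decisive step is then the degree-one reduction of the exponential kernel: since $(y) \in Y^{(1)}$, Definition \ref{kerex} forces only the degree-one part $\msbf{A} \cap X^{(1)}$ of $\msbf{A}$ to contribute, and there $\kappa_e$ collapses to the first monoidal power $\kappa^1 = \kappa$ with $\FI$ acting as the identity on singletons; hence $\kappa_e(\msbf{A}, (y)) = \kappa(\msbf{A} \cap X^{(1)}, y)$, matching the left-hand side.

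The computations are short, so the main point requiring care is the bookkeeping around the opposite-category conventions (which argument carries the measure and which the measurable function, and the direction of each composite) together with the degree-one reduction $\kappa_e|_{X^{(1)}} = \kappa$, which is precisely what makes dereliction natural: $\di{}$ selects exactly the stratum on which the exponential functor acts as the identity. One should also confirm that all kernels in sight are s-finite, so the composites live in $\opTKersfin$; this is immediate since $\di{\mathcal{X}}$ is Markov, $\kappa$ is s-finite by hypothesis, $\kappa_e$ is s-finite by the remark following Definition \ref{kerex}, and s-finiteness is closed under composition (Remark \ref{clopf}).
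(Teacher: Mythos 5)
Your proof is correct and follows essentially the same route as the paper's: both compute $\kappa \Comp \di{\mc{X}}(\msbf{A},y) = \kappa(\msbf{A} \cap X^{(1)},y)$ and $\di{\mc{Y}} \Comp \kappa_e(\msbf{A},y) = \kappa_e(\msbf{A},y)$ via the convolution formula, and then conclude by the degree-one collapse $\kappa_e(\msbf{A}, y) = \kappa(\msbf{A} \cap X^{(1)}, y)$ for $y \in Y_e \cap Y^{(1)}$, which the paper asserts and you justify slightly more explicitly from Definition \ref{kerex}. Your preliminary check that $\di{\mc{X}}$ is a well-defined s-finite kernel (Dirac measure in one argument, characteristic function of $\msbf{A} \cap X^{(1)}$ in the other) is extra care the paper leaves implicit, not a different argument.
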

\begin{proof} ~\\
Let $\msbf{A} \in \mc{X}_e$ and $y \in Y$.
Given $\kappa : \mc{X} \rightarrow \mc{Y}$,
$\di{\mc{Y}}  \Comp \kappa_e (\msbf{A}, y) =
 \Int{Y_e}{\kappa_e(\msbf{A}, \msbf{z})}{\di{\mc{Y}} (d \msbf{z}, y)}
= \Int{Y_e \cap Y^{(1)}}{\kappa_e(\msbf{A}, \msbf{z})}{\Dd{y}{d \msbf{z} \cap Y^{(1)}}}
= \kappa_e(\msbf{A}, y)$.
While, $\kappa \Comp \di{\mc{X}} (\msbf{A}, y) =
\Int{X}{\di{\mc{X}} (\msbf{A}, x)}{\kappa(dx, y)}
= 
\Int{X}{\Dd{x}{\msbf{A} \cap X^{(1)}}}{\kappa(dx, y)}
= 
\kappa(\msbf{A} \cap X^{(1)}, y).$
The both HSs coincide because 
$\kappa_e(\msbf{A}, y)
=
\kappa(\msbf{A} \cap X^{(1)}, y)$ for $y \in Y_{e} \cap Y^{(1)}$.
\end{proof}

\bigskip

In order to introduce the storage in Proposition \ref{stor}, we prepare;
\begin{defn}[$\abs{~}:  X_{ee} \longrightarrow X_e$] \label{mapabs}{\em 
 For a set $X$, the mapping
$\abs{~ }$ 
is defined by 
\begin{align*}
 \abs{~}:  X_{ee} \longrightarrow X_e & & 
\msbf{a}_1 \cdots
\msbf{a}_n  \longmapsto
\abs{   \msbf{a}_1 \cdots
\msbf{a}_n  }:=a_{11} \cdots a_{1 k_1} \cdots a_{n1} \cdots 
a_{n k_n}
\end{align*}
where every $\msbf{a}_i \in X_e$
is $a_{i1} \cdots a_{i k_i}$ with each $a_{ij} \in X$.
}\end{defn}

\noindent Note:
\begin{itemize}
 \item[-] $\abs{~}$
on $X_{ee} \cap (X_e)^{(1)}$ is the identity.
That is, when $n=1$ 
so that
$\msbf{a}_1 \in X_{ee}$, 
it holds $\abs{  \msbf{a}_1 } = \msbf{a}_1$.

\item[-] $\abs{~}$ on $X_{ee} \cap (X_e \cap X^{(1)})^{(n)}$ is the identity. That is, when
	$k_i=1$ for all $i=1, \ldots, n$ 
so that $a_{11} \cdots a_{n1} \in X_{ee}$,
it holds $\abs{a_{11} \cdots a_{n1}} = a_{11} \cdots a_{n1}$.
\end{itemize}

\smallskip

For any $\msbf{A} \in \mc{X}_e$, its inverse image along $\abs{-}$
is defined by
\begin{align*}
 \absI{\msbf{A}}= \{ \mathfrak{y} \in X_{ee} \quad \mbox{such that} 
\quad \abs{\mathfrak{y}} \in \msbf{A} \}
\end{align*}
The following lemma \ref{ee-emeasurable} ensures that the inverse
image $\absI{\msbf{A}}$ belongs to $\mc{X}_{ee}$.

\smallskip

\smallskip

\begin{lem} \label{ee-emeasurable}
The function $\abs{~}$ of Definition \ref{mapabs}
is $(\mc{X}_{ee}, \mc{X}_e)$-measurable.
\end{lem}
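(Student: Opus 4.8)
The plan is to verify measurability against the characterisation of $\mc{X}_e$ supplied by Proposition \ref{fieldXe}(iii): $\mc{X}_e$ is the smallest $\sigma$-field on $X_e$ rendering every counting function $\cunt{A}$, $A \in \mc{X}$, measurable. By the standard universal property of such an initial $\sigma$-field, a function into $(X_e, \mc{X}_e)$ is measurable exactly when all its composites with the generating functions $\cunt{A}$ are measurable on the domain. Hence the map $\abs{~}: \msee{X} \longrightarrow \mse{X}$ is $(\mc{X}_{ee}, \mc{X}_e)$-measurable if and only if each composite $\cunt{A} \Comp \abs{~}: X_{ee} \longrightarrow \mathbb{N}$ is $\mc{X}_{ee}$-measurable, for every $A \in \mc{X}$. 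This reduces the statement to analysing those composites.

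First I would compute the composite explicitly. For $\mathfrak{y} = \msbf{a}_1 \cdots \msbf{a}_n \in X_{ee}$, flattening and then counting occurrences of $A$ gives
$$ \cunt{A}(\abs{\mathfrak{y}}) = \abs{\mathfrak{y}}(A) = \sum_{i=1}^n \msbf{a}_i(A) = \sum_{i=1}^n \cunt{A}(\msbf{a}_i), $$
so the composite is the \emph{sum over the factors} of the $\mc{X}_e$-measurable, $\mathbb{N}$-valued function $\cunt{A}$. The key step is to show this sum-over-factors map is measurable one level up. Since $\cunt{A}$ is $\mc{X}_e$-measurable and $\mathbb{N}$-valued, decompose it into indicators $\cunt{A} = \sum_{k \geq 1} \chi_{E_k}$, where $E_k := \{ \msbf{x} \in X_e \mid \cunt{A}(\msbf{x}) \geq k \} \in \mc{X}_e$. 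Substituting and exchanging the two non-negative summations yields
$$ \cunt{A}(\abs{\mathfrak{y}}) = \sum_{k \geq 1} \sum_{i=1}^n \chi_{E_k}(\msbf{a}_i) = \sum_{k \geq 1} \cunt{E_k}(\mathfrak{y}), $$
where $\cunt{E_k}$ is now the counting function on $X_{ee} = (X_e)_e$ associated with the measurable set $E_k \in \mc{X}_e$.

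By Proposition \ref{fieldXe}(iii) applied to $\mse{X}$ in place of $\ms{X}$, each $\cunt{E_k}$ is $\mc{X}_{ee}$-measurable; and a countable sum of $\mc{X}_{ee}$-measurable functions is again measurable, being the supremum of its increasing partial sums (Theorem \ref{MC}). Thus $\cunt{A} \Comp \abs{~}$ is $\mc{X}_{ee}$-measurable for every $A \in \mc{X}$, which by the first paragraph finishes the proof.

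The main obstacle is conceptual rather than computational: one must resist trying to describe $\absI{\msbf{A}}$ directly, and instead recognise that measurability into the abstractly defined $\sigma$-field $\mc{X}_e$ is cleanly tested through the counting functions, and that $\abs{~}$ transports a counting function on $X_e$ into a \emph{countable} sum of counting functions on $X_{ee}$ (the ``one level up'' bookkeeping). Once this observation is in place, the remaining ingredients---the $\mathbb{N}$-valued decomposition into indicators, the interchange of non-negative double sums, and closure of measurability under countable sums---are entirely routine.
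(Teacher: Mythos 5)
Your proof is correct, and it takes a genuinely different route from the paper's. The paper argues directly on inverse images: for $\msbf{A} \in \mc{X}_e \cap X^{(n)}$ it pulls $\absI{\msbf{A}}$ back along the forgetful maps $\dot{F}$ and $F^k$, identifies the result with $\FI(\msbf{A}) \in \mc{X}_e^\bullet \cap X^{\bullet n}$, and then pushes membership back up to conclude $\absI{\msbf{A}} \in \mc{X}_{ee}$. You instead test measurability through the generating family: by Proposition \ref{fieldXe}(iii), $\mc{X}_e$ is the initial $\sigma$-field for the counting functions, so it suffices to check each composite $\cunt{A} \Comp \abs{~}$, which you correctly identify as $\sum_{k \geq 1} \cunt{E_k}$ with $E_k = \cunt{A}^{-1}([k,\infty]) \in \mc{X}_e$, a countable sum of $\mc{X}_{ee}$-measurable counting functions one level up. All the ingredients check out: the universal property of the initial $\sigma$-field, the layer-cake decomposition of an $\mathbb{N}$-valued function into indicators, the Tonelli interchange of non-negative sums, and closure of measurability under countable sums. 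What your route buys is a shorter, more conceptual proof that never touches the ordered spaces $X_e^\bullet$, $(X_e)_e^\bullet$. What the paper's computation buys is the explicit identity $(F^k)^{\mbox{-}1}(\dot{F}^{\mbox{-}1}(\absI{\msbf{A}})) = \FI(\msbf{A})$, which is not merely a by-product: it is cited again in the proof of Lemma \ref{invee} to establish $\kappa_{ee}(\absI{\msbf{A}}, \mathfrak{y}) = \kappa_e(\msbf{A}, \abs{\mathfrak{y}})$. So your argument fully proves the lemma as stated, but if you adopted it in place of the paper's, the later lemma would need that pullback identity supplied separately.
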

\begin{proof}{}
For any $\msbf{A} \in \mc{X}_e \cap X^{(n)}$ with an arbitrary $n$,
we show that  $\absI{\msbf{A}}$ (i.e., the the inverse image of $\msbf{A}$
along $\abs{-}$) belongs to $\mc{X}_{ee}$.
But this is equivalent to show that the inverse image of $\msbf{A}$ along the composition  
$\xymatrix{ 
\abs{-} \Comp \, \mathbf{F}:
(X_e)_e^\bullet 
\ar[r]^(.6){\mathbf{F}} & (X_e)_e
\ar[r]^{|~|} & X_e}$, 
\begin{align} \label{invFabs}
(\abs{-} \Comp \, \mathbf{F} )^{\mbox{-}1}(\msbf{A})
= \mathbf{F}^{\mbox{-}1} (\absI{\msbf{A}})
\end{align}
belongs to $(\mc{X}_e)_e^\bullet$,
for which $\mathbf{F}$ denotes the forgetful map
in Definition \ref{nam} for the adequate type.
Observe that (\ref{invFabs}) becomes a subset of $\textstyle \biguplus
X^{(n_1)} \times \cdots \times X^{(n_k)}$, whose union ranges over
$(n_1, \ldots , n_k)$s such that  $\textstyle \sum_{i=1}^k n_i = n$.
In what follows in the proof all the $\textstyle \biguplus$ is the same as this.

Consider the $k$-folding cartesian product $F^k$ of
the forgetful $F: X_e^\bullet \longrightarrow X_{e}$,
whose restriction on the domain yields
$X^{\bullet n_1} \times \cdots \times X^{\bullet n_k}
\longrightarrow
X^{(n_1)} \times \cdots \times X^{(n_k)}$.
The inverse image of (\ref{invFabs}) along the union
$\biguplus$ of the foldings
coincides with $\FI (\msbf{A})$. That is, 
\begin{align} \label{invFkFabs}
\textstyle 
(\biguplus F^k)^{\mbox{-}1}
(\ref{invFabs}):=
\biguplus
(F^k)^{\mbox{-} 1} ( \mathbf{F}^{\mbox{-}1}(\absI{\msbf{A}}))
=
\FI (\msbf{A}),
\end{align}
whose RHS belongs to
$\mc{X}_e^{\bullet} \cap X^{\bullet n}$ by the choice $\msbf{A}$.
This means (\ref{invFabs}) belongs to
\begin{align} \label{setinvFkFabs}
\textstyle
\biguplus 
(\mc{X}_e \cap X^{(n_1)}) \times \cdots \times 
(\mc{X}_e \cap X^{(n_k)}) 
\end{align}
as $\mc{X}_e^{\bullet} \cap X^{\bullet n}
= \biguplus 
(F^k)^{\mbox{-} 1}(
(\mc{X}_e \cap X^{(n_1)}) \times \cdots \times 
(\mc{X}_e \cap X^{(n_k)}))$.
The assertion has been proven as $(\ref{setinvFkFabs}) \subset 
\biguplus (\mc{X}_e)^{\bullet k}
\subset (\mc{X}_e)_e^\bullet$.
\end{proof}

\smallskip

\begin{prop}[Storage]\label{stor}{\em
Storage, also called digging, $\stor{\mathcal{X}}: \mse{X} \longrightarrow \msee{X}$ is defined
for $\msbf{A} \in \mc{X}_e$ and $\mathfrak{y} \in X_{ee}$
\begin{align*}
\stor{\mathcal{X}}(\msbf{A}, \mathfrak{y}) &  :=
\Dd{\abs{\mathfrak{y}}}{\msbf{A}} =
\Dd{\mathfrak{y}}{\absI{\msbf{A}}}
\end{align*}
}
Then, this gives a natural transformation ${\sf s}: (~)_e
\longrightarrow (~)_{ee}$.
\end{prop}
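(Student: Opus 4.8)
The plan is to check the naturality square $\stor{\mc{Y}} \Comp \kappa_e = \kappa_{ee} \Comp \stor{\mc{X}}$ for an arbitrary morphism $\kappa : \ms{X} \longrightarrow \ms{Y}$ of $\opTKersfin$, collapsing it to one pointwise identity and then evaluating each side through the exponential formula (\ref{exker}). First I would record that $\stor{\mc{X}}$ is a genuine (indeed finite, Markovian) kernel: for fixed $\mathfrak{y}$ the set-argument $\msbf{A} \mapsto \Dd{\abs{\mathfrak{y}}}{\msbf{A}}$ is the Dirac measure at $\abs{\mathfrak{y}} \in X_e$, while for fixed $\msbf{A}$ the function $\mathfrak{y} \mapsto \Dd{\mathfrak{y}}{\absI{\msbf{A}}}$ is the characteristic function of $\absI{\msbf{A}}$, which lies in $\mc{X}_{ee}$ by Lemma \ref{ee-emeasurable}. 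Evaluating the two composites at $(\msbf{A}, \mathfrak{y})$ with $\msbf{A} \in \mc{X}_e$, $\mathfrak{y} \in X_{ee}$: on the left, integrating the measurable function $\msbf{y} \mapsto \kappa_e(\msbf{A}, \msbf{y})$ against the Dirac measure $\stor{\mc{Y}}(d\msbf{y}, \mathfrak{y})$ concentrated at $\abs{\mathfrak{y}}$ yields $\kappa_e(\msbf{A}, \abs{\mathfrak{y}})$; on the right, integrating the indicator $\stor{\mc{X}}(\msbf{A}, \mathfrak{x}) = \chi_{\absI{\msbf{A}}}(\mathfrak{x})$ against the measure $\kappa_{ee}(d\mathfrak{x}, \mathfrak{y})$ yields $\kappa_{ee}(\absI{\msbf{A}}, \mathfrak{y})$, exactly as the passage (\ref{kapchi}) turns a characteristic function into a kernel value. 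Hence naturality is equivalent to
\begin{equation*}
\kappa_e(\msbf{A}, \abs{\mathfrak{y}}) = \kappa_{ee}(\absI{\msbf{A}}, \mathfrak{y}). \tag{$\star$}
\end{equation*}

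Next I would prove $(\star)$. Both sides, viewed as functions of the set $\msbf{A}$ with $\mathfrak{y}$ fixed, are measures on $\mc{X}_e$ — the right-hand one being precisely the pushforward along $\abs{-} : X_{ee} \longrightarrow X_e$ of the measure $\kappa_{ee}(-, \mathfrak{y})$ — so by $\sigma$-additivity over the grading $X_e = \biguplus_n X^{(n)}$ it suffices to take $\msbf{A}$ concentrated in the single degree $X^{(K)}$ with $K = \cunt{X}(\abs{\mathfrak{y}})$, both sides vanishing otherwise. Writing $\mathfrak{y} = \msbf{y}_1 \cdots \msbf{y}_n$ with $\msbf{y}_i = y_{i1} \cdots y_{i k_i}$ and $\sum_i k_i = K$, I would unfold each side via (\ref{exker})–(\ref{expebot}). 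The left side becomes the $K$-fold monoidal power $\kappa^K$ applied to $\FI(\msbf{A}) \subseteq X^{\bullet K}$ at the point $(y_{11}, \ldots, y_{n k_n})$, i.e. the product measure $\bigotimes_{i,j}\kappa(-, y_{ij})$ of $\FI(\msbf{A})$. Since $\kappa_{ee} = (\kappa_e)_e$ by functoriality (Proposition \ref{funcex}), the right side, read off at outer degree $n$, is the $n$-fold power $(\kappa_e)^n = \bigotimes_i \kappa_e(-, \msbf{y}_i)$ at $(\msbf{y}_1, \ldots, \msbf{y}_n)$ evaluated on $\dot{F}^{-1}\bigl(\absI{\msbf{A}} \cap (X_e)^{(n)}\bigr) = \{(\msbf{a}_1, \ldots, \msbf{a}_n) \mid \msbf{a}_1 \cdots \msbf{a}_n \in \msbf{A}\}$, where each $\kappa_e(-, \msbf{y}_i)$ is itself the pushforward along $F$ of $\bigotimes_j \kappa(-, y_{ij})$.

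The decisive step is to commute these pushforwards past the product: using that pushing an s-finite product measure forward along a product of measurable maps equals the product of the pushforwards (a change of variables as in Definition \ref{pfm}, justified by the s-finite Fubini--Tonelli of Proposition \ref{FTsfin}), the right side turns into the product measure $\bigotimes_{i,j}\kappa(-, y_{ij})$ on $X^{\bullet K}$ evaluated at $(F^{\times n})^{-1}$ of that block set, where $F^{\times n}$ applies $F$ within each of the $n$ blocks. The argument then closes by the set-level coherence that flattening equals block-concatenation followed by $F$: the product map $X_e^{\bullet n} \to X_e$ precomposed with $F^{\times n}$ is $F$ on the concatenation under $X^{\bullet K} \cong \prod_i X^{\bullet k_i}$, whence $(F^{\times n})^{-1}\{(\msbf{a}_i)_i : \prod_i \msbf{a}_i \in \msbf{A}\} = \FI(\msbf{A})$. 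Both product measures then coincide on this set, giving $(\star)$.

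I expect the main obstacle to be exactly this interlocking of the two exponential levels: keeping the degree bookkeeping ($\sum_i k_i = K$ against outer degree $n$) straight while commuting each pushforward past the iterated product measure through the s-finite Fubini--Tonelli, and verifying the coherence of the flattening $\abs{-}$ with the forgetful maps $F$ and $\dot{F}$ so that the two preimages genuinely agree as measurable subsets of $X^{\bullet K}$. Everything else — the Dirac/indicator evaluation of the composites and the grading reduction — is routine once $\stor{\mc{X}}$ is known to be a well-defined kernel.
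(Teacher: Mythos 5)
Your proposal is correct and follows essentially the same route as the paper: both reduce naturality to the single identity $\kappa_e(\msbf{A}, \abs{\mathfrak{y}}) = \kappa_{ee}(\absI{\msbf{A}}, \mathfrak{y})$ (the paper's Lemma \ref{invee}) by evaluating the Dirac/indicator integrals, and both then prove that identity by unfolding through $(\ref{exker})$–$(\ref{expebot})$ down to the monoidal power $\kappa^{m}$ and matching the preimages $(F^{k})^{-1}(\dot{F}^{-1}(\absI{\msbf{A}})) = \FI(\msbf{A})$ as in the proof of Lemma \ref{ee-emeasurable}. Your explicit appeal to Fubini--Tonelli to commute the pushforwards past the product measure makes precise a step the paper leaves implicit, but the argument is the same.
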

\begin{proof}
We show that $\kappa_{ee} \Comp \stor{\mc{X}}
= \stor{\mc{Y}} \Comp \kappa_e$
for any $\kappa : \mc{X} \longrightarrow \mc{Y}$.
Let $\msbf{A} \in \mc{X}_e$ and 
$\msbf{y_1} \cdots \msbf{y}_k \in Y_{ee}$.
\begin{align*}
LHS(\msbf{A}, \msbf{y_1} \cdots \msbf{y}_k)
&  \textstyle = \Int{X_{ee}}{\stor{\mc{X}} (\msbf{A},
\msbf{x}_1 \cdots \msbf{x}_k)}{
\kappa_{ee} ( d \msbf{x}_1 \cdots \msbf{x}_k, 
\msbf{y}_1 \cdots \msbf{y}_k )}\\
&  \textstyle 
 = \Int{X_{ee}}{\Dd{\msbf{x_1} \cdots \msbf{x}_k}{\absI{A}}}
{\kappa_{ee} ( d \msbf{x_1} \cdots \msbf{x}_k, 
\msbf{y_1} \cdots \msbf{y}_k)} =
\kappa_{ee} ( \absI{\msbf{A}}, 
\msbf{y_1} \cdots \msbf{y}_k) 
\\ RHS(\msbf{A}, \msbf{y_1} \cdots \msbf{y}_k) =
& 
\textstyle \Int{Y_e}{\kappa_e (\msbf{A}, \msbf{z})}{\stor{\mc{Y}}
(d \msbf{z}, \msbf{y_1} \cdots \msbf{y}_k )} \\
& \textstyle = 
\Int{Y_e}{\kappa_e (\msbf{A}, \msbf{z})}{
\Dd{\abs{\msbf{y_1} \cdots \msbf{y}_k}}{d \msbf{z}}} =
\kappa_e (\msbf{A}, \abs{\msbf{y_1} \cdots \msbf{y}_k})
\end{align*}
The both HSs coincide by the following Lemma \ref{invee}.
\end{proof}

\begin{lem} \label{invee}
For any $\kappa: \mc{X} \longrightarrow \mc{Y}$, the following holds for
any $\msbf{A} \in \mc{X}_e$ and $\mathfrak{y} \in \mc{X}_{ee}$:
\begin{align*}
\kappa_{ee} (\absI{\msbf{A}}, \mathfrak{y})
=
\kappa_{e} (\msbf{A}, \abs{\mathfrak{y}})  
\end{align*}
\end{lem}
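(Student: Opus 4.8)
The plan is to prove the identity by a direct computation that unfolds the two exponentials and then merges the resulting product measures by Fubini--Tonelli, after first reducing to a single degree. Write $\mathfrak{y} = \msbf{y}_1 \cdots \msbf{y}_k \in Y_{ee}$ with each $\msbf{y}_i = y_{i1} \cdots y_{i \ell_i} \in Y_e$, so that $\abs{\mathfrak{y}} = y_{11} \cdots y_{k \ell_k}$ has length $\ell := \sum_{i=1}^k \ell_i$. First I would observe that, viewed as functions of the measure argument $\msbf{A} \in \mc{X}_e$, both sides are measures: the right-hand side $\kappa_e(\msbf{A}, \abs{\mathfrak{y}})$ is a measure because the measure argument of any transition kernel is one, and the left-hand side $\kappa_{ee}(\absI{\msbf{A}}, \mathfrak{y})$ is the push-forward (Definition~\ref{pfm}) of the measure $\kappa_{ee}(-, \mathfrak{y})$ along the measurable map $\abs{-}$ of Lemma~\ref{ee-emeasurable}. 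Hence it suffices to prove equality of two measures, and by $\sigma$-additivity I may split $\msbf{A}$ by degree. Since $\abs{\mathfrak{y}}$ has length $\ell$, the degree-matching built into (\ref{exker})--(\ref{expebot}) shows the right-hand side is supported on $X^{(\ell)}$; the same matching, applied to each factor of the $k$-fold product below, shows the left-hand side is too. So I may assume $\msbf{A} \in \mc{X}_e \cap X^{(\ell)}$.

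Next I would unfold $\kappa_{ee} = (\kappa_e)_e$. By Definition~\ref{kerex} the outer exponential sends the $k$-factor argument $\mathfrak{y}$ to the $k$-ary monoidal power of $\kappa_e$, so $\kappa_{ee}(\absI{\msbf{A}}, \mathfrak{y})$ becomes the product measure $\bigotimes_{i=1}^k \kappa_e(\msbf{y}_i, -)$ evaluated on $\dot{F}^{-1}\!\big(\absI{\msbf{A}} \cap (X_e)^{(k)}\big)$, where $\dot{F}$ is the forgetful map at the $X_e$-level used in Lemma~\ref{ee-emeasurable}. Each factor $\kappa_e(\msbf{y}_i, -)$ is, again by (\ref{exker})--(\ref{expebot}), the push-forward along $F$ of the monoidal power $\kappa^{\ell_i}\big((y_{i1}, \ldots, y_{i\ell_i}), -\big)$. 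Applying the variable change of Definition~\ref{pfm} coordinatewise (equivalently, pushing the product measure forward along $F \times \cdots \times F$) replaces the integration over $(\msbf{x}_1, \ldots, \msbf{x}_k)$ by one over ordered tuples $\vec{x}_i = (x_{i1}, \ldots, x_{i\ell_i}) \in X^{\bullet \ell_i}$ against $\prod_{i=1}^k \kappa^{\ell_i}\big((y_{i1}, \ldots, y_{i\ell_i}), d\vec{x}_i\big)$.

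The crux is then to identify the domain of integration. Because $\abs{-}$ is literally concatenation in the free abelian monoid, the constraint $\msbf{x}_1 \cdots \msbf{x}_k \in \msbf{A}$ cutting out $\dot{F}^{-1}(\absI{\msbf{A}} \cap (X_e)^{(k)})$ becomes, after the push-forwards, exactly the condition that the concatenated ordered tuple $(x_{11}, \ldots, x_{k\ell_k}) \in X^{\bullet \ell}$ lie in $\FI(\msbf{A})$; this is the block-decomposition identity $\FI(\msbf{A}) = \biguplus (F^k)^{-1}\big(\dot{F}^{-1}(\absI{\msbf{A}})\big)$ already established inside the proof of Lemma~\ref{ee-emeasurable}, here specialised to the single block-type $(\ell_1, \ldots, \ell_k)$ picked out by the supports. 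Finally, Fubini--Tonelli for s-finite measures (Proposition~\ref{FTsfin}) merges the iterated block products $\prod_{i=1}^k \kappa^{\ell_i}$ into the single power $\kappa^{\ell}\big((y_{11}, \ldots, y_{k\ell_k}), -\big)$ on $X^{\bullet \ell}$, so the integral collapses to $\kappa^{\ell}\big((y_{11}, \ldots, y_{k\ell_k}), \FI(\msbf{A})\big)$, which by (\ref{exker}) is precisely $\kappa_e(\msbf{A}, \abs{\mathfrak{y}})$.

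I expect the main obstacle to be the bookkeeping in this last step: keeping apart the two forgetful maps --- $\dot{F}$, which forgets the order of the $X_e$-factors, and the block-wise copies of $F$, which forget the order within each $\msbf{x}_i$ --- and checking that their composite matches the single symmetrisation $\FI$ that appears on the right. This is exactly where the multiplicativity of $\abs{-}$ and the permutation-invariance (\ref{invperm}) of the monoidal powers must be used together; once the set identity is in hand, the remaining manipulations are routine applications of the push-forward variable change and of Fubini--Tonelli, both already licensed for s-finite kernels.
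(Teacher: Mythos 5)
Your proposal is correct and takes essentially the same route as the paper's proof: both reduce to fixed degrees ($\absI{\msbf{A}}$ cut to $(X_e)^{(k)}$, $\msbf{A}$ to $X^{(\sum_i \ell_i)}$), unfold both sides through the $\bullet$-level to monoidal powers of $\kappa$, and hinge on the same set identity $(F^k)^{\mbox{-}1}(\dot{F}^{\mbox{-}1}(\absI{\msbf{A}})) = \FI(\msbf{A})$ established in the proof of Lemma \ref{ee-emeasurable}. The only difference is presentational: you phrase the computation in terms of push-forward and product measures merged by Fubini--Tonelli, whereas the paper writes the same content as a chain of definitional equalities ending in $\kappa^m$ applied to identical arguments.
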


\begin{proof} 
Let
$\mathfrak{y}$ be $\msbf{y}_1 \cdots \msbf{y}_r \in \mc{X}_{ee}$
so that $\msbf{y}_i = y_{i1} \cdots y_{in_i} \in \mc{X}_e$ with $i=1,
 \ldots ,r$.
Then it suffices to consider
$\msbf{A} \in \mc{X}_e \cap X^{(n)}$
with $n=\sum_{i=1}^r n_i$. \\
In the following $\mathbf{F}$ and $F$ are the same as
in the proof of
Lemma \ref{ee-emeasurable}. \\

$\begin{aligned}
& RHS (\msbf{A}, \abs{\msbf{y}_1 \cdots
  \msbf{y}_r}) 
= \kappa_e (\msbf{A}, y_{11} \cdots y_{1n_1} \cdots y_{r1} \cdots
  y_{rn_r}) \\ &
= 
\kappa_e^\bullet (\FI (\msbf{A}), 
(y_{11},  \ldots , y_{1n_1},  \ldots ,  y_{r1},  \ldots
  y_{rn_r})) &  \\ & 
= 
\kappa^{n} (\FI (\msbf{A}) \cap X^{\bullet n}, 
(y_{11},  \ldots , y_{1n_1},  \ldots ,  y_{r1},  \ldots
  y_{rn_r})) & \mbox{by the choice $\msbf{A}$ with
$n=\sum_{i=1}^r n_i$.}
 \end{aligned}$

\bigskip

\noindent In the following $\diamondsuit$ is a short for
$\mathbf{F}^{\mbox{-}1}(\absI{\msbf{A}})$
and $\biguplus$ with the omitted subscript 
 is the same as in the proof of
Lemma \ref{ee-emeasurable}. \\

$\begin{aligned}
& LHS(\absI{\msbf{A}}, \msbf{y}_1 \cdots \msbf{y}_r )
= 
(\kappa_e)_e^\bullet 
(\diamondsuit, (\msbf{y}_1, \ldots
  ,\msbf{y}_r) ) \\
& 
= 
(\kappa_e)^r
(\diamondsuit \cap (X_e)^{\bullet r}
, (\msbf{y}_1, \ldots
  ,\msbf{y}_r) ) \\
& 
= 
(\kappa_e^\bullet)^r
((F^r)^{\mbox{-}1} \bigl(\diamondsuit \cap (X_e)^{\bullet r} \bigr)
, (\vec{\msbf{y}}_1, \ldots
  ,\vec{\msbf{y}}_r) )
\quad \quad \mbox{with $\vec{\msbf{y}}_j = (y_{j1}, \ldots , y_{jn_j})$ for each $j=1, \ldots r$}
 \\
& \textstyle = (\kappa_e^\bullet)^r
   ( (\biguplus F^k)^{\mbox{-}1}  \bigl(\diamondsuit \cap (X_e)^{\bullet r} \bigr)
, (\vec{\msbf{y}}_1, \ldots
  ,\vec{\msbf{y}}_r) )  \quad \mbox{as $
(\biguplus F^k)^{\mbox{-}1}
\bigl( \sim  \cap (X_e)^{\bullet r} \bigr)
= 
(F^r)^{\mbox{-}1}
\bigl( \sim  \cap (X_e)^{\bullet r}
 \bigr)$ for any subset $\sim$} \\
&  \textstyle 
= (\kappa^{n_1} \times \cdots \times \kappa^{n_r})
(  (\biguplus F^k )^{\mbox{-}1}
 \bigl( \diamondsuit
\bigr) \cap (X^{\bullet n_1} \times \cdots \times X^{\bullet n_r} )
  ,  
 (\vec{\msbf{y}}_1, \ldots,
\vec{\msbf{y}}_r)  ) \\
& \textstyle = \kappa^{\sum_{i=1}^r n_i} 
( 
(\biguplus F^k )^{\mbox{-}1}
\bigl( \diamondsuit \bigr) \cap X^{\bullet (\sum_{i=1}^r n_i)}
  , (\vec{\msbf{y}}_1, \ldots,
\vec{\msbf{y}}_r) ) 
\end{aligned}$ 
\\

\noindent 
The both HSs coincide by (\ref{invFkFabs}) of
Lemma \ref{ee-emeasurable}. The coincidence presumes the associativity of 
of the cartesian product so that 
$(\vec{\msbf{y}}_1, \ldots,
\vec{\msbf{y}}_r) = 
(y_{11}, \ldots , y_{1n_1}, \ldots
  , y_{r1}, \ldots , y_{rn_r})$.
\end{proof}


\begin{defn}[Monoidalness]\label{monness}{\em
In the definition, $*^n$ denotes $\overbrace{* \cdots *}^{n} \in I_e$
for $n \in \mathbb{N}$.
\begin{itemize}
 \item 
$\moni : \ms{I} \longrightarrow \mse{I}$ is defined;
\begin{align*}
 \moni(I, *^n):= 
\begin{cases}
1 & \text{for $n \geq 1$} \\
0 & \text{for $n=0$}
\end{cases}
& &  \text{To be short, $\moni(I, *^n)= \min(n,1)$. }
\end{align*}

\item $\mon{\mc{X}}{\mc{Y}}: \mse{X} \otimes \mse{Y} \longrightarrow
( (X \times Y)_e, (\mc{X} \otimes \mc{Y})_e)
$ is defined for every rectangle $\msbf{A} \times \msbf{B}$
with $\msbf{A} \in \mc{X}_e$,
$\msbf{B} \in \mc{Y}_e$ and every $(x_1, y_1) \cdots (x_n,y_n)
\in (X \times Y)_e$ for any $n \in \mathbb{N}$.
\begin{align*}
\mon{\mc{X}}{\mc{Y}}( \msbf{A} \times \msbf{B}, (x_1, y_1) \cdots (x_n,y_n))
& := \Dd{x_1 \cdots x_n}{\msbf{A}}
      \, \Dd{y_1 \cdots y_n}{\msbf{B}}
\end{align*}
\end{itemize}
}\end{defn}
Note for the definition, the finite rectangles with same dimensions suffice as
the following holds
$$\mon{\mc{X}}{\mc{Y}}( \msbf{A} \times \msbf{B}, (x_1, y_1) \cdots
(x_n,y_n))
=\mon{\mc{X}}{\mc{Y}}( (\msbf{A} \times \msbf{B}) \cap (X^{(n)} \times
Y^{(n)}), (x_1, y_1) \cdots (x_n,y_n))$$

\smallskip

\begin{prop}
The dereliction ${\sf d}$ is  a monoidal natural transformation
with respect to the monoidalness ${\sf m}$ of Definition \ref{monness}.
\end{prop}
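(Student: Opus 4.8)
Since the naturality of ${\sf d}$ is already established in Proposition \ref{natder}, the only thing left to prove is that ${\sf d}$ obeys the two coherence axioms of a monoidal natural transformation from the lax monoidal functor $(~)_e$ (with structure maps $\moni$ and $\mon{\mc{X}}{\mc{Y}}$) to the strict monoidal identity functor. In $\opTKersfin$ these read as the unit law $\di{\mc{I}} \Comp \moni = \operatorname{id}_{\ms{I}}$ and the tensor law $\di{\mc{X} \otimes \mc{Y}} \Comp \mon{\mc{X}}{\mc{Y}} = \di{\mc{X}} \otimes \di{\mc{Y}}$, both taken as morphisms in $\opTKersfin$. The plan is to verify these two equalities in turn, using the explicit convolution formula for composition in $\opTKersfin$ together with the observation that $\di{}$ is supported on the one-element-multiset component $X^{(1)}$.

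For the unit law I would first note that, for the fixed function-argument $* \in I$, the measure $\di{\mc{I}}(-,*) = \Dd{*}{(-) \cap I^{(1)}}$ is the Dirac measure on $\mse{I}$ concentrated at the one-element multiset $* \in I^{(1)}$, since $I^{(1)}$ is the singleton $\{*\}$. Hence convolving along $I_e$ simply evaluates the other kernel at that point:
\begin{align*}
\di{\mc{I}} \Comp \moni(A, *) = \Int{I_e}{\moni(A, \msbf{z})}{\di{\mc{I}}(d\msbf{z}, *)} = \moni(A, *).
\end{align*}
By Definition \ref{monness} (taking $n=1$), $\moni(A,*)$ equals $\min(1,1)=1$ for $A = I$ and $0$ for $A = \emptyset$, which is exactly the identity $\Dd{*}{A}$ on $\ms{I}$; this settles the unit law.

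For the tensor law, both sides are, for each fixed $(x,y) \in X \times Y$, measures in their left argument on the product $\sigma$-field $\mc{X}_e \otimes \mc{Y}_e$ of the domain $\mse{X} \otimes \mse{Y}$. By uniqueness of s-finite measures agreeing on the generating $\pi$-system of measurable rectangles (Lemma \ref{piDynkin}, with the Fubini--Tonelli machinery of Proposition \ref{FTsfin}), it suffices to test equality on rectangles $\msbf{A} \times \msbf{B}$ with $\msbf{A} \in \mc{X}_e$, $\msbf{B} \in \mc{Y}_e$. On the left, the same Dirac-at-the-singleton phenomenon makes the convolution collapse onto the one-element multiset $(x,y) \in (X \times Y)^{(1)}$:
\begin{align*}
\di{\mc{X} \otimes \mc{Y}} \Comp \mon{\mc{X}}{\mc{Y}}(\msbf{A} \times \msbf{B}, (x,y)) = \mon{\mc{X}}{\mc{Y}}(\msbf{A} \times \msbf{B}, (x,y)),
\end{align*}
and by Definition \ref{monness} (again $n=1$) this is $\Dd{x}{\msbf{A}}\,\Dd{y}{\msbf{B}} = \Dd{x}{\msbf{A} \cap X^{(1)}}\,\Dd{y}{\msbf{B} \cap Y^{(1)}}$. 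On the right, the tensor of the two derelictions multiplies on rectangles by Definition \ref{moncat}, yielding $\di{\mc{X}}(\msbf{A}, x)\,\di{\mc{Y}}(\msbf{B}, y)$, which is literally the same product. The two measures therefore coincide on rectangles and hence everywhere.

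The computational content here is light; the genuine care is measure-theoretic and notational. The main obstacle is the reduction of the tensor law from rectangles to the entire product $\sigma$-field: one must confirm that both sides are honest s-finite measures in their left argument, so that agreement on the rectangle $\pi$-system propagates by the Dynkin argument of Lemma \ref{piDynkin}. The secondary hazard is keeping the opposite-category reading of the two slots straight throughout the convolutions --- left argument a measure, right argument a measurable function --- so that the collapse of $\di{}$ onto the component $X^{(1)}$ is applied in the correct slot on each side.
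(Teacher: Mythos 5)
Your proposal is correct and follows essentially the same route as the paper: both laws are verified by the same convolution computations, with the Dirac support of $\di{}$ on the $(~)^{(1)}$ component collapsing each integral, and the tensor law checked on measurable rectangles. You merely make explicit two steps the paper leaves implicit --- that $\moni(-,*)$ is literally the identity kernel $\Dd{*}{-}$ on $\ms{I}$, and that agreement on rectangles extends to the whole product $\sigma$-field by the $\pi$-system argument --- which is fine but not a different approach.
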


\begin{proof}
The two conditions (i) and (ii) are checked:

\noindent (i) The composition $\xymatrix{I \ar[r]^{\moni} & I_e
 \ar[r]^{\dii} & I }$ is the identity.
\begin{align*}
\textstyle \dii \Comp \moni (-, *)
 = \textstyle 
\Int{I_e}{\moni (-, \msbf{z})}{\dii (d \msbf{z}, *)}
= 
\Int{I_e}{\moni (-, \msbf{z})}{\Dd{*}{d \msbf{z} \cap I^{(1)}}}
 \textstyle
=
\Int{I_e \cap I^{(1)}}{\moni (-, z)}{\Dd {*}{d z}}
= \moni(-, *)
\end{align*}

\smallskip

\noindent (ii) $\begin{aligned}
		 \di{\mathcal{X} \otimes 
\mathcal{Y}} \Comp \sf{m}_{\mathcal{X},\mathcal{Y}}= \di{\mathcal{X}}
 \otimes \di{\mathcal{Y}}\end{aligned}$ 
\begin{align*}
LHS ( \msbf{A} \times \msbf{B}, (x,y) )
& = \textstyle \Int{(X \times Y)_e}{
\mon{\mc{X}}{\mc{Y}} ( \msbf{A} \times \msbf{B}, \msbf{z} )}{
\di{\mc{X} \otimes \mc{Y}} (d \msbf{z}, (x,y))} \\
& \textstyle =  \Int{(X \times Y)_e}{
\mon{\mc{X}}{\mc{Y}} ( \msbf{A} \times \msbf{B}, \msbf{z} )}{
\Dd{(x,y)}{d \msbf{z} \cap (X \times Y)^{(1)}}}   \\
& \textstyle
 = \Int{(X \times Y)_e \cap (X \times Y)^{(1)}}{
\mon{\mc{X}}{\mc{Y}} ( \msbf{A} \times \msbf{B}, z )}{
\Dd{(x,y)}{d z}} \\
& = 
\mon{\mc{X}}{\mc{Y}} ( \msbf{A} \times \msbf{B}, (x,y))
= 
\Dd{x}{\msbf{A}} \, \Dd{y}{\msbf{B}} 
= 
\Dd{x}{\msbf{A} \cap X^{(1)}} \, \Dd{y}{\msbf{B} \cap Y^{(1)}}
\\
& \textstyle
= \di{\mc{X}}(\msbf{A}, x) \, \di{\mc{Y}}(\msbf{B}, y)
= RHS( \msbf{A} \times \msbf{B}, (x,y) )
\end{align*}
\end{proof}

\begin{prop}
$((~)_e,  \di{\mc{X}}, \stor{\mc{X}})$ is a comonad on $\opTKersfin$.
\end{prop}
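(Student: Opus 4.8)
The plan is to verify the three comonad laws directly in $\opTKersfin$, since the endofunctoriality of $(~)_e$ (Proposition \ref{funcex}) and the naturality of $\di{}$ and $\stor{}$ (Propositions \ref{natder} and \ref{stor}) are already established. Writing $\operatorname{id}$ for the Dirac identity, the equations to check at each $\mc{X}$ are the left counit $\di{\mse{X}} \Comp \stor{\mc{X}} = \operatorname{id}_{\mse{X}}$, the right counit $(\di{\mc{X}})_e \Comp \stor{\mc{X}} = \operatorname{id}_{\mse{X}}$, and coassociativity $\stor{\mse{X}} \Comp \stor{\mc{X}} = (\stor{\mc{X}})_e \Comp \stor{\mc{X}}$, all as morphisms of $\opTKersfin$; the first two composites are $\mse{X} \longrightarrow \mse{X}$ and the last is $\mse{X} \longrightarrow \msi{X}{eee}$, where $\msi{X}{eee}$ denotes the triple exponential.

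The unifying observation I would exploit is that every structure map here is a deterministic (Dirac) kernel: $\di{\mc{X}}$ concentrates at the singleton layer, $\stor{\mc{X}}(\msbf{A}, \mathfrak{y}) = \Dd{\abs{\mathfrak{y}}}{\msbf{A}}$ concentrates at $\abs{\mathfrak{y}}$, and composing Dirac kernels in $\opTKersfin$ is mere substitution along the underlying point maps. The one genuinely computational step is to identify the exponentials $(\di{\mc{X}})_e$ and $(\stor{\mc{X}})_e$ explicitly. Using the defining formula (\ref{expebot}) together with the pushforward presentation (\ref{exker}), I would show that the exponential of a Dirac kernel carried by a measurable point map $g$ is again the Dirac kernel carried by the induced map $g_e$ on multisets. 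Concretely this yields that $(\di{\mc{X}})_e$ sends $x_1 \cdots x_n$ to the Dirac at the length-$n$ multiset of singletons lying in $X_{ee} \cap (X_e \cap X^{(1)})^{(n)}$, and that $(\stor{\mc{X}})_e$ sends $\mathfrak{y}_1 \cdots \mathfrak{y}_k \in X_{eee}$ to the Dirac at the componentwise flattening $\abs{\mathfrak{y}_1} \cdots \abs{\mathfrak{y}_k} \in X_{ee}$.

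Granting these descriptions, each law collapses to a combinatorial identity about the flattening $\abs{-}$ of Definition \ref{mapabs}. For the left counit, $\di{\mse{X}}$ restricts $\stor{\mc{X}}$ to the singleton layer $(X_e)^{(1)}$, so the composite is $\Dd{\abs{\msbf{z}}}{\msbf{A}}$ with $\msbf{z}$ read as a singleton in $X_{ee}$, which equals $\Dd{\msbf{z}}{\msbf{A}}$ because $\abs{-}$ is the identity on $X_{ee} \cap (X_e)^{(1)}$ (first note after Definition \ref{mapabs}). For the right counit, substituting the multiset of singletons produced by $(\di{\mc{X}})_e$ gives $\Dd{\abs{\mathfrak{z}_0}}{\msbf{A}} = \Dd{\msbf{z}}{\msbf{A}}$, using that $\abs{-}$ is the identity on $X_{ee} \cap (X_e \cap X^{(1)})^{(n)}$ (second note). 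For coassociativity, both sides concentrate at the total flattening of $\zeta \in X_{eee}$ down to $X_e$: the left side flattens the outer layer first via $\abs{\zeta}_{\mse{X}}$ and then applies $\abs{-}$, while the right side flattens each component first and then the outer layer, and the two agree by associativity of multiset concatenation. Throughout, the interplay $\kappa_{ee}(\absI{\msbf{A}}, \mathfrak{y}) = \kappa_{e}(\msbf{A}, \abs{\mathfrak{y}})$ of Lemma \ref{invee} is precisely what licenses these substitutions at the level of measurable sets rather than points.

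I expect the main obstacle to be the rigorous identification of $(\di{\mc{X}})_e$ and $(\stor{\mc{X}})_e$ as Dirac kernels, rather than the combinatorics. This requires descending to the ordered presentation $\mc{X}_e^\bullet$, verifying the claim on the generating $\pi$-system of rectangles supplied by Lemma \ref{piDynkin}, and then transporting back along the pushforward by the forgetful map $F$, invoking $\sigma$-additivity to pass from rectangles to all of $\mc{X}_{ee}$ and $\mc{X}_{eee}$. The three-level bookkeeping for coassociativity, keeping the layers $X$, $X_e$, $X_{ee}$, $X_{eee}$ and the two flattenings disentangled, is where I would be most careful; but once the Dirac descriptions are in place, the equality is forced by associativity of concatenation.
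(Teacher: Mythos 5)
Your proposal is correct and takes essentially the same route as the paper: both proofs exploit that $\di{\mc{X}}$ and $\stor{\mc{X}}$ are Dirac kernels, compute $(\di{\mc{X}})_e$ and $(\stor{\mc{X}})_e$ by descending to the ordered presentation and pushing forward along $F$, and reduce the counit laws to the identity of $\abs{-}$ on the layers $X_{ee} \cap (X_e)^{(1)}$ and $X_{ee} \cap (X_e \cap X^{(1)})^{(n)}$, and coassociativity to the flattening identity $\abs{\abs{\mathfrak{y}_1 \cdots \mathfrak{y}_n}} = \abs{\abs{\mathfrak{y}_1} \cdots \abs{\mathfrak{y}_n}}$. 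The only difference is presentational: you isolate the fact that the exponential of a Dirac kernel carried by a point map $g$ is the Dirac kernel carried by the induced map $g_e$ as a standalone lemma, whereas the paper carries out the corresponding computation inline for the two specific kernels via (\ref{expebot}), the product-kernel formula, and $\delta_e = \delta$.
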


\begin{proof} The two conditions
(i) and (ii) are checked. In the proof $\msbf{x}
= x_1 \cdots x_n \in X_e \cap X^{(n)}$ for any $n$
and $\msbf{A} \in 
\mc{X}_e$. 
\bigskip

\noindent (i) 
$\begin{aligned}
\di{\mathcal{X}_e} \Comp \stor{\mathcal{X}}
= \operatorname{Id}_{\mathcal{X}_e}
=  (\di{\mathcal{X}})_e \Comp \stor{\mathcal{X}}
\end{aligned}$
\begin{align*}
& \textstyle LHS (
\msbf{A}, \msbf{x}) = 
\Int{X_{ee}}{
\stor{\mathcal{X}}(\msbf{A}, \mathfrak{y})}{
\di{\mathcal{X}_e} (d \mathfrak{y}, \msbf{x})} =
\Int{X_{ee} \cap (X_e)^{(1)}}{
\stor{\mathcal{X}}(\msbf{A}, \mathfrak{y})}{
\Dd{\msbf{x}}{d \mathfrak{y}}}
= 
\stor{\mathcal{X}}(\msbf{A}, \msbf{x})
= 
\Dd{\msbf{x}}{\absI{\msbf{A}}} 
\end{align*}
\begin{align*}
& \textstyle RHS (\msbf{A}, \msbf{x}) = 
\Int{X_{ee}}{\stor{\mc{X}}(\msbf{A}, \mathfrak{y})}{ (\di{\mc{X}})_e 
(d \mathfrak{y}, \msbf{x})}
       = 
\Int{X_{ee}}{\Dd{\mathfrak{y}}{\absI{\msbf{A}}}}{(\di{\mc{X}})_e 
(d \mathfrak{y}, \msbf{x})}
= 
(\di{\mc{X}})_e 
(\absI{\msbf{A}}, \msbf{x}) \\
& = 
(\di{\mc{X}})^{\bullet n}
(F^{-1} (\absI{\msbf{A}}) \cap \,\, (X_e)^{\bullet n},  (x_1, \ldots ,  x_n))
= \delta^{\bullet n}
(F^{-1} (\absI{\msbf{A}}) \cap \,\, (X_e \cap X^{(1)})^{\bullet n}, 
(x_1, \ldots ,  x_n)) \\
& 
= 
\delta_e
(\absI{\msbf{A}},  x_1 \cdots x_n)
= 
\Dd{x_1 \cdots x_n}{\absI{\msbf{A}}}
\end{align*}
The both HS's coincide with 
$\operatorname{Id}_{\mathcal{X}_e}$ as 
$\Dd{x_1 \cdots x_n}{\absI{\msbf{A}}} =
\Dd{\abs{x_1 \cdots x_n}}{\msbf{A}}$
and $\abs{x_1 \cdots x_n} = x_1 \cdots x_n$.

\bigskip

\noindent (ii) $\begin{aligned}
\stor{\mathcal{X}_e} \Comp \stor{\mathcal{X}}
= 
(\stor{\mathcal{X}})_e \Comp \stor{\mathcal{X}}
\end{aligned}$
\begin{align*} 
LHS (\msbf{A}, -) & = \textstyle
\Int{X_{ee}}{
\stor{\mc{X}} (\msbf{A}, \mathfrak{y})}{
\stor{\mc{X}_e} (d \mathfrak{y}, -)}
= 
\Int{X_{ee}}{
\Dd{\mathfrak{y}}{\absI{\msbf{A}}}}{
\stor{\mc{X}_e} (d \mathfrak{y}, -)}
\\
 & 
= \stor{\mc{X}_e} (\absI{\msbf{A}}
, -) = 
\Dd{\abs{-}}{\absI{\msbf{A}}}
= 
\Dd{\abs{\abs{-}}}{\msbf{A}}
\end{align*}

For RHS, let $-$ be instantiated with
$\mathfrak{y}_1 \cdots \mathfrak{y}_n \in X_{eee}$
such that $\mathfrak{y}_i \in X_{ee}$ with
$i=1, \ldots ,n$.
\begin{align*}  \textstyle
& RHS (\msbf{A}, \mathfrak{y}_1 \cdots \mathfrak{y}_n) = \textstyle  
\Int{X_{ee}}{\stor{\mc{X}} (\msbf{A}, \mathfrak{z})}{
(\stor{\mathcal{X}})_e ( d \mathfrak{z}, 
\mathfrak{y}_1 \cdots \mathfrak{y}_n)}
= 
\Int{X_{ee}}{\Dd{\mathfrak{z}}{\absI{\msbf{A}}}}{ 
(\stor{\mathcal{X}})_e ( d \mathfrak{z}, 
\mathfrak{y}_1 \cdots \mathfrak{y}_n)} \\
&  \textstyle
= 
(\stor{\mathcal{X}})_e ( \absI{\msbf{A}}, 
\mathfrak{y}_1 \cdots \mathfrak{y}_n)
= 
(\stor{\mathcal{X}})^n ( \FI{(\absI{\msbf{A}})} \cap X^{\bullet n}, 
(\mathfrak{y}_1, \ldots , \mathfrak{y}_n))
=
\delta^n ( \FI{(\absI{\msbf{A}})} \cap X^{\bullet n}, 
(\abs{\mathfrak{y}_1}, \ldots, \abs{\mathfrak{y}_n})) \\ \textstyle
& = 
\delta^n ( \FI{(\absI{\msbf{A}})}, 
(\abs{\mathfrak{y}_1}, \ldots, \abs{\mathfrak{y}_n}))  = 
\delta_e ( \absI{\msbf{A}}, 
\abs{\mathfrak{y}_1} \cdots \abs{\mathfrak{y}_n})
= 
\Dd{\abs{\mathfrak{y}_1} \cdots \abs{\mathfrak{y}_n}}{\absI{\msbf{A}}}
=
\Dd{\abs{\abs{\mathfrak{y}_1} \cdots \abs{\mathfrak{y}_n}}}{\msbf{A}}
\end{align*}


The both HSs coincide because of the following equality in $X_e$:
$$ \abs{\abs{\mathfrak{y}_1 \cdots \mathfrak{y}_n}}
= \abs{\abs{\mathfrak{y}_1} \cdots \abs{\mathfrak{y}_n}}
$$
\end{proof}


In terms of the monoidalness,
Proposition \ref{stor} is strengthened into 
\begin{prop}[Monoidality of ${\sf s}$]
The natural transformation storage ${\sf s}$ is monoidal.
That is 
\begin{align*}
 \stor{\mc{X} \otimes \mc{Y}} \Comp \mon{\mc{X}}{\mc{Y}}
= (\mon{\mc{X}}{\mc{Y}})_e \Comp \mon{\mc{X}_e}{\mc{Y}_e} \Comp
(\stor{\mc{X}} \otimes \stor{\mc{Y}})
\end{align*}
Note that the monoidality on the functor $(~)_{ee}$
is given by 
$(\mon{\mc{X}}{\mc{Y}})_e \Comp \mon{\mc{X}_e}{\mc{Y}_e}$.
\end{prop}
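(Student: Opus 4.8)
\emph{Plan.} The strategy is to evaluate both composites as transition kernels of $\opTKersfin$ and compare them pointwise. Each side is a morphism $\mse{X} \otimes \mse{Y} \longrightarrow ((X \times Y)_{ee}, (\mc{X} \otimes \mc{Y})_{ee})$, so in the $\opTKersfin$ convention it is determined by a value $\Phi(\msbf{W}, \mathfrak{c})$ in which $\msbf{W}$ is a measurable set of $\mc{X}_e \otimes \mc{Y}_e$ (the measure argument) and $\mathfrak{c} \in (X \times Y)_{ee}$ is a point (the measurable-function argument). Since the measurable rectangles $\msbf{A} \times \msbf{B}$ with $\msbf{A} \in \mc{X}_e$ and $\msbf{B} \in \mc{Y}_e$ form a $\pi$-system generating $\mc{X}_e \otimes \mc{Y}_e$, and since all the measures occurring below turn out to be Dirac (hence finite) measures, it suffices by the Dynkin argument of Lemma \ref{piDynkin} to check the identity on such rectangles, for each fixed $\mathfrak{c}$. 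Throughout I would exploit that $\stor{}$, $\mon{}{}$ and the exponential of a Dirac-like kernel are all concentrated at single points, so that every convolution integral collapses to an evaluation.

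First I would compute the left-hand side. Writing a point $\mathfrak{c} = \msbf{c}_1 \cdots \msbf{c}_k \in (X \times Y)_{ee}$ with $\msbf{c}_i \in (X \times Y)_e$, the digging $\stor{\mc{X} \otimes \mc{Y}}(-, \mathfrak{c})$ is by Proposition \ref{stor} the Dirac measure at $\abs{\mathfrak{c}} \in (X \times Y)_e$; convolving with $\mon{\mc{X}}{\mc{Y}}$ therefore gives, using Definition \ref{monness},
\begin{align*}
\mathrm{LHS}(\msbf{A} \times \msbf{B}, \mathfrak{c}) = \mon{\mc{X}}{\mc{Y}}(\msbf{A} \times \msbf{B}, \abs{\mathfrak{c}}) = \Dd{\pi_X(\abs{\mathfrak{c}})}{\msbf{A}} \; \Dd{\pi_Y(\abs{\mathfrak{c}})}{\msbf{B}},
\end{align*}
where $\pi_X, \pi_Y$ denote the monoid homomorphisms $(X \times Y)_e \longrightarrow X_e, Y_e$ reading off the products of first, resp. second, coordinates of all the pairs occurring in a formal product.

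Next I would compute the right-hand side by collapsing the three factors from right to left. The tensor $\stor{\mc{X}} \otimes \stor{\mc{Y}}$ evaluated on a rectangle is, by Definition \ref{moncat} and Proposition \ref{stor}, a product of Diracs at $\abs{\mathfrak{x}}$ and $\abs{\mathfrak{y}}$; composing with $\mon{\mc{X}_e}{\mc{Y}_e}$, whose measure part is Dirac on rectangles, yields a morphism $\mse{X} \otimes \mse{Y} \longrightarrow !(X_e \times Y_e)$ whose value on a rectangle and a point $(\msbf{p}_1, \msbf{q}_1) \cdots (\msbf{p}_m, \msbf{q}_m)$ is $\Dd{\msbf{p}_1 \cdots \msbf{p}_m}{\msbf{A}}\,\Dd{\msbf{q}_1 \cdots \msbf{q}_m}{\msbf{B}}$, the products now being taken in $X_e$ and $Y_e$. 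The crux is then the remaining factor $(\mon{\mc{X}}{\mc{Y}})_e$: I would show, using the explicit exponential-kernel formula (\ref{expebot})--(\ref{exker}) of Definition \ref{kerex} together with the fact that the base kernel $\mon{\mc{X}}{\mc{Y}}$ sends $\msbf{c}_i$ to the Dirac at $(\pi_X \msbf{c}_i, \pi_Y \msbf{c}_i)$, that $(\mon{\mc{X}}{\mc{Y}})_e(-, \mathfrak{c})$ is exactly the Dirac measure on $(X_e \times Y_e)_e$ at the blockwise projected point $(\pi_X \msbf{c}_1, \pi_Y \msbf{c}_1) \cdots (\pi_X \msbf{c}_k, \pi_Y \msbf{c}_k)$. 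Feeding this point into the previous expression gives
\begin{align*}
\mathrm{RHS}(\msbf{A} \times \msbf{B}, \mathfrak{c}) = \Dd{\pi_X \msbf{c}_1 \cdots \pi_X \msbf{c}_k}{\msbf{A}} \; \Dd{\pi_Y \msbf{c}_1 \cdots \pi_Y \msbf{c}_k}{\msbf{B}}.
\end{align*}

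Finally I would close the gap with the elementary monoid identity $\pi_X \msbf{c}_1 \cdots \pi_X \msbf{c}_k = \pi_X(\abs{\mathfrak{c}})$ (and likewise for $\pi_Y$), valid because $\pi_X$ is a monoid homomorphism and hence commutes with the flattening $\abs{-}$ of Definition \ref{mapabs}; both sides simply collect all first coordinates of all pairs appearing in $\mathfrak{c}$, independently of how they are grouped into the blocks $\msbf{c}_i$. This makes $\mathrm{RHS}$ coincide with $\mathrm{LHS}$ on every rectangle, and the Dynkin extension of Lemma \ref{piDynkin} then forces equality of the two kernels. I expect the main obstacle to be the middle step: verifying cleanly that $(\mon{\mc{X}}{\mc{Y}})_e$ remains Dirac-concentrated and pinning down the exact point it sits on, since this requires pushing the exponential-kernel definition and the forgetful maps through the monoidal power $\mon{\mc{X}}{\mc{Y}}^{k}$ and checking that the bookkeeping of the grouping matches the flattening identity above; the surrounding collapses of storage and monoidalness are routine once the Dirac structure is made explicit.
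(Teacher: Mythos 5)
Your proposal is correct and follows essentially the same route as the paper's proof: both sides are evaluated on measurable rectangles $\msbf{A} \times \msbf{B}$ and points of $(X \times Y)_{ee}$, every convolution collapses against the Dirac measures carried by ${\sf s}$, ${\sf m}$ and the exponential kernel, and the identity reduces to the bookkeeping fact that flattening commutes with the coordinate projections $\pi_X, \pi_Y$. The only difference is presentational: you package the middle step as the observation that the exponential of a Dirac-concentrated kernel is Dirac-concentrated at the blockwise image, where the paper unwinds the same fact explicitly through the variable change along $F$, $\sigma$-additivity and the product-measure decomposition.
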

\begin{proof}
In the proof, it is sufficient to consider
an instantiation at
any rectangle $\msbf{A} \times \msbf{B} \in \mc{X}_e \otimes \mc{Y}_e$
such that
$\msbf{A} \in \mc{X}_e \cap X^{(n)}$ and 
$\msbf{B} \in \mc{Y}_e \cap Y^{(n')}$ for any $n, n' \geq 0$.


For LHS, by virtue of the note on $\mon{\mc{X}}{\mc{Y}}$
below Definition \ref{monness},
we calculate the case $n=n'$, as the other case $n \not = n'$
directly makes LHS zero.
\begin{align*}
 LHS (\msbf{A} \times \msbf{B}, -) 
&
\textstyle
= \Int{(X \times Y)_e}{\mon{\mc{X}}{\mc{Y}}(\msbf{A} \times \msbf{B}, 
 \msbf{z}
  )}{\stor{\mc{X} \otimes \mc{Y}}(d \msbf{z}, - )} \\
& \textstyle
= \Int{(X \times Y)^{(n)}}{\mon{\mc{X}}{\mc{Y}}(\msbf{A} \times \msbf{B}, 
 (x_1, y_1) \cdots (x_n, y_n)
  )}{\stor{\mc{X} \otimes \mc{Y}}(d (x_1, y_1) \cdots (x_n,
  y_n), - )} \\
& \textstyle 
= \Int{(X \times Y)^{(n)}}{
\Dd{x_1 \cdots x_n}{\msbf{A}}
\, \Dd{y_1 \cdots y_n}{\msbf{B}}
}{\stor{\mc{X} \otimes \mc{Y}}( d (x_1, y_1) \cdots (x_n,
  y_n), - )}  \\
& \textstyle 
=
\stor{\mc{X} \otimes \mc{Y}}( [\msbf{A},\msbf{B} ]  , - ), 
\\
\tag*{in which $[\msbf{A},\msbf{B} ]
:=  \{ (x_1, y_{1}) \cdots (x_n, y_{n}) \mid
x_1 \cdots x_n \in \msbf{A} \quad 
y_1 \cdots y_n \in \msbf{B} \quad n \in \mathbb{N}  \}.$} 
\end{align*} 
Note $(x_1, y_{\sigma(1)}) \cdots (x_n, y_{\sigma(n)}) \in [\msbf{A},\msbf{B}
 ]$ for any $\sigma \in \mathfrak{S}_n$ whenever
$(x_1, y_{1}) \cdots (x_n, y_{n}) \in [\msbf{A},\msbf{B} ]$.
(Symmetrically 
$(x_{\sigma(1)}, y_1) \cdots (x_{\sigma(n)}, y_n) \in [\msbf{A},\msbf{B}
 ]$ under the same condition.) 

\smallskip
\noindent Let $-$ be instantiated with any element 
 $\msbf{c}_1 \cdots \msbf{c}_m
\in 
(\mc{X} \otimes \mc{Y})_{ee} \cap ((X \times Y)_e)^{(m)}$ for any $m$
so that each $\msbf{c}_i = (x_{i1}, y_{i1}) \cdots (x_{in_i}, y_{in_i}) \in
(X \times Y)_e$.
Then
\begin{align*}
 LHS (\msbf{A} \times \msbf{B},  \msbf{c}_1 \cdots \msbf{c}_m ) & =
\stor{\mc{X} \otimes \mc{Y}}( [\msbf{A},\msbf{B} ], 
\msbf{c}_1 \cdots \msbf{c}_m  )
= \Dd{\abs{\msbf{c}_1 \cdots \msbf{c}_m}}{[\msbf{A},\msbf{B}]} \\
& =  \Dd{x_{11} \cdots x_{1n_1} \cdots x_{m1} \cdots x_{mn_m}
}{\msbf{A}} \, \Dd{
y_{11} \cdots y_{1n_1} \cdots y_{m1} \cdots y_{mn_m}}{\msbf{B}}
\end{align*}
\noindent For (RHS), first observe, \\
$\begin{aligned}
& \textstyle \mon{\mc{X}_e}{\mc{Y}_e} \Comp (\stor{\mc{X}} \otimes \stor{\mc{Y}})
(\msbf{A} \times \msbf{B}, -)
 = \int_{X_{ee}} \Int{Y_{ee}}
{
(\stor{\mc{X}} \otimes \stor{\mc{Y}})(\msbf{A} \times \msbf{B},
  (\mathfrak{y}_1,
\mathfrak{y}_2)) 
}{\mon{\mc{X}_e}{\mc{Y}_e}(d \mathfrak{y}_1 \times d \mathfrak{y}_2, -)}
\\ & 
\textstyle 
= 
\int_{X_{ee}}\Int{Y_{ee}}
{
\Dd{\mathfrak{y}_1}{\absI{\msbf{A}}}
\, \Dd{\mathfrak{y}_2}{\absI{\msbf{B}}}
}{\mon{\mc{X}_e}{\mc{Y}_e}(d \mathfrak{y}_1 \times d \mathfrak{y}_2, -)}
= \mon{\mc{X}_e}{\mc{Y}_e}(\absI{\msbf{A}} \times \absI{\msbf{B}} , -)
\end{aligned}$  \\

\smallskip

\noindent Using the observation, the following is calculated in which
$-$ denotes an arbitrary instantiation $\msbf{c}_1 \cdots \msbf{c}_m \in 
(X \times Y)_{ee} \cap ((X \times Y)_{e})^{(m)}$  with any $m$ and 
$\msbf{c}_i \in (X \times Y)_e$.
\begin{align*}
& RHS(\msbf{A} \times \msbf{B}, -) \\
& \textstyle
= 
\Int{(X_e \times Y_e)_e}
{
\mon{\mc{X}}{\mc{Y}} \Comp (\stor{\mc{X}} \otimes \stor{\mc{Y}})
(\msbf{A} \times \msbf{B}, 
\msbf{z})}{
(\mon{\mc{X}}{\mc{Y}})_e (d  \msbf{z}, -)} \\ 
& \textstyle 
= 
\sum\limits_{k=0}^{\infty}
\Int{(X_e \times Y_e)^{(k)}}
{
\mon{\mc{X}}{\mc{Y}} \Comp (\stor{\mc{X}} \otimes \stor{\mc{Y}})
(\msbf{A} \times \msbf{B}, (\msbf{x}_1, \msbf{y}_1) \cdots
(\msbf{x}_k, \msbf{y}_k))}{
(\mon{\mc{X}}{\mc{Y}})_e (d  (\msbf{x}_1, \msbf{y}_1) \cdots
(\msbf{x}_k, \msbf{y}_k) , -)} \\ 
& \tag*{by (\ref{edisuni}) and commuting integral over countable sum} \\ 
& \textstyle 
= 
\sum\limits_{k=0}^{\infty}
\Int{(X_e \times Y_e)^{(k)}}
{
\mon{\mc{X}_e}{\mc{Y}_e}(\absI{\msbf{A}} \times \absI{\msbf{B}} ,
(\msbf{x}_1, \msbf{y}_1) \cdots
(\msbf{x}_k, \msbf{y}_k)
)}{
(\mon{\mc{X}}{\mc{Y}})_e (d  (\msbf{x}_1, \msbf{y}_1) \cdots
(\msbf{x}_k, \msbf{y}_k) , -)}  \\
& \textstyle = 
\sum\limits_{k=0}^{\infty}
\Int{(X_e \times Y_e)^{(k)}}
{
\Dd{\msbf{x}_1  \cdots \msbf{x}_k }{\absI{\msbf{A}}}
\, 
\Dd{\msbf{y}_1  \cdots \msbf{y}_k }{\absI{\msbf{B}}}
}{
(\mon{\mc{X}}{\mc{Y}})_e (d  (\msbf{x}_1, \msbf{y}_1) \cdots
(\msbf{x}_k, \msbf{y}_k) , -)}  \\
& \textstyle
= \Int{(X_e \times Y_e)^{(m)}}
{\Dd{\msbf{x}_1  \cdots \msbf{x}_m}{\absI{\msbf{A}}}
 \, \Dd{\msbf{y}_1  \cdots \msbf{y}_m }{\absI{\msbf{B}}}}{
(\mon{\mc{X}}{\mc{Y}})_e (d  (\msbf{x}_1, \msbf{y}_1) \cdots
(\msbf{x}_k, \msbf{y}_m) , \msbf{c}_1 \cdots \msbf{c}_m)}  \\
&    
\tag*{as the sum solely contributes when $k=m$ (i.e., zero if $k \not =m$)
by (\ref{expebot})} \\
& 
= \int_{ (X_e \times Y_e)^{(m)}}
\left[ \begin{aligned}
 \Dd{\msbf{x}_1  \cdots \msbf{x}_m}{\absI{\msbf{A}}}
 \, \Dd{\msbf{y}_1  \cdots \msbf{y}_m}{\absI{\msbf{B}}} 
\quad \quad \quad \quad \quad \quad \quad \quad      \\
(\mon{\mc{X}}{\mc{Y}})_e^\bullet (
\FI(d  (\msbf{x}_1, \msbf{y}_1) \cdots
(\msbf{x}_m, \msbf{y}_m)),
(\msbf{c}_1, \ldots \msbf{c}_m))
\end{aligned} \right]
\tag*{ by the def of $(\mon{\mc{X}}{\mc{Y}})_e$}
\\
& 
= \int_{ (X_e \times Y_e)^{\bullet m}}
\left[ \begin{aligned}
\Dd{(\msbf{x}_1, \ldots , \msbf{x}_m) }{\FI(\absI{\msbf{A}})}
 \, \Dd{(\msbf{y}_1,  \ldots , \msbf{y}_m) }{\FI(\absI{\msbf{B}})} 
\quad \quad \quad \quad \quad 
 \\
(\mon{\mc{X}}{\mc{Y}})_e^\bullet (d   (\msbf{x}_1, \msbf{y}_1) \times \cdots 
\times
d (\msbf{x}_m, \msbf{y}_m) , (\msbf{c}_1, \ldots \msbf{c}_m))
\end{aligned} \right]
\\ & 
\tag*{by the variable change (\ref{IWPFM}) along $F:
(X_e \times Y_e)_e^\bullet 
 \rightarrow 
(X_e \times Y_e)_e 
$}
\\
& \textstyle
= \Int{ (X_e \times Y_e)^{\bullet m}}
{\Dd{(\msbf{x}_1, \ldots , \msbf{x}_m) }{\FI(\absI{\msbf{A}})}
 \, \Dd{(\msbf{y}_1,  \ldots , \msbf{y}_m) }{\FI(\absI{\msbf{B}})}}{
\prod\limits_{i=1}^m \mon{\mc{X}}{\mc{Y}} (d  (\msbf{x}_i, \msbf{y}_i), 
\msbf{c}_i )} \\
& \tag*{by the def of $(\mon{\mc{X}}{\mc{Y}})_e^\bullet$ using the product measure} \\
& 
= \int_{(X_e \times Y_e)^{\bullet m}}
\left[ \begin{aligned}
 \Dd{(\msbf{x}_1, \ldots , \msbf{x}_m) }{\FI(\absI{\msbf{A}})}
 \, \Dd{(\msbf{y}_1,  \ldots , \msbf{y}_m) }{\FI(\absI{\msbf{B}})}  \quad \quad \quad \\
\textstyle \prod\limits_{i=1}^m 
\Dd{a_{i1} \cdots a_{in_i}}{d \msbf{x}_i} \, 
\, \Dd{b_{i1} \cdots b_{in_i}}{d \msbf{y}_i}
\end{aligned} \right]
\\ & 
\tag*{by putting explicitly  
$\msbf{c}_i= (x_{i1}, y_{i1}) \cdots (x_{in_i}, y_{in_i}) \in (X \times Y)_e$
with $i=1, \ldots ,m$
}
\\ \\
& \textstyle 
= \Dd{((x_{11} \cdots x_{1n_1}), \ldots , (x_{m1} \cdots x_{mn_m}))
}{\FI(\absI{\msbf{A}})}
\, \Dd{((y_{11} \cdots y_{1n_1}), \ldots , (y_{m1} \cdots y_{mn_m}))
}{\FI(\absI{\msbf{B}})} \\
& \textstyle 
= 
\Dd{ \msbf{x}_{1} \cdots  \msbf{x}_{m}
}{\absI{\msbf{A}}} \, 
\Dd{ \msbf{y}_1 \cdots \msbf{y}_m
}{\absI{\msbf{B}}} \tag*{with 
$\msbf{x}_i = x_{i1} \cdots x_{in_i}$ and
$\msbf{y}_i = y_{i1} \cdots y_{in_i}$ ($1 \leq i \leq m$).}  \\
& \textstyle 
= 
\Dd{ \abs{\msbf{x}_{1} \cdots  \msbf{x}_{m}}
}{\msbf{A}}
\, \Dd{ \abs{\msbf{y}_{1} \cdots  \msbf{y}_{m}}
}{\msbf{B}}
\\ & \textstyle 
= 
\Dd{ x_{11} \cdots x_{1n_1} \cdots  x_{m1} \cdots x_{mn_m}
}{\msbf{A}}
\, \Dd{ y_{11} \cdots y_{1n_1} \cdots  y_{m1} \cdots y_{mn_m}
}{\msbf{B}}
 \end{align*}
Both HSs coincide.
\end{proof}

\smallskip

\begin{prop}[weakening and contraction]~\\{\em
Monoidal natural transformations $\wk{\mc{X}}$ and $\con{\mc{X}}$
are defined: 

\begin{itemize}
 \item[-](Weakening)
$\wk{\mathcal{X}} : \mse{X} \longrightarrow \ms{I}$ is defined
for $\msbf{A} \in \mathcal{X}_e$:
$$\begin{aligned}
\wk{\mathcal{X}}(\msbf{A}, *) := \Dd{0}{\msbf{A}}, \quad
\mbox{where $0$ is the monoid identity in $X_e$.}
\end{aligned}$$

\item[-] (Contraction) 
$\con{\mathcal{X}} : \mse{X} \longrightarrow \mse{X} \otimes \mse{X}$
is defined for $\msbf{x}^1, \msbf{x}^2 \in X_e$ and $\msbf{A} \in \mc{X}_e$
\begin{align*}
\con{\mathcal{X}}(\msbf{A}, (\msbf{x}^1, \msbf{x}^2) ) :=
\Dd{\abs{\msbf{x}^1 \msbf{x}^2}}{\msbf{A}}
\end{align*}
Note that $\abs{\msbf{x}^1 \msbf{x}^2}$ is the image of
$(\msbf{x}^1, \msbf{x}^2)$ by the following composition: \\
$X_e \times X_e \longrightarrow X_{ee} \longrightarrow X_e 
\hspace{1cm}
(\msbf{x}^1, \msbf{x}^2) \longmapsto \msbf{x}^1 \msbf{x}^2 \longmapsto
\abs{\msbf{x}^1 \msbf{x}^2}$  
\end{itemize}
}
\smallskip

Then $(\mc{X}_e, \con{\mc{X}},\wk{\mc{X}})$ forms a commutative comonoid.
Moreover  $\wk{\mc{X}}$ is a coalgebra morphism from
$(\mc{X}_e, \stor{\mc{X}})$ to $(\mc{I},\moni )$
and $\con{\mc{X}}$ is a coalgebra morphism from
$(\mc{X}_e, \stor{\mc{X}})$
to 
$(\mc{X}_e \otimes \mc{X}_e, \mon{\mc{X}_e}{\mc{X}_e} \Comp
 (\stor{\mc{X}} \otimes  \stor{\mc{X}}) )$.
\end{prop}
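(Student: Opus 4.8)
The plan is to exploit that, in $\opTKersfin$, every structure map in sight is a \emph{Dirac kernel}: a pushforward of the Dirac measure along a canonical set map built from the free commutative monoid structure of $X_e$. Reading off the defining formulas, $\con{\mc{X}}$ is Dirac along the multiplication $\mu\colon X_e\times X_e\to X_e$, $(\msbf{x}^1,\msbf{x}^2)\mapsto\abs{\msbf{x}^1\msbf{x}^2}$; $\wk{\mc{X}}$ is Dirac along the unit $u\colon I\to X_e$, $*\mapsto 0$; $\stor{\mc{X}}$ is Dirac along the flattening $\abs{-}\colon X_{ee}\to X_e$; and $\mon{\mc{X}}{\mc{Y}}$ is Dirac along the unzip bijection $(X\times Y)_e\to X_e\times Y_e$ of Proposition \ref{seely}. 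Two facts make the whole verification mechanical. First, composing \emph{against} a Dirac kernel in $\opTKersfin$ collapses the defining integral to a substitution. Second, by Definition \ref{kerex} and the computation pattern of Proposition \ref{funcex}, the endofunctor $(~)_e$ sends a Dirac-along-$g$ kernel to the Dirac-along-$g_e$ kernel, where $g_e$ is the induced map on multisets; hence $(\con{\mc{X}})_e$ is Dirac along $\mu_e$ and $(\wk{\mc{X}})_e$ is Dirac along $u_e$. Consequently each asserted equation of morphisms becomes an identity between set maps on $X_e$, $X_{ee}$ and $X_{eee}$, to be discharged using only the commutative monoid axioms of $X_e$ and the basic identities for $\abs{-}$ recorded after Definition \ref{mapabs}.

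For the commutative comonoid $(\mc{X}_e,\con{\mc{X}},\wk{\mc{X}})$, I would write each of the three laws as an $\opTKersfin$-integral, collapse the Dirac kernels $\con{}$ and $\wk{}$, and read off the residual monoid identity. The counit laws reduce to $\abs{0\,\msbf{x}}=\msbf{x}=\abs{\msbf{x}\,0}$, coassociativity to $\abs{\abs{\msbf{x}^1\msbf{x}^2}\,\msbf{x}^3}=\abs{\msbf{x}^1\,\abs{\msbf{x}^2\msbf{x}^3}}$ (both sides the full flattening $\msbf{x}^1\msbf{x}^2\msbf{x}^3$), and cocommutativity to $\abs{\msbf{x}^1\msbf{x}^2}=\abs{\msbf{x}^2\msbf{x}^1}$; each is an instance of unitality, associativity and commutativity of the free commutative monoid $X_e$. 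This part is routine once the Dirac-collapse dictionary of the first paragraph is in place.

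The two coalgebra conditions are handled by the same template. For weakening I would compute both $(\wk{\mc{X}})_e\Comp\stor{\mc{X}}$ and $\moni\Comp\wk{\mc{X}}$ as morphisms $\mse{X}\to\mse{I}$. Using digging $=$ Dirac-along-$\abs{-}$ and $(\wk{\mc{X}})_e=$ Dirac-along-$u_e$, the left side evaluates, on the stratum $I^{(m)}$, to the Dirac measure at $\abs{\,\overbrace{0\cdots 0}^{m}\,}=0$, so the verification rests on the identity $\abs{0^{m}}=0$ in $X_e$; the scalar $\moni(I,*^{m})$ supplies the matching normalization coming from the unit coalgebra $(\mc{I},\moni)$, and one checks agreement stratum by stratum along the decomposition (\ref{edisuni}). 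For contraction, tracing the Dirac-along-maps through $\con{\mc{X}}$, then $\stor{\mc{X}}\otimes\stor{\mc{X}}$, then $\mon{\mc{X}_e}{\mc{X}_e}$, the right side is Dirac along $\mathfrak{p}=((\msbf{a}_1,\msbf{b}_1),\dots,(\msbf{a}_k,\msbf{b}_k))\mapsto(\prod_i\msbf{a}_i)(\prod_i\msbf{b}_i)$, while the left side $(\con{\mc{X}})_e\Comp\stor{\mc{X}}$ is Dirac along $\mathfrak{p}\mapsto\abs{(\msbf{a}_1\msbf{b}_1)\cdots(\msbf{a}_k\msbf{b}_k)}$; the two coincide by the single commutative monoid identity $\abs{(\msbf{a}_1\msbf{b}_1)\cdots(\msbf{a}_k\msbf{b}_k)}=(\prod_i\msbf{a}_i)(\prod_i\msbf{b}_i)$.

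I expect the contraction condition to be the main obstacle, because extracting the Dirac-along-$\mu_e$ description of $(\con{\mc{X}})_e$ is not formal: it requires unfolding Definition \ref{kerex} over the stratification $X_e=\biguplus_n X^{(n)}$, performing the variable change (\ref{IWPFM}) along the forgetful map $F$, and invoking the $n$-ary product measures together with Fubini--Tonelli (Proposition \ref{FTsfin}) --- exactly the bookkeeping carried out in the storage-monoidality proof just above. The safest route is to first establish, as a standalone lemma, the dictionary asserted in the first paragraph (composition against a Dirac kernel collapses integrals, and $(~)_e$ preserves Dirac-along-$g$ kernels), so that the measure-theoretic manipulations are quarantined there and the three laws themselves become the short monoid-identity arguments sketched above.
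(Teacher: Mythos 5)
Your proposal follows essentially the same route as the paper's proof: the paper likewise collapses every integral against the Dirac kernels $\con{\mc{X}}$, $\wk{\mc{X}}$, $\stor{\mc{X}}$, $\mon{\mc{X}}{\mc{Y}}$ and reduces each law to exactly the monoid identities you list (counit to $\abs{0 \msbf{x}}=\msbf{x}$, coassociativity to $\abs{\abs{\msbf{x}^1 \msbf{x}^2} \msbf{x}^3}=\abs{\msbf{x}^1 \abs{\msbf{x}^2 \msbf{x}^3}}$, and the contraction coalgebra law to $\abs{\abs{\msbf{x}_1 \cdots \msbf{x}_n}\, \abs{\msbf{y}_1 \cdots \msbf{y}_n}}=\abs{\abs{\msbf{x}_1 \msbf{y}_1} \cdots \abs{\msbf{x}_n \msbf{y}_n}}$), carrying out inline the stratified unfolding of $(\con{\mc{X}})_e$ via (\ref{expebot}) and the variable change that you propose to quarantine in a standalone dictionary lemma. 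The only slip is cosmetic: the unzip $(X\times Y)_e\to X_e\times Y_e$ underlying $\mon{\mc{X}}{\mc{Y}}$ is a (non-injective) function, not the bijection of Proposition \ref{seely} (which concerns $(X\uplus Y)_e$), but since only the function is used nothing in your argument breaks.
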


\begin{proof}
The commutative comonoid conditions are the following (a), (b) and
 (c): \\
\noindent (a) 
${\sf sy}_{\mc{X}_e,\mc{X}_e} \Comp
\con{\mc{X}}= 
\con{\mc{X}}$, where ${\sf sy}$ is the symmetry of
monoidal product. 
This is by \\ $\con{\mc{X}}( \msbf{A}, (\msbf{x}^1, \msbf{x}^2))
= 
\con{\mc{X}}( \msbf{A}, (\msbf{x}^2, \msbf{x}^1))$
for $\msbf{x}^1, \msbf{x}^2 \in X_e$.

\noindent (b)
${\sf ac}_{\mc{X}_e,\mc{X}_e,\mc{X}_e} \Comp
(\con{\mc{X}_e} \otimes \operatorname{Id}_{\mc{X}_e})
\Comp \con{\mc{X}_e}
= 
(\operatorname{Id}_{\mc{X}_e} \otimes \con{\mathcal{X}})
\Comp \con{\mathcal{X}}$,
 where ${\sf ac}$ is an associativity of $\otimes$. 
By Fubini-Tonelli, the condition amounts to the equality
$\abs{\abs{\msbf{x}^1 \msbf{x}^2} \msbf{x}^3}
= \abs{\msbf{x}^1 \abs{\msbf{x}^2 \msbf{x}^3} }$ in $X_e$
for $\msbf{x}^1, \msbf{x}^2, \msbf{x}^3 \in X_e$.

\smallskip

\noindent
(c)
$(\wk{\mc{X}} \otimes \operatorname{Id}_{\mc{X}_e}) \Comp \con{\mc{X}}$
coincides with the canonical morphism $\mse{X} \longrightarrow  \ms{I} \otimes
\mse{X}$ for the monoidal unit. The condition is checked as follows: 
\begin{align*}
& \textstyle  ((\wk{\mc{X}} \otimes \operatorname{Id}_{\mc{X}_e}) \Comp \con{\mc{X}})
(\msbf{A}, (*, \msbf{x}) ) 
 = \int_{X_e} \Int{X_e}{
\con{\mc{X}}(\msbf{A}, (\msbf{y}_1, \msbf{y}_2))}{
\wk{\mc{X}} (d \msbf{y}_1, *) \Dd{\msbf{x}}{d \msbf{y}_2}}
\\
& \textstyle =
\Int{X_e}{
\con{\mc{X}}(\msbf{A}, (\msbf{y}_1, \msbf{x}))}{
\wk{\mc{X}} (d \msbf{y}_1, *)}
= 
\Int{ X_e  \cap X^{(0)}}{
\con{\mc{X}}(\msbf{A}, (y_1, \msbf{x}))}{
\wk{\mc{X}} (d y_1, *)}
\\
& \textstyle =
\con{\mc{X}}(\msbf{A}, (0, \msbf{x}))
\, \Dd{0}{\{ 0\}}
=
\con{\mc{X}}(\msbf{A}, (0, \msbf{x}))
=
\Dd{\abs{0 \msbf{x}}}{\msbf{A}} 
= \Dd{\msbf{x}}{\msbf{A}}  
\end{align*}
The last equation holds because $\abs{0 \msbf{x}}= \msbf{x}$.

\bigskip

The conditions for the coalgebra morphisms are
the following (i) and (ii) respectively for the weakening and for the
 contraction:

\smallskip
 
\noindent (i) $\moni \Comp \wk{\mc{X}} = (\wk{\mc{X}})_e \Comp
 \stor{\mc{X}}$
\begin{align*}
\textstyle LHS(\msbf{A}, *^n)
& = \textstyle 
\Int{I}{\wk{\mc{X}}(\msbf{A}, x)}{\moni(dx, *^n)}
= 
\wk{\mc{X}}(\msbf{A}, *) \,  \moni( \{ * \}, *^n )
= 
\Dd{0}{\msbf{A}} \times \min (n, 1)
\\
\textstyle RHS(\msbf{A}, *^n)
& \textstyle = \Int{X_{ee}}{\stor{\mc{X}}(\msbf{A}, \mathfrak{y})}{
(\wk{\mc{X}})_e (d \mathfrak{y}, *^n)}
= \Int{X_{ee}}{
\Dd{\mathfrak{y}}{\absI{\msbf{A}}}}{
(\wk{\mc{X}})_e (d \mathfrak{y}, *^n)} \\
 & \textstyle  
= (\wk{\mc{X}})_e ( \absI{\msbf{A}},   *^n)
= \begin{cases}
\delta_e (\absI{\msbf{A}}, 0^n)
= \Dd{0}{\absI{\msbf{A}}}
=  \Dd{\abs{0}}{\msbf{A}}
= \Dd{0}{\msbf{A}}
 & \text{if $n \geq 1$}  
\\
0  & \text{if $n=0$}
  \end{cases}
\end{align*}
The first case uses $\abs{0}=0$.
The second one uses $*^0=0$ and that 
$\kappa^0: \mc{X}^{\bullet 0} \rightarrow \mc{Y}^{\bullet 0}$
is zero for any kernel $\kappa: \mc{X} \longrightarrow \mc{Y}$ as 
$\mc{Z}^{\bullet 0}$ is the only $\sigma$-field over the $\emptyset$
for any $\mc{Z}$.

\bigskip

\noindent (ii) $(\con{\mc{X}})_e \Comp \stor{\mc{X}} =
\mon{\mc{X}_e}{\mc{X}_e} \Comp (\stor{\mc{X}} \otimes  \stor{\mc{X}})
\Comp \con{\mc{X}}$ 

\bigskip

\noindent For RHS, first we calculate: \\
$\begin{aligned}
& \textstyle
(\stor{\mc{X}} \otimes \stor{\mc{X}}) \Comp \con{\mc{X}}
(\msbf{A}, (\mathfrak{y}_1, \mathfrak{y}_2))
= \Int{X_e \times X_e}{\con{\mc{X}}(\msbf{A}, (\msbf{y}, \msbf{z}))
}{\stor{\mc{X}} \otimes \stor{\mc{X}} (d (\msbf{y}, \msbf{z}),
 (\mathfrak{y_1, \mathfrak{y}_2}))} 
\\
& \textstyle 
\stackrel{FT}{=} 
\int_{X_e} \Int{X_e}{\con{\mc{X}}(\msbf{A}, (\msbf{y}, \msbf{z}))
}{
\Dd{\abs{\mathfrak{y}_1}}{d \msbf{y}}
\Dd{\abs{\mathfrak{y}_2}}{d \msbf{z}}
= \con{\mc{X}}(\msbf{A}, (\abs{\mathfrak{y}_1}, \abs{\mathfrak{y}_2}))
= 
\Dd{\abs{\abs{\mathfrak{y}_1} \abs{\mathfrak{y}_2}}}{\msbf{A}}
}
\end{aligned}$ 
\smallskip

Thus, using this at the following final line,  
\begin{align*}
& \textstyle RHS (\msbf{A},
(\msbf{x}_1, \msbf{y}_1) \cdots (\msbf{x}_n, \msbf{y}_m)) \\
& \textstyle
= 
\int_{X_{ee}} \Int{X_{ee}}{
(\stor{\mc{X}} \otimes \stor{\mc{X}}) \Comp \con{\mc{X}} (\msbf{A},
  (\mathfrak{y}_1, \mathfrak{y}_2))}{\mon{\mc{X}_e}{\mc{Y}_e}
(d \mathfrak{y}_1 \times d \mathfrak{y}_2, 
(\msbf{x}_1, \msbf{y}_1) \cdots (\msbf{x}_n, \msbf{y}_n))
} \\
& \textstyle = 
\int_{X_{ee}}\Int{X_{ee}}{
(\stor{\mc{X}} \otimes \stor{\mc{X}}) \Comp \con{\mc{X} } (\msbf{A},
  (\mathfrak{y}_1, \mathfrak{y}_2))}{
\Dd{\msbf{x}_1  \cdots \msbf{x}_n}{d \mathfrak{y}_1}
\, \, \Dd{\msbf{y}_1  \cdots \msbf{y}_n}{d \mathfrak{y}_2}}
\\ & 
\textstyle
= \Dd{\abs{\abs{\, \msbf{x}_1  \cdots \msbf{x}_n} \, \abs{\msbf{y}_1
  \cdots \msbf{y}_n}} \,}{\msbf{A}}
\end{align*}

\smallskip

On the other hand,
\begin{align*}
& \textstyle
LHS (\msbf{A},
(\msbf{x}_1, \msbf{y}_1) \cdots (\msbf{x}_n, \msbf{y}_n)) 
= \Int{X_{ee}}{\stor{\mc{X}}(\msbf{A}, \mathfrak{y})}
{(\con{\mc{X}})_e  (d \mathfrak{y}, \,
(\msbf{x}_1, \msbf{y}_1) \cdots (\msbf{x}_n, \msbf{y}_n))} \\
& \textstyle 
= \Int{X_{ee}}{
\Dd{\mathfrak{y}}{\absI{\msbf{A}}}}
{(\con{\mc{X}})_e (d \mathfrak{y}, \,
(\msbf{x}_1, \msbf{y}_1) \cdots (\msbf{x}_n, \msbf{y}_n))}
= 
(\con{\mc{X}})_e (\absI{\msbf{A}} , \,
(\msbf{x}_1, \msbf{y}_1) \cdots (\msbf{x}_n, \msbf{y}_n)) \\
& \textstyle
= (\con{\mc{X}})_e^\bullet (\FI(\absI{\msbf{A}}) , \,
((\msbf{x}_1, \msbf{y}_1), \ldots , (\msbf{x}_n, \msbf{y}_n)))
= 
(\con{\mc{X}})^n (\FI(\absI{\msbf{A}}) \cap X^{\bullet n} , \,
((\msbf{x}_1, \msbf{y}_1), \ldots , (\msbf{x}_n, \msbf{y}_n)))
\\ & \textstyle
=
\delta^n (\FI(\absI{\msbf{A}}) \cap X^{\bullet n} , \,
( \abs{\msbf{x}_1 \msbf{y}_1}, \ldots , \abs{\msbf{x}_n \msbf{y}_n}))
= 
\delta^\bullet_e (\FI(\absI{\msbf{A}}) , \,
( \abs{\msbf{x}_1 \msbf{y}_1}, \ldots , \abs{\msbf{x}_n \msbf{y}_n})) 
\\ & \textstyle
= \delta_e (\absI{\msbf{A}} , \,
\abs{\msbf{x}_1 \msbf{y}_1}  \cdots  \abs{\msbf{x}_n \msbf{y}_n} )
= \delta (\absI{\msbf{A}} , \,
\abs{\msbf{x}_1 \msbf{y}_1} \cdots  \abs{\msbf{x}_n \msbf{y}_n} )
= \delta (\msbf{A} , \,
\abs{\abs{\msbf{x}_1 \msbf{y}_1} \cdots  \abs{\msbf{x}_n \msbf{y}_n}} )
\end{align*}

The both HSs coincide because of the following equality in $X_e$.
\begin{eqnarray*}
\abs{\abs{\, \msbf{x}_1  \cdots \msbf{x}_n} \, \abs{\msbf{y}_1
  \cdots \msbf{y}_n}}
=
\abs{\abs{\msbf{x}_1 \msbf{y}_1} \cdots  \abs{\msbf{x}_n \msbf{y}_n}}
\end{eqnarray*}
\end{proof}

\smallskip

We end this section with Theorem \ref{linexcomo} summarising this section
after the lemma below:

\smallskip

\begin{lem}[comonoidality of $\stor{}$] \label{monos}
$\stor{\mc{X}}$ is a comonoid morphism from
$(\mc{X}_e, \con{\mc{X}}, \wk{\mc{X}})$
to $(\mc{X}_{ee}, \con{\mc{X}_e}, \wk{\mc{X}_e})$.
\end{lem}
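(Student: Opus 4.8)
The plan is to verify the two defining equations of a comonoid morphism
$\stor{\mc{X}}: (\mc{X}_e, \con{\mc{X}}, \wk{\mc{X}}) \to (\mc{X}_{ee}, \con{\mc{X}_e}, \wk{\mc{X}_e})$ in $\opTKersfin$, namely the counit (weakening) law and the comultiplication (contraction) law
$$\wk{\mc{X}_e} \Comp \stor{\mc{X}} = \wk{\mc{X}} \qquad \text{and} \qquad \con{\mc{X}_e} \Comp \stor{\mc{X}} = (\stor{\mc{X}} \otimes \stor{\mc{X}}) \Comp \con{\mc{X}}.$$
Each is an equality of transition kernels, and I would check it by evaluating both sides at an arbitrary $\msbf{A} \in \mc{X}_e$ in the left (measure) argument and at an arbitrary point in the right (function) argument, using the $\opTKersfin$ composition formula. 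Because all the structure maps involved are Dirac measures in their measure argument, every composition integral collapses to a single substitution, and the whole verification reduces to combinatorial identities about the flattening map $\abs{~}$ of Definition \ref{mapabs}.

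For the weakening law I would unfold the left-hand side: the counit $\wk{\mc{X}_e}$ is, on its measure argument, concentrated at the empty sequence $0 \in X_{ee}$, so the composition integral collapses to $\stor{\mc{X}}(\msbf{A}, 0) = \Dd{\abs{0}}{\msbf{A}}$. Since the flattening sends the empty sequence of $X_e$-elements to the empty multiset $0 \in X_e$ (first bullet after Definition \ref{mapabs}), this is $\Dd{0}{\msbf{A}}$, which is exactly $\wk{\mc{X}}(\msbf{A}, *)$. For the contraction law I would compute both sides separately. On the left, integrating $\stor{\mc{X}}(\msbf{A}, \mathfrak{z}) = \Dd{\mathfrak{z}}{\absI{\msbf{A}}}$ against the Dirac measure $\con{\mc{X}_e}(d\mathfrak{z}, (\mathfrak{u}, \mathfrak{v}))$, which is concentrated at the concatenation $\mathfrak{u}\mathfrak{v} \in X_{ee}$, yields $\Dd{\abs{\mathfrak{u}\mathfrak{v}}}{\msbf{A}}$. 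On the right, the product Dirac measure $(\stor{\mc{X}} \otimes \stor{\mc{X}})(d(\msbf{y}^1, \msbf{y}^2), (\mathfrak{u}, \mathfrak{v}))$ is concentrated at $(\abs{\mathfrak{u}}, \abs{\mathfrak{v}})$, so the integral collapses to $\con{\mc{X}}(\msbf{A}, (\abs{\mathfrak{u}}, \abs{\mathfrak{v}})) = \Dd{\abs{\abs{\mathfrak{u}}\, \abs{\mathfrak{v}}}}{\msbf{A}}$. Thus both sides agree precisely when $\abs{\mathfrak{u}\mathfrak{v}} = \abs{\abs{\mathfrak{u}}\, \abs{\mathfrak{v}}}$ holds in $X_e$, which is the two-factor instance of the associativity (monoid-homomorphism) identity $\abs{\abs{\mathfrak{y}_1 \cdots \mathfrak{y}_n}} = \abs{\abs{\mathfrak{y}_1}\cdots\abs{\mathfrak{y}_n}}$ already used to close the comonad laws for $\stor{}$; both expressions equal the total concatenation of all the $X_e$-factors.

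The genuinely new content is therefore a single flattening identity, and the composition integrals are routine Dirac collapses. The main obstacle I anticipate is book-keeping across the three exponential levels $X_e, X_{ee}, X_{eee}$: one must keep straight that the inner $\abs{~}$ occurring in $\con{\mc{X}_e}$ operates one level higher (over the alphabet $X_e$) than the outer $\abs{~}$ occurring in $\stor{\mc{X}}$ and $\con{\mc{X}}$ (over the alphabet $X$), and one must correctly identify the supports of the Dirac measures $\con{\mc{X}_e}(d\mathfrak{z}, -)$ and $(\stor{\mc{X}} \otimes \stor{\mc{X}})(d-, -)$ under the $\opTKersfin$ convention, where the measure always sits in the left argument. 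Once these levels are pinned down, both laws fall out immediately.
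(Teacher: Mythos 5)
Your proposal is correct and takes essentially the same route as the paper's proof: both verify the counit law $\wk{\mc{X}_e} \Comp \stor{\mc{X}} = \wk{\mc{X}}$ and the comultiplication law $\con{\mc{X}_e} \Comp \stor{\mc{X}} = (\stor{\mc{X}} \otimes \stor{\mc{X}}) \Comp \con{\mc{X}}$ by collapsing the $\opTKersfin$ composition integrals against Dirac measures, reducing the former to $\abs{0}=0$ and the latter to the flattening identity $\abs{\abs{\mathfrak{y}_1 \mathfrak{y}_2}} = \abs{\abs{\mathfrak{y}_1}\abs{\mathfrak{y}_2}}$. This is exactly the identity the paper invokes to conclude, and your level-by-level book-keeping of the two flattening maps matches the paper's computation.
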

\begin{proof}
The two conditions need to be checked:
\begin{itemize}
 \item
$\begin{aligned}
(\stor{\mc{X}} \otimes \stor{\mc{X}} ) \Comp \con{\mc{X}} =
\con{\mc{X}_e} \Comp \stor{\mc{X}}
\end{aligned}$

$
LHS (\msbf{A}, (\mathfrak{y}_1, \mathfrak{y}_2))
=
\int_{\mc{X}_{ee}}
\stor{\mc{X}} (\msbf{A}, \tau)
\, \con{\mc{X}_e} (d \tau, (\mathfrak{y}_1, \mathfrak{y}_2))
=
\int_{\mc{X}_{ee}}
\stor{\mc{X}} (\msbf{A}, \tau)
\, \delta (d \tau, \abs{\mathfrak{y}_1 \mathfrak{y}_2}) \\
=
\stor{\mc{X}} (\msbf{A}, \abs{\mathfrak{y}_1 \mathfrak{y}_2})
= 
\delta (\msbf{A}, \abs{\abs{\mathfrak{y}_1 \mathfrak{y}_2}}).
$

$
RHS (\msbf{A},
(\mathfrak{y}_1, \mathfrak{y}_2)) =
\int_{X_e \times X_e} \con{\mc{X}} (\msbf{A}, (\msbf{x}_1, \msbf{x}_2))
\, 
(\stor{\mc{X}} \otimes \stor{\mc{X}}) (d (\msbf{x}_1, \msbf{x}_2),
(\mathfrak{y}_1, \mathfrak{y}_2))
\\
= \int_{X_e \times X_e} \con{\mc{X}} (\msbf{A}, (\msbf{x}_1, \msbf{x}_2))
\, 
\stor{\mc{X}} (d \msbf{x}_1, \mathfrak{y}_1)
\, \stor{\mc{X}} (d \msbf{x}_2, \mathfrak{y}_2)
\\ = 
\int_{X_e \times X_e} \con{\mc{X}} (\msbf{A}, (\msbf{x}_1, \msbf{x}_2))
\, 
\delta (d \msbf{x}_1, \abs{\mathfrak{y}_1})
\, \delta (d \msbf{x}_2, \abs{\mathfrak{y}_2})
=
\con{\mc{X}} (\msbf{A}, (\abs{\mathfrak{y}_1}, \abs{\mathfrak{y}_2}))
=
\delta (\msbf{A}, \abs{\abs{\mathfrak{y}_1}\abs{\mathfrak{y}_2}}).
$

The both HSs coincide because
$\abs{\abs{\mathfrak{y}_1 \mathfrak{y}_2}}=
\abs{\abs{\mathfrak{y}_1}\abs{\mathfrak{y}_2}}$.

\item $\begin{aligned}
\wk{\mc{X}_e} \Comp  \stor{\mc{X}} 
= \wk{\mc{X}} 
\end{aligned}$

$
LHS ( \msbf{A}, *)
= 
\int_{X_{ee}} \! \! \stor{\mc{X}} (\msbf{A}, \mathfrak{y})
\,  \wk{\mc{X}_e} (d \mathfrak{y}, *)
= 
\int_{X_{ee}} \!  \! \stor{\mc{X}} (\msbf{A}, \mathfrak{y})
\,  \delta (d \mathfrak{y}, 0 )
= 
\stor{\mc{X}} (\msbf{A}, 0 )
= \delta (\msbf{A}, \abs{ 0 })$, which equates to RHS as 
$\abs{ 0 } = 0 $.

\end{itemize}
\end{proof}

\smallskip

\begin{thm} \label{linexcomo}
$((~)_e, \stor{\mc{X}}, \di{\mc{X}},
\mon{\mc{X}}{\mc{Y}}, \moni)$ equipped with $\con{\mc{X}}$ and $\wk{\mc{X}}$
is a linear exponential comonad in $\opTKersfin$.
\end{thm}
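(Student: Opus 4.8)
The plan is to verify that the data collected in the statement instantiates, clause by clause, the definition of a linear exponential comonad, every substantive verification having already been carried out in the preceding propositions and in Lemma \ref{monos}; the proof is therefore a matter of bookkeeping rather than new computation. The first thing I would do is fix the notational correspondence: in the definition the second slot denotes the comultiplication $!\to\,!!$ and the third slot the counit $!\to\operatorname{Id}$, whereas in our construction the storage $\stor{\mc{X}}:\mse{X}\to\msee{X}$ plays the role of the comultiplication and the dereliction $\di{\mc{X}}:\mse{X}\to\ms{X}$ plays the role of the counit. Under this identification the tuple $((~)_e,\stor{\mc{X}},\di{\mc{X}},\mon{\mc{X}}{\mc{Y}},\moni)$ lines up exactly with the definitional tuple.

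Next I would assemble the monoidal comonad. That $((~)_e,\stor{\mc{X}},\di{\mc{X}})$ is a comonad is the content of the comonad proposition above, whose two displayed identities are precisely coassociativity and the two counit laws. That the counit (dereliction) and the comultiplication (storage) are monoidal natural transformations for $(\mon{\mc{X}}{\mc{Y}},\moni)$ is given respectively by the proposition asserting that the dereliction ${\sf d}$ is monoidal and by the proposition on the monoidality of ${\sf s}$. What is not isolated as a named result, and so would have to be recorded here, is that $((~)_e,\mon{\mc{X}}{\mc{Y}},\moni)$ is itself a lax symmetric monoidal endofunctor: by the explicit Dirac-delta formula of Definition \ref{monness} together with Fubini-Tonelli (Proposition \ref{FTsfin}), the associativity and unit coherence diagrams collapse to the associativity and unit laws of the free abelian monoid $X_e$ of Definition \ref{expmonoid}, and hence hold.

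I would then supply the exponential (Seely) data. The commutative comonoid structure $(\mc{X}_e,\con{\mc{X}},\wk{\mc{X}})$, together with the facts that $\wk{\mc{X}}$ is a coalgebra morphism to $(\mc{I},\moni)$ and that $\con{\mc{X}}$ is a coalgebra morphism to $(\mc{X}_e\otimes\mc{X}_e,\,\mon{\mc{X}_e}{\mc{X}_e}\Comp(\stor{\mc{X}}\otimes\stor{\mc{X}}))$, are all established in the weakening-and-contraction proposition; the monoidality of $\con{\mc{X}}$ and $\wk{\mc{X}}$ as natural transformations follows from their defining formulas $\wk{\mc{X}}(\msbf{A},*)=\delta_{\msbf{A},\{0\}}$ and $\con{\mc{X}}(\msbf{A},(\msbf{x}^1,\msbf{x}^2))=\Dd{\abs{\msbf{x}^1\msbf{x}^2}}{\msbf{A}}$ by the same Fubini-Tonelli reduction to monoid identities. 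Finally, the remaining clause, that the comultiplication $\stor{\mc{X}}$ is a comonoid morphism from $(\mc{X}_e,\con{\mc{X}},\wk{\mc{X}})$ to $(\mc{X}_{ee},\con{\mc{X}_e},\wk{\mc{X}_e})$, is exactly Lemma \ref{monos}.

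The hard part is not any single calculation but the discipline of checking that the assembly is complete, and in particular that the two obligations not already packaged as named propositions, namely the lax monoidal coherence of $(\mon{\mc{X}}{\mc{Y}},\moni)$ and the monoidality of $\con{\mc{X}}$ and $\wk{\mc{X}}$, genuinely require no new measure theory. The key observation that makes these routine is that every structure map in sight is a Dirac delta built solely from the free-monoid product on $X_e$ and the flattening map $\abs{-}$ of Definition \ref{mapabs}; consequently, once the integrals are discharged by Fubini-Tonelli, each coherence equation becomes an equality of words in $X_e$ (associativity of the product, the unit laws, and the absorption identities such as $\abs{\abs{\mathfrak{y}_1}\abs{\mathfrak{y}_2}}=\abs{\abs{\mathfrak{y}_1\mathfrak{y}_2}}$ already used above), all of which hold by freeness of the abelian monoid.
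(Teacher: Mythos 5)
Your proposal matches the paper's treatment: the paper states this theorem with no separate proof, as a summary assembled from the preceding propositions (naturality and monoidality of ${\sf d}$ and ${\sf s}$, the comonad laws, the commutative comonoid and coalgebra-morphism properties of $\con{}$ and $\wk{}$) together with Lemma \ref{monos}, which is exactly the bookkeeping you carry out, including the same identification of storage as the comultiplication and dereliction as the counit. You are in fact slightly more scrupulous than the paper in isolating the two obligations not packaged as named results --- the lax symmetric monoidal coherence of $((~)_e,\mon{\mc{X}}{\mc{Y}},\moni)$ and the naturality/monoidality of $\con{}$ and $\wk{}$ --- and your discharge of them by collapsing the Dirac-delta integrals via Fubini--Tonelli to identities in the free abelian monoid is the same style of argument the paper uses throughout.
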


\section{Double Glueing and Orthogonality over $\opTKersfin$} \label{dgort} 
This section constructs the double glueing over $\opTKersfin$
in accordance with Hyland-Schalk's general categorical framework
\cite{HSha} for constructing the structure of linear logic,
but without the assumption of
any closed structure of the base category.
In Section \ref{dg}, a crude but non degenerate opposite pair
is obtained between
product and coproduct as well as between tensor and cotensor, lifting
those but collapsed in the monoidal category $\opTKersfin$.
Furthermore an exponential comonad is constructed
for the glueing $\GopTKersfin$ over $\opTKersfin$. 
In Section \ref{ort},
a new instance of Hyland-Schalk orthogonality is given 
in terms of Lebesgue integral between measures and measurable functions,
owing to the measure theoretic study in the preceding
sections.
The instance in $\TKersfin$ has an adjunction property,
called reciprocal, in terms of an inner product using the integral.
The reciprocal orthogonality enables us to retain the exponential comonad
to the slack subcategory $\Sla{\opTKersfin}$.
Following the framework \cite{HSha},
the double gluing considered in this paper 
is along hom-functors to the category of sets.

\subsection{Double Glueing $\GopTKersfin$
with Exponential Comonad} \label{dg}

\begin{defn}[The category $\GopTKer$]~ \label{CatGop}\\{\em 
\noindent An object is a tuple $(\mc{X},U,R)$ such that 
$\mc{X}$ is an object of $\opTKer$, and 
$U$ and $R$ are sets $U \subseteq \opTKer(\mathcal{I},\mathcal{X})$
and $R \subseteq \opTKer(\mathcal{X}, \mathcal{I})$.
That is, $U$ and $R$ comprise specific classes
of measurable functions and of measures respectively.

\noindent Each map from $\bm{\mc{X}}=(\mathcal{X},U,R)$
to $\bm{\mc{Y}}=(\mathcal{Y},V,S)$  is any $\opTKer$
map $\kappa: \mathcal{X} \longrightarrow \mathcal{Y}$ satisfying: \\
\noindent-
$\forall g  : \mathcal{I} \longrightarrow \mathcal{X}$ in $U$,
the composition \xymatrix{ \kappa^* g : \mathcal{I} \ar[r]^{g} & \mathcal{X} \ar[r]^\kappa & \mathcal{Y}  
} belongs to $V$. \\
\noindent -
$\forall \mu : \mathcal{Y} \longrightarrow \mathcal{I}$ in $S$,
the composition \xymatrix{ \kappa_* \mu  : \mathcal{X} \ar[r]^{\kappa} & 
\mathcal{Y} \ar[r]^\mu & \mathcal{I}  
} belongs to $R$.
}\end{defn}
The forgetful functor exists $\GopTKer \longrightarrow
\opTKer$ forgetting the second and the third components of the objects.

\bigskip

The double glueing category $\GopTKersfin$ is defined the same 
over $\opTKersfin$ and becomes a subcategory of $\GopTKer$.
The general result of Hyland-Schalk \cite{HSha}
applies to the subcategory.
\begin{prop} \label{catGTKer}
$\GopTKersfin$
is a monoidal category with product and coproduct,
which is collapsed to the corresponding structures of
$\opTKersfin$
by the forgetful functor.
\end{prop}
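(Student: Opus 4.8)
The plan is to present this as an instance of Hyland-Schalk's general double glueing theorem \cite{HSha}, so the first task is to verify the two hypotheses that theorem asks of the base category: that it be symmetric monoidal and that it possess both products and coproducts. The symmetric monoidal structure is supplied by Definition \ref{moncat}, transported to the opposite category. For products and coproducts I would invoke that $\TKersfin$ carries countable biproducts (Proposition \ref{retbip}); since biproducts are self-dual, $\opTKersfin$ retains them, and in particular has the finite products and coproducts demanded by the theorem, coinciding on objects with the biproduct $\coprod$.

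Granting the hypotheses, I would next spell out the induced structure, using only the monoidal operations (no closed structure, in keeping with the opening of this section). For $\bm{\mc{X}}=(\mc{X},U,R)$ and $\bm{\mc{Y}}=(\mc{Y},V,S)$ the tensor $\bm{\mc{X}} \otimes \bm{\mc{Y}}$ would have underlying object $\mc{X} \otimes \mc{Y}$, point set $\{ g \otimes h \mid g \in U,\ h \in V \}$, and copoint set those $w : \mc{X} \otimes \mc{Y} \longrightarrow \mc{I}$ with $w \Comp (g \otimes \operatorname{Id}) \in S$ for all $g \in U$ and $w \Comp (\operatorname{Id} \otimes h) \in R$ for all $h \in V$; the unit is $\mc{I}$ with its canonical points and copoints. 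The product $\bm{\mc{X}} \mathbin{\&} \bm{\mc{Y}}$ and coproduct $\bm{\mc{X}} \oplus \bm{\mc{Y}}$ would both have underlying object the biproduct $\mc{X} \coprod \mc{Y}$, but with dual data: the product pairs points and sums copoints along the projections, whereas the coproduct sums points along the injections and pairs copoints. This is exactly the promised lifting of the degenerate biproduct of $\opTKersfin$ into a genuinely distinct product and coproduct, with the forgetful functor $\abs{\ }$ collapsing both back to $\coprod$.

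I would then verify that these data define objects of $\GopTKersfin$, that the evident structure maps (injections $\inj$, projections $\pr$, the tensor functoriality maps, and the mediating morphisms) are $\GopTKersfin$-morphisms, and that the universal properties and monoidal coherence hold. Here I would lean on the fact that $\abs{\ }$ is faithful and creates the relevant limits and colimits, so that the mediating morphisms and coherence isomorphisms are inherited from $\opTKersfin$; what remains is the bookkeeping that each such map respects the point and copoint sets --- a direct unfolding in each case (a tensor of points maps to a point, a copoint of the product pulls back along a mediating map to a copoint, and so on).

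The hard part, and the only ingredient genuinely particular to this setting, will be staying inside the s-finite class throughout: every constructed point, copoint, and structure map --- the tensors $g \otimes h$, the composites $x \Comp \pr$ and $\inj \Comp u$ defining the product and coproduct data, and all mediating morphisms --- must again be s-finite kernels. I expect this to follow from the closure properties already in hand (s-finiteness is preserved under composition, under the monoidal product, and under push-forward, by Remark \ref{clopf}) together with the biproduct construction of Proposition \ref{retbip}, in particular the $h_n$-decomposition used there to keep infinite mediating morphisms s-finite. Once s-finiteness is secured at each step, the general theorem of \cite{HSha} delivers the asserted monoidal, product and coproduct structure on $\GopTKersfin$ and its collapse under $\abs{\ }$.
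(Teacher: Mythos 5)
Your proposal is correct and follows essentially the same route as the paper: the paper likewise obtains the result as a direct instance of Hyland--Schalk's general double glueing construction over the symmetric monoidal category $\opTKersfin$ with its (bi)products, and then spells out exactly the tensor, product and coproduct data you describe (points/copoints of the tensor via $f\otimes g$ and the two precomposition conditions, the product pairing points and taking the union of copoints composed with projections, the coproduct dually). Your extra attention to s-finiteness is sound but already discharged by the closure properties of $\TKersfin$ established in Remark \ref{clopf} and Proposition \ref{retbip}, which is why the paper does not revisit it here.
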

Given objects
$\bm{\mc{X}}=(\mathcal{X},U,R)$
and $\bm{\mc{Y}}=(\mathcal{Y},V,S)$ of $\GopTKersfin$, \\
\noindent Tensor product \\
\vspace{-3ex}
$$\begin{aligned}
& \bm{\mc{X}} \otimes \bm{\mc{Y}} = 
(\mc{X} \otimes \mc{Y}, U \otimes V, T),
\text{where} \\
& U \otimes V = \{ 
 f \otimes
g : \mc{I} \cong \mc{I} \otimes \mc{I}  \longrightarrow
 \mc{X} \otimes \mc{Y}  \mid f \in U \, \,  g \in V \} 
\\ & 
T=
\{ \nu: \mc{X} \otimes \mc{Y} \longrightarrow \mc{I} \mid
\xymatrix{\mc{I} \ar[r]^(.4){\forall f \in U} & 
\mc{X}},  \, \,  
\nu^* (f \otimes \delta_{\mc{Y}}) \in S 
\quad \text{and} \quad 
\xymatrix{\mc{I} \ar[r]^(.4){\forall g \in V} & \mc{Y}}, \, \, 
\nu^* (\delta_{\mc{X}} \otimes g) \in R
\}
\end{aligned}$$

\vspace{-2ex}
\noindent Note $\nu^* (f \otimes \delta_{\mc{Y}}):$ 
\xymatrix{ 
 \mc{Y} \cong \mc{I} \otimes
 \mc{Y} \ar[r]^{f \otimes \delta_{\mc{Y}}} & \mc{X} \otimes \mc{Y}
 \ar[r]^\nu  &  \mc{I}} and 
$\nu^* (\delta_{\mc{X}} \otimes g):$ 
\xymatrix{ 
\mc{X} \cong
\mc{X} \otimes \mc{I} \ar[r]^(.55){\delta_{\mc{X}} \otimes g} 
&   \mc{X} \otimes \mc{Y}  \ar[r]^\nu  &  \mc{I} }.

\smallskip

The tensor unit is given  $\mc{I}=(\mc{I}, \{\operatorname{Id}_{\mc{I}}\},
\opTKersfin(\mc{I},\mc{I}))
$.

\smallskip

\noindent For a subset $U$ of a homset and a morphism $f$
of appropriate type, $U \Comp f$ and $f \Comp U$ denote the respective
subsets composed and precomposed with $f$ element-wisely to $U$.

\noindent Product 
\vspace{-.7ex}
$$\begin{aligned}
& \bm{\mc{X}} \& \bm{\mc{Y}} = 
(\mc{X} \amalg \mc{Y}, U \& V, (R \Comp \pr_{\mc{X}} ) \cup (S \Comp 
\pr_{\mc{X}}) ), \text{where}  
\\
& \quad   U \& V : =\{ 
u \& v : \mc{I} \longrightarrow \mc{X} \amalg \mc{Y} \, \,
\mid
u \in U \, v \in V \}.
 \end{aligned}$$

\vspace{-1.5ex}
\noindent Note $u \& v$ denotes the mediating morphism
for $\amalg$ as the product in $\opTKersfin$.

\noindent Coproduct
\vspace{-.8ex}
$$\begin{aligned}
& \bm{\mc{X}} \oplus \bm{\mc{Y}} = 
(\mc{X} \amalg \mc{Y}, (\inj_{\mc{X}} \Comp U)  \cup
(\inj_{\mc{Y}} \Comp V), R \oplus S ), \quad \text{where}
\\ & \quad  R \oplus S : =\{   r
 \oplus s : \mc{X} \amalg \mc{Y} \longrightarrow  \mc{I} \mid
r \in R \, \, s \in S \}
 \end{aligned}$$

\vspace{-1.5ex}
\noindent Note $r \oplus s$ denotes the mediating morphism
for $\amalg$ as the coproduct in $\opTKersfin$.

The unit for the coproduct is $(\mc{T},\emptyset, \{ \emptyset \})$.

\begin{rem}[product/coproduct and tensor/cotensor]{\em 
The product and the coproduct of $\GopTKersfin$ do not coincide,
despite that the
forgetful functor makes them collapse into
the biproduct in $\opTKersfin$. Similarly,
another tensor product is defined, say the cotensor
$\parr$, owing to the nonsymmetricity of the second and the third
 components for the tensor object:
}\end{rem}
\noindent Cotensor product
\vspace{-1ex}
\begin{align*}
& \bm{\mc{X}} \parr \bm{\mc{Y}} = 
(\mc{X} \otimes \mc{Y}, W, R \otimes S),
\text{where} \\
& R \otimes S = \{ 
 \kappa \otimes
\tau : \mc{X} \otimes \mc{Y} \longrightarrow
\mc{I} \otimes \mc{I}   \cong  \mc{I}
   \mid \kappa \in R \, \,  \tau \in S \} \\
& 
W=
\{ h : \mc{I} \longrightarrow \mc{X} \otimes \mc{Y}   \mid
\xymatrix{\mc{X} \ar[r]^(.4){\forall  \kappa \in R} &  \mc{I}}, 
\quad
 (\kappa \otimes \delta_{\mc{Y}})^* h \in V
\quad \text{and} \quad 
\xymatrix{\mc{Y} \ar[r]^(.4){\forall \tau \in S} & \mc{I}}, \, \,
\quad 
(\delta_{\mc{X}} \otimes \tau)^* h \in U
\}
\end{align*}
Note
$\xymatrix{
  (\kappa \otimes \delta_{\mc{Y}})^* h: 
\mc{I} \ar[r]^(.6){h} &  \mc{X} \otimes \mc{Y}
 \ar[r]^(.45){\kappa \otimes \delta_{\mc{Y}}} &
\mc{I} \otimes \mc{Y} \cong \mc{Y}} \text{and} 
\xymatrix{
 (\delta_{\mc{X}} \otimes \tau)^* h:
\mc{I} \ar[r]^(.6){h}  &  \mc{X} \otimes \mc{Y}  
\ar[r]^(.45){\delta_{\mc{X}} \otimes \tau}  & 
\mc{X} \otimes \mc{I} 
\cong \mc{X}}.$

\bigskip

Remind  for what follows that $*$ denotes the unique element of the
singleton measurable space $I$ of $\mc{I}$.

\smallskip

Our linear exponential comonad over $\opTKersfin$
in Section \ref{comonadTKer} lifts to that for $\GopTKersfin$,
directly along with Hyland-Schalk exponential construction in the double
glueing (cf. Section 4.2.2 of \cite{HSha}).
In \cite{HSha} an exponential structure in a double gluing category $ {\bf G}(C)$ is given through a natural transformation $\mathsf{k}:
\mc{C}(I, -) \longrightarrow \mc{C}(I, (-)_e)$,
which transformation makes $\mc{C}(I, -)$ linear distributive.
In our concrete framework $\opTKersfin$,
the natural transformation $\mathsf{k}$ is characterised concretely
as follows: 
\begin{defn}\label{defnatkappa}
{\em A natural transformation
$\mathsf{k} : \opTKersfin (\mc{I}, -) \longrightarrow \opTKersfin (\mc{I},
(-)_e)$ is defined by the following instance
$\natkappa{\mc{X}}{u}: \mc{I} \longrightarrow \mc{X}_e$
for every $u : \mc{I} \longrightarrow \mc{X}$:
\begin{align*}
\natkappa{\mc{X}}{u} (I, x_1 \cdots x_n)  & :=
u_e (I^{(n)}, x_1 \cdots x_n) \\
& = u_e^\bullet (\FI(I^{(n)}), (x_1, \ldots ,x_n)) \\
& =  u^n (I^n, (x_1, \ldots ,x_n) ) &  \text{by $\FI(I^{(n)})=I^n$} \\
&\textstyle = \prod\limits_{i=1}^n u(I, x_i)
\end{align*}
}
\end{defn}

\bigskip

\noindent Definition \ref{defnatkappa} is well defined so that 
the naturality of $\mathsf{k}$
$$\begin{aligned}
\iota_e \Comp \mathsf{k}_{\mc{X}} (u) =
\mathsf{k}_{\mc{Y}} (\iota \Comp u
): \mc{I} \longrightarrow \mc{Y}_e & \quad \mbox{for any $\iota : \mc{X} \rightarrow \mc{Y}$
and $u : \mc{I} \rightarrow \mc{X}$ in $\opTKersfin$}
\end{aligned}$$
is shown as follows with any $y_1 \cdots y_n \in \mc{Y}_e$;
\begin{align*}
RHS(I,y_1 \cdots y_n)  & =
\textstyle 
\prod_{i=1}^n (\iota \Comp u) (I, y_i)
 = 
\prod_{i=1}^n 
\int_X u(I, x) \, \iota (dx, y_i) \\
LHS(I, y_1 \cdots y_n) &  =
\textstyle
\int_{X_e} \mathsf{k}_{\mc{X}} (I, \msbf{x}) \, \iota_e (d \msbf{x},
y_1 \cdots y_n )  \\
&  =  \textstyle
\int_{X_e}
\mathsf{k}_{\mc{X}} (I, \msbf{x}) \, \iota_e^{\bullet} (\FI (d \msbf{x}),
(y_1, \ldots ,y_n) )
\tag*{Def of $\iota_e$ in (\ref{exker})} \\
& =
\textstyle \int_{X_e^\bullet}
\mathsf{k}_{\mc{X}} (I, F(\vec{\msbf{x}})) \, \iota_e^{\bullet}
( d \vec{\msbf{x}},
(y_1, \ldots,  y_n) ) 
\tag*{by variable change}
\\
& \textstyle = 
\int_{X^n}
\mathsf{k}_{\mc{X}} (I, x_1 \cdots x_n ) \, \iota_e^{\bullet}
( d (x_1, \ldots , x_n),
(y_1, \ldots y_n) )  \tag*{Def of $\iota_e^\bullet$}
\\
& \textstyle = 
\int_{X^n}
\prod_{i=1}^n  u(I, x_i)
\prod_{i=1}^n \iota(dx_i, y_i) \\
& \textstyle 
=  \int_{X^n}
\prod_{i=1}^n  u(I, x_i) \, \iota(dx_i, y_i)
 \\
& \textstyle 
= \prod_{i=1}^n 
\int_{X}
 u(I, x) \, \iota(dx, y_i)
\tag*{Fubini-Tonelli}
 \end{align*}

\smallskip

\begin{lem} \label{lemortex}
For any  $u : \mc{I} \longrightarrow \mc{X}$, 
$$\begin{aligned}
(\di{\mc{X}})^* (\natkappa{\mc{X}}{u})= u
\end{aligned}$$
That is, the natural transformation
$
\opTKersfin (\mc{I}, {\sf d}):
\opTKersfin (\mc{I}, -) \longrightarrow 
\opTKersfin (\mc{I}, (-)_e)$
induced by ${\sf d}$ of Definition \ref{natder}
is a left inverse of $\mathsf{k}$ so that 
$\opTKersfin (\mc{I}, {\sf d}) \Comp
\mathsf{k} = \operatorname{Id}_{\opTKersfin (\mc{I}, -)}$.
\end{lem}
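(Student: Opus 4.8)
The plan is to compute $(\di{\mc{X}})^*(\natkappa{\mc{X}}{u})$ directly by unwinding the two definitions involved and exploiting the fact that dereliction $\di{\mc{X}}$ collapses everything outside the single-element stratum $X^{(1)}$. Recall from Remark \ref{remmesfun} that in $\opTKersfin$ the operation $(\di{\mc{X}})^*$ applied to a morphism $v : \mc{I} \longrightarrow \mc{X}_e$ is precisely the categorical composition $\di{\mc{X}} \Comp v$, i.e. $(\di{\mc{X}})^* v = \di{\mc{X}} \Comp v$. So the whole statement is the single naturality-free equation $\di{\mc{X}} \Comp \natkappa{\mc{X}}{u} = u$, which I would verify by testing it on an arbitrary argument $(I, x)$ with $x \in X$ and comparing both sides as transition kernels.

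First I would write out the composite using the convolution formula for $\opTKersfin$, integrating over the intermediate exponential space $\mc{X}_e$:
\begin{align*}
(\di{\mc{X}} \Comp \natkappa{\mc{X}}{u})(I, x)
= \Int{X_e}{\natkappa{\mc{X}}{u}(I, \msbf{z})}{\di{\mc{X}}(d \msbf{z}, x)}.
\end{align*}
By Proposition \ref{natder} the kernel $\di{\mc{X}}(\msbf{z}, x) = \Dd{x}{\msbf{z} \cap X^{(1)}}$ is supported on the stratum $X^{(1)} = \{\msbf{z} \in X_e \mid \cunt{X}(\msbf{z}) = 1\}$, so the integral restricts to $X_e \cap X^{(1)}$, where each $\msbf{z}$ is a singleton sequence $z \in X$. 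On that stratum the Dirac delta picks out $z = x$, collapsing the integral to the single evaluation $\natkappa{\mc{X}}{u}(I, x)$, where now $x$ is read as the singleton multiset in $X_e \cap X^{(1)}$.

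The final step is to observe that $\natkappa{\mc{X}}{u}$ evaluated on the single-element stratum is just $u$ itself. From Definition \ref{defnatkappa} with $n = 1$ we have $\natkappa{\mc{X}}{u}(I, x) = \prod_{i=1}^{1} u(I, x) = u(I, x)$, since the empty-or-singleton product is the identity factor. Hence both sides agree on every $(I, x)$, which establishes $\di{\mc{X}} \Comp \natkappa{\mc{X}}{u} = u$, and therefore the asserted left-inverse identity $\opTKersfin(\mc{I}, {\sf d}) \Comp \mathsf{k} = \operatorname{Id}$.

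I do not anticipate a serious obstacle here: the computation is short once one correctly translates $(\di{\mc{X}})^*$ into precomposition in $\opTKersfin$ and recalls that $\di{\mc{X}}$ annihilates all strata except $X^{(1)}$. The only point demanding a little care is the bookkeeping of the two readings of the symbol $x$ — as an element of $X$ and as the corresponding singleton multiset in $X_e$ — together with the note following Definition \ref{mapabs} that $\abs{~}$ and the counting functions behave as the identity on $X^{(1)}$; getting the stratum restriction exactly right is what makes the Dirac delta fire and the product in Definition \ref{defnatkappa} truncate to a single factor.
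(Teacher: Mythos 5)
Your proposal is correct and follows essentially the same route as the paper's own proof: both unwind $(\di{\mc{X}})^*(\natkappa{\mc{X}}{u})$ as the convolution $\int_{X_e} \natkappa{\mc{X}}{u}(I,\msbf{z}) \, \di{\mc{X}}(d\msbf{z},x)$, restrict the integral to the stratum $X_e \cap X^{(1)}$ where the dereliction kernel is supported and acts as the Dirac delta, and then conclude with $\natkappa{\mc{X}}{u}(I,x) = u(I,x)$ from Definition \ref{defnatkappa} at $n=1$. No gaps; the care you flag about reading $x$ both as an element of $X$ and as the singleton multiset is exactly the point the paper also notes.
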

\begin{proof} 
\begin{align*}
\textstyle (\di{\mc{X}} \Comp \natkappa{\mc{X}}{u}) (I, x)
= 
\int_{X_e} \natkappa{\mc{X}}{u}(I,  \msbf{y}) \di{\mc{X}}(d
 \msbf{y},x)
= 
\int_{X_e \cap X^{(1)}} \natkappa{\mc{X}}{u}(I,  y) \di{\mc{X}}(d
 y,x) 
= 
\natkappa{\mc{X}}{u}(I,  x) 
= 
u(I,x)
\end{align*}
The third equation is because  $\di{\mc{X}}(d y, x)= \delta(dy, x)$ as
$y \in X_{e} \cap X^{(1)} $.
\end{proof}

\smallskip

In order to define certain exponential comonad in $\GopTKersfin$,
the following linear distributivity is crucial, guaranteeing 
to respect the comonoid structure of $\opTKersfin$.

\smallskip

\begin{lem}[linear distributity of
$ \natkappa{\mc{X}}{-}$] \label{ldnatkap}
The natural transformation $ \natkappa{\mc{X}}{-} :
\opTKersfin(\mc{I},-) \longrightarrow  \opTKersfin(\mc{I},(-)_e)$
meets the following criteria (i), (ii) and (iii) of Hyland-Schalk
 (cf. pg.209 \cite{HSha}) in order to make $\opTKersfin(\mc{I},
 -)$ linear distributive: 

{\em \noindent (Remind the notation below that 
$\mc{C}(\mc{I}, -)$ is a functor so instantiated both by object and by morphism.)}

\noindent(i) well-behavior wrt the comonad structure
$$\xymatrix
{ & 
\ar[dl]_{\operatorname{Id}_{\mc{C}(\mc{I}, \mc{X})}}
\mc{C}(\mc{I}, \mc{X}) \ar[r]^{
\mathsf{k}_{\mc{X}}}
\ar[d]^{\mathsf{k}_{\mc{X}}}
  &  \mc{C}(\mc{I}, ! \mc{X})
 \ar[d]^{\mathsf{k}_{! \mc{X}}} \\
\mc{C}(\mc{I}, \mc{X}) & 
\ar[l]^{ \mc{C}(\mc{I}, \di{\mc{X}})}
\mc{C}(\mc{I}, ! \mc{X}) \ar[r]_{
\mc{C}(\mc{I}, \stor{\mc{X}})}  &  \mc{C}(\mc{I}, ! ! \mc{X})
}$$
\noindent (ii) respecting the comonoid structure
$$\xymatrix
{ \emptyset
\ar[d]_{\ulcorner \operatorname{Id}_{\emptyset} \urcorner}
& 
\ar[l]
\ar[d]^{}
\mc{C}(\mc{I}, \mc{X}) \ar[r]^(.4){\Delta_{\mc{C}(\mc{I}, \mc{X})}}
\ar[d]^{\mathsf{k}_{\mc{X}}}
  &  \mc{C}(\mc{I}, \mc{X}) \otimes \mc{C}(\mc{I}, \mc{X}) 
\ar[r]^{\mathsf{k}_{\mc{X}} \times \mathsf{k}_{\mc{X}}}
   &  \mc{C}(\mc{I}, ! \mc{X}) \times \mc{C}(\mc{I}, ! \mc{X})
 \ar[d]^{\otimes} \\
\mc{C}(\mc{I}, \mc{I}) & 
\ar[l]^{ \mc{C}(\mc{I}, \wk{\mc{X}})}
\mc{C}(\mc{I}, ! \mc{X}) \ar[rr]_{
\mc{C}(\mc{I}, \con{\mc{X}})}  &  &  \mc{C}(\mc{I}, ! \mc{X} \otimes ! \mc{X})
}$$
\noindent (iii) monoidal
$$\xymatrix
{ \mc{C}(\mc{I}, \mc{I})
\ar[r]^{\mathsf{k}_{\mc{I}}} & \mc{C}(\mc{I}, !\mc{I})
& 
\mc{C}(\mc{I}, \mc{X})
\times
\mc{C}(\mc{I}, \mc{Y})
 \ar[d]_{\mathsf{k}_{\mc{X}}  \times \mathsf{k}_{\mc{Y}} }
 \ar[rr]^{\otimes}  &  &
\mc{C}(\mc{I}, \mc{X} \otimes \mc{Y})  
 \ar[d]^{
\mathsf{k}_{\mc{X} \otimes \mc{Y}}} \\
\ar@{}[ur]^{=}
\mc{C}(\mc{I}, \mc{I})
\ar[r]^{
\mc{C}(\mc{I}, \moni)} 
& \mc{C}(\mc{I}, !\mc{I})
& 
 \mc{C}(\mc{I}, ! \mc{X}) \times
 \mc{C}(\mc{I}, ! \mc{Y} )
  \ar[r]_{\otimes}
   &  
 \mc{C}(\mc{I}, ! \mc{X} \otimes  ! \mc{Y})
 \ar[r]_{
\mc{C}(\mc{I}, \mon{\mc{X}}{\mc{Y}})} & 
\mc{C}(\mc{I}, ! (\mc{X} \otimes \mc{Y}))
}$$
\end{lem}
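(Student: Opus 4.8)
The plan is to spell out each of the three Hyland--Schalk criteria as a concrete identity between parallel morphisms of $\opTKersfin$, using the explicit formula $\natkappa{\mc{X}}{u}(I,x_1\cdots x_n)=\prod_{i=1}^n u(I,x_i)$ of Definition \ref{defnatkappa} together with the defining formulas for $\di{}$, $\stor{}$, $\wk{}$, $\con{}$, $\moni$ and $\mon{}{}$, and then to verify each identity by direct calculation. The one observation that drives every computation is that all of these structure maps are built from Dirac deltas; hence any composite of the form $\theta\Comp\natkappa{\mc{X}}{u}$ collapses, since integrating the function $\natkappa{\mc{X}}{u}(I,-)$ against a Dirac-concentrated measure $\theta(d\msbf{y},-)$ reduces to a single evaluation of $\natkappa{\mc{X}}{u}$ at the point that $\theta$ selects. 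Coupled with the multiplicativity $\natkappa{\mc{X}}{u}(I,\msbf{x}\msbf{y})=\natkappa{\mc{X}}{u}(I,\msbf{x})\,\natkappa{\mc{X}}{u}(I,\msbf{y})$ over the free commutative monoid $X_e$ (immediate from the product formula), each leg of each diagram turns into a product $\prod u(I,-)$ indexed by a multiset, so that the whole verification is reduced to bookkeeping identities for the flattening map $\abs{-}:X_{ee}\longrightarrow X_e$ of Definition \ref{mapabs}.

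Carrying this out, the left triangle of (i) is precisely Lemma \ref{lemortex}, while for its right square both $\natkappa{!\mc{X}}{\natkappa{\mc{X}}{u}}$ and $\stor{\mc{X}}\Comp\natkappa{\mc{X}}{u}$ evaluate at $\mathfrak{z}\in X_{ee}$ to $\prod u(I,-)$ over the factors of $\abs{\mathfrak{z}}$, the second because $\stor{\mc{X}}(d\msbf{y},\mathfrak{z})$ is the Dirac at $\abs{\mathfrak{z}}$. For (ii), the weakening leg uses $\wk{\mc{X}}(d\msbf{y},*)=\Dd{0}{d\msbf{y}}$, so the composite is the empty product $1=\operatorname{Id}_{\mc{I}}(I,*)$; the contraction leg uses $\con{\mc{X}}(d\msbf{y},(\msbf{x}^1,\msbf{x}^2))=\Dd{\abs{\msbf{x}^1\msbf{x}^2}}{d\msbf{y}}$ and multiplicativity to reproduce $\natkappa{\mc{X}}{u}\otimes\natkappa{\mc{X}}{u}$, the tensor of the two $\opTKersfin$-arrows $\mc{I}\to\mse{X}$ being read off from Definition \ref{moncat} as the product of the two scalar values. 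The multiplication square of (iii) is the only step that genuinely uses measure theory: after collapsing $\mon{\mc{X}}{\mc{Y}}(d(\msbf{x},\msbf{y}),\msbf{w})$ to the Dirac at $(x_1\cdots x_n,\,y_1\cdots y_n)$, both legs equal $\prod_i u(I,x_i)\prod_i v(I,y_i)$, and separating the $\mc{X}$- and $\mc{Y}$-factors is exactly an instance of Fubini--Tonelli (Proposition \ref{FTsfin}), just as in the naturality computation for $\mathsf{k}$ preceding this lemma.

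The delicate point, and the one deserving the most care, is the unit square of (iii), comparing $\mathsf{k}_{\mc{I}}$ with $\opTKersfin(\mc{I},\moni)$. Here $\natkappa{\mc{I}}{\operatorname{Id}_{\mc{I}}}(I,*^n)=1$ for every $n$ --- an empty product when $n=0$ --- whereas $\moni(I,*^n)=\min(n,1)$ vanishes at $n=0$, so the two transformations agree on every nonempty multiset but differ on the monoidal unit $0\in I_e$. The main obstacle is thus to account for this nullary slot: one must check that the discrepancy occurs exactly at the unit element and is absorbed by the unit datum of the linearly distributive structure on $\opTKersfin(\mc{I},-)$, so that the square commutes in the sense required on p.\,209 of \cite{HSha}. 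Once the empty-multiset entry is handled this way, all entries with $n\ge1$ match on the nose by multiplicativity, and criteria (i)--(iii) follow, establishing the linear distributivity of $\natkappa{\mc{X}}{-}$.
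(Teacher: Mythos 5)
Your strategy and the paper's proof coincide on everything except the last point. The paper likewise disposes of (ii) and (iii) as ``direct,'' recording precisely the equations you verify --- $\con{\mc{X}}\Comp\natkappa{\mc{X}}{u}=\natkappa{\mc{X}}{u}\otimes\natkappa{\mc{X}}{u}$, the weakening composite collapsing to $\operatorname{Id}_{\mc{I}}$ via the Dirac at the empty multiset, and $\natkappa{\mc{X}\otimes\mc{Y}}{u\otimes v}=\mon{\mc{X}}{\mc{Y}}\Comp(\natkappa{\mc{X}}{u}\otimes\natkappa{\mc{Y}}{v})$ --- and its only written-out computation is your right-hand square of (i): both $\natkappa{\mc{X}_e}{\natkappa{\mc{X}}{u}}$ and $\stor{\mc{X}}\Comp\natkappa{\mc{X}}{u}$ reduce to the product of $u(I,-)$ over the factors of the flattened multiset $\abs{\mathfrak{z}}$, while the dereliction triangle is exactly Lemma \ref{lemortex}. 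Up to there, your proposal is the paper's proof.

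The gap is the unit square of (iii), which you correctly single out but do not close. Your observation is right, and it is in fact a defect of the paper's own proof, which asserts $\natkappa{\mc{I}}{u}=\moni\Comp u$ as immediate: with Definition \ref{monness}'s $\moni(I,*^n)=\min(n,1)$ one has $\natkappa{\mc{I}}{u}(I,*^0)=1$ (empty product) but $(\moni\Comp u)(I,*^0)=0$, since $\moni(-,*^0)$ is the zero measure; this fails already for the one instance that the Hyland--Schalk unit coherence genuinely requires, namely $u=\operatorname{Id}_{\mc{I}}$. However, your proposed escape --- that the mismatch ``is absorbed by the unit datum of the linearly distributive structure'' --- is not a proof and cannot be made into one: the criterion is an equality of kernels, i.e.\ of $\zeroinf$-valued functions on $I_e\times\mc{I}$, and two kernels differing at the point $(I,*^0)$ are unequal; nothing on p.~209 of \cite{HSha} is stated up to identification of the empty-multiset entry. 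As written, then, neither your argument nor the paper's establishes (iii). What actually closes the gap is to recognise that $\moni$ is misdefined for this purpose: the nullary instance of the formula defining $\mon{\mc{X}}{\mc{Y}}$ (an empty product of Dirac factors) is the constant-one kernel, so one should set $\moni(I,*^n)=1$ for all $n\ge 0$, equivalently $\moni=\natkappa{\mc{I}}{\operatorname{Id}_{\mc{I}}}$. With that correction the unit square commutes on the nose, and the coherences elsewhere in the paper that involve $\moni$ (e.g.\ $\dii\Comp\moni=\operatorname{Id}_{\mc{I}}$ in the monoidality of the dereliction, which only sees the value at $n=1$) are unaffected.

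One further caution: even after this correction, the unit square holds only under your reading, namely as the unit condition of a monoidal natural transformation evaluated at the distinguished element $\operatorname{Id}_{\mc{I}}$. Read literally as an equality of the two maps $\opTKersfin(\mc{I},\mc{I})\longrightarrow\opTKersfin(\mc{I},\mc{I}_e)$ at arbitrary $u$ --- which is how the paper's proof writes it --- it remains false, since $\natkappa{\mc{I}}{u}(I,*^n)=u(I,*)^n$ whereas $(\moni\Comp u)(I,*^n)=u(I,*)$ for all $n\ge 1$.
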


\begin{proof}
(ii) and (iii) are direct as so are the following equations
stipulating the commutativity diagrams: \\
(ii) $\natkappa{\mc{X}}{u} \otimes \natkappa{\mc{X}}{u} =
\con{\mc{X}} \Comp \natkappa{\mc{X}}{u}$
and $\di{\mc{X}} \Comp \natkappa{\mc{X}}{u} =
\ulcorner \operatorname{Id}_{\emptyset} \urcorner \Comp \exists_\emptyset$
where $\exists_{\mc{\emptyset}}$ is the unique morphism to the empty measurable space 
(emptyproduct as terminal object) 
and $\ulcorner \operatorname{Id}_{\emptyset} \urcorner
: \emptyset \longrightarrow \opTKersfin(\mc{I}, \mc{I})$.

\noindent (iii) $\natkappa{\mc{X} \otimes \mc{Y}}{u \otimes v}
= 
\mon{\mc{X}}{\mc{Y}} \Comp
(\natkappa{\mc{X}}{u} \otimes \natkappa{\mc{Y}}{v})
$ and
$\natkappa{\mc{I}}{u}=\moni \Comp u$

\smallskip

Hence, we need to prove (i) having two equalities:

\noindent (i-a) 
$\natkappa{\mc{X}_e}{\natkappa{\mc{X}}{u}}
= s_{\mc{X}} \Comp \natkappa{\mc{X}}{u} $. 

\smallskip
\noindent 
Let $\msbf{x}_1 \cdots \msbf{x}_m
 \in X_{ee}$ so that $\msbf{x}_i = x_{i1} \cdots x_{in_i}$ for certain $n_i$
with $i=1, \ldots
 , m$. \\
Recall Definition \ref{mapabs} that 
 $\abs{\msbf{x}_1 \cdots \msbf{x}_m}=x_{11} \cdots x_{1n_1}
\cdots x_{m1} \cdots x_{mn_m}$.

$\begin{aligned}
 RHS(I, \msbf{x}_1 \cdots \msbf{x}_m)
& \textstyle =
\Int{X_e}{\natkappa{\mc{X}}{u}(I, \msbf{y})}{\stor{\mc{X}}(d
  \msbf{y}, \msbf{x}_1 \cdots \msbf{x}_m)} 
 =
\Int{X_e}{\natkappa{\mc{X}}{u}(I, \msbf{y})}{\Dd{\abs{\msbf{x}_1 \cdots \msbf{x}_m}}}{
d \msbf{y}}
\\
&\textstyle  = \natkappa{\mc{X}}{u}(I, \abs{\msbf{x}_1 \cdots \msbf{x}_m}) 
=  
u(I, x_{11} ) \cdots u(I, x_{1n_1})
\cdots u(I, x_{m1}) \cdots u(I, x_{mn_m})
\end{aligned}$

$\begin{aligned}
\textstyle LHS(I, \msbf{x}_1 \cdots \msbf{x}_m)
= \prod\limits_{i=1}^m \natkappa{\mc{X}}{u} (I,\msbf{x}_i) 
= \prod\limits_{i=1}^m \prod\limits_{j=1}^{n_i} 
u (I,x_{ij}) 
\end{aligned}$

\bigskip

\noindent (i-b) 
$ \di{\mc{X}} \Comp \natkappa{\mc{X}}{u}
= u$ \\
This is by Lemma \ref{lemortex}.
\end{proof}

\smallskip

With Lemma \ref{ldnatkap}, the following is an instance of
$\mc{C}=\opTKersfin$ of Hyland-Schalk's Proposition 36 of \cite{HSha}
on ${\bf G}(\mc{C})$.

\smallskip

\begin{prop}[Hyland-Schalk exponential comonad on glueing \cite{HSha}.] \label{gluex}
There are two kinds (I) and (II) of  linear exponential comonad on $\GopTKersfin$
as follows so that the forgetful functor to $\opTKersfin$
preserves the structure. For an object
$\bm{\mathcal{X}} = (\mathcal{X}, U, R)$ in $\GopTKersfin$,
\end{prop}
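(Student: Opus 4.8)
The plan is to read Proposition \ref{gluex} as the instance $\mc{C}=\opTKersfin$ of Hyland--Schalk's exponential-on-glueing theorem (Proposition 36 of \cite{HSha}), exactly as flagged in the sentence preceding the statement. That theorem requires two inputs: a linear exponential comonad on the base monoidal category, and a natural transformation $\mathsf{k}:\mc{C}(\mc{I},-)\longrightarrow\mc{C}(\mc{I},(-)_e)$ making the covariant hom-functor $\mc{C}(\mc{I},-)$ linear distributive. Both are already in hand: the comonad is Theorem \ref{linexcomo}, and the linear distributivity is Lemma \ref{ldnatkap}. Since the theorem delivers exponentials whose carrier action is $(-)_e$ and whose structure maps are those of $\opTKersfin$, the forgetful functor $\abs{\quad}$ automatically preserves the comonad; so the whole task is (a) to transcribe the abstract point/copoint recipe into our concrete data, and (b) to confirm that the maps $\di{\mc{X}},\stor{\mc{X}},\con{\mc{X}},\wk{\mc{X}},\mon{\mc{X}}{\mc{Y}},\moni$ of Theorem \ref{linexcomo} are morphisms of $\GopTKersfin$ between the glued objects.

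For $\bm{\mc{X}}=(\mc{X},U,R)$ I would take the carrier of the exponential to be $\mse{X}$ and build its point data from the image $\mathsf{k}_{\mc{X}}(U)=\{\natkappa{\mc{X}}{u}\mid u\in U\}$. The two kinds (I) and (II) are the two polarities that double glueing always admits: kind (I) fixes the point component from $\mathsf{k}_{\mc{X}}(U)$ and takes the copoint component maximal, while kind (II) dualises, fixing the copoint component and taking the point component maximal. Both are legitimate objects of $\GopTKersfin$ because the definition of the category imposes no compatibility between $U$ and $R$ beyond typing; the content lies entirely in whether the structure maps respect whichever extremal choice is made, and this is precisely what the three diagrams of Lemma \ref{ldnatkap} encode.

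The verification then proceeds map by map and is almost entirely formal. For dereliction the glueing condition on points is the single asymmetric fact, namely $\di{\mc{X}}\Comp\natkappa{\mc{X}}{u}=u\in U$, which is Lemma \ref{lemortex}; the condition on copoints, that every $\mu\in R$ precomposed with $\di{\mc{X}}$ lies in the copoint component of the exponential, holds by construction in each of the two kinds. For the remaining maps the two glueing conditions unwind exactly to the commuting squares of Lemma \ref{ldnatkap}: square (i) yields compatibility of $\mathsf{k}$ with $\di{}$ and $\stor{}$ (so digging is a glued morphism), square (ii) yields compatibility with the comonoid $(\con{},\wk{})$, and square (iii) yields monoidality against $\mon{\mc{X}}{\mc{Y}}$ and $\moni$. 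Naturality in $\GopTKersfin$ is inherited from the naturality already proved in Section \ref{comonadTKer}, and because $\abs{\quad}$ is faithful, every comonad and linear-exponential identity may be checked on carriers, where it is Theorem \ref{linexcomo}.

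The main obstacle is therefore not analytic but a matter of matching polarities: one must confirm that for \emph{both} kinds every structure map simultaneously carries points to points and copoints to copoints, which forces the point and copoint components to be chosen as a compatible extremal pair. Lemma \ref{lemortex} is what makes kind (I) work, pinning down $\di{}$ on the generated points, and the symmetric precomposition along $\di{\mc{X}}$ is what makes kind (II) work on copoints; once these two dual facts are in place, no further integration is required and the construction is the verbatim specialisation of the Hyland--Schalk template.
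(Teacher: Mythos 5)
Your overall strategy is the paper's: Proposition \ref{gluex} is stated with no independent proof beyond the observation that, given the linear exponential comonad of Theorem \ref{linexcomo} and the linear distributivity of $\mathsf{k}$ established in Lemma \ref{ldnatkap}, the result is the instance $\mc{C}=\opTKersfin$ of Hyland--Schalk's Proposition 36. Your map-by-map verification plan (dereliction via Lemma \ref{lemortex}, the remaining structure maps via the three diagrams of Lemma \ref{ldnatkap}) is exactly the content that citation packages.

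However, your identification of the two kinds is wrong, and the error is not cosmetic. In the paper \emph{both} kinds (I) and (II) have the same point component, namely $\natkappa{\mc{X}}{U}=\{\natkappa{\mc{X}}{u}\mid u\in U\}$; the only difference is the copoint component, which is the whole homset $\opTKersfin(\mc{X}_e,\mc{I})$ in kind (I) and the \emph{smallest} set $?R$ containing $(\di{\mc{X}})_*R$ and $\wk{\mc{X}}$ and closed under the contraction clause in kind (II). Your proposed kind (II) --- ``fixing the copoint component and taking the point component maximal'' --- does not yield a linear exponential comonad: if the point component of $\bm{\mc{X}}_e$ were all of $\opTKersfin(\mc{I},\mc{X}_e)$, the glueing condition for dereliction would require $\di{\mc{X}}\Comp g\in U$ for \emph{every} $g:\mc{I}\to\mc{X}_e$, which fails whenever $U$ is a proper subset of $\opTKersfin(\mc{I},\mc{X})$ (by Lemma \ref{lemortex} the composites $\di{\mc{X}}\Comp\natkappa{\mc{X}}{u}$ already exhaust $U$, but arbitrary points of $\mc{X}_e$ have no reason to land there). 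The asymmetry is forced: the point component must be the image of $U$ under $\mathsf{k}$ so that dereliction works, and the two admissible choices live entirely on the copoint side --- maximal, or minimally generated so that each structure map's copoint condition holds by construction. You should also note that the closure clauses defining $?R$ are not arbitrary: clause (a) is what makes dereliction a glued map on copoints, clause (b) does the same for weakening, and clause (c) for contraction, which is why $?R$ is defined by exactly those three generators.
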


\smallskip

$\begin{aligned}
(I) \quad  \bm{\mathcal{X}}_e = (\mathcal{X}_e, \natkappa{\mc{X}}{U}, \opTKersfin(\mathcal{X}_e,
 \mathcal{I}) ),  \quad \mbox{where} \quad \natkappa{\mc{X}}{U} =  
 \{ \natkappa{\mc{X}}{u} : 
\mathcal{I} \longrightarrow 
\mathcal{X}_e  \mid u \in U \}
\end{aligned}$
$$\begin{aligned}
& (II) \quad \bm{\mathcal{X}}_e = (\mathcal{X}_e, \natkappa{\mc{X}}{U}, ? R ), 
\quad \mbox{where $\natkappa{\mc{X}}{U}$ as above,
but $?R$ is the smallest subset of $\opTKersfin(\mathcal{X}_e,
  \mathcal{I})$} \\ 
& (a) \, \mbox{Containing $\xymatrix{ \{ (\di{\mc{X}})_* \mu : \mc{X}_e
 \ar[r]^(.7){\di{\mc{X}}} & \mc{X}
\ar[r]^(.4){\mu} & \mc{I} \mid \mu \in R \} } $ } \\
& (b) \, \mbox{Containing the weakening $\wk{\mathcal{X}}: \mathcal{X}_e
   \longrightarrow \mathcal{I}$} \\
& (c) \,\mbox{Closed under the following 
for the contraction $\con{\mathcal{X}}$
:} \\
& \hspace{8ex} \mbox{For any $h : \mc{X}_e \otimes \mc{X}_e \longrightarrow \mc{I},$ 
} \\
& \hspace{8ex} \forall u \in U \, \, \left[ ( \natkappa{\mc{X}}{u}  \otimes
   \operatorname{Id_{\mc{X}_e}})_* h \in ?R \wedge 
 (\operatorname{Id_{\mc{X}_e}} \otimes \natkappa{\mc{X}}{u} )_* h \in
   ?R \, \right] 
 \Longrightarrow \, (\con{\mathcal{X}})_* h \in ?R.
\end{aligned}$$

\smallskip

\subsection{Orthogonality as Relation between
Measures and Measurable Functions
} \label{ort}
The Hyland-Schalk orthogonality relation \cite{HSha},
when applied concretely to the measure theoretic
framework in the present paper, becomes a
relation between measures and
measurable functions over a measurable space.
The relation is shown to satisfy a property ``reciprocity'',
which is derivable from the adjunction of the inner product
in terms of Lebesgue integral of a
measurable function over a measure. 
Our reciprocal orthogonality is strong enough
to guarantee a certain relevant structure maps Hyland-Schalk employed
in \cite{HSha} to obtain the product and exponential structures
for the slack orthogonality category
$\Sla{\mc{C}}$, which forms a subcategory
of the double glueing ${\bf G}(\mc{C})$.
Note in this subsection, we do not assume any closed structure (i.e.,
the linear implication).
The category $\mc{C}$ considered in this subsection is 
either $\opTKersfin$ or $\opTKer$.

\smallskip

\begin{defn}[orthogonality on a monoidal category $\mc{C}$]\label{monoort}{\em
An orthogonality on a monoidal category $\mc{C}$
is a family of relation $\bot_{R}$
between maps $I \longrightarrow R$ and those $R \longrightarrow I$
satisfying the following conditions on isomorphism,
identity and on  tensor for a monoidal category
by Hyland-Schalk (cf. Definition 45 \cite{HSha}).
Note: Although the original orthogonality is for a monoidal closed
 category,
we consider a general monoidal one without the implication.

\noindent (isomorphism) \\
If $\iota: R \longrightarrow S$
is an isomorphism then for any 
$u: I \longrightarrow R$ and
$x: R \longrightarrow I$,
\begin{align*}
\ort{u}{R}{x} \quad \mbox{iff} \quad 
\ort{\iota \Comp u}{S}{x \Comp \iota^{-1}}
\end{align*}

\noindent (identity) \\
For all $u: I \longrightarrow R$ and
$x: R \longrightarrow I$,
\begin{align*}
\ort{u}{R}{x} \quad \mbox{implies} \quad 
\ort{\operatorname{Id}_{I}}{I}{x \Comp u}
\end{align*}

\noindent (tensor) \\
Given $u: I \longrightarrow R$,
$v: I \longrightarrow S$ and
$h: R \otimes S \longrightarrow I$,
$\ort{u}{R}{h \Comp (\operatorname{Id}_{R} \otimes v)}$
and 
$\ort{v}{S}{h \Comp (u \otimes \operatorname{Id_{S}})}$
imply 
$\ort{u \otimes v}{R \otimes S}{h}$.

}\end{defn}

\bigskip

For $U \subseteq \mc{C}(I,R)$, its orthogonal $\crc{U} \subseteq  \mc{C}(R,J)$
is defined by 
\begin{align*}
\crc{U}:= \{ x :  R \longrightarrow J \mid \forall u \in U \, \,  \ort{u}{R}{x}\} 
\end{align*}
This gives a Galois connection so that $U^{\circ \circ \circ } =
\crc{U}$.
The operator $\ccrc{(~)}$ is called the closure operator in the sequel.

\smallskip
\begin{lem}[reciprocal orthogonality] \label{adjort}
If a family of relation satisfies the following
for every $u: I \longrightarrow R$,
$x : S \longrightarrow J$, and $f: R \longrightarrow S$,
\begin{align}
  \ort{u}{R}{x \Comp f} \quad \quad
& \mbox{if and only if} \quad \quad  \ort{f \Comp u}{S}{x} \label{exadiort},
\end{align}
then this becomes an orthogonality relation.
An orthogonality satisfying (\ref{exadiort})
is called {\em reciprocal} 
\end{lem}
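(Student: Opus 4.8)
The plan is to verify that the single reciprocity condition (\ref{exadiort}) — which I read with $J = I$, since the orthogonality of Definition \ref{monoort} relates maps $I \longrightarrow R$ with maps $R \longrightarrow I$ — already forces all three axioms (isomorphism, identity, tensor) of an orthogonality. The point to keep in mind throughout is that (\ref{exadiort}) is an \emph{equivalence} quantified over all $f$, hence considerably stronger than the one-directional axioms it must entail; this is precisely what makes the derivations short.

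First I would dispatch the isomorphism axiom. Given an iso $\iota: R \longrightarrow S$ and maps $u: I \longrightarrow R$, $x: R \longrightarrow I$, I instantiate (\ref{exadiort}) with $f := \iota$ and with $x \Comp \iota^{-1} : S \longrightarrow I$ playing the role of its ``$x$''. Since $(x \Comp \iota^{-1}) \Comp \iota = x$, the equivalence reads $\ort{u}{R}{x}$ iff $\ort{\iota \Comp u}{S}{x \Comp \iota^{-1}}$, which is exactly the isomorphism axiom. The identity axiom is equally immediate: instantiating (\ref{exadiort}) with $f := u$, with $\operatorname{Id}_I$ as its ``$u$'' and $x$ as its ``$x$'', the left side becomes $\ort{\operatorname{Id}_I}{I}{x \Comp u}$ and the right side becomes $\ort{u \Comp \operatorname{Id}_I}{R}{x} = \ort{u}{R}{x}$, so $\ort{u}{R}{x}$ yields $\ort{\operatorname{Id}_I}{I}{x \Comp u}$ as required.

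The real work is the tensor axiom, and it is where the unitor bookkeeping lives. Given $u: I \longrightarrow R$, $v: I \longrightarrow S$ and $h: R \otimes S \longrightarrow I$, I would use the functorial factorisation $u \otimes v = (u \otimes \operatorname{Id}_S) \Comp (\operatorname{Id}_I \otimes v)$ (modulo the unit isomorphism $I \cong I \otimes I$) and apply (\ref{exadiort}) with $f := u \otimes \operatorname{Id}_S : I \otimes S \longrightarrow R \otimes S$, with $h$ as its ``$x$'' and $\operatorname{Id}_I \otimes v$ as its ``$u$''. The right-hand side of (\ref{exadiort}) is then precisely $\ort{u \otimes v}{R \otimes S}{h}$, the desired conclusion, while its left-hand side is $\ort{\operatorname{Id}_I \otimes v}{I \otimes S}{h \Comp (u \otimes \operatorname{Id}_S)}$. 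To identify this left-hand side with the hypothesis $\ort{v}{S}{h \Comp (u \otimes \operatorname{Id}_S)}$ I invoke the already-established isomorphism axiom along the left unitor $\lambda_S : I \otimes S \cong S$, using naturality of $\lambda$ to check $\lambda_S \Comp (\operatorname{Id}_I \otimes v) = v$ and that $h \Comp (u \otimes \operatorname{Id}_S)$, read as a map $S \longrightarrow I$, is $h \Comp (u \otimes \operatorname{Id}_S) \Comp \lambda_S^{-1}$. Thus one of the two tensor hypotheses already suffices under reciprocity; the symmetric route through $f := \operatorname{Id}_R \otimes v$ uses the other. The only delicate step is this coherence/unitor accounting, which I expect to be the main obstacle and which I would handle by appealing to naturality of the unitors rather than manipulating the isomorphisms by hand.
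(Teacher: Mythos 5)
Your proof is correct and takes essentially the same route as the paper: the isomorphism and identity axioms follow by direct instantiation of reciprocity, and the tensor axiom follows by factoring $u \otimes v$ through a one-sided tensor map ($u \otimes \operatorname{Id}_S$ or $\operatorname{Id}_R \otimes v$) and applying reciprocity to it, which — as you note — shows that a single tensor hypothesis already suffices, matching the paper's observation that the tensor condition strengthens to an equivalence (its ``precise tensor''). The only cosmetic difference is that you carry the unitor $\lambda_S$ explicitly and re-invoke the isomorphism axiom, where the paper absorbs the unit isomorphism into the identification $u \otimes v = (u \otimes \operatorname{Id}_S) \Comp v$.
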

\begin{proof}
We show the condition (\ref{exadiort}) entails
the three conditions (isomorphism), (identity) and (tensor)
of Definition \ref{monoort}.
Moreover, the (tensor condition) is strengthened into
the (precise tensor) obtained by replacing ``imply'' with ``iff''.
The derivation is direct 
for the isomorphism and the identity conditions.
For the precise tensor condition, observing $u \otimes v =
(\operatorname{Id}_{R} \otimes v) \Comp u 
= (u \otimes \operatorname{Id}_{R}) \Comp v$,
one antecedent implies the descendant by reciprocity 
either on $\operatorname{Id}_{R} \otimes v$
or on $u \otimes \operatorname{Id}_{S}$, and vice versa.
\end{proof}

\bigskip
In order to construct an orthogonality on $\opTKersfin$,
we define an inner product of $\opTKersfin$,
which has the adjunction property:
\begin{defn}[inner product]\label{inprod}{\em
For a measure $\mu \in \opTKer(\mc{X}, \mc{I})$ and a measurable function $f \in
\opTKer(\mc{I}, \mc{X})$, we define
$$\begin{aligned}
\inpro{f}{\mc{X}}{\mu} := \int_{X} f d\mu 
\end{aligned}$$
}
\end{defn}
\bigskip
\noindent Then the two operators in Definition 
\ref{opekernel} become characterised as follows:
 \begin{lem}[adjunction between $\kappa^*$ and $\kappa_*$] \label{adjn}
In $\opTKersfin$, for any 
measure $\mu: \mc{X} \longrightarrow \mathcal{I}$,
any measurable function
$f: \mc{I} \longrightarrow \mathcal{Y}$ and 
any transition kernel $\kappa: \mc{Y} 
\longrightarrow \mc{X}$,  
\begin{align*}
\inpro{f}{\mc{Y}}{\kappa_* \mu }
   & = \inpro{\kappa^*  f}{\mc{X}}{\, \mu}
\end{align*}
Remind that $\kappa_* \mu =  \mu \Comp \kappa$ and 
$\kappa^* f=\kappa \Comp f$.
\end{lem}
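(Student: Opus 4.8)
The plan is to reduce the stated equality to the standard measure-theoretic ``indicator, then simple, then monotone limit'' argument, the only genuine analytic ingredient being the monotone convergence theorem (Theorem~\ref{MC}) together with the defining formula for $\kappa_* \mu$.

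First I would unwind both inner products into explicit integrals. By Definition~\ref{inprod} the left-hand side is $\int_Y f \, d(\kappa_* \mu)$, while the right-hand side is $\int_X (\kappa^* f)\, d\mu = \int_X \bigl( \int_Y f(y)\, \kappa(x,dy) \bigr)\, \mu(dx)$ by Definition~\ref{opekernel}. Hence it suffices to prove the transfer identity
\[
\int_Y f \, d(\kappa_* \mu) \; = \; \int_X \int_Y f(y)\, \kappa(x,dy)\, \mu(dx)
\]
for every $\zeroinf$-valued measurable $f$ on $\ms{Y}$.

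Second, I would verify this identity by the standard machine. For a characteristic function $f = \chi_B$ with $B \in \mc{Y}$, both sides reduce to $\int_X \kappa(x,B)\, \mu(dx)$: the left side because $\int_Y \chi_B \, d(\kappa_* \mu) = (\kappa_* \mu)(B)$, and the right side because $\int_Y \chi_B(y)\, \kappa(x,dy) = \kappa(x,B)$ by~(\ref{kapchi}); the two then coincide by the very definition of $\kappa_* \mu$ in Definition~\ref{opekernel}. Linearity of the integral and of $\kappa^*$ (recorded just after Definition~\ref{opekernel}) extends the identity to non-negative simple functions, and two successive applications of monotone convergence (Theorem~\ref{MC})---first to the kernel-measures $\kappa(x,-)$ and then to $\mu$---extend it along an increasing sequence of simple functions to every $\zeroinf$-valued measurable $f$.

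Conceptually, the cleanest reading is that both sides are the scalar value of one and the same $\TKer$-composite $\mc{I} \xrightarrow{\mu} \mc{X} \xrightarrow{\kappa} \mc{Y} \xrightarrow{f} \mc{I}$: by Remark~\ref{remmesfun} we have $\kappa_* \mu = \kappa \Comp \mu$ and $\kappa^* f = f \Comp \kappa$, so the left side is $f \Comp (\kappa \Comp \mu)$ and the right side is $(f \Comp \kappa) \Comp \mu$, and the equality is exactly associativity of composition in $\TKer$. I do not expect a serious obstacle here; the one point to watch is integrability, which is sidestepped entirely by working with non-negative $\zeroinf$-valued functions, so that both iterated integrals are unconditionally defined and monotone convergence applies with no extra hypotheses. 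If one instead argues through the categorical associativity, the corresponding analytic content is the Fubini--Tonelli theorem for s-finite kernels (Proposition~\ref{FTsfin}), which underwrites well-definedness and associativity of convolution on $\opTKersfin$.
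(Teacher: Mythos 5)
Your proof is correct. It reaches the same identity as the paper but by a different mechanism: the paper's proof is a three-line chain of equalities that unwinds $\inpro{f}{\mc{Y}}{\kappa_* \mu}$ into an iterated integral and swaps the order of integration by appeal to Fubini--Tonelli, whereas you establish the transfer identity $\int_Y f \, d(\kappa_* \mu) = \int_X (\kappa^* f)\, d\mu$ directly by the indicator/simple/monotone-limit machine, with Theorem~\ref{MC} as the only analytic input. Your route is more elementary and, arguably, better matched to the statement: the Fubini--Tonelli results in the paper (Theorem~\ref{FT}, Proposition~\ref{FTsfin}) are phrased for product measures $\mu_1 \otimes \mu_2$ of two ($\sigma$- or s-finite) measures, not for a measure integrated against a kernel, so the paper's citation is really shorthand for exactly the standard-machine argument you carry out; your version also makes visible that no finiteness hypothesis on $\kappa$ or $\mu$ is needed, since everything is $\zeroinf$-valued and monotone convergence applies unconditionally. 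Your closing observation that the lemma is ``associativity of composition in $\TKer$'' read through Remark~\ref{remmesfun} is a fair conceptual gloss, though as you note it is not an independent proof, since associativity of the convolution~(\ref{compTKer}) rests on the same interchange of integrals.
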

\begin{proof}
The following starts from LHS and ends with RHS of the assertion,
using Fubini-Tonelli:
\begin{align*}
\int_{Y} f (y) (\kappa_* \mu) (dy) 
=
\int_Y f(y) \int_X \kappa(dy, x) \mu(dx) 
=
\int_X \mu(dx)  \int_Y f(y) \kappa(dy,x)
=  \int_{X} (\kappa^* f)(x) \mu(dx)
\end{align*} 
\end{proof}

\bigskip
Using the inner product, an orthogonality relation on $\opTKersfin$
is defined.
\begin{defn}[orthogonality in terms of integral] \label{ortint}
{\em For a measurable function $f \in
\opTKersfin(\mc{I}, \mc{X})$ and 
a measure $\mu \in \opTKersfin(\mc{X}, \mc{I})$, the relation $\bot_{\mc{X}}
\subset \opTKersfin(\mc{I},\mc{X}) \times \opTKersfin(\mc{X},\mc{I})$ is defined
\begin{align*}
\ort{f}{\mathcal{X}}{\mu} \quad \text{if and only if} \quad 
\inpro{f}{\mc{X}}{\mu} \, \, \leq 1
 \end{align*}
}\end{defn}

\bigskip

\begin{prop}[$\bot_{\mc{X}}$ is a reciprocal orthogonality in
$\opTKersfin$] \label{botrecipro}
The relation $\bot_{\mc{X}}$ defined in Definition \ref{ortint} 
is an orthogonality in $\opTKersfin$,
and moreover is reciprocal.
\end{prop}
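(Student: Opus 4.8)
The plan is to reduce everything to the adjunction already proved in Lemma \ref{adjn}. By Lemma \ref{adjort}, any family of relations satisfying the reciprocity condition (\ref{exadiort}) is automatically an orthogonality, and is by definition reciprocal; hence it suffices to verify (\ref{exadiort}) for $\bot_{\mc{X}}$. Concretely, I must check that for every measurable function $u : \mc{I} \longrightarrow R$, every measure $x : S \longrightarrow \mc{I}$, and every kernel $f : R \longrightarrow S$ of $\opTKersfin$, one has $\ort{u}{R}{x \Comp f}$ exactly when $\ort{f \Comp u}{S}{x}$.

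First I would unfold both sides through Definition \ref{ortint}: the left relation holds precisely when $\inpro{u}{R}{x \Comp f} \leq 1$, and the right relation holds precisely when $\inpro{f \Comp u}{S}{x} \leq 1$. Thus the biconditional follows immediately once the two inner products are shown to be equal, with no manipulation of the inequality required.

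Second, the equality $\inpro{u}{R}{x \Comp f} = \inpro{f \Comp u}{S}{x}$ is exactly the adjunction of Lemma \ref{adjn}, read with the identifications $\mc{Y} := R$, $\mc{X} := S$, kernel $\kappa := f$, measurable function $u$ and measure $x$. Then $\kappa_* x = x \Comp f$ and $\kappa^* u = f \Comp u$, so Lemma \ref{adjn} yields $\inpro{u}{R}{\kappa_* x} = \inpro{\kappa^* u}{S}{x}$, which is the desired identity. This establishes (\ref{exadiort}), and Lemma \ref{adjort} then delivers that $\bot_{\mc{X}}$ is a reciprocal orthogonality.

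The only point demanding care—rather than a genuine obstacle—is the bookkeeping forced by working in the opposite category $\opTKersfin$: one must keep straight that a map $\mc{I} \longrightarrow \cdot$ is a measurable function while a map $\cdot \longrightarrow \mc{I}$ is a measure, and that composition in $\opTKersfin$ is converse to that of $\TKersfin$, so that $x \Comp f$ is indeed the precomposed measure $\kappa_* x$ and $f \Comp u$ the postcomposed function $\kappa^* u$. Once these identifications are made correctly, the analytic content—an application of Fubini-Tonelli—is already absorbed into Lemma \ref{adjn}, so nothing further is needed.
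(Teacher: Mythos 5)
Your proof is correct and follows the same route as the paper: reduce to the reciprocity condition via Lemma \ref{adjort} and derive that condition from the adjunction of Lemma \ref{adjn}, instantiated with $\mc{Y}:=R$, $\mc{X}:=S$, $\kappa:=f$. Your write-up is in fact more explicit than the paper's one-line proof, which (evidently by a typo) cites Proposition \ref{botrecipro} itself where it means Lemma \ref{adjn}; the instantiation you spell out is exactly the step the paper leaves implicit.
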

\begin{proof}
By Lemma \ref{adjort},
it suffices to show that the relation is reciprocal to satisfy the
condition (\ref{exadiort}),
 which is derived by Lemma \ref{adjn}.
\end{proof}

\bigskip

The orthogonality of Definition \ref{ortint}
gives rise to the following full subcategory
of $\GopTKersfin$. 
\begin{defn}[slack category $\Sla{\opTKersfin}$ (cf. \cite{HSha} for
the general definition)] \label{slati}~~~{\em
The {\em slack orthogonality category} $\Sla{\opTKersfin}$
is the full subcategory of $\GopTKersfin$ on those objects 
$(\mc{X},U,R)$ such that $U \subseteq R^\circ$ and
$R \subseteq U^\circ$.
}
\end{defn}


\begin{exam}[objects of $\Sla{\opTKersfin}$]
\footnote{The example
 is not a prerequisite for the rest.} \\
{\em 
The following independent examples (i) and (ii) guarantee that 
the slack category of Definition \ref{slati}
is not degenerate so that $U$ and $R$ become in general continuous.

\smallskip

\noindent (i)
When $(X, {\cal X})$ is the Borel-field
$(\mathbb{R}, \mc{B}(\mathbb{R}))$,
let $\mu$ be the Lebesgue measure over $\mc{B}(\mathbb{R})$
and $f$ be a density function of a probability distribution on
${\cal X}$. That is, $f$ is Lebesgue integrable such that 
$\int_{- \infty}^a f(x) d \mu = \mu((-\infty, a ])$ 
for a certain probability measure $\mu$ on $\mc{X}$ with any $a$.
A particular example is well known for the
exponential distribution with rate $r$, for which $f$
is given by $f(x)= r e^{- r x}$ for $x \geq 0$,
where it equals zero otherwise.
\begin{align*}
\mbox{Define} \quad \quad \textstyle U_f :=\{ \chi_{(a,b)}  f  \mid a < b \in \mathbb{R}  \}
\quad \mbox{and} \quad 
R_\mu := \{  
\lambda A \in \mc{X} . \, \mu (A \cap (a,b))
\mid a < b \in \mathbb{R}  \}, 
\end{align*}
in which $\chi_{(a,b)}$ is the characteristic function of an interval $(a,b)$. \\
Then $(\mc{X},U_f,R_\mu)$ is an object of the slack category.
\begin{align*}
\textstyle  \int_{-\infty}^{\infty} (\chi_{(a,b)} f)(x) \mu(dx \cap (a',b'))
=\int_a^b f (x) \mu(dx \cap (a',b'))
= \int_{(a,b) \cap (a',b')} f d \mu \leq \int_{-\infty}^\infty f d \mu
=1 
\end{align*}
The last equality and inequality are
the properties of density functions. 

\smallskip

The density function for a probabilistic distribution is widely recognised
for providing the {\em importance sampling} procedure in
probabilistic programming. 
This procedure involves drawing a sample from a distribution based on its density function, which describes the likelihood of a scored point. 
For syntax and semantics
on sampling concerning the score, refer to \cite{BLGS, Stat, HKSY}.
In this context, our second component $U_f$ may be seen to comprise sampling procedures whose $\chi_{(a,b)}$
function serves to confine the sampling space within the interval in 
$\mathbb{R}$. 
Each sampling choice is implemented through a double-glueing morphism $\kappa$
from the slack tensor unit 
$(\mc{I}, \{ \operatorname{Id}_{\mc{I}}  \},
\{ \operatorname{Id}_{\mc{I}}  \}^{\circ}
)$ (cf. Lemma \ref{monoprosla} below)
to the  object as dictated by the first
morphism condition $\kappa \Comp \operatorname{Id}_{\mc{I}} \in U_f$
in Definition \ref{CatGop}, while the second condition, becoming the
orthogonality  
$\mu' \Comp \kappa = \int_{a}^b \kappa(x) \mu(dx) \leq 1$ for $\mu' \in R_\mu$,
is automatic by the slack object.
Note that a kernel $\kappa$ of this type in
$\opTKersfin$ is identified with a measurable function
on $\mc{X}$.

\bigskip

\noindent (ii)
Given any measurable space $\ms{X}$ and any s-finite measure $\mu$ (i.e., an element $\opTKersfin (\mathcal{X}, \mathcal{I})$).
$L_1^+ (\mathcal{X}, \mu)$ denotes the subclass of non-negative measurable $h$
whose $L_1$-norm is finite; i.e.,
$\lVert h \rVert := \int_X  h  \, d \mu < \infty$. 
Note that 
$L_1^+ (\mathcal{X}, \mu)$ is contained in $\mf{X}$ 
defined in Proposition \ref{lemchar}.
Recall in the proposition that a transition kernel $\kappa: X \times \mc{X} \rightarrow \zeroinf$
induces the endomap $\kappa^*$ on $\mf{X}$, hence
on $L_1^+ (\mathcal{X}, \mu)$.
In this example, $\kappa$ is called a $\mu$-{\em 
contraction} when $\lVert \kappa^* g \rVert \leq 
\lVert g \rVert$ for any $g \in L_1^+ (\mathcal{X}, \mu)$.
That is,  $\kappa^*$ on $L_1^+ (\mathcal{X}, \mu)$
contracts the norm. 

Fix an arbitrary  
$f \in \opTKersfin (\mathcal{I}, \mathcal{X})$ 
such that $\lVert f \rVert \leq 1$. 
Define
\begin{align}
 U_f:=\{ \kappa^* f \mid \mbox{$\kappa :
\mc{X} \rightarrow \mc{X}$ is a contraction s-finite kernel} \}
\quad \mbox{and}  \nonumber \\ 
R_\mu := \{ \kappa_* \mu \mid
\mbox{$\kappa :
\mc{X} \rightarrow \mc{X}$
is a contraction s-finite kernel} \}  \label{UR}
\end{align}
Note that $f \in U_f$ and $\mu \in R_\mu$ as
we may take $\kappa$ to be the unit kernel which is a $\mu$-contraction. \\
$(\mc{X},U_f,R_\mu)$ is an object of the slack category as 
the two inclusions of Definition \ref{slati} hold by the following
for any $\mu$-contraction kernels $\kappa$ and $\tau$ on $\mc{X}$: 
\begin{align*} 
\inpro{\kappa^* f}{\mc{X}}{\tau_* \mu} =
\inpro{\tau^* \kappa^* f}{\mc{X}}{\mu} =
\inpro{ (\kappa \Comp \tau)^* f}{\mc{X}}{\mu} \leq 
\inpro{f}{\mc{X}}{\mu}  \leq 1 
\end{align*}
The first equality is by reciprocity and third inequality is by preservation of contraction under kernel composition.

\smallskip

\smallskip
\noindent Contraction kernels $\kappa$'s in (\ref{UR})
are exemplified in terms of conditional expectations for a probability measure $\mu$:
Given a probability space $(\ms{X}, \mu)$ and 
a sub $\sigma$-field $\mc{G}$ of $\mc{X}$,
the {\em conditional expectation} 
$E[g \! \mid \! \mc{G}]$ of $g \in L_1^+ (\ms{X}, \mu)$
is the $\mc{G}$-measurable function
such that $\int_{A} E[g \! \mid \! \mc{G}] d \mu = \int_A g  d \mu$ for all measurable $A \in \mc{G}$.  
Then the following inequality is directly derivable 
$$\lVert E[g \! \mid \!  \mc{G}] \rVert \leq \lVert g \rVert $$ 
from the well know property $E[E[g \! \mid \mc{G}]]=E[g]$
together with Jensen's inequality $\abs{E[g \! \mid \mc{G}]} \leq E[\abs{\! g \!} \mid \mc{G}]$  (cf. respectively (15.6) and (15.12) of \cite{Bau}).

The inequality means the conditional expectation {\em contracts} the $L_1$-norm:
\begin{align}
E[\sim \mid \mc{G}]: L_1^+ ((X, \mc{X}), \mu) \longrightarrow 
L_1^+ ((X, \mc{G}), \mu) \quad \quad g \longmapsto E[g \! \mid \!  \mc{G}]
\label{condE}
\end{align}
 The conditional expectation map (\ref{condE})
is known to be linear, positive and preserving monotone convergence
(cf. Section 15 of \cite{Bau} in particular (15.13)
for the  conditional monotone convergence).
These
conditions happen to be the same as those for morphisms in $\Mes$ used in 
Proposition \ref{lemchar}. Hence the same argument applies to conclude that
the map (\ref{condE}) coincides with $\kappa^*$ on $L_1^+ (\ms{X}, \mu)$
for certain kernel $\kappa$.

\smallskip

Consider when ${\cal X}={\cal B}(\mathbb{R})$ and 
$\mc{G}$ is generated by the Borel subsets of
an interval $(a,b)$. Then 
the conditional expectation of (\ref{condE})
is 
$$\textstyle 
E[g \! \mid \!   \mc{G}]= \frac{1}{b-a} \chi_{(a,b)} \int_a^b  g(t) dt$$ 
This determines the action $\kappa^*$ on functions,
and correspondingly  
the action $\kappa_*$ on measures by means of  
$\kappa_* \mu (B)=\int_{X} \kappa^* \chi_B (x) \mu (dx)$
(cf. (\ref{kapchi})). In particular, 
$U_f$ and $R_\mu$ of (\ref{UR}) contain  
the following respective uncountable subsets:
\begin{align*}
\textstyle \{  \frac{1}{b-a} \chi_{(a,b)} \int_a^b  f(t) dt
\mid a < b \in \mathbb{R}  \}
\quad \mbox{and} \quad 
\{  \lambda A \in \mc{X} . \,  \mu (A \cap (a,b))
\mid a < b \in \mathbb{R}  \}
\end{align*}


\smallskip

\smallskip
The above construction of the conditional expectation
is generalised when
a given measure $\mu$ is s-finite. We may write
$\mu = \Sigma_i \mu_i$ with each $\mu_i$ being a probability measure.
For each probability space $((X, \mc{X}), \mu_i)$, the conditional expectation
$E_i[\sim \mid \mc{G}]$ yields an endomap $(\kappa_i)^*$
on $\mf{X}$ for certain contraction kernel $\kappa_i$.
Then s-finite $\kappa:= \sum_i \kappa_i$ becomes a contraction s-finite kernel
by virtue of the property on the mixture of measures
$\int_X g \, d (\sum_i \mu_i) = 
\sum_i (\int_X g \, d \mu_i$).
}
\end{exam}

\bigskip
\begin{lem}[monoidal product in $\Sla{\opTKersfin}$] \label{monoprosla}
$\Sla{\opTKersfin}$ is a monoidal subcategory of $\GopTKersfin$
with the tensor unit
$(\mc{I}, \{ \operatorname{Id}_{\mc{I}}  \},
\{ \operatorname{Id}_{\mc{I}}  \}^{\circ}
)$.
\end{lem}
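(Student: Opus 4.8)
The plan is to exploit two facts. First, for an object $(\mc{X},U,R)$ of $\GopTKersfin$ the two slackness conditions $U \subseteq \crc{R}$ and $R \subseteq \crc{U}$ are literally the same statement, namely that every $f \in U$ is orthogonal to every $\mu \in R$ (by the symmetric form of Definition \ref{ortint}, $\ort{f}{\mc{X}}{\mu}$ just means $\int_X f\,d\mu \leq 1$); I will write this mutual condition as $U \perp R$. Second, the reciprocity of $\bot_{\mc{X}}$ established in Proposition \ref{botrecipro}. I would first record that the proposed unit $\mathbf{J} := (\mc{I}, \{\operatorname{Id}_{\mc{I}}\}, \crc{\{\operatorname{Id}_{\mc{I}}\}})$ is itself slack: its third component equals $\crc{\{\operatorname{Id}_{\mc{I}}\}}$, so $R \subseteq \crc{U}$ holds by definition, while $U = \{\operatorname{Id}_{\mc{I}}\} \subseteq \ccrc{\{\operatorname{Id}_{\mc{I}}\}} = \crc{R}$ is the inflationary inclusion of the closure operator.

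For closure under $\otimes$, I would take slack objects $\bm{\mc{X}} = (\mc{X},U,R)$ and $\bm{\mc{Y}} = (\mc{Y},V,S)$, form their $\GopTKersfin$-tensor $(\mc{X} \otimes \mc{Y}, U \otimes V, T)$ via Proposition \ref{catGTKer}, and verify $U \otimes V \perp T$. Fixing $f \in U$, $g \in V$ and $\nu \in T$, I factor $f \otimes g = (\delta_{\mc{X}} \otimes g) \Comp f$ through $\mc{X} \cong \mc{X} \otimes \mc{I}$, so that reciprocity (\ref{exadiort}) turns the desired $\ort{f \otimes g}{\mc{X} \otimes \mc{Y}}{\nu}$ into $\ort{f}{\mc{X}}{\nu \Comp (\delta_{\mc{X}} \otimes g)}$. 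But the defining clause of $T$ forces $\nu \Comp (\delta_{\mc{X}} \otimes g) \in R$, and since $f \in U$ with $U \perp R$ by slackness of $\bm{\mc{X}}$, this orthogonality holds. Hence $U \otimes V \subseteq \crc{T}$ and $T \subseteq \crc{(U \otimes V)}$, so $\bm{\mc{X}} \otimes \bm{\mc{Y}}$ is slack.

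It remains to check $\mathbf{J}$ is a unit. I would compute $\bm{\mc{X}} \otimes \mathbf{J}$ directly: its carrier is $\mc{X} \otimes \mc{I} \cong \mc{X}$, its second component is $U \otimes \{\operatorname{Id}_{\mc{I}}\} \cong U$, and its third component $T$ consists of those measures $\nu$ on $\mc{X} \otimes \mc{I} \cong \mc{X}$ meeting the two constraints of the tensor formula. The clause indexed by $g = \operatorname{Id}_{\mc{I}}$ reads $\nu \Comp (\delta_{\mc{X}} \otimes \operatorname{Id}_{\mc{I}}) \in R$, that is $\nu \in R$; the clause indexed by $f \in U$ reads $\nu \Comp (f \otimes \delta_{\mc{I}}) = \int_{X} f\,d\nu \in \crc{\{\operatorname{Id}_{\mc{I}}\}}$, that is $\int_X f\,d\nu \leq 1$ for all $f \in U$, which says exactly $\nu \in \crc{U}$. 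Thus $T = R \cap \crc{U}$, and slackness of $\bm{\mc{X}}$ gives $R \subseteq \crc{U}$, whence $T = R$ and $\bm{\mc{X}} \otimes \mathbf{J} \cong \bm{\mc{X}}$. The associator, symmetry and unitors restrict from the monoidal structure of $\GopTKersfin$ (Proposition \ref{catGTKer}); as $\Sla{\opTKersfin}$ is full and closed under $\otimes$, these are morphisms of the subcategory and the coherence diagrams commute there.

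The main obstacle I anticipate is the unit rather than the tensor. The monoidal unit $(\mc{I}, \{\operatorname{Id}_{\mc{I}}\}, \opTKersfin(\mc{I},\mc{I}))$ of $\GopTKersfin$ is \emph{not} slack, since its third component contains scalars exceeding $1$. One is therefore forced to shrink that component to $\crc{\{\operatorname{Id}_{\mc{I}}\}}$, and the entire unit law rests on the slack inequality $R \subseteq \crc{U}$ collapsing the computed $T = R \cap \crc{U}$ back to $R$; without slackness this identification fails.
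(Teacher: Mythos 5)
Your proof is correct and its core — taking $\nu$ from the third component $T$ of the tensor, using the defining clause $\nu^*(\delta_{\mc{X}}\otimes g)\in R$ together with the slackness inclusion $R\subseteq U^{\circ}$, and transferring the orthogonality along the factorisation $f\otimes g=(\delta_{\mc{X}}\otimes g)\Comp f$ via reciprocity — is exactly the paper's argument, which merely packages that reciprocity step as the tensor condition of Definition \ref{monoort} (and so quotes both clauses of $T$ where your iff-form of reciprocity needs only one). Your additional explicit verification that the $\GopTKersfin$-unit must be shrunk to $(\mc{I},\{\operatorname{Id}_{\mc{I}}\},\{\operatorname{Id}_{\mc{I}}\}^{\circ})$ to be slack, and that the unit law then reduces to $T=R\cap U^{\circ}=R$, is a correct detail the paper leaves implicit.
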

\begin{proof}
By virtue of the tensor condition for orthogonality,
the monoidal product of
Proposition \ref{catGTKer} is shown closed in the subcategory
$\Sla{\opTKersfin}$:
It suffices to show  that the third component of
$(\mc{X}, U, R) \,  \otimes \,  (\mc{Y}, V, S)$ is perpendicular to $U
 \otimes V$.
Take $\nu$ from the third component, then
$\forall f \in U \, \nu^* (f \otimes \delta_{\mc{Y}}) \in S \subseteq V^{\circ}$
and
$\forall g \in V \,
\nu^* (\delta_{\mc{X}} \otimes g) \in R \subseteq U^{\circ}$,
but which implies
$\ort{\nu}{\mc{X} \otimes \mc{Y}}{f \otimes g}$
by the tensor condition of the orthogonality
of Definition \ref{monoort}.
\end{proof}

\smallskip

In \cite{HSha},
to obtain an exponential structure as well as an additive one
for the slack category,
Hyland-Schalk employ
certain relevant structure maps for a general category $\mc{C}$.
We remark that $\mc{C}=\opTKersfin$ in this paper,
automatically validates their structure maps:

\begin{rem}[Hyland-Schalk's positive and negative maps
are implicated by reciprocity] \label{roipn}
{\em Hyland-Schalk
(in Definition 51 of \cite{HSha}) call a map $f: R \rightarrow S$
is {\em positive} (resp. {\em negative}) with respect to
$U \subseteq \mc{C}(I,R)$ and $Y \subseteq \mc{C}(S,I)$
when $\ort{f \Comp u}{S}{y}$ implies (resp. is implied by)
$\ort{u}{R}{y \Comp f}$ for all $u \in U, y \in Y$.
When $U$ and $Y$ are the whole homsets,
they say positive and negative outright.
When a map is both positive and negative, it is called {\em focused}.
We remark that our reciprocity (\ref{exadiort}) in $\opTKersfin$ 
ensures these property on maps: That is,
if an orthogonality is reciprocal, then any map is
automatically both positive and negative hence focused
with respect to any $U$ and $Y$.
}\end{rem}

\smallskip

By the remark, the slack category $\Sla{\opTKersfin}$ over $\opTKersfin$,
has the product and coproduct,
and moreover the exponential comonad as follows.

\begin{prop}[product and coproduct in $\Sla{\opTKersfin}$] \label{slpcp}~
The slack category $\Sla{\opTKersfin}$ is closed under the product
and the coproduct in
$\GopTKersfin$ of Proposition \ref{catGTKer}.
\end{prop}
\begin{proof}
By Hyland-Schalk's Propositions 52 in \cite{HSha}
because their presupposition  
positivity (res. negativity) of the projection (resp. injection)
of product (resp. coproduct) is entailed from
our reciprocal orthogonality condition
by Remark \ref{roipn}. 
\end{proof}

\begin{prop}[exponential comonad on the slack category] \label{propslack}
$\Sla{\opTKersfin}$ has the following exponential comonad:
$$
!(\mc{X}, U, R) =
(!\mc{X}, \natkappa{\mc{X}}{U}, ?R)
$$
where $?R$ is defined as in Proposition \ref{gluex},
but the clause (b) is replaced by;
$$
\{ \chi \cdot \wk{\mc{X}} \mid \ort{\operatorname{Id}_{\mc{I}}}{\mc{I}}{\chi} \}
\subseteq ?R
$$
\end{prop}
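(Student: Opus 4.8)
The plan is to exhibit $!\bm{\mc{X}} = (!\mc{X}, \natkappa{\mc{X}}{U}, ?R)$ together with the glueing comonad and comonoid maps of Proposition~\ref{gluex} as living inside the full subcategory $\Sla{\opTKersfin}$. Since $\Sla{\opTKersfin}$ is a full monoidal subcategory of $\GopTKersfin$ (Lemma~\ref{monoprosla}) closed under the products and coproducts of Proposition~\ref{sltipcp}, and since every comonad and comonoid coherence is already an equation in $\GopTKersfin$ inherited from Theorem~\ref{linexcomo} via Proposition~\ref{gluex}, the only genuinely new content is twofold: (1) the object $!\bm{\mc{X}}$ is slack, and (2) the maps $\di{}$, $\stor{}$, $\wk{}$, $\con{}$, $\mon{\mc{X}}{\mc{Y}}$, $\moni$ remain $\GopTKersfin$-morphisms with respect to the \emph{modified} third component $?R$, whose base clause (b) is enlarged from $\wk{\mc{X}} \in ?R$ to the downward family $\{ \chi \cdot \wk{\mc{X}} \mid \ort{\operatorname{Id}_{\mc{I}}}{\mc{I}}{\chi} \}$. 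Fullness of the subcategory then promotes these to morphisms of $\Sla{\opTKersfin}$ for free, and the laws are inherited.

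For the slackness of $!\bm{\mc{X}}$, first I would observe that, because $\bot_{\mc{X}_e}$ is one relation between $\opTKersfin(\mc{I}, \mc{X}_e)$ and $\opTKersfin(\mc{X}_e, \mc{I})$, the two defining inclusions $\natkappa{\mc{X}}{U} \subseteq \crc{(?R)}$ and $?R \subseteq \crc{(\natkappa{\mc{X}}{U})}$ both assert the single statement that $\ort{\natkappa{\mc{X}}{u}}{\mc{X}_e}{\psi}$ holds for every $u \in U$ and $\psi \in ?R$; hence it suffices to prove $?R \subseteq \crc{(\natkappa{\mc{X}}{U})}$. As $?R$ is the \emph{smallest} subset of $\opTKersfin(\mc{X}_e, \mc{I})$ closed under clauses (a), (b), (c), I would run an induction on this generation, checking that $\crc{(\natkappa{\mc{X}}{U})}$ contains the generators and is stable under the closure clause; minimality then gives the inclusion. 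The three verifications are exactly the three parts of Lemma~\ref{lempropslack}: clause (a) feeds on the slackness $R \subseteq \crc{U}$ of $\bm{\mc{X}}$, giving $\ort{\mu}{\mc{X}}{f}$ for $\mu \in R$, $f \in U$, whence Lemma~\ref{lempropslack}(a) yields $(\di{\mc{X}})_* \mu \in \crc{(\natkappa{\mc{X}}{U})}$; the enlarged base clause (b) is Lemma~\ref{lempropslack}(b); and the inductive closure under contraction is Lemma~\ref{lempropslack}(c).

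It then remains to confirm that the structure maps respect the modified objects. Their effect on the second (measurable-function) components is handled uniformly and independently of clause (b) by the linear-distributivity identities of Lemma~\ref{ldnatkap} together with $(\di{\mc{X}})^* (\natkappa{\mc{X}}{u}) = u$ from Lemma~\ref{lemortex}. Their effect on the third (measure) components is precisely what the inductive shape of $?R$ is engineered to deliver: clause (a) makes $\di{\mc{X}}$ send $R$ into $?R$, clause (b) makes $\wk{\mc{X}}$ a morphism, clause (c) does the same for $\con{\mc{X}}$, and digging $\stor{\mc{X}}$ is treated as in the proof of Proposition~\ref{gluex}. Everything else is the transported comonad of Theorem~\ref{linexcomo}.

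The hard part will be the interplay between the \emph{enlarged} clause (b) and the minimality that defines $?R$. Passing from the single generator $\wk{\mc{X}}$ to the whole scalar-damped family $\{ \chi \cdot \wk{\mc{X}} \}$ strictly enlarges $?R$, so I must confirm that this enlargement neither destroys the orthogonality $?R \subseteq \crc{(\natkappa{\mc{X}}{U})}$ (this is exactly the content of Lemma~\ref{lempropslack}(b)) nor disturbs the closure conditions that keep $\stor{\mc{X}}$ and $\con{\mc{X}}$ morphisms into the doubly-exponentiated objects. Showing that the smallest closed set $?R$ is simultaneously orthogonally closed and stable under the comonad maps is the delicate balance, and it is here that the reciprocity of $\bot$ (Proposition~\ref{botrecipro}), channelled through all three parts of Lemma~\ref{lempropslack}, does the decisive work.
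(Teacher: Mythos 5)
Your proposal is correct and follows essentially the same route as the paper: the core step is exactly the paper's, namely showing $\natkappa{\mc{X}}{U} \perp ?R$ by induction on the generating clauses (a), (b), (c) of $?R$, with the three parts of Lemma~\ref{lempropslack} discharging the three clauses (the paper notes, as you do, that slackness reduces to this single orthogonality statement). The remainder — that the structure maps respect the modified objects and the comonad laws are inherited — is what the paper delegates to Proposition 53 of Hyland--Schalk, which you merely unfold slightly more explicitly via Proposition~\ref{gluex}, Lemmas~\ref{ldnatkap} and~\ref{lemortex}, and fullness of the slack subcategory.
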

\begin{proof}
By Proposition 53 of \cite{HSha}, which supposition on the
positivity of the three structure maps $\di{}$,
$\wk{}$ and $\con{}$ (for $\mathsf{k}$ ) is a direct consequence in
$\opTKersfin$ by Remark \ref{roipn}
\end{proof}

\section{Discretisation $\TKersfinomg$
and Probabilistic Coherent Spaces} \label{closedTK}
This section starts with considering a discrete (i.e., countable) restriction
of the transition kernels within the transition matrices.
The restriction makes the integral for the categorical composition
into simpler algebraic sum, and turns out to
give an involution in the full subcategory $\TKeromg$
of the countable measurable spaces. The involution is directly
shown to imply dagger compact closedness of the subcategory $\TKersfinomg$.
Second, the double glueing is constructed over the dagger compact
closed category, so that a $*$-autonomous structure is obtained.
Finally, the orthogonality of the previous section is extended
over the involution, and the tight orthogonality subcategory 
$\Ti{\TKersfinomg}$ of the
double glueing is shown to coincide with Danos-Ehrhard's category of
probabilistic coherent spaces \cite{DE}.

\subsection{Involution in $\TKeromg$ and Dager Compact Closed $\TKersfinomg$}
When the set $Y$ of $\ms{Y}$ is countable, the integral of the composition 
(\ref{compTKer}) is replaced by the cruder sum:
\begin{align}
\iota \Comp \kappa (x,C) = &
\sum_{y \in Y} \kappa(x, \{ y \}) \iota(y,C) \label{compTKeromg}
\end{align}
In the countable case, $\kappa(x, \{ y \})$ is written
simply by $\kappa(x, y)$,
and the collection $\left( \kappa (x,y) \right)_{x \in X, y \in Y}$ is
called {\em a transition matrix},
as the composition (\ref{compTKeromg}) becomes the matrix
multiplication,  under the same countable condition making $\iota(y,C)$
into $\iota(y,\{ c \})$. This yields the full subcategory $\TKeromg$
consisting of
the countable measurable spaces in $\TKer$.

\begin{defn}[$\TKeromg$]{\em
A measurable space $\ms{X}$ is {\em countable} when
the set $X$ is countable.
$\TKeromg$ is the full subcategory whose objects are the countable measurable
 spaces in $\TKer$. Then the morphisms of $\TKeromg$ are characterised as
the transition matrices between two countable measurable spaces.
}\end{defn}

\smallskip

\begin{prop}[involution $(~)^*$] \label{inv}~\\
$\TKeromg$ is a dagger category \cite{Seli}
with the following self involutive
functor $(~)^*$, which is contravariant and
the identity on the objects. 
\end{prop}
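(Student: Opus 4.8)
The plan is to define the involution on morphisms by matrix transposition and then verify the four dagger conditions by direct computation, exploiting that all matrix entries live in $\zeroinf$ so that every countable sum converges unconditionally (to a possibly infinite value). Concretely, for a morphism $\kappa : \ms{X} \longrightarrow \ms{Y}$ of $\TKeromg$, presented as the transition matrix $(\kappa(x,y))_{x \in X,\, y \in Y}$, I would set $\kappa^*(y,x) := \kappa(x,y)$ and extend it to a kernel by $\kappa^*(y,A) := \sum_{x \in A} \kappa(x,y)$ for $A \in \mc{X}$. First I would check that $\kappa^*$ is again a morphism of $\TKeromg$: each row $\kappa^*(y,-)$ is a countable sum of non-negative terms, hence $\sigma$-additive, so a measure on $\ms{X}$, while measurability in the first argument is automatic for a countable space. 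This is precisely the internalisation of the contravariant functor $(~)^*$ of Proposition \ref{contraeq}: on a countable space the elements of $\mf{X}$ are the $\zeroinf$-valued sequences, and $\kappa^* f(x) = \sum_y f(y)\,\kappa(x,y)$ is multiplication by the transposed matrix, so under the matrix identification of $\Mes$ with $\TKeromg$ the functor $(~)^*$ restricts to transposition and thereby becomes an endofunctor.

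Next I would verify the dagger axioms. Identity on objects is built into the definition, and involutivity $(\kappa^*)^* = \kappa$ is immediate from double transposition. Preservation of identities holds because the Dirac identity kernel is the symmetric Kronecker matrix $\delta(x,y)$, so $\delta^* = \delta$. The only step with computational content is contravariant functoriality: for $\kappa : \ms{X} \longrightarrow \ms{Y}$ and $\iota : \ms{Y} \longrightarrow \ms{Z}$ I would expand, using the countable composition (\ref{compTKeromg}),
\[
(\iota \Comp \kappa)^*(z,x) = (\iota \Comp \kappa)(x,z) = \sum_{y \in Y} \kappa(x,y)\,\iota(y,z) = \sum_{y \in Y} \iota^*(z,y)\,\kappa^*(y,x) = (\kappa^* \Comp \iota^*)(z,x),
\]
which gives $(\iota \Comp \kappa)^* = \kappa^* \Comp \iota^*$ and completes the contravariance.

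The point requiring care — rather than a genuine obstacle — is the legitimacy of the sum manipulations when $X$ or $Y$ is countably infinite. Since all entries are non-negative extended reals, each such sum is an unconditionally convergent element of $\zeroinf$ (equivalently an integral against counting measure), so the single index-swap above and the rearrangement in extending $\kappa^*$ to a measure are justified with no finiteness hypothesis; this is just the discrete shadow of the Fubini--Tonelli reasoning used elsewhere in the paper. I would close by remarking that the matrix presentation of $\TKeromg$-morphisms is faithful exactly because singletons are measurable in a countable space, which is what makes transposition a well-defined operation on morphisms in the first place and lets $(~)^*$ descend to a self-involution fixing every object.
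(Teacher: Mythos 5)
Your proposal is correct and follows essentially the same route as the paper, which simply defines $\kappa^*$ as the matrix transpose and notes (in the remark following the proposition) that this internalises the contravariant equivalence of Proposition~\ref{contraeq}; you merely spell out the routine verifications (contravariant functoriality via the index swap over $Y$, $\delta^*=\delta$, involutivity) that the paper leaves implicit. The only point worth flagging is that your appeal to ``singletons are measurable in a countable space'' tacitly assumes the $\sigma$-field on a countable set is the full power set, but this is the same convention the paper itself relies on when identifying $\TKeromg$-morphisms with transition matrices, so it is not a gap relative to the paper's own argument.
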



\noindent (On morphisms)
For a transition matrix $\kappa : \ms{X} \longrightarrow \ms{Y}$,
$\kappa^* : \ms{Y} \longrightarrow \ms{X}$ is given by the 
{\em transpose} of the matrix:
$$\begin{aligned}
\left( \, \kappa^*(y,x) \, \right)_{y \in Y, x \in X}
:=\left( \, {}^t \! \kappa (x,y) \, \right)_{x \in X, y \in Y}
\end{aligned}$$


\begin{rem}{\em 
The involution $(~)^*$  is an {\em internalisation} of the contravariant
 equivalence of
Proposition \ref{contraeq} restricting the subcategory $\TKeromg$.
}\end{rem}

The involution of Proposition \ref{inv} with the monoidal product
directly yields the compact closed structure of the subcategory 
$\TKersfinomg$ of $\TKeromg$.
\begin{prop}[dagger compact closed category $\TKersfinomg$] \label{dccc}
Let $\TKersfinomg$ be the full subcategory of $\TKersfin$
consisting of the countable measurable spaces.
Then the dagger $\TKersfinomg$
of Proposition \ref{inv} becomes compact closed whose 
dual object (and its extension to the contravariant functor)
is given by the involution $(~)^*$. In particular, the monoidal closed
structure
is by one to one correspondence between the transition matrices
$\kappa((x,y),z)$ and $\kappa(x,(y,z))$;
$$\begin{aligned}
\TKersfinomg (\mc{X} \otimes \mc{Y},  \mc{Z} ) \, \cong \,  
\TKersfinomg (\mc{X}
, \mc{Y}^* \otimes \mc{Z} )
\end{aligned}$$
\end{prop}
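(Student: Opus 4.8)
The plan is to exploit the matrix description of $\TKersfinomg$ furnished by (\ref{compTKeromg}): a morphism $\mc{X} \longrightarrow \mc{Y}$ is a $\zeroinf$-valued matrix $(\kappa(x,y))_{x\in X, y\in Y}$, composition is matrix multiplication, and $\operatorname{id}_{\mc{X}}$ is the Kronecker matrix $\delta_{x,x'}$. On top of the symmetric monoidal structure of $\TKersfin$ (Definition \ref{moncat}), which restricts to $\TKersfinomg$ because finite products of countable sets are countable and the coherence isomorphisms are Kronecker (hence finite, Markov) matrices, I would equip each object with the self-dual $\mc{X}^* = \mc{X}$ — the involution $(~)^*$ of Proposition \ref{inv} is the identity on objects — together with a unit and counit, and then verify the snake equations, the dagger axioms, and finally read off the closed structure as a reindexing of matrix entries.

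Concretely, I would first define the unit (cup) $\eta_{\mc{X}} : \mc{I} \longrightarrow \mc{X} \otimes \mc{X}$ and counit (cap) $\varepsilon_{\mc{X}} : \mc{X} \otimes \mc{X} \longrightarrow \mc{I}$ by
$$\eta_{\mc{X}}(*,(x,x')) := \delta_{x,x'}, \qquad \varepsilon_{\mc{X}}((x,x'),*) := \delta_{x,x'},$$
i.e.\ the counting measure concentrated on the diagonal of $X \times X$. These are legitimate morphisms of $\TKersfinomg$: for fixed $*$, $\eta_{\mc{X}}(*,-) = \sum_{x\in X}\delta((x,x),-)$ is a countable sum of Dirac (hence finite) kernels, so it is s-finite by Definition \ref{sfinker}, and symmetrically for $\varepsilon_{\mc{X}}$. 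In fact the same argument shows that every countable transition matrix decomposes as a countable sum of single-entry finite kernels, so s-finiteness is automatic in the countable setting; in particular the transpose $(~)^*$ preserves s-finiteness, which is exactly what is needed for the dagger of Proposition \ref{inv} to restrict to an endofunctor on $\TKersfinomg$.

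Next I would verify the two zigzag identities, e.g.\ $(\varepsilon_{\mc{X}} \otimes \operatorname{id}_{\mc{X}}) \Comp (\operatorname{id}_{\mc{X}} \otimes \eta_{\mc{X}}) = \operatorname{id}_{\mc{X}}$ and its mirror, by a direct matrix computation: composing via (\ref{compTKeromg}) produces $\sum_{x'}\delta_{x,x'}\,\delta_{x',x''} = \delta_{x,x''}$, collapsing to the identity matrix. This makes $\TKersfinomg$ compact closed with $\mc{X}^* = \mc{X}$. To upgrade to \emph{dagger} compact closed in the sense of \cite{Seli}, I would observe that $(~)^*$ is a symmetric monoidal dagger functor — contravariant, involutive and identity on objects by Proposition \ref{inv}, satisfying $(\kappa \otimes \tau)^* = \kappa^* \otimes \tau^*$ since transpose distributes over the Kronecker product, and sending the associator, unitors and symmetry (all permutation/Kronecker matrices) to their inverses, so these are unitary — and that the cap is the transpose of the cup, $\varepsilon_{\mc{X}} = (\eta_{\mc{X}})^*$, the required symmetry condition holding because $\delta_{x,x'}$ is symmetric in $x,x'$.

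Finally, the closed structure is the compact-closed internal hom $\mc{Y} \multimap \mc{Z} := \mc{Y}^* \otimes \mc{Z}$; since $\mc{Y}^* = \mc{Y}$, the asserted bijection $\TKersfinomg(\mc{X} \otimes \mc{Y}, \mc{Z}) \cong \TKersfinomg(\mc{X}, \mc{Y}^* \otimes \mc{Z})$ is realised on the nose by the canonical reindexing $((x,y),z) \leftrightarrow (x,(y,z))$ of matrix entries, $\kappa((x,y),z) \mapsto \kappa(x,(y,z))$, which is a bijection preserving every entry (hence s-finiteness), with naturality in $\mc{X}$ and $\mc{Z}$ following from associativity of matrix multiplication. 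I do not expect a deep obstacle: each axiom reduces to a finite manipulation of Kronecker deltas. The only point demanding genuine care is the interplay with the s-finiteness constraint — that the cup, cap, and the transposes of morphisms never leave the s-finite subcategory — which is why I would foreground the observation that, for countable spaces, s-finiteness is automatic.
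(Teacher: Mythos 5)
Your proposal is correct and follows essentially the same route as the paper's proof: the unit/counit are the diagonal Kronecker matrices $\delta_{x_1,x_2}$ with the counit the transpose of the unit, dagger compactness comes from the symmetry of this delta under transpose, and the closed structure is the reindexing $\kappa((x,y),z) \leftrightarrow \kappa(x,(y,z))$ of matrix entries. The only difference is that you spell out the verifications the paper leaves implicit (the snake equations, s-finiteness of the cup and cap, and monoidality of $(~)^*$), all of which are correct.
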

\begin{proof}
The unit $\phi_{\mc{X}}: \mc{I} \rightarrow \mc{X}^* \otimes \mc{X}$
for the compact closedness
is given by the matrix whose element for each $(*, (x_1, x_2)) \in
I \times (X \times X)$ is
$\phi_{\mc{X}}(*, (x_1,x_2))= \delta_{x_1, x_2}$. The 
counit $\psi_{\mc{X}}:
\mc{X} \otimes \mc{X}^*   \rightarrow \mc{I}$
is the transpose matrix of the unit.
This directly yields 
the dagger compact closedness
$\phi_{\mc{X}} = \sigma_{\mc{X}, \mc{X}^*} \Comp (\psi_{\mc{X}})^*$.
\end{proof}

\smallskip

\begin{rem}[$\TKersfinomg = \TKeromg $]{\em
$\TKersfinomg$ coincides with $\TKeromg$.
As the former is a wide subcategory of the latter,
we need to check the fullness for the subcategory:
Every morphism in $\TKeromg$ is a transition matrix
$\left( \kappa (x,y) \right)_{x \in X, y \in Y}$,
whose each element $\kappa (x,y)$ is approximated as a countable sum
$\sum_{i \in \mathbb{N}} \kappa_i (x,y)$
with finite $\kappa_i (x,y)$'s.
Thus $\kappa=
\sum_{i \in \mathbb{N}} \sum_{(x, y) \in X \times Y}
 \kappa_i (x,y)$,
where each $\kappa_i (x,y)$ determines the transition matrix
$(\delta_{z, (x,y)} \kappa_i (x,y))_{z \in X \times Y} $.
The coincidence of the two categories
means that the discretisation makes the s-finiteness
redundant so that the well behaved monoidal composition is freely 
obtained in $\TKeromg$ with respect to Fubini-Tonelli.
The well behavior is a direct consequence
that the monoidal product and composition become algebraic
when the morphisms of continuous kernels collapse
into transition matrices in $\TKeromg$. 
} \end{rem}
In spite of the remark, in what follows in Subsection \ref{DGPCS},
we continue to use $\TKersfinomg$, whereby the connection
to the continuous case studied in the previous sections is seen
direct.

\subsection{Double Glueing and Probabilistic Coherent Spaces}
\label{DGPCS}

\subsubsection{*-autonomy of  $\GopTKersfinomg$}
The base category considered in this subsection is
$(\opTKersfin)_\omega$ denoting 
the full subcategory of $\opTKersfin$ consisting of the countable measurable spaces, which is equal to $(\TKersfinomg)^{\opsymbol}$. Hence the category is
denoted simply by $\opTKersfinomg$.
When Proposition \ref{catGTKer} takes
$\opTKersfinomg$ as the base category,
the double glueing category $\GopTKersfinomg$  
has a stronger form, inheriting the dagger compact closed
structure of $\opTKersfinomg$ of Proposition \ref{dccc}.

\begin{prop}[*-autonomy of $\GopTKersfinomg$] \label{catGTKeromg}
$\GopTKersfinomg$ is self involutive. 
$$\begin{aligned}
(\mc{X},U,R)^\perp =   (\mc{X}^*,R,U)
 \end{aligned}$$
modulo the equivalence $(~)^*$:
$\opTKersfinomg(\mc{I},\mc{X}) \cong  \opTKersfinomg(\mc{X},\mc{I})
$.
Moreover,
$\GopTKersfinomg$ becomes *-autonomous, whose 
monoidal closedness is given by the following implication 
in terms of the involution and the cotensor $\parr$:
\end{prop}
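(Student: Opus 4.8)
The plan is to proceed in three stages: define the involution $(-)^\perp$ on objects and morphisms of $\GopTKersfinomg$ and verify it is contravariant and self-inverse; establish the De Morgan law tying it to the tensor $\otimes$ and cotensor $\parr$ of Section \ref{dg}; and from these deduce $*$-autonomy. For an object $\bm{\mc{X}}=(\mc{X},U,R)$ I would set $\bm{\mc{X}}^\perp := (\mc{X}^*,R,U)$, reading $U$ and $R$ across the identification $\opTKersfinomg(\mc{I},\mc{X}) \cong \opTKersfinomg(\mc{X},\mc{I})$ furnished by the transpose $(~)^*$ of Proposition \ref{inv}; since that involution is the identity on objects, $\mc{X}^*=\mc{X}$ and the operation merely swaps the function- and measure-components. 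On a glueing morphism $\kappa:\bm{\mc{X}}\to\bm{\mc{Y}}$ I would take $\bm{\kappa}^\perp := \kappa^* : \bm{\mc{Y}}^\perp \to \bm{\mc{X}}^\perp$. That $\kappa^*$ again meets the two glueing conditions is precisely the functoriality of the transpose $(\mu \Comp \kappa)^* = \kappa^* \Comp \mu^*$, equivalently the adjunction $\inpro{f}{\mc{Y}}{\kappa_*\mu}=\inpro{\kappa^* f}{\mc{X}}{\mu}$ of Lemma \ref{adjn}, which interchanges the ``push forward of functions'' requirement with the ``pull back of measures'' requirement. Involutivity $\bm{\mc{X}}^{\perp\perp}=\bm{\mc{X}}$ and $(\kappa^*)^*=\kappa$ are then immediate from Proposition \ref{inv}.

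Next I would prove the De Morgan identity $(\bm{\mc{X}}\otimes\bm{\mc{Y}})^\perp \cong \bm{\mc{X}}^\perp \parr \bm{\mc{Y}}^\perp$. Unfolding the constructions of Section \ref{dg}, the left-hand side is $((\mc{X}\otimes\mc{Y})^*,\,T,\,U\otimes V)$ and the right-hand side is $(\mc{X}^*\otimes\mc{Y}^*,\,W,\,U\otimes V)$, so the third components coincide and it remains to identify the second components $T$ and $W$. By definition $T$ collects the $\nu$ with $\nu^*(f\otimes\delta_{\mc{Y}})\in S$ for all $f\in U$ and $\nu^*(\delta_{\mc{X}}\otimes g)\in R$ for all $g\in V$, while $W$ collects the $h$ with $(\kappa\otimes\delta_{\mc{Y}})^* h\in S$ for all $\kappa\in U$ and $(\delta_{\mc{X}}\otimes\tau)^* h\in R$ for all $\tau\in V$. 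Transposing $\nu\leftrightarrow \nu^*=h$ and using $\delta^*=\delta$ together with the functoriality of $(~)^*$ on each tensor factor sends each condition defining $T$ to the corresponding condition defining $W$; this rectangle-by-rectangle matching, anchored by the compact-closed unit and counit of Proposition \ref{dccc}, is the one genuinely computational point.

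Finally I would set $\bm{\mc{X}}\multimap\bm{\mc{Y}} := \bm{\mc{X}}^\perp \parr \bm{\mc{Y}}$ and exhibit the natural bijection
\[
\GopTKersfinomg(\bm{\mc{X}}\otimes\bm{\mc{Y}},\,\bm{\mc{Z}}) \;\cong\; \GopTKersfinomg(\bm{\mc{X}},\,\bm{\mc{Y}}^\perp \parr \bm{\mc{Z}}),
\]
transporting the compact-closed adjunction $\opTKersfinomg(\mc{X}\otimes\mc{Y},\mc{Z})\cong\opTKersfinomg(\mc{X},\mc{Y}^*\otimes\mc{Z})$ of Proposition \ref{dccc} along the forgetful functor: the De Morgan law and the involution guarantee that an underlying base morphism respects the glued data on one side exactly when its transpose does on the other, with the dualizing object being $\mc{I}^\perp$. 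Equivalently, this is the instance $\mc{C}=\opTKersfinomg$ of the Hyland-Schalk theorem \cite{HSha} that double glueing over a compact closed (hence $*$-autonomous) base is itself $*$-autonomous. I expect the main obstacle to lie in the De Morgan step: keeping straight which of $\kappa^*$ and $\kappa_*$, and which of pre- and post-composition, correspond under the transpose, and checking that the resulting identification of $T$ with $W$ is natural in $\bm{\mc{X}}$ and $\bm{\mc{Y}}$ rather than merely a bijection of underlying sets.
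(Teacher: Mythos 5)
Your proposal is correct, but it is worth noting that the paper supplies essentially no proof text for this proposition: its justification is the remark opening Section \ref{DGPCS} that $\GopTKersfinomg$ ``inherits'' the involutive and closed structures of the base from Proposition \ref{dccc}, i.e.\ it invokes the instance $\mc{C}=\opTKersfinomg$ of Hyland--Schalk's general result \cite{HSha} that double glueing over a $*$-autonomous (here compact closed) base is again $*$-autonomous, and then simply displays the resulting formulas for $(\mc{X},U,R)^\perp$ and $\multimap$, together with the note that $W$ is the homset. You instead verify the structure by hand: the involution swapping the two glued components on objects, the transpose on morphisms, the De Morgan identity $(\bm{\mc{X}}\otimes\bm{\mc{Y}})^\perp \cong \bm{\mc{X}}^\perp \parr \bm{\mc{Y}}^\perp$ matching the tensor's third component $T$ against the cotensor's second component $W$, and the hom-set bijection; this last identification is exactly the content behind the paper's closing note that $W$ represents $\GopTKersfinomg(\bm{\mc{X}},\bm{\mc{Y}})$. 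What your route buys is self-containedness and an explicit record of where compact closedness and the identity-on-objects property of $(~)^*$ are used; what the paper's route buys is brevity and the coherence conditions of $*$-autonomy (associativity, naturality of the currying bijection) for free from the general theorem, which your sketch defers to the end. One small correction: the fact that the transpose $\kappa^*$ of a glueing morphism is again a glueing morphism $\bm{\mc{Y}}^\perp \to \bm{\mc{X}}^\perp$ rests on the contravariant (dagger) functoriality $(\mu \Comp \kappa)^* = \kappa^* \Comp \mu^*$ of Proposition \ref{inv}, not literally on Lemma \ref{adjn}; the adjunction $\inpro{f}{\mc{Y}}{\kappa_* \mu} = \inpro{\kappa^* f}{\mc{X}}{\mu}$ is the scalar-valued shadow of that functoriality and is what drives the orthogonality of Section \ref{ort}, but the bare glueing conditions concern membership of composites in $U,V,R,S$, for which dagger functoriality is the statement you actually need.
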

\noindent (implication)  
\vspace{-1ex}
\begin{align*}
& \bm{\mc{X}} \multimap \bm{\mc{Y}}:=
 \bm{\mc{X}}^\bot \parr \bm{\mc{Y}} = 
(\mc{X}^* \otimes \mc{Y}, W, U^* \otimes S),
\text{where} \\
& W=
\{ \nu : \mc{I} \longrightarrow \mc{X}^* \otimes \mc{Y}   \mid
\xymatrix{\mc{I} \ar[r]^(.4){\forall  \kappa \in U} &  \mc{X}}, 
\quad
\nu^* (\kappa^* \otimes \delta_{\mc{Y}}) \in V \quad \text{and} \quad 
\xymatrix{\mc{Y} \ar[r]^(.4){\forall g \in S} & \mc{I}}, \, \,
\quad 
\nu^* (\delta_{\mc{X}} \otimes g) \in R
\} 
\\
& U^* \otimes S = \{ 
 f^* \otimes
g : \mc{X}^* \otimes \mc{Y} \longrightarrow
\mc{I} \otimes \mc{I}   \cong  \mc{I}
   \mid f \in U \, \,  g \in S \}  
\end{align*}
Note that $W$ represents the homset $\GopTKersfinomg( \bm{\mc{X}}, \bm{\mc{Y}})$.

\bigskip
A direct corollary is obtained when the slack category studied in
Section \ref{ort} is restricted to the discrete measurable spaces.
The corollary states that 
the slack subcategory is a model of classical
linear logic. 

\begin{cor}
The slack orthogonality subcategory $\Sla{\opTKersfinomg}$ is 
defined to be the full subcategory of
$\Sla{\opTKersfin}$ whose $\mc{X}$ of Definition \ref{slati} 
is an object from $\opTKersfinomg$. Then $\Sla{\opTKersfinomg}$ 
is *-autonomous with product (hence coproduct), and equipped with
linear exponential comonad.
\end{cor}
\begin{proof}
It suffices to show the following (i),(ii) and (iii),
but whose latter two are direct and
(i) is by Theorem 54 of \cite{HSha},
whose supposition on the
positive (resp. negative) projections (resp. injections)
and the three positive  structure maps $\di{}$, $\wk{}$ and $\con{}$ (for $\mathsf{k}$ ) is by Remark \ref{roipn}: 
(i) The *-autonomy of the slack category $\Sla{\opTKersfinomg}$ 
is inherited from that of $\GopTKersfinomg$
of Proposition \ref{catGTKeromg} by Lemma \ref{monoprosla}. 
(ii) The product of Proposition \ref{slpcp}
is closed in the discrete subcategory. 
(iii) The linear exponential comonad in
Proposition \ref{propslack} 
is closed in the discrete subcategory.
\end{proof}

\subsubsection{Exponential Comonad for Tight Category $\Ti{\opTKersfinomg}$}
This parts starts with introducing the tight orthogonality
subcategory $\Ti{\opTKersfinomg}$ of $\GopTKersfinomg$,
and shows that the subcategory  is also a 
categorical model of linear logic.


\smallskip

\begin{defn}[tight
 category $\Ti{\opTKersfinomg}$ (cf. Definition 47 \cite{HSha} for
the general definition)] ~~~{\em
The {\em tight orthogonality category} $\Ti{\opTKersfinomg}$
is the full subcategory of $\GopTKersfinomg$ on those objects 
$(\mc{X},U,R)$ such that $U = R^\circ$ and
$R = U^\circ$.
}
\end{defn}

\smallskip
\begin{exam}[objects of $\Ti{\opTKersfinomg}$]
{\em 
Let $\mc{C}=\opTKersfinomg$:
For any subset $U$ of
$\mc{C} (I, X)$, 
$(X, \ccrc{U}, \crc{U})$ becomes an object of the tight
$\Ti{\mc{C}}$.
Dually for any $R \subset \mc{C} (X,I) $,
$(X, \crc{R}, \ccrc{R})$ becomes an object of the tight
$\Ti{\mc{C}}$.}
\end{exam}

A general lemma is prepared 
on the orthogonality broadly for the continuous $\opTKersfin$.

\begin{lem}[stable tensor] \label{sten}
Any reciprocal orthogonality on a monoidal category $\mc{C}$ 
{\em stabilises} the monoidal product: That is, 
For all $U \subseteq \mc{C}(I,R)$ and $V \subseteq \mc{C}(I,S)$, \\
\noindent (stable tensor)
\begin{align*} 
\crc{(\ccrc{U} \otimes \ccrc{V})} =
\crc{(\ccrc{U} \otimes V)} = \crc{(U \otimes \ccrc{V})}
\end{align*}
Hence in particular, the orthogonality in $\opTKersfin$
stabilises the monoidal product.
\end{lem} 
\begin{proof}
We prove ($\supset$) of the stable tensor condition as the converse is
 tautological. Take any $\nu \in RHS$, which means
$\forall f \in \ccrc{U} \, \forall g \in V $ $\ort{f \otimes g
= (f \otimes S) \Comp (I \otimes g)}{R \otimes S}
{R \otimes S \stackrel{\nu}{\longrightarrow} J}$
iff by reciprocity $\ort{g}{S}
{S \cong I \otimes S \stackrel{f \otimes S}{\longrightarrow} R \otimes S
\stackrel{\nu}{\longrightarrow} J}$. But this means 
$\forall h \in \ccrc{V} \, \, \ort{h}{S}{\nu \Comp (f \otimes S)}$
iff by reciprocity
$\ort{f \otimes h=(f \otimes S) \Comp (I \otimes h)}{R \otimes S}{\nu}$, which means $\nu \in LHS$.
\end{proof}

In the presence of the involution in $\opTKersfinomg$, it is necessary to
enhance the coherence of the orthogonality relation in
 Definition \ref{monoort} to account for the involution $(~)^*$. This augmented condition is called as  "symmetry orthogonality" within the general context in \cite{HSha}.

\noindent 
(symmetry)
Given $u: \mc{I} \longrightarrow \mc{X}$ and 
$v: \mc{X} \longrightarrow \mc{I}$,
$$\begin{aligned}
 \ort{u}{\mc{X}}{v} \quad \text{iff} \quad  \ort{v^*}{\mc{X}^*}{u^*}
\end{aligned}$$

Then this additional condition is satisfied in $\opTKersfinomg$,
to yield the following proposition, corresponding to
Proposition \ref{botrecipro}.
\begin{prop} \label{symmort}
$\bot$ is a symmetric orthogonal relation in $\opTKersfinomg$.
\end{prop}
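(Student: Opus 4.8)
The plan is to reduce everything to the reciprocal orthogonality already established on $\opTKersfin$ and then to discharge only the one genuinely new clause, the involution condition. First I would observe that $\opTKersfinomg$ is a full subcategory of $\opTKersfin$ and that the relation $\bot_{\mc{X}}$ of Definition \ref{ortint} is defined there by exactly the same recipe $\inpro{f}{\mc{X}}{\mu} \leq 1$; in the countable setting the integral $\int_X f\,d\mu$ is merely the non-negative sum $\sum_{x \in X} f(x)\,\mu(\{x\})$. The adjunction of Lemma \ref{adjn} between $\kappa^*$ and $\kappa_*$ holds verbatim in the subcategory, the Fubini--Tonelli step degenerating to an unconditional rearrangement of a non-negative double sum, so the reciprocity condition (\ref{exadiort}) is inherited. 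By Proposition \ref{botrecipro} together with Lemma \ref{adjort}, the isomorphism, identity and tensor clauses of Definition \ref{monoort} therefore hold in $\opTKersfinomg$ with no further work.

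It then remains to verify the involution clause: for $u: \mc{I} \longrightarrow \mc{X}$ and $v: \mc{X} \longrightarrow \mc{I}$, one has $\ort{u}{\mc{X}}{v}$ iff $\ort{v^*}{\mc{X}^*}{u^*}$. Here I would unwind the conventions: in $\opTKersfinomg$ the map $u$ is a measurable function, $v$ a measure, and the involution $(~)^*$ of Proposition \ref{inv} is the matrix transpose, identity on objects, carrying a function $(u(x))_{x}$ to the measure with the same entries and conversely. Using $\mc{I}^* \cong \mc{I}$, the map $v^*: \mc{I} \longrightarrow \mc{X}^*$ is the measurable function $x \mapsto v(\{x\})$ and $u^*: \mc{X}^* \longrightarrow \mc{I}$ is the measure $\{x\} \mapsto u(x)$. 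Hence
\begin{align*}
\inpro{v^*}{\mc{X}^*}{u^*} = \sum_{x \in X} v^*(x)\,u^*(\{x\}) = \sum_{x \in X} v(\{x\})\,u(x) = \inpro{u}{\mc{X}}{v},
\end{align*}
so the defining inequality $\leq 1$ transfers in both directions and the clause holds.

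The only real care needed --- what I would flag as the crux rather than a genuine difficulty --- is the bookkeeping of opposite-category directions against the contravariant, value-preserving involution, in particular checking that $(~)^*$ sends the pairing $\inpro{\,\cdot\,}{\mc{X}}{\,\cdot\,}$ to itself rather than to a twisted form, and that the types match via $\mc{I}^* \cong \mc{I}$. Once it is noted that the transpose preserves each matrix entry and that the pairing is the symmetric bilinear form $\sum_{x} u(x)v(x)$, the involution condition is a one-line identity. No closure or limiting argument is required, since in $\TKeromg = \TKersfinomg$ the pairing is a plain countable sum of non-negative reals, so the whole statement follows by restriction of Proposition \ref{botrecipro} plus this single symmetry check.
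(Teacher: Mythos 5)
Your proof is correct and follows essentially the same route as the paper's: both treat the isomorphism, identity and tensor clauses as already settled by the reciprocal orthogonality of Proposition \ref{botrecipro} (restricted to the full monoidal subcategory $\opTKersfinomg$), so that only the new involution clause needs checking. The only difference is presentational: where you verify that clause by the entry-wise computation $\inpro{v^*}{\mc{X}^*}{u^*} = \sum_{x} v(\{x\})\,u(x) = \inpro{u}{\mc{X}}{v}$, the paper runs the same identity abstractly as $\inpro{f}{\mc{X}}{\mu} = f \Comp \mu = (f \Comp \mu)^* = \mu^* \Comp f^* = \inpro{\mu^*}{\mc{X}^*}{f^*}$, using that the transpose fixes scalars and reverses composition.
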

\begin{proof}
The additional condition of symmetry is checked:  
$$\begin{aligned}
\inpro{f}{\mc{X}}{\mu}
= f \Comp \mu
= (f \Comp \mu)^*
= \mu^* \Comp f^*
= \inpro{\mu^*}{\mc{X^*}}{f^*}
\end{aligned}$$
The second equality is because the transpose of 
the scalars (i.e., of the homset $\opTKersfinomg(\mc{I},\mc{I})$)
is the identity.
\end{proof}

Moreover, $\opTKersfinomg$ has a monoidal closed  structure,
so the the coherence of the orthogonality on
implication of \cite{HSha} needs
to be augmented in Definition \ref{monoort}:

\noindent (implication) \\
Given $u: I \longrightarrow R$,
$y: S \longrightarrow I$ and $f: R \longrightarrow S$,
\begin{align}
\ort{u}{R}{y \Comp f} \quad \mbox{and} \quad 
\ort{f \Comp u}{S}{y}
\quad \mbox{imply} \quad 
\ort{\hat{f}}{R \multimap S}{u \multimap y} \label{ortimp}
\end{align}
where $\hat{f}$ is the transpose of $f$.

\smallskip

The strengthened condition is called
{\em precise implication} when ``imply'' is replaced by ``iff''
(same as the tensor condition was called).

\smallskip

\begin{lem} \label{impder}
The implication orthogonality is derivable from the symmetry orthogonality
together with the tensor one. 
Moreover the precise implication is derivable when the tensor condition is precise. Hence, in particular $\opTKersfinomg$
validates the precise implication condition.
\end{lem} 
\begin{proof}

Apply the symmetry condition to the descendant of (\ref{ortimp}),
then $\ort{y^* \otimes u}{S^* \otimes R}{(\hat{f})^*}$.
This happens to be a descendant of (tensor)  whose two antecedents are
$\ort{y^*}{S^*}{(\hat{f})^* \Comp (S^* \otimes u)}$ and 
$\ort{u}{R}{(\hat{f})^* \Comp (y^* \otimes R)}$.
But (RHS)$^*$ of the first orthogonality is
$(u \multimap S) \Comp \hat{f}=
\widehat{(f \Comp u)}=f \Comp u$ and (RHS)$^{**}$ of the second orthogonality
is $((R \multimap y) \Comp \hat{y})^*=(\widehat{y \Comp f})^*=y \Comp f$.
Hence the two orthogonality are the antecedent of the tensor condition.
The second assertion is by the reversibility of the above argument
using precise tensor.
The third assertion is by Lemma \ref{sten}.
\end{proof}

\smallskip

\begin{lem}
If an orthogonality is symmetric,
the stable tensor implies the stable implication;
That is, for all $U \subseteq \mc{C}(I,R)$ and $Y \subseteq \mc{C}(S,I)$, \\
\noindent (stable implication)
\begin{align*} 
\crc{(\ccrc{U}  \multimap  \ccrc{Y})} =
\crc{(U \multimap \ccrc{Y})} = \crc{(\ccrc{U} \multimap Y)}
\end{align*}
Hence the orthogonality in $\opTKersfinomg$ stabilises the implication.
\end{lem}
\begin{proof}
The *-autonomy 
$X \multimap Z = (X \otimes Z^*)^*$
of $\opTKersfinomg$ 
makes the stable implication into a stable tensor via 
$(U^\circ)^*=(U^*)^\circ$, which equality is obtained directly
by 
the symmetry orthogonality.
The second assertion is by Lemma \ref{sten} and Proposition \ref{symmort}.
\end{proof}

\smallskip
An orthogonality is called {\em stable} when it satisfies both stable tensor and
stable implication.

\smallskip
The rest of this part is devoted to observing that
the stable orthogonality of $\opTKersfinomg$
ensures, accordingly to Hyland-Schalk \cite{HSha},
an exponential comonad on $\Ti{\opTKersfinomg}$
as well as both monoidal product and product and coproduct.
That is,
\begin{thm}
The tight category
$\Ti{\opTKersfinomg}$
is a model of classical linear logic.
\end{thm}
\begin{proof}
By the following three Propositions
 \ref{monoproti}, \ref{tipcp}, and \ref{expcmti}.
\end{proof}

\begin{prop}[monoidal product in the tight orthogonality category]
\label{monoproti}
The tight category
$\Ti{\opTKersfinomg}$ has the following monoidal product so that
the forgetful to $\opTKersfinomg$ preserves the structure: 
$$(\mc{X}, U, R) \otimes (\mc{Y}, V, S)
= (\mc{X} \otimes \mc{Y}, (U \otimes V)^{\circ \circ},
(U \otimes V)^{\circ})$$
with the tensor unit $(\mc{I}, \{ \operatorname{Id}_{\mc{I}}  \}^{\circ \circ},
\{ \operatorname{Id}_{\mc{I}}  \}^{\circ}
)$.
Moreover, the monoidal structure is closed with the following
linear implication space so that
the forgetful to $\opTKersfinomg$ preserves the structure: 
$$(\mc{X}, U, R) \multimap (\mc{Y}, V, S)
= (\mc{X} \multimap \mc{Y}, (U \multimap S)^{\circ},
(U \multimap S)^{\circ \circ})$$
\end{prop}
\begin{proof}
By Proposition 61 of Hyland-Schalk \cite{HSha} for 
a general $\mc{C}$ with stable orthogonality,
which proposition is
applicable to $\mc{C}=\opTKersfinomg$ thanks to Lemma \ref{impder}.
\end{proof}

\smallskip

\begin{prop}[product and coproduct in $\Ti{\opTKersfinomg}$] \label{tipcp}~
The tight category $\Ti{\opTKersfinomg}$ has the following product and coproduct so that the forgetful functor to $\opTKersfinomg$ preserves the
structures:
$$\begin{aligned}
(\mc{X}, U, R) \,  \& \,  (\mc{Y}, V, S)
& = 
(\mc{X} \amalg \mc{Y}, U \& V, (U \& V)^\circ) \\
(\mc{X}, U, R) \,  \oplus \,  (\mc{Y}, V, S)
& = 
(\mc{X} \amalg \mc{Y}, (R \oplus S)^\circ, R \oplus S)
 \end{aligned}$$
Note by the self involution of the glueing (acting *, hence identity
on the first component,
and flipping second and third components), $\&$ and $\oplus$
are mutually definable each other.
\end{prop}
\begin{proof}
By Hyland-Schalk's Proposition 63 in \cite{HSha}
for a general ${\cal C}$ with stable orthogonality
because their presupposition  
positivity (res. negativity) of the projection (resp. injection)
of product (resp. coproduct) is entailed from
our reciprocal orthogonality condition
by Remark \ref{roipn}. 
\end{proof}

\smallskip

%

%

Finally, exponential structure for the tight category $\Ti{\opTKersfinomg}$
is obtained as an instance of Hyland-Schalk general construction for
$\Ti{\mc{C}}$ in \cite{HSha}.
\begin{prop}[exponential comonad on $\Ti{\opTKersfinomg}$] \label{expcmti}
$\Ti{\opTKersfinomg}$ has the following exponential comonad
so that the forgetful to $\opTKersfinomg$ preserves the structure:
$$
!(\mc{X}, U, R) =
(!\mc{X}, \natkappa{\mc{X}}{U}^{\circ \circ}, \natkappa{\mc{X}}{U}^{\circ})
$$
\end{prop}
\begin{proof}
By Theorem 65 of \cite{HSha}
for a general monoidal category with a stable orthogonality
because their presupposition for the theorem (described below)
is automatically derived by Remark \ref{roipn}:
All the structure
maps $\di{}, \stor{}, \wk{}, \con{}$ for linear exponential comonad
and all maps of
 the form ?
are positive for $\mathsf{k}$ (i.e., with respect to
$\forall U \, \,  \natkappa{\mc{X}}{U}$ and $\mc{C}(I,I)$)
and the product projections are focused.
\end{proof}

\subsubsection{$\Pcoh$ and its Equivalence 
to $\Ti{\opTKersfinomg}$}



\begin{defn}[$\Pcoh$ \cite{DE, Crubille}]{\em
The definition of the 
Danos-Ehrhard's category $\Pcoh$
of {\em probabilistic coherent spaces}
starts with the {\em inner product} and the {\em polar}:
}\end{defn}

\noindent (inner product)
$\inprocoh{x}{x'} := \sum_{a \in A} x_a x'_a$, for $x, x' \subseteq
\mathbb{R}_{+}^{A}$ with a countable set $A$.

\noindent (polar)
$P^\bot \! := \{ x' \in \mathbb{R}_+^A \mid \forall x \in P \, 
\inprocoh{x}{x'} \leq 1 \}$ for $P \subseteq \mathbb{R}_+^A$. 

\smallskip
Then $\Pcoh$ is defined as follows:

\noindent (object)
$X=(\abs{\! X \!},\Po{X})$, 
where $\abs{\! X \!}$ is a countable set, $\Po{X} \subseteq
 \mathbb{R}_+^{\abs{\, X \,}}$ such that $\Po{X}^{\bot \bot} \subseteq
 \Po{X}$, and $0 <  \sup \{ x_a  \mid  x \in \Po{X} \}
< \infty$ for all $a \in \abs{\! X \!}$.

\smallskip

\noindent (morphism)
A morphism from $X$ to $Y$ is an element $u \in \Po{(X \otimes
Y^\bot)^\bot}$,
which can be seen as a matrix $(u)_{a \in \abs{X}, b \in \abs{Y}}$
of columns from $\abs{\! X \!}$ and of rows from $\abs{\! Y \!}$.
Composition is the product of two matrices such that
$(uv)_{a, c}= \sum_{b \in \abs{Y}} u_{a,b} v_{b,c}$
for $u: X \longrightarrow Y$ and $v: Y \longrightarrow Z$.

\smallskip

\noindent(dual) 
$X^\bot =(\abs{\! X \!},\Po{X}^\bot)$ and
$u^\bot\in \Pcoh(Y^\bot, X^\bot)$ is the transpose of
a matrix $u \in \Pcoh(X,Y)$.

\smallskip

\noindent(tensor $\otimes$) \\
$X \otimes Y = (\abs{\! X \!} \times \abs{\! Y \!}, \{ x \otimes y \mid x \in
   \Po{X} \, \,  y \in \Po{Y}   \}^{\bot \bot})$. \\
For $u \in \Pcoh(X_1,Y_1)$ and $v \in \Pcoh(X_2,Y_2)$,
$u \otimes v \in \Pcoh(X_1 \otimes X_2,Y_1 \otimes Y_2)$ is 
$(u \otimes v)_{(a_1,a_2), (b_1,b_2)}= u_{a_1,b_1} v_{a_2,b_2}$.

\smallskip

\noindent (product $\&$)\\
$X_1 \& X_2 = (\abs{\! X_1 \!} \biguplus \abs{\! X_2 \!}, \{ x \in \mathbb{R}_+^{\biguplus_i
   \abs{X_i}} \mid  \forall i \, \, \pi_i(x) \in \Po{(X_i)}  \} )$, \\
where $\pi_i(x)_a$ is $x_{(i,a)}$.
$\Po{(X_1 \& X_2)}$ becomes automatically closed under the bipolar.

\smallskip

\noindent (exponential $\!$) 
The original definition using finite multisets is
rewritten by the exponential monoid and the counting function
in Section \ref{subsecems} of this paper. \\
$\! X = ( \, \abs{\! X \!}_e, \{ x^! \mid x \in \Po{X} \}^{\bot \bot} \, ) $,
where 
\begin{align} 
\textstyle x^! (\msbf{a}):= \prod \limits_{a \in \abs{X}}
 x_a^{\msbf{a}(a)} \label{defbangpcoh}
\end{align}
This is well defined as the $\msbf{a}$'s support
set $\{ a \in \abs{\! X \!} \, \, \mid \, \, \msbf{a}(a) \not =0 \}$
is finite. When $\msbf{a}$ is explicitly written by
$\msbf{a}=a_1 \cdots a_k$
with $a_i \in \abs{\! X \!}$,  
$x^! (\msbf{a})= \prod_{i=1}^k x_{a_i}$ since
each $a_i$ has $\msbf{a}(a_i)$-times
multiplicity in $\msbf{a}$.

\bigskip

\noindent
For $t \in \Pcoh (X,Y)$, $! t \in \Pcoh (!X,!Y)$
is defined by 
\begin{align}
\textstyle (!t)_{\msbf{a},\msbf{b}}:=
\sum \limits_{\msbf{c} \in L(\msbf{a}, \msbf{b})}
\frac{\msbf{b}!}{\msbf{c}!}
t^{\msbf{c}} \label{impexphcst}
\end{align}
in which for $\msbf{a} \in \abs{X}_e$ and  $\msbf{b} \in
\abs{Y}_e$: \\
-- $L(\msbf{a},\msbf{b}):=
\left\{ \msbf{c} \in (\abs{\! X \!} \times \abs{\! Y \!})_e \mid
\begin{array}{l}
\forall \, a \in \abs{X} \sum_{b \in \abs{Y}} \msbf{c}((a,b)) = \msbf{a}
(a)  \\
\forall \, b \in \abs{Y} \sum_{a \in \abs{X}} \msbf{c}((a,b)) = \msbf{b}(b)
\end{array}  
 \right\}$ \\
--
$\msbf{b} ! :=
\prod_{b \in \abs{Y}} \msbf{b}(b)!$ and 
$\msbf{c} ! :=\prod_{(a, b) \in \abs{X} \times \abs{Y}}
\msbf{c}((a,b))!$

\smallskip

\noindent 
To be explicit , when $\msbf{a}$ and 
$\msbf{b}$ are  given explicitly by
$\msbf{a}=a_1 \cdots a_n$
and $\msbf{b}=b_1 \cdots b_n$,
(\ref{impexphcst}) is written 
\begin{align}
\textstyle (!t)_{\msbf{a}, \msbf{b}} & 
 = \textstyle \sum \limits_{ \sigma \in \mathfrak{S}_n / S_{\vec{\msbf{a}}}}
\, \, \prod_{i=1}^n t_{a_{\sigma(i)},b_{i}} \label{exphcst}
 \end{align}
in which $S_{\vec{\msbf{a}}}$ denotes the {\em stabiliser subgroup}  of
$\mathfrak{S}_n$ at $\vec{\msbf{{a}}}:=(a_1,
 \ldots , a_n)$
defined by 
\begin{align*}
S_{\vec{\msbf{{a}}}} :=  \{ \sigma \in \mathfrak{S}_n \mid a_i = a_{\sigma(i)}
  \, \,   \forall i =1, \ldots, n \}  
\end{align*}
Note that the definition (\ref{exphcst}) does not depend on the ordering
$\vec{\msbf{{a}}}$ of $\msbf{{a}}$ for the stabiliser subgroup
$S_{\vec{\msbf{a}}}$ as 
$\mathfrak{S}_n / S_{\sigma(\vec{\msbf{a}})} =
\sigma(\mathfrak{S}_n / S_{\vec{\msbf{a}}})$ for any permutation 
$\sigma \in \mathfrak{S}_n$, hence its action on $\msbf{a}$
is well defined.

\bigskip

The category $\Pcoh$ turns out to be equivalent to the 
*-autonomous $\Ti{\opTKersfinomg}$ with the linear exponential comonad.
The equivalence is by the following Theorem 
\ref{eqpcs} and Proposition \ref{comoeqpcs}.

\begin{thm}[equivalence of $\Pcoh$] \label{eqpcs}
The tight orthogonality category $\Ti{\opTKersfinomg}$
is equivalent to the category $\Pcoh$.
\end{thm} 
\begin{proof}
 The key property for the equivalence is
that the measures (i.e., the homset $\opTKeromg(\mc{X},\mc{I})$)
and the measurable functions (i.e., the homset
 $\opTKeromg(\mc{I},\mc{X})$) become
isomorphic in $\opTKeromg$ by virtue of the involution $(~)^*$,
and furthermore in $\opTKersfinomg$
they both collapse to bounded functions from $X$
to $\mathbb{R}_{+}$, hence residing in $\mathbb{R}_{+}^X$
in $\Pcoh$.\\
The orthogonality $<\,,\,>$ in $\Pcoh$ coincides with $< \, \mid \, >$
in $\opTKersfinomg$, as the integral of Definition \ref{ortint}
collapses to the sum in the subcategory of discrete measurable spaces. \\
An object $X=(\abs{\! X \!},\Po{X})$ in $\Pcoh$ corresponds one to one
to the object
$\bm{\mc{X}}=(X, \Po{X}, (\Po{X})^\circ )$ in $\Ti{\opTKersfinomg}$,
preserving the involution $(~)^\bot$.
Every morphism from $X$ to $Y$ in $\Pcoh$ is by definition
an element $\Po{(X^\bot \parr Y)}$, which is
the second component of $\bm{\mc{X}} \multimap \bm{\mc{Y}}$
in $\Ti{\opTKersfinomg}$.
Composition of $\Pcoh$ is the product of matrices, same as  $\opTKersfinomg$.
E.g., in particular their map ${\sf fun}(u): \Po{X} \longrightarrow \Po{Y}$ for
$u \in \Pcoh (X,Y)$ (cf. Section 1.2.2 \cite{DE})
is written in $\opTKersfinomg$ simply by  
${\sf fun} (x) = u^* x \in \Po{Y}$.
Since the tensor and the additive structures are direct,
only the exponential structure is checked on (i)
objects and on (ii) morphisms. In the both levels,
Danos-Ehrhard's exponential construction in $\Pcoh$ turns out to coincide with that
of Hyland-Schalk for double glueing applied to our $\opTKersfinomg$:  \\
(i) It is shown that
$(-)^{!}: \mathbb{R}^{\abs{X}}
\longrightarrow \mathbb{R}^{\abs{X}_e}$ of (\ref{defbangpcoh})
in $\Pcoh$
defines the same map as $\natkappa{\mc{X}}{-}: 
\opTKersfinomg(\mc{I}, \mc{X})
\longrightarrow 
\opTKersfinomg(\mc{I}, \mc{X}_e)$
of Definition \ref{defnatkappa}:
For $f \in \mathbb{R}_+^{\abs{X}}$,
$f^{!}(\msbf{a})=
\prod_{a \in \abs{X}} f(a)^{\msbf{a}(a)}$. On the other hand in $\opTKersfinomg$,
for $\msbf{a}=a_1 \cdots a_k \in X_e$,
$\natkappa{X}{f} ((*, \msbf{a}))= \prod_{i=1}^k f(*, a_i)$.
Thus $f^!= \natkappa{\mc{X}}{f}$, for which the right $f: \mc{I} \rightarrow \mc{X}$ is identified with a measurable function on $\mc{X}$.

\noindent 
(ii) First note  $\forall \msbf{c} \in L(\msbf{a},\msbf{b}) \, \,
\cunt{\abs{X} \times \abs{Y}}(\msbf{c})=
 \cunt{\abs{X}}(\msbf{a})=
\cunt{\abs{Y}}(\msbf{b})$, whose number is denoted by $r$ such that
$\msbf{a}= a_1 \cdots a_r$ and $\msbf{b} =b_1 \cdots b_r$.
Then in $\opTKersfinomg$, 
\begin{align*}
t_e (\msbf{a},\msbf{b}) &  =t_e (a_1 \cdots a_r, \, b_1 \cdots b_r)
\\ & =
t_e^\bullet ( \FI(a_1 \cdots a_r) , \, (b_1, \ldots, b_r)) 
\tag*{by $(~)_e$ in $\opTKersfinomg$}\\ 
& 
\textstyle =
t^r ( \bigcup \limits_{\sigma \in \mathfrak{S}_r} 
\{ (a_{\sigma(1)}, \ldots, a_{\sigma(r)}) \}, \, (b_1, \ldots, b_r))
\tag*{by the def of $F$} \\ 
& \textstyle
=
t^r ( \! \! \biguplus \limits_{\sigma \in \mathfrak{S}_r/S_{\vec{\msbf{a}}}}
\! \!
\{ (a_{\sigma(1)}, \ldots, a_{\sigma(r)}) \}, \, (b_1, \ldots, b_r)) 
\tag*{thanks to the quotient by $S_{\vec{\msbf{a}}}$}\\
& 
=
\textstyle \sum \limits_{\sigma \in \mathfrak{S}_r/S_{\vec{\msbf{a}}}}
 \! t^r ((a_{\sigma(1)}, \ldots, a_{\sigma(r)}), \, (b_1, \ldots, b_r)) 
\tag*{by $\sigma$-additivity} \\
& 
=
\textstyle \sum \limits_{\sigma \in \mathfrak{S}_r/S_{\vec{\msbf{a}}}}
\prod \limits_{i=1}^r 
 t (a_{\sigma(i)}, \, b_i)
\tag*{by the def of $t^r$ }
\end{align*}
This has shown that $t_e (\msbf{a},\msbf{b})=(! t)_{\msbf{a},\msbf{b}}$.
\end{proof}

\bigskip

The comonad structure of the exponential of $\Pcoh$
is given in \cite{DE} as follows: 

\noindent (dereliction)
$(\di{X})_{\msbf{a}, a} :=
\delta_{\msbf{a}, [a]}$ where $\di{X} \in \Pcoh(!X, X)$

\smallskip
\noindent (storage)
$(\stor{X})_{\msbf{a}, M} := 
\delta_{\msbf{a}, \,  \sum M }$
where $\stor{X} \in \Pcoh(!X,
!!X)$

\bigskip

\begin{prop} \label{comoeqpcs}
The comonad structure of $\Pcoh$
is the discretisation of that of
$\Ti{\opTKersfinomg}$ 
under the equivalence of Theorem \ref{eqpcs}.
\end{prop}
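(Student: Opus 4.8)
The plan is to leverage Theorem~\ref{eqpcs}, which already establishes the equivalence $\Ti{\opTKersfinomg}\simeq\Pcoh$ identifying objects $\bm{\mc{X}}=(X,\Po{X},(\Po{X})^\circ)$ with $X=(\abs{X},\Po{X})$ and, crucially, the two exponential \emph{endofunctors} via $t_e(\msbf{a},\msbf{b})=(!t)_{\msbf{a},\msbf{b}}$. Since a comonad is exactly an endofunctor together with its counit and comultiplication, and the equivalence leaves the underlying transition matrices unchanged (their rows and columns are indexed by the same countable sets), it suffices to check that the discrete forms of the counit $\di{\mc{X}}$ and the comultiplication $\stor{\mc{X}}$ of Propositions~\ref{natder} and~\ref{stor} coincide entrywise with Danos--Ehrhard's $\di{X}$ and $\stor{X}$.

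First, for dereliction, I would discretise Proposition~\ref{natder}. In $\opTKersfinomg$ a measure is determined by its values on singletons, so the matrix entry of $\di{\mc{X}}\colon\mse{X}\longrightarrow\mc{X}$ is
\[
\di{\mc{X}}(\msbf{a},a)=\delta\bigl(\{\msbf{a}\}\cap X^{(1)},\,a\bigr)=\delta_{\msbf{a},[a]},
\]
since $\{\msbf{a}\}\cap X^{(1)}$ is nonempty exactly when $\msbf{a}$ is a singleton multiset $[a']$, in which case it meets the point $a$ iff $a=a'$. This is precisely $(\di{X})_{\msbf{a},a}=\delta_{\msbf{a},[a]}$.

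Second, for storage, I would discretise Proposition~\ref{stor}, for which $\stor{\mc{X}}\colon\mse{X}\longrightarrow\msee{X}$ has matrix entry
\[
\stor{\mc{X}}(\msbf{a},\mathfrak{y})=\delta\bigl(\{\msbf{a}\},\,\abs{\mathfrak{y}}\bigr)=\delta_{\msbf{a},\abs{\mathfrak{y}}}.
\]
The one identification to make explicit is that the map $\abs{-}\colon X_{ee}\longrightarrow X_e$ of Definition~\ref{mapabs}, which flattens a multiset of multisets by concatenation, is exactly the summation $\sum$ of multisets appearing in $\Pcoh$; hence, writing $\mathfrak{y}$ for the element $M\in\abs{X}_{ee}$, we obtain $\stor{\mc{X}}(\msbf{a},M)=\delta_{\msbf{a},\sum M}=(\stor{X})_{\msbf{a},M}$. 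Together with the functor identification of Theorem~\ref{eqpcs}, this shows the two comonads carry the same endofunctor, counit and comultiplication, which is the asserted coincidence.

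I expect the only real obstacle to be bookkeeping rather than mathematics: one must keep the opposite-category conventions straight, so that the left argument (the measure direction) and the right argument (the measurable-function direction) of each morphism of $\opTKersfinomg$ align with the source/target indices of the corresponding $\Pcoh$ matrices, and one must verify carefully that $\abs{-}$ of Definition~\ref{mapabs} and the multiset summation $\sum$ denote the same operation on $X_{ee}$. No analytic input beyond Theorem~\ref{eqpcs} is needed, because the coincidence of the exponential functor and of the integral/orthogonality pairing was already secured there.
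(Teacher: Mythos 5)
Your proposal is correct and follows essentially the same route as the paper: the paper's proof simply asserts that Danos--Ehrhard's dereliction $(\di{X})_{\msbf{a},a}=\delta_{\msbf{a},[a]}$ and storage $(\stor{X})_{\msbf{a},M}=\delta_{\msbf{a},\sum M}$ are the discretisations of the kernels $\di{\mc{X}}(\msbf{A},x)=\delta(\msbf{A}\cap X^{(1)},x)$ and $\stor{\mc{X}}(\msbf{A},\mathfrak{y})=\delta(\msbf{A},\abs{\mathfrak{y}})$ of Propositions~\ref{natder} and~\ref{stor}, which is exactly the entrywise check you carry out (including the identification of the flattening $\abs{-}$ with the multiset sum $\sum$). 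Your added framing---that Theorem~\ref{eqpcs} already matches the endofunctors, so only the counit and comultiplication entries need comparison---is precisely the implicit structure of the paper's one-line argument, made explicit.
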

\begin{proof}
The dereliction and the storage
become directly the discretisation of the corresponding
maps in $\opTKersfin$ (hence of $\Ti{\opTKersfinomg}$),
defined respectively in
Propositions \ref{natder} and \ref{stor}.
\end{proof}

\bigskip

\begin{rem}[The opposite $\opTKersfin$ coincides with
$\Pcoh$'s left enumeration] \label{whyop}
{\em Our choice taking the {\em opposite} of $\TKersfin$ starting from 
Section \ref{comonadTKer} turns out to yield
Danos-Ehrhard's choice of {\em left} enumeration
in formalising the exponential in (\ref{exphcst}).
When the right enumeration is chosen oppositely,
there arises another exponential, say $\natural$, 
\begin{align*}
\textstyle
(\nat{t})_{\msbf{a}, \msbf{b}} := \sum_{\msbf{c} \in L(\msbf{a}, \msbf{b})}
\frac{\msbf{a}!}{\msbf{c}!}
\, t^{\rho} = \sum_{ \sigma \in \mathfrak{S}_n / S_{\vec{\msbf{b}}}}
\, \, \prod_{i=1}^n t_{a_i,b_{\sigma(i)}}
\end{align*}
Compare the first and second formulas respectively with (\ref{impexphcst})
and (\ref{exphcst}) to see the opposite enumeration.
It holds 
$(\nat{t})_{\msbf{a}, \msbf{b}}
=\frac{\msbf{a}!}{\msbf{b}!}  \, \, \,
(\bang{t})_{\msbf{a}, \msbf{b}}$.
The two exponentials become isomorphic \cite{ThomasMail},
by the following natural isomorphism in terms of
the multinomial coefficient $\mn:
\, ! \longrightarrow \natural$, defined by   
\begin{align*}
\quad \quad 
(\mn_X)_{\msbf{a}, \msbf{b}} = \mn (\msbf{a}) \,  \delta_{\msbf{a}, \msbf{b}}
\end{align*}
Remind the multinomial coefficient of $\msbf{a} \in
X_e \cap X^{(n)}$
is defined by $\mn (\msbf{a}) := \frac{n !}{\prod_{a \in X^{(n)}}
 \msbf{a} (a) !}$, 
which number is equal to the cardinality
 $\abs{\mathfrak{S}_{\vec{\msbf{a}}} / S_{\vec{\msbf{a}}}
}$ of the quotient independently of the ordering $\vec{\msbf{a}}$ of $\msbf{a}$.

}\end{rem}

\section{Conclusion}
This paper offers four main contributions: 
\begin{enumerate}
\item[(i)] Presenting a monoidal category $\TKersfin$ of
s-finite transition kernels
between measurable spaces after Staton \cite{Stat}, with countable biproducts.
Showing a construction of exponential kernels in
$\TKersfin$
 by accommodating
the exponential measurable spaces for counting process
into the category.

\item[(ii)]
Constructing a linear exponential comonad over $\opTKersfin$,
modelling the exponential modality
in linear logic.
Although this initiates a continuous linear exponential comonad
employing a general measure theory, 
though we leave it a future work 
on any monoidal closed structure
inside $\TKersfin$ required for modelling the multiplicative fragment
of the logic.

\item[(iii)]
Giving a measure theoretic instance of Hyland-Schalk orthogonality
in terms of an integral between measures and measurable functions.
The instance is inspired by the contravariant equivalence between
	    $\TKer$ of the transition kernels and $\Mes$ of measurable
functions, and realised by adjunction of a kernel acting on both
	   sides.
We examine a monoidal comonad 
in the double glueing $\GopTKersfin$
inheriting from $\opTKersfin$ of (i)
and also that in the slack orthogonality subcategory 
$\Sla{\opTKersfin}$.

\item [(iv)]
(Discretisation of (i), (ii) and (iii)): \\
Obtaining a dagger compact closed category $\TKersfinomg$
when restricting $\TKersfin$ of (i) to the countable measurable spaces.
We show an equivalence of the tight orthogonality category $\Ti{\opTKersfinomg}$
to $\Pcoh$ of probabilistic
coherent spaces by virtue of the discrete collapse of
the orthogonality of (ii) into the linear duality of $\Pcoh$.
\end{enumerate}


We now discuss some future directions.
Our categories $\TKer$ with countable biproducts and 
$\TKersfin$ with tensor
are inspired from the standard measure-theoretic formalisation of
probability theory,
and similarly the linear exponential comonad over $\opTKersfin$
from the counting process for exponential measurable spaces.
We believe our semantics of transition kernels
will provide a general tool for semantics of 
 higher order probabilistic programming languages
such as probabilistic PCF \cite{DE, EPT},
of which $\Pcoh$ is a denotational semantics.
We need to examine a concrete example making
continuous Markov kernels indispensable (rather than
discrete Markov matrices) for interpreting probabilistic
computational reductions
as a stochastic process. 
For this, any monoidal
closed structure fundamental to denotational semantics
needs to be explored in continuous measure spaces.
Recent development \cite{EPT, CruLICS} on CCC extension induced by $\Pcoh$
for continuous probabilities may be seen as a mutual construction
of our construction because ours starts with the continuity 
to obtain $\Pcoh$ as a discretisation.

An important future work 
is to connection
to Staton' s denotational semantics  \cite{Stat} for
commutativity of first-order probabilistic functional programming,
in which s-finiteness of kernels characterises commutativity of
programming languages.
We are interested in how our trace structure for feedback and
probabilistic iteration
(cf. Remark \ref{UDC}) may play any role in his probabilistic data flow
analysis using categorical arrows.
That is, a direction towards a probabilistic Geometry of Interaction
employing the
continuous categories of the present paper. \\
Another promising further study is association
with point process monad in \cite{DaSta} using
distribution between Giry monad and multisets.
This may provide a general categorical understanding  
how our measure theoretic commutative monoid, seen as 
counting process, yields
the exponential comonad for linear logic.





Relating our model to Girard's coherent Banach spaces \cite{GirBan}
on one hand involves analysing the contravariant equivalence
of Proposition \ref{contraeq} under
certain constraints required from logical and type systems.
On the other hand, the double glueing in Section \ref{dgort} 
will give a direct bridge to 
the duality of coherent Banach spaces,
employing a (variant of) a Chu construction,
which is known as another instance of Hyland-Schalk orthogonality.

After the submission of the earlier version of the paper, series of
pioneering works are published \cite{EhrCone, Geoff, Paquet}
on continuous exponentials for higher ordered programming.
Their approach to measure-theoretic 
continuity would warrant our future work mentioned above. 
Especially, Ehrhard's measurable exponential \cite{EhrCone}
in cones, giving a continuous extension of discrete probability,
could be quite beneficial.

\section*{Acknowledgment}
The author is deeply grateful to the anonymous reviewers
for their carefully reading and constructive feedback,
which have improved the paper in 
innumerable ways. 






\end{document}